\crefname{lemma}{Lemma}{Lemmas}
\crefname{fact}{Fact}{Facts}
\newcommand{\colorconstraints}{\text{Color Constraints}}
\crefname{colorconstraints}{(color constraints)}{Color Constraints}
\crefname{indsetconstraints}{(indset constraints)}{IndSet Constraints}
\crefname{theorem}{Theorem}{Theorems}
\crefname{mtheorem}{Theorem}{Theorems}
\crefname{corollary}{Corollary}{Corollaries}
\crefname{claim}{Claim}{Claims}
\crefname{example}{Example}{Examples}
\crefname{algorithm}{Algorithm}{Algorithms}
\crefname{problem}{Problem}{Problems}
\crefname{definition}{Definition}{Definitions}
\newtheorem{theorem}{Theorem}[section]
\newtheorem*{theorem*}{Theorem}
\newtheorem*{proposition*}{Proposition}
\newtheorem{lemma}[theorem]{Lemma}
\newtheorem*{lemma*}{Lemma}
\newtheorem*{conjecture*}{Conjecture}
\newtheorem{fact}[theorem]{Fact}
\newtheorem*{fact*}{Fact}
\newtheorem*{hypothesis*}{Hypothesis}
\theoremstyle{definition}
\newtheorem{definition}[theorem]{Definition}
\newtheorem*{definition*}{Definition}
\newtheorem{example}[theorem]{Example}
\newtheorem{claim}[theorem]{Claim}
\newtheorem*{claim*}{Claim}
\newtheorem{remark}[theorem]{Remark}
\newtheorem{groupcase}[theorem]{Remark: Abelian Group Case}
\newtheorem{observation}[theorem]{Observation}
\newtheorem*{observation*}{Observation}
\newtheorem{notation}[theorem]{Notation}
\let\mathbb\varmathbb
\newcommand{\wt}{\text{wt}}
\newcommand{\sbra}[1]{{\left[ #1 \right]}}
\newcommand{\cbra}[1]{{\left\{ #1 \right\}}}
\definecolor{petergreen}{rgb}{0, 0.75, 0}
\newcommand{\FormatAuthor}[3]{
\begin{tabular}{c}
#1 \\ {\small\texttt{#2}} \\ {\small #3}
\end{tabular}
}
\newcommand{\R}{{\mathbb R}}
\newcommand{\N}{{\mathbb N}}
\newcommand{\norm}[1]{\lVert #1 \rVert}
\newcommand{\abs}[1]{\lvert #1 \rvert}
\newcommand{\eps}{\varepsilon}
\newcommand{\defeq}{\coloneqq}
\newcommand{\F}{{\mathbb F}}
\newcommand{\E}{{\mathbb E}}
\newcommand{\1}{\mathbf{1}}
\newcommand{\ip}[1]{\langle #1 \rangle}
\newcommand{\tr}{\mathrm{tr}}
\newcommand{\Z}{\mathbb Z}
\newcommand{\C}{\mathbb C}
\newcommand{\Bits}{\{0,1\}}
\newcommand{\pE}{\tilde{\E}}
\newcommand{\poly}{\mathrm{poly}}
\newcommand{\val}{\mathrm{val}}
\newcommand{\mper}{\,.}
\newcommand{\mcom}{\,,}
\newcommand{\Id}{\mathbb{I}}
\newcommand{\algval}{\mathrm{alg}\text{-}\mathrm{val}}
\newcommand{\polylog}{\mathrm{polylog}}
\newcommand{\mcH}{\mathcal H}
\newcommand{\mcI}{\mathcal I}
\newcommand{\mcU}{\mathcal U}
\newcommand{\mcV}{\mathcal V}
\newcommand{\chara}{\mathrm{char}}
\let\svthefootnote\thefootnote
\newcommand\blfootnote[1]{%
  \let\thefootnote\relax%
  \footnotetext{#1}%
  \let\thefootnote\svthefootnote%
}
\newcommand{\supp}{\mathrm{supp}}
\newcommand{\NP}{\mathsf{NP}}
\newcommand{\onevec}{\mathbf{1}}
\let\LIN\relax
\DeclareMathOperator{\LIN}{\mathsf{LIN}}
\newcommand{\Tr}{\mathrm{Tr}}
\newcommand{\vardeg}{\textrm{vardeg}}
\begin{document}

\title{Spectral Refutations of Semirandom \texorpdfstring{$k$-$\LIN$}{k-LIN} over Larger Fields}

\author{
\begin{tabular}[h!]{cc}
               \FormatAuthor{Nicholas Kocurek}{nichok6@cs.washington.edu}{University of Washington} &
               \FormatAuthor{Peter Manohar\thanks{This material is based upon work supported by the National Science Foundation
under Grant No.\ DMS-1926686.}}{pmanohar@ias.edu}{The Institute for Advanced Study}
\end{tabular}
} %
\date{}

\maketitle
\thispagestyle{empty}

\begin{abstract}
We study the problem of strongly refuting semirandom $k$-$\LIN(\F)$ instances: systems of $k$-sparse inhomogeneous linear equations over a finite field $\F$. For the case of $\F = \F_2$, this is the well-studied problem of refuting semirandom instances of $k$-XOR, where the works of~\cite{GuruswamiKM22,HsiehKM23} establish a tight trade-off between runtime and clause density for refutation: for any choice of a parameter $\ell$, they give an $n^{O(\ell)}$-time algorithm to certify that there is no assignment that can satisfy more than $\frac{1}{2} + \eps$-fraction of constraints in a semirandom $k$-XOR instance, provided that the instance has $O(n) \cdot \left(\frac{n}{\ell}\right)^{k/2 - 1} \log n /\eps^4$ constraints, and the work of~\cite{KothariMOW17} provides good evidence that this tight up to a $\polylog(n)$ factor via lower bounds for the Sum-of-Squares hierarchy. 
However for larger fields, the only known results for this problem are established via black-box reductions to the case of $\F_2$, resulting in an $\abs{\F}^{3k}$ gap between the current best upper and lower bounds.

In this paper, we give an algorithm for refuting semirandom $k$-$\LIN(\F)$ instances with the ``correct'' dependence on the field size $\abs{\F}$. For any choice of a parameter $\ell$, our algorithm runs in $(\abs{\F}n)^{O(\ell)}$-time and strongly refutes semirandom $k$-$\LIN(\F)$ instances with at least $O(n) \cdot \left(\frac{\abs{\F^*} n}{\ell}\right)^{k/2 - 1} \log(n \abs{\F^*}) /\eps^4$ constraints. We give good evidence that this dependence on the field size $\abs{\F}$ is optimal by proving a lower bound for the Sum-of-Squares hierarchy that matches this threshold up to a $\polylog(n \abs{\F^*})$ factor. Our results also extend beyond finite fields to the more general case of $\Z_{m}$ and arbitrary finite Abelian groups. Our key technical innovation is a generalization of the ``$\F_2$ Kikuchi matrices'' of~\cite{WeinAM19,GuruswamiKM22} to larger fields, and finite Abelian groups more generally.
\end{abstract}

\clearpage
 \microtypesetup{protrusion=false}
 {
  \hypersetup{hidelinks}
  \tableofcontents{}
}
  \microtypesetup{protrusion=true}

\thispagestyle{empty}
\clearpage

\pagestyle{plain}
\setcounter{page}{1}

\section{Introduction}
\label{sec:intro}
A $k$-$\LIN(\F)$ instance over a finite field $\F$ is a collection of $k$-sparse $\F$-linear inhomogeneous equations in $n$ variables $x_1, \dots, x_n$. That is, each equation has the form $\sum_{i \in I} \alpha_i x_i = b_I$, where $\abs{I} = k$, $\alpha_i \in \F \setminus \{0\}$, and $b_I \in \F$. 
For worst-case instances, there has been a long line of work on developing algorithms for and proving hardness of determining the value, i.e., the maximum fraction of satisfiable constraints, of a $k$-$\LIN(\F)$ instance, with a special focus on the case of $\F = \F_2$, where the $k$-$\LIN(\F_2)$ problem is commonly referred to as ``$k$-XOR''.
For $k$-$\LIN(\F)$ instances with $O(n)$ constraints, H{\aa}stad's PCP~\cite{Hastad01} shows that for $k \geq 3$, it is $\NP$-hard to decide if the instance has value $\leq \frac{1}{\abs{\F}} + \eps$ (a random assignment is near-optimal) or $\geq 1 - \eps$ (nearly satisfiable). On the algorithmic side, the work of~\cite{AroraKK95} gives a PTAS for $k$-XOR instances with $n^{O(k)}$ constraints (``maximally dense'' instances). And, assuming the exponential time hypothesis~\cite{ImpagliazzoP01}, the work of~\cite{FotakisLP16} gives an essentially tight ``runtime vs.\ number of constraints'' trade-off for worst-case instances. For the case of $k = 2$, the $2$-$\LIN(\F)$ problem is closely related to the Unique Games Conjecture~\cite{Khot02}, which conjectures that deciding if a $2$-$\LIN(\F)$ instance has value $\leq \frac{1}{\abs{\F}} + \eps$ or $\geq 1 - \eps$ is $\NP$-hard when $\abs{\F}$ is sufficiently large as a function of $\eps$.

Despite its worst-case hardness, there have been many works on designing algorithms for \emph{random} $k$-XOR instances. As random $k$-XOR instances with $\gg n$ constraints have value $\leq \frac{1}{2} + \eps$ with high probability, the natural problem to consider is the task of (strong) \emph{refutation}, where the algorithmic goal is to output a certificate that the value of the random instance is at most $\frac{1}{2} + \eps$. This problem has been the focus of several works~\cite{GoerdtL03,Coja-OghlanGL07,AllenOW15,RaghavendraRS17}, with the ultimate goal being to understand the trade-off between ``runtime'' and ``number of constraints''. That is, given $n$ and a ``time budget'' of $n^{O(\ell)}$ for a parameter $\ell$ (which may be super-constant, e.g., $n^{\delta}$ for constant $\delta > 0$), how many constraints, as a function of $n$ and $\ell$, are required to refute a random $k$-XOR in $n^{O(\ell)}$ time? This question was essentially answered by~\cite{RaghavendraRS17}, which gives for any $\ell$, an $n^{O(\ell)}$-time algorithm that, with high probability over a random instance, certifies that a random $k$-XOR instance has maximum value at most $\frac{1}{2} + \eps$ if it has at least $n \cdot (n/\ell)^{k/2 - 1} \poly(\log(n), \eps^{-1})$ constraints. This trade-off between runtime and number of constraints is conjectured to be essentially optimal, with evidence coming in the form of lower bounds in various restricted computational models~\cite{Grigoriev01,Feige02,Schoenebeck08,BenabbasGMT12,ODonnellW14,MoriW16,BarakCK15,KothariMOW17}. More recently, there has been a flurry of work~\cite{AbascalGK21,GuruswamiKM22,HsiehKM23} on designing algorithms for $k$-XOR in the harder \emph{semirandom} model, where the ``left-hand sides'' of the equations are worst case, and only the ``right-hand sides'' $b_I$ are chosen at random. These works show that one can refute semirandom instances at the same runtime vs.\ number of constraints trade-off as shown in~\cite{RaghavendraRS17}. That is, semirandom instances are just as easy to refute as fully random ones.

Thus, for the Boolean case of $\F = \F_2$, we have a near-complete understanding: for any choice of the parameter $\ell$, if the number of constraints in the semirandom $k$-XOR instance is at least $n \cdot (n/\ell)^{k/2 - 1} \poly(\log(n), \eps^{-1})$, then the algorithm of~\cite{AbascalGK21,GuruswamiKM22,HsiehKM23} can refute the instance in $n^{O(\ell)}$ time, and if the number of constraints is smaller than $n \cdot (n/\ell)^{k/2 - 1} \poly(\log(n), \eps^{-1})$, the lower bound of, e.g.,~\cite{KothariMOW17} provides good evidence that there is no algorithm to refute in $n^{O(\ell)}$ time, even for random instances.

\parhead{The case of larger fields.} What can we say about semirandom $k$-$\LIN$ over finite fields $\F \ne \F_2$? There is a simple reduction to the Boolean case (see Appendix B in~\cite{AllenOW15}) that loses an $\abs{\F}^{3k}$ factor in the number of constraints. That is, the reduction gives an algorithm to refute such instances with at least $\abs{\F}^{3k} \cdot n \cdot (n/\ell)^{k/2 - 1} \poly(\log(n), \eps^{-1})$ constraints.
On the side of lower bounds, the work~\cite{KothariMOW17} establishes the same lower bound as in the case of $\F_2$, i.e., with no field-dependent factor. Thus, for e.g., $\F$ with $\abs{\F} = n^{\delta}$, there a large gap between the upper and lower bounds.

Understanding the optimal dependence on the field size for refuting semirandom $k$-$\LIN$ instances has many potential applications. One immediate application is obtaining better attacks on the (sparse) learning parity with noise (LPN) assumption commonly used in cryptography. The $k$-sparse LPN assumption\footnote{The phrase ``sparse LPN'' sometimes refers to the case when the secret is sparse. Here, we mean that the \emph{equations} are sparse.} is the distinguishing variant of the refutation problem for $k$-$\LIN(\F)$ --- the ``dense'' LPN assumption removes the sparsity requirement on the equations --- and is considered a foundational assumption in cryptography. While many cryptographic applications only use this assumption over the field $\F_2$~\cite{ApplebaumBW10}, many applications such as constructing indistinguishability obfuscation~\cite{JainLS21,JainLS22,RagavanVV24} and others~\cite{DaoIJL23,CorriganGibbsHKV24} require fields of much larger (even superpolynomial) size.

As another application, one could hope to prove stronger lower bounds for information-theoretic private information retrieval (PIR) schemes. Recent work of~\cite{AlrabiahGKM23} has led to a flurry of improvements in lower bounds for \emph{binary} locally decodable~\cite{AlrabiahGKM23,BasuHKL24,JanzerM24} and locally correctable codes~\cite{KothariM23,Yankovitz24,AlrabiahG24,KothariM24} by establishing a connection between these lower bounds and refuting ``semirandom-like'' instances of $k$-$\LIN$ over $\F_2$. While these results can be extended to larger, constant-sized alphabets (see, e.g., Appendix A in~\cite{AlrabiahGKM23}), information-theoretic PIR schemes are essentially equivalent to locally decodable codes over alphabets of $\poly(n)$ size. Thus, the large loss in $\abs{\F}$ above prevents the approach of~\cite{AlrabiahGKM23} from being able to prove stronger PIR lower bounds.

\subsection{Our results}
 In this paper, we investigate the dependence on the field size in the number of constraints required to refute semirandom $k$-$\LIN(\F)$ instances. As our main results, we give both an algorithm and a matching Sum-of-Squares lower bound with the ``correct'' polynomial dependence on the field size $\abs{\F}$.

Before stating our main results, we formally define semirandom $k$-$\LIN$ instances.
\begin{definition}[(Semirandom) $k$-$\LIN$ over $\F$]
\label{def:semirandomklin}
    An instance of $k$-$\LIN(\F)$ is $\mcI = (\mcH, \{b_v\}_{v \in \mcH})$, where $\mcH$ is a set of $k$-sparse vectors\footnote{A vector $v \in \F^n$ is $k$-sparse if $\abs{\{i : v_i \ne 0\}} = k$.} in $\F^n$ and $b_v \in \F$ for all $v \in \mcH$. We view $\mcI$ as representing the system of linear equations with variables $x_1, \dots, x_n$ specified by $\sum_{i=1}^n v_i x_i = b_v$ for each $v \in \mcH$. The value of the instance, which we denote by $\val(\mcI)$, is the maximum over $x \in \F^n$ of the fraction of constraints satisfied by $x$. That is, $\val(\mcI) = \max_{x \in \F^n} \frac{1}{\abs{\mcH}} \sum_{v \in \mcH} \1(\sum_{i=1}^n v_i x_i = b_v)$.
    
 An instance of $k$-$\LIN$ is \emph{random} if $\mcH$ is a random subset of $k$-sparse vectors and each $b_v$ is drawn independently and uniformly from $\F$.    
    
 An instance of $k$-$\LIN$ is \emph{semirandom} if each $b_v$ is drawn independently and uniformly from $\F$ (but $\mcH$ may be arbitrary).
\end{definition}

The first main result of this paper gives a refutation algorithm for semirandom $k$-$\LIN(\F)$ for any field $\F$.
\begin{theorem}[Tight refutation of semirandom $k$-$\LIN(\F)$]
    \label{thm:refutation}
    Fix $\ell \geq k/2$. There is an algorithm that takes as input a $k$-$\LIN(\F)$ instance $\mcI = (\mcH, \{b_v\}_{v \in \mcH})$ in $n$ variables and outputs a number $\algval(\mcI) \in [0,1]$ in time $(\abs{\F} n)^{O(\ell)}$ with the following two guarantees:
    \begin{enumerate}
        \item \label{item:refutation1} $\algval(\mcI) \geq \val(\mcI)$ for every instance $\mcI$;
        \item \label{item:refutation2} If $\abs{\mcH} \geq O(n) \cdot \left(\frac{n\abs{\F^*}}{\ell}\right)^{k/2-1} \cdot \log(\abs{\F^*}n) \cdot \varepsilon^{-4}$ and $\mcI$ is drawn from the semirandom distribution described in \cref{def:semirandomklin}, then with probability $\geq 1-\frac{1}{\poly(n)}$ over the draw of the semirandom instance, i.e., the randomness of $\{b_v\}_{v \in \mcH}$, it holds that $\algval(\mcI) \leq \frac{1}{\abs{\F}} + \varepsilon$.
    \end{enumerate}
\end{theorem}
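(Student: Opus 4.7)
The plan is to refute each ``Fourier slice'' of the constraint-satisfaction polynomial separately via a complex-valued Kikuchi matrix, with a key structural gain over the black-box reduction to $\F_2$: the dimension of the Kikuchi matrix will scale with $\abs{\F^*} n$ rather than just $n$, producing the desired field factor in the amplification ratio. Using Fourier inversion $\1(y=0) = \tfrac{1}{\abs{\F}} \sum_\chi \chi(y)$ on the additive group of $\F$,
\[
\val(\mcI) \leq \frac{1}{\abs{\F}} + \frac{1}{\abs{\mcH} \abs{\F}} \sum_{\chi \ne 1} \sup_{x \in \F^n} \abs{P_\chi(x)} \mcom \quad P_\chi(x) \defeq \sum_{v \in \mcH} \chi(-b_v) \chi\Paren{\textstyle\sum_i v_i x_i} \mper
\]
It therefore suffices to certify $\sup_x \abs{P_\chi(x)} \lesssim \eps \abs{\mcH}$ for each of the $\abs{\F^*}$ nontrivial additive characters $\chi$.

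Fix such a $\chi$, and define the complex Hermitian ``$\chi$-Kikuchi matrix'' $A_\chi$ of dimension $N = \binom{n}{\ell} \abs{\F^*}^\ell$ with rows and columns indexed by vectors $\beta \in \F^n$ of support size $\ell$. Set $A_\chi[\beta, \beta'] = \sum_{v \in \mcH\,:\,\beta' - \beta = v} \chi(-b_v)$, restricted to the balanced regime $\abs{\supp(\beta) \cap \supp(v)} = k/2$. A short calculation shows that, for each fixed $v$, the pairs $(\beta, \beta')$ satisfying the balance condition and $\beta' - \beta = v$ are parametrized by a choice of $k/2$ coordinates of $S_v \defeq \supp(v)$ together with an arbitrary $(\ell - k/2)$-subset of $[n] \setminus S_v$ labeled by $\F^*$, giving $D = \binom{k}{k/2} \binom{n-k}{\ell - k/2} \abs{\F^*}^{\ell - k/2}$ such pairs. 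Setting $y_\beta \defeq \chi(\sum_i \beta_i x_i)$ (a unit-modulus scalar), the identity $\overline{y_\beta} y_{\beta'} = \chi(\sum_i v_i x_i)$ on compatible pairs gives $\langle y, A_\chi y \rangle = D \cdot P_\chi(x)$, and hence $\abs{P_\chi(x)} \leq \norm{A_\chi} \cdot N/D$ for every $x \in \F^n$ (using $\norm{y}_2^2 = N$). The critical amplification ratio $N/D \approx \binom{k}{k/2}^{-1} (\abs{\F^*} n / \ell)^{k/2}$ carries exactly the ``correct'' field factor $\abs{\F^*}^{k/2}$, whereas the black-box reduction pays $\abs{\F}^{O(k)}$.

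To bound $\norm{A_\chi}$, write $A_\chi = \sum_v \chi(-b_v) B_v$ for deterministic $0/1$ matrices $B_v$; since $\chi$ is nontrivial and $b_v$ is uniform in $\F$, the scalars $\chi(-b_v)$ are i.i.d.\ mean-zero unit-modulus complex numbers, so $A_\chi$ is a complex matrix Rademacher series. Complex matrix Khintchine gives $\E \norm{A_\chi} \lesssim \sqrt{\log N \cdot \norm{\sum_v B_v B_v^*}}$, and a direct computation shows $\sum_v B_v B_v^*$ is diagonal with entries equal to the ``Kikuchi row-degrees''. For a fully random $\mcH$ these concentrate around $\abs{\mcH} (\ell/n)^{k/2} \abs{\F^*}^{-k/2} \binom{k}{k/2}$; for semirandom $\mcH$ the row-degrees may spike, which we control via the bucketing / row-pruning strategy of~\cite{GuruswamiKM22,HsiehKM23}: decompose $\mcH$ into $O(\log n)$ near-regular sub-instances, apply the spectral bound to each, and recombine by the triangle inequality. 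Combining with $N/D$ and union-bounding over the $\abs{\F^*}$ nontrivial characters (costing only an additive $\log \abs{\F^*}$ inside the square root) yields the claimed threshold $\abs{\mcH} \geq O(n)(\abs{\F^*} n / \ell)^{k/2 - 1} \log(\abs{\F^*} n) / \eps^4$.

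The algorithm then outputs $\algval(\mcI) \defeq \min\!\big(1,\, \tfrac{1}{\abs{\F}} + \tfrac{N}{\abs{\mcH} D \abs{\F}} \sum_{\chi \ne 1} \norm{A_\chi}\big)$, computed by diagonalizing each of the $\abs{\F^*}$ Hermitian matrices $A_\chi$ in time $\poly(N) \cdot \abs{\F^*} = (\abs{\F} n)^{O(\ell)}$. Soundness (item~\ref{item:refutation1}) is the pointwise bound $\abs{P_\chi(x)} \leq \norm{A_\chi} N/D$ applied at the maximizing $x$; the refutation guarantee (item~\ref{item:refutation2}) is the concentration above. The main obstacles I anticipate are (i) handling odd $k$, where $k/2 \notin \Z$ and the balance condition must be replaced by squaring the polynomial to pass to arity $2k$ (a standard maneuver in the $\F_2$ literature), and (ii) making the semirandom bucketing step interact cleanly with the extra $\F^*$-labeling -- in particular, verifying that row-degree concentration survives averaging over the $\abs{\F^*}^{\ell - k/2}$ label choices per Kikuchi row. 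Once the complex-valued Kikuchi matrix is set up correctly, the rest should closely track the $\F_2$ analyses of~\cite{GuruswamiKM22,HsiehKM23} with characters playing the role of $\pm 1$ signs.
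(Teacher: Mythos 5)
Your per-character decomposition loses a factor of $\abs{\F^*}$ in the constraint threshold, which is exactly the field factor the theorem is claiming to save over the black-box $\F_2$ reduction. Concretely: you bound $\val(\mcI) - \tfrac{1}{\abs{\F}} \leq \tfrac{1}{\abs{\mcH}\abs{\F}}\sum_{\chi\ne 1} \sup_x \abs{P_\chi(x)}$ and then certify each $\sup_x \abs{P_\chi(x)} \leq \norm{A_\chi} N/D$ separately. Each $A_\chi$ has $\approx \abs{\mcH} D/N$ nonzero unit-modulus entries per row, so matrix Khintchine gives $\norm{A_\chi} \lesssim \sqrt{\log N \cdot \abs{\mcH}D/N}$. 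Summing over the $\abs{\F^*}$ characters then forces
\[
\frac{1}{\abs{\mcH}\abs{\F}}\sum_{\chi\ne 1} \norm{A_\chi}\,\frac{N}{D} \;\approx\; \frac{\abs{\F^*}}{\abs{\F}}\sqrt{\frac{\log N \cdot N/D}{\abs{\mcH}}}\mcom
\]
and solving for $\eps$-refutation yields $\abs{\mcH} \gtrsim n\abs{\F^*}\cdot (\abs{\F^*}n/\ell)^{k/2-1}\log(\abs{\F^*}n)\,\eps^{-2}$, an extra $\abs{\F^*}$ beyond \cref{item:refutation2}. You characterize the cost of the union bound over characters as ``only an additive $\log\abs{\F^*}$,'' but that accounts only for the failure probability; the deterministic triangle inequality $\sum_{\chi}\norm{A_\chi} \le \abs{\F^*}\max_\chi\norm{A_\chi}$ in your output $\algval$ is where the multiplicative $\abs{\F^*}$ appears, and it is not recoverable by Cauchy--Schwarz or a block-diagonal trick because the test vector $y$ you pair against $A_\chi$ depends on $\chi$.

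The paper avoids this by building a \emph{single} Kikuchi matrix $A = \sum_{v\in\mcH}\sum_{\alpha\in\F^*} c_{v,\alpha} A_{v,\alpha}$ that interleaves all the scalar multiples $\alpha\in\F^*$ at once (\cref{def:kikuchimatrix}). Each constraint now contributes $\abs{\F^*}\Delta$ edges rather than $\Delta$, so the average degree gains an $\abs{\F^*}$ factor, which is what translates into the $1/\sqrt{\abs{\F^*}}$ savings in $\norm{\tilde A}_2$ (\cref{fact:avgdegree}, \cref{lem:countingbacktrackingwalks}). The price is that the weights $c_{v,\alpha} = \omega_p^{\Tr(\alpha b_v)}$ for different $\alpha$ share the same randomness $b_v$, so matrix Khintchine no longer applies and one must run a trace-moment argument that tracks ``trivially closed'' walks (\cref{def:triviallyclosedwalks}); the key combinatorial ingredient that makes this encoding tight is the extra support condition $\supp(U)\oplus\supp(V) = \supp(v)$, which guarantees at most one $\alpha$ with an edge between $U$ and any $(v,\alpha)$-labeled neighbor, so old edges can be re-encoded without specifying $\alpha$. \Cref{rem:nontrivialkikuchi} discusses exactly this issue and notes that without that saving one lands at $\sim n\abs{\F^*}$ for $k=2$ instead of $\sim n$ --- which is precisely the gap your per-character approach inherits. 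The Fourier-inversion opening and the Kikuchi-dimension $N = \binom{n}{\ell}\abs{\F^*}^\ell$ you propose are the right ingredients, but the characters must be kept together inside one matrix and resolved in the trace calculation, not separated and summed.
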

As a byproduct of the analysis of \cref{thm:refutation}, we also establish an extremal combinatorics statement on the existence of short linear dependencies in any sufficiently dense collection of $k$-sparse vectors $\mcH$ over a finite field $\F$.

\begin{theorem}[Short linear dependencies in $k$-sparse vectors over $\F$]
    \label{thm:feige}
     Let $\mcH$ be a set of $k$-sparse vectors in $\F^n$ with $\abs{\mcH} \geq O(n) \cdot  \left(\frac{n\abs{\F^*}}{\ell}\right)^{k/2-1} \cdot \log(\abs{\F^*}n)$. Then, there exists a set $\mcV \subseteq \mathcal{H}$ with $\abs{\mcV} \leq \ell \log (\abs{\F^*} n)$ and non-zero coefficients $\cbra{\alpha_v}_{v \in \mcV}$ in $\F^*$ such that:
     \begin{equation*}
         \sum_{v \in \mcV} \alpha_v \cdot v = 0\mper
     \end{equation*}
     That is, $\mcV$ is a linearly dependent subset of $\mathcal{H}$.
\end{theorem}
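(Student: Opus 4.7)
The plan is to prove Theorem~\ref{thm:feige} as a structural byproduct of the trace-moment / Kikuchi-matrix analysis underlying the refutation of Theorem~\ref{thm:refutation}. Set $L \defeq \ell \log(\abs{\F^*} n)$ and suppose, toward a contradiction, that $\mcH$ satisfies the hypothesized size bound but contains no non-trivial $\F^*$-linear combination $\sum_{v\in\mcV}\alpha_v v = 0$ with $\abs{\mcV}\leq L$. The strategy is to extract such a dependency from the spectral moment calculation by observing that, absent short dependencies, no non-trivial walks can contribute to the moment.

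Concretely, I would form the generalized Kikuchi matrix $A$ associated with $\mcH$ (with RHS labels $b_v$ drawn uniformly from $\F$, in the sense used by the refutation algorithm) and consider the trace moment $\E_b\bigl[\Tr A^{2q}\bigr]$ at $2q = L$. Applying character orthogonality over $\F$ to the expectation over the $b_v$'s leaves only those closed walks of length $2q$ whose edge-labels $v_{i_1},\dots,v_{i_r}$, equipped with their walk-induced $\F^*$-coefficients, form a $\F^*$-linear combination equal to zero, i.e., encode a dependency of support size $\leq L$. Under the contradictory hypothesis, the only surviving walks are the ``trivial'' back-and-forth pairings (each edge immediately cancelled by its reverse), and the corresponding Wigner-type estimate produces an upper bound on $\E_b\bigl[\|A\|_{\mathrm{op}}^{2q}\bigr]$. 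On the other side of the inequality, a Frobenius-type lower bound $\|A\|_F^2 \gtrsim \abs{\mcH}\cdot\binom{n-k}{\ell-k/2}$, combined with $\|A\|_{\mathrm{op}}^2 \geq \|A\|_F^2 / N$ (where $N = \binom{n}{\ell}$ is the dimension of $A$) and a Jensen step, yields a matching lower bound on $\E_b\bigl[\|A\|_{\mathrm{op}}^{2q}\bigr]$. The size hypothesis $\abs{\mcH} \geq O(n)(n\abs{\F^*}/\ell)^{k/2-1}\log(\abs{\F^*} n)$ is calibrated precisely so that the Frobenius-based lower bound overtakes the trivial-walk upper bound, producing the desired contradiction and forcing the existence of a dependency of size at most $L$.

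The main obstacle will be carrying out the character-theoretic walk counting cleanly over a general field $\F$ and tracking the $|\F^*|^{k/2-1}$ dependence through the enumeration. In the $\F_2$ case of~\cite{GuruswamiKM22}, surviving walks correspond to combinatorial ``even covers'' and the character arithmetic is essentially trivial; for $|\F| > 2$ one must keep track of the non-zero coefficients $\alpha_v \in \F^*$ carried by each dependency and verify that they are captured correctly by the paper's generalized Kikuchi matrix formalism (and, in the extensions, by its analog for $\Z_m$ and arbitrary finite Abelian groups). The logarithmic slack $\log(\abs{\F^*}n)$ between the dependency size $L$ and the algorithm parameter $\ell$ is the standard overhead in converting a moment-method bound into a pointwise existence statement via Markov's inequality, so once the $\F^*$-weighted walk count is executed correctly, the size bound in the theorem falls out automatically from the same calculation used to prove Theorem~\ref{thm:refutation}.
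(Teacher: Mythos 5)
Your high-level plan is sound and matches the paper's strategy: if $\mcH$ had no short $\F^*$-linear dependency, then every closed walk of length $\leq \ell\log(\abs{\F^*}n)$ in the Kikuchi graph would be forced to be trivially closed (this is Observation~\ref{obs:kikuchicycles}: any cycle yields a telescoping sum $\sum_i\beta_i v_i=0$, so a non-trivially-closed cycle is literally a dependency), and then the trace-moment bound from the refutation analysis is in tension with a degree-based lower bound, forcing $\abs{\mcH}$ to be small. However, there are two substantive gaps.

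First, your lower-bound side of the ledger is quadratically too weak. You propose $\norm{A}_2^2 \geq \norm{A}_F^2/N$, which for a graph adjacency matrix with $Nd$ nonzero unit-magnitude entries gives only $\norm{A}_2 \gtrsim \sqrt{d}$. That is an averaging bound over all $N$ eigenvalues, and it loses a factor of $\sqrt{d}$ compared to the Rayleigh quotient at the all-ones vector, which gives $\onevec^\top A\onevec/N = d$. The paper's proof uses precisely this tighter bound: since $A$ is unsigned and Hermitian with $\Gamma^{-1/2}A\Gamma^{-1/2} \preceq O\bigl(\sqrt{\log N/d}\bigr)\Id$, one gets $Nd = \onevec^\top A\onevec \leq O\bigl(\sqrt{\log N/d}\bigr)\cdot\tr(\Gamma) = O\bigl(\sqrt{\log N/d}\bigr)\cdot 2Nd$, hence $d \leq O(\log N)$. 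With the Frobenius bound you would instead compare $\sqrt{d}$ against (without the $\Gamma$ normalization) $O(\sqrt{\log N\cdot d_{\max}})$, which is vacuous when the Kikuchi graph is irregular, or (with normalization) $O(\sqrt{\log N/d})$ against $\sqrt{d}/\Theta(d_{\max})$, which is again vacuous. The degree-normalization by $\Gamma$ and the exact identity $\tr(\Gamma)=2Nd=2\cdot\onevec^\top A\onevec$ are what make the contradiction tight, and your sketch omits both.

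Second, your argument does not touch the odd-$k$ case, which is where most of the work lies. For odd $k$ one cannot set up the balanced even-arity Kikuchi matrix of Definition~\ref{def:kikuchimatrix}; the paper instead first produces an $\ell$-regular bipartite decomposition of $\mcH$ (Lemma~\ref{lem:vecdecompositionalg}), passes to the largest part, applies the Cauchy--Schwarz trick and the odd-arity Kikuchi matrix, and must additionally control local degrees via an edge-deletion step (Lemma~\ref{lem:genedgedeletion}) before the walk-counting argument applies. "Once the $\F^*$-weighted walk count is executed correctly, the size bound falls out automatically" undersells this; the odd case is not an automatic consequence of the even-$k$ trace calculation. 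Finally, a minor but worth-fixing conceptual point: you invoke character orthogonality over $b_v$ to kill non-trivially-closed walks, but the paper's Theorem~\ref{thm:feige} proof does not need any expectation or randomness at all — it works with the deterministic unsigned adjacency matrix, because the no-short-dependency hypothesis already forces every closed walk to be trivially closed. Character orthogonality is a red herring here.
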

\cref{thm:feige} is a generalization of the hypergraph Moore bound, or Feige's conjecture on the existence of short even covers in hypergraphs~\cite{Feige08} (first proven in~\cite{GuruswamiKM22} for the case of $\F_2$) to arbitrary finite fields. The hypergraph Moore bound establishes  a rate vs.\ distance trade-off for binary LDPC codes (see~\cite{NaorV08}). One can similarly view \cref{thm:feige} as establishing such a trade-off for LDPC codes over general finite fields.

The key technical innovation in our proofs of \cref{thm:refutation,thm:feige} is the introduction of a new Kikuchi matrix for any finite field $\F$ (\cref{def:kikuchimatrix}). Our Kikuchi matrices are a generalization of the ``$\F_2$ Kikuchi matrices'' of~\cite{WeinAM19,GuruswamiKM22} to other fields,  and finite Abelian groups more generally. As we point out after \cref{def:kikuchimatrix}, the natural generalization of the $\F_2$ Kikuchi matrix is not sufficient, and we have to add an additional condition to the matrix to make the analysis work (see \cref{rem:nontrivialkikuchi}).

In our second main result, we prove a Sum-of-Squares lower bound for refuting $k$-$\LIN(\F)$ instances that nearly matches the threshold in \cref{thm:refutation}.

\begin{theorem}[Sum-of-Squares lower bounds for refuting random $k$-$\LIN$, informal]
    \label{thm:infsoslowerbound}
    Fix $k \geq 3$ and $\frac{n}{\max(\abs{\F^*}, k)} \geq \ell \geq k$. Let $\mcI$ be a random $k$-$\LIN(\F)$ instance with $\abs{\mcH} \leq \Omega(n) \cdot \left(\frac{n\abs{\F^*}}{\ell}\right)^{k/2-1} \cdot \varepsilon^{-2}$. Then, with large probability over the draw of $\mcI$, it holds that
    \begin{enumerate}
        \item $\val(\mcI) \leq \frac{1}{\abs{\F}} + \varepsilon$;
        \item The degree-$\tilde{\Omega}(\ell)$ Sum-of-Squares relaxation for $k$-$\LIN(\F)$ fails to refute $\mcI$.
        \end{enumerate}
\end{theorem}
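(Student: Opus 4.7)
The plan is to adapt the pseudocalibration framework of \cite{KothariMOW17}, which established the analogous Sum-of-Squares lower bound over $\F_2$, to arbitrary finite fields, carefully tracking the $\abs{\F^*}$ dependence. Part~1 of the theorem is the easy direction: for $m = \abs{\mcH}$ uniformly random $k$-$\LIN(\F)$ constraints, each fixed $x \in \F^n$ satisfies each constraint independently with probability $1/\abs{\F}$, so a Chernoff bound followed by a union bound over the $\abs{\F}^n$ assignments gives $\val(\mcI) \leq 1/\abs{\F} + \eps$ with high probability as soon as $m \gtrsim n \log(\abs{\F})/\eps^2$, which is absorbed into the stated hypothesis on $\abs{\mcH}$ since $k \geq 3$ and the constraint $\ell \leq n/\max(\abs{\F^*}, k)$ forces $(n\abs{\F^*}/\ell)^{k/2-1} \gtrsim \abs{\F^*}$.

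For part~2, I would work in the standard $\F$-lifted SoS encoding, with pseudo-variables $X_{i,a} = \mathbf{1}(x_i = a)$ for $i \in [n]$ and $a \in \F$, subject to the usual axioms $X_{i,a}^2 = X_{i,a}$, $X_{i,a} X_{i,b} = 0$ for $a \neq b$, and $\sum_{a} X_{i,a} = 1$. A degree-$2d$ pseudo-expectation $\pE$ is equivalently specified by its values on products of non-trivial additive characters $\chi_{\vec\alpha}(x) = \prod_j \chi_{\alpha_j}(x_{i_j})$ with $\alpha_j \in \F^*$. I would define $\pE$ via pseudocalibration against the natural planted distribution $\cD_{\mathrm{plant}}$: first sample $x^* \in \F^n$ uniformly, then draw $m$ $k$-sparse vectors $v$ uniformly and set $b_v = \langle v, x^*\rangle$. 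Concretely, $\pE[\chi_{\vec\alpha}]$ is set equal to the Fourier expansion in the instance randomness $\{b_v\}$ of the ratio of the planted and random instance densities at the observed $\mcI$, truncated at ``Fourier degree'' $\tilde{O}(\ell)$. This yields a weighted sum over small subsets $T$ of constraints whose accumulated $\F$-linear form on the relevant variable indices matches $\vec\alpha$, with each term weighted by an appropriate character of the right-hand sides $b_v$.

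The main technical obstacle is proving that the resulting pseudo-moment matrix $M$ is positive semidefinite with high probability. I would follow the graph-matrix / trace-method strategy of \cite{KothariMOW17}: decompose $M$ as a dominant ``identity-like'' term $M_0$ (which is manifestly PSD with spectral gap on the relevant subspace) plus a sum of fluctuations, one per combinatorial ``shape''. In the present setting, a shape is a bipartite multi-hypergraph whose variable-side vertices carry $\F^*$-valued character labels and whose constraint-side vertices encode how the chosen equations glue together the monomial indices. The trace method yields a spectral norm bound per shape that factors into a density contribution and a combinatorial multiplicity. The new ingredient over the $\F_2$ analysis is that each character-labeled edge contributes an $\abs{\F^*}$-summation; under the trace expansion, non-trivial additive characters over $\F$ sum to zero except along very constrained index patterns, and the net effect is that the ``effective density'' $m/n^{k/2}$ from \cite{KothariMOW17} is replaced by $m\abs{\F^*}^{k/2-1}/n^{k/2}$, which is precisely the threshold in the theorem. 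The hardest part will be carrying out this $\abs{\F^*}$-bookkeeping tightly — in particular, showing that the exponent of $\abs{\F^*}$ per shape neither over- nor undershoots — since this is precisely where the upper bound of \cref{thm:refutation} and the SoS lower bound must meet. Granted per-shape bounds of this form, summing over shapes of size at most $d = \tilde\Omega(\ell)$ and absorbing truncation error via the standard pseudocalibration noise-floor argument shows that the fluctuations are dominated by $M_0$, yielding $M \succeq 0$ and hence feasibility of the degree-$\tilde\Omega(\ell)$ SoS relaxation.
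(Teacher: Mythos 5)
Your proposal is a genuinely different route from the paper's. The paper does \emph{not} use pseudocalibration: it follows the Grigoriev/Schoenebeck low-width resolution framework (as exposited in \cite{BarakS16}). Concretely, the paper (i) defines both an ``indicator'' relaxation (with Boolean pseudo-variables $x_{i,\alpha}$, \cref{def:indicator}) and a ``complex'' relaxation in the character variables $y_{i,\alpha}$ (\cref{def:complexklin,def:fourierpseudo-expectation}); (ii) reduces the first to the second via a change-of-basis map $\varphi$ (\cref{lem:pseudoreduction}); (iii) constructs the complex pseudo-expectation by a greedy ``max-entropy'' closure procedure that only sets $\pE[y_W]$ for $W$ derivable from the constraint vectors $\{\beta v\}$ by low-weight differences; and (iv) proves well-definedness and PSD-ness using \cref{thm:inversefeige}, an expansion statement for random $k$-sparse vector sets over $\F$ (any $\leq \ell$ of them with $\F^*$-coefficients yields a sum of large Hamming weight). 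Positivity then follows from a clean block-diagonal decomposition of the moment matrix into equivalence classes under derivability, with no trace method required.

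Your pseudocalibration/graph-matrix plan is a plausible alternative, and it has real advantages the paper's route doesn't: it generalizes past linear constraints to arbitrary CSPs, and it gives more structural information about the pseudo-moment matrix. But it is also substantially heavier machinery where you would be breaking new ground: the $\F^*$-bookkeeping in the shape-by-shape spectral norm bounds (which you correctly identify as the crux) is genuinely delicate, and your proposal is silent on how the truncated pseudocalibrated $\pE$ will satisfy the booleanity, sum, and consistency axioms exactly rather than approximately --- in pseudocalibration this usually requires a separate ``fixing'' step. By contrast, the paper's route is more elementary, inherits the expansion machinery already developed for \cref{thm:feige}, and makes the degree loss $\tilde{\Omega}(\ell)$ transparent (it comes directly from taking the expansion parameter $\beta = 1/\log(\abs{\F^*}n)$). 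Your Part 1 argument (Chernoff plus union bound over $\abs{\F}^n$ assignments, absorbed since $(n\abs{\F^*}/\ell)^{k/2-1} \gtrsim \abs{\F^*}$ under the hypothesis on $\ell$) matches the paper's \cref{lem:unsatisfiability} and is correct.
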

We note that \cref{thm:infsoslowerbound} requires $k \geq 3$ as \cref{thm:refutation} gives a polynomial-time algorithm when $\abs{\mcH} = O(n)$ and $k = 2$.

The threshold in \cref{thm:infsoslowerbound} matches the threshold in \cref{thm:refutation} up to the ``lower order'' $\poly(\log n, \eps^{-1})$ factor. However, it has a one limitation: the degree $\ell$ of the Sum-of-Squares relaxation can only be at most $n/\abs{\F^*}$, rather than the whole of range of $O(1)$ to $\Omega(n)$. That is, \cref{thm:infsoslowerbound} can only give a subexponential-time lower bound of $2^{n^{1 - \delta}}$ for the Sum-of-Squares algorithm when $\abs{\F} = n^{\delta}$. It turns out that this is nearly necessary, as there is a very simple refutation algorithm, implementable within degree $\tilde{O}(\ell)$ Sum-of-Squares, that out-performs the algorithm in \cref{thm:refutation} when $\abs{\F}$ is very large.

\begin{theorem}[Simple refutation algorithm]
\label{thm:trivialalgorithm}
    Fix $n/2 \geq \ell \geq k$. There is an algorithm that takes as input a $k$-$\LIN(\F)$ instance $\mcI = (\mcH, \{b_v\}_{v \in \mcH})$ in $n$ variables and outputs a number $\algval(\mcI) \in [0,1]$ in time $(\abs{\F} n)^{O(\ell)}$ with the following two guarantees:
    \begin{enumerate}
        \item \label{item:simplerefutation1} $\algval(\mcI) \geq \val(\mcI)$ for every instance $\mcI$;
        \item \label{item:simplerefutation2} If $\abs{\mcH} \geq O(n) \cdot \left(\frac{n}{\ell}\right)^{k - 1}  \cdot \log (\abs{\F}n) \cdot \varepsilon^{-2}$ and $\mcI$ is drawn from the fully random distribution described in \cref{def:semirandomklin}, then with probability $\geq 1-\frac{1}{\poly(n)}$ over the draw of the random instance, it holds that $\algval(\mcI) \leq \frac{1}{\abs{\F}} + \varepsilon$;
        \item \label{item:simplerefutation3} If $\abs{\mcH} \geq O(n) \cdot \left(\frac{n}{\ell}\right)^{k - 1}  \cdot \log (\abs{\F}n) \cdot \varepsilon^{-3}$ and $\mcI$ is drawn from the semirandom distribution described in \cref{def:semirandomklin}, then with probability $\geq 1-\frac{1}{\poly(n)}$ over the draw of the semirandom instance, i.e., the randomness of $\{b_v\}_{v \in \mcH}$, it holds that $\algval(\mcI) \leq \frac{1}{\abs{\F}} + \varepsilon$.
    \end{enumerate}
\end{theorem}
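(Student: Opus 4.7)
The plan is to set up $\algval(\mcI)$ as the value of a simple degree-$O(\ell)$ Sum-of-Squares relaxation (without the Kikuchi construction used for \cref{thm:refutation}), then certify its tightness on random inputs by a direct matrix-Bernstein bound on a flattened coefficient tensor. Introduce Boolean variables $z_{i,a}$ for $(i,a) \in [n] \times \F$ satisfying $z_{i,a}^2 = z_{i,a}$, $z_{i,a} z_{i,a'} = 0$ for $a \ne a'$, and $\sum_{a} z_{i,a} = 1$, so that the intended solution is $z_{i,a} = \1[x_i = a]$. Then $\val(\mcI)$ equals the degree-$k$ multilinear polynomial
$$P(z) := \frac{1}{\abs{\mcH}} \sum_{v \in \mcH} \sum_{\substack{(a_1,\dots,a_k) \in \F^k \\ \sum_{j} v_{i_j(v)} a_j = b_v}} \prod_{j=1}^k z_{i_j(v), a_j}\mper$$
The algorithm outputs $\algval(\mcI) := \tilde{\mathbb{E}}_\ell[P(z)]$, the value of the degree-$\ell$ SoS relaxation subject to the above constraints; this is computable in $(\abs{\F}n)^{O(\ell)}$ time by solving an SDP, and item (1) follows immediately because every $x \in \F^n$ yields a feasible pseudo-distribution.

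For the random case (2), I would Fourier-expand $\1[v \cdot x = b_v] = \frac{1}{\abs{\F}} \sum_{t \in \F} \chi_t(v\cdot x - b_v)$ in terms of the additive characters of $\F$, so that $P(z) = \frac{1}{\abs{\F}} + \frac{1}{\abs{\F}\abs{\mcH}} \sum_{t \ne 0} F_t(z)$, where each $F_t(z) = \sum_{v \in \mcH} \chi_t(-b_v) \prod_j \sum_a \chi_t(v_{i_j(v)} a)\, z_{i_j(v), a}$ is a degree-$k$ multilinear polynomial whose monomial coefficients are bounded, mean-zero, and independent across $v \in \mcH$ by the randomness of $\{b_v\}$. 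It suffices to bound $\tilde{\mathbb{E}}_\ell[F_t(z)] \leq \eps \abs{\mcH}/\abs{\F^*}$ with high probability for each $t \ne 0$ and then union-bound over $t$. I would follow the standard ``SoS spectral'' template: flatten the coefficient tensor of $F_t$ into a random matrix $M_t$ indexed by $(k/2)$-subsets of $[n] \times \F$, relate $\tilde{\mathbb{E}}_\ell[F_t(z)]$ to $\Norm{M_t}$ via the SoS Cauchy--Schwarz inequality applied to the lift $w_T := \prod_{(i,a) \in T} z_{i,a}$, and then estimate $\Norm{M_t}$ by matrix Bernstein. The $(n/\ell)^{k-1}$ scaling (in contrast to the $(n\abs{\F^*}/\ell)^{k/2-1}$ of \cref{thm:refutation}) is what this ``spectral-only'' template naturally produces, while the benefit is a cleaner $\log(\abs{\F}n)$ dependence on the field size in place of the polynomial $\abs{\F^*}^{k/2-1}$ factor.

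For the semirandom case (3), the analysis carries over with only an additional $1/\eps$ loss, since the matrix Bernstein estimate above depends only on the independence of $\{\chi_t(-b_v)\}_{v \in \mcH}$, which is preserved in the semirandom model. The extra factor arises from having to handle adversarially heavy rows and columns of $M_t$ via a truncation step before the concentration bound, exactly as in the semirandom $k$-XOR arguments of~\cite{AbascalGK21,GuruswamiKM22,HsiehKM23}.

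The main obstacle is to carry out the spectral estimate on $M_t$ with the correct \emph{logarithmic} (rather than polynomial) dependence on the field size $\abs{\F}$. A naive application of matrix Bernstein over the enlarged index set $[n] \times \F$ would pick up $\abs{\F}^{O(k)}$ factors from the dimension blowup and would spoil the advertised $\log(\abs{\F}n)$ scaling. Carefully exploiting the one-hot constraints $\sum_a z_{i,a} = 1$ within the SoS proof system, which effectively collapse the $\abs{\F}$-fold blowup of the index set, is what delivers the $\abs{\F^*}$-free threshold and allows this algorithm to dominate \cref{thm:refutation} in the large-$\abs{\F}$ regime.
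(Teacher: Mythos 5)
Your proposal takes a genuinely different route from the paper, but it contains a gap that would need to be closed before it could work.

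The paper's algorithm is purely combinatorial and does not involve a Sum-of-Squares relaxation or any spectral certificate at all. For each $\ell$-subset $S \subseteq [n]$, it collects the constraints whose supports lie entirely inside $S$ and solves that restricted $k$-$\LIN$ subinstance $\mcI_S$ exactly by brute force over all $\abs{\F}^\ell$ local assignments; the output is the weighted average $\frac{1}{\abs{\mcH}\binom{n}{\ell-k}}\sum_S \algval(\mcI_S)\,\abs{\mcH_S}$. Soundness is immediate because each global assignment restricts to a local one and each constraint appears in exactly $\binom{n}{\ell-k}$ subsets. For Item 2 the paper applies two Chernoff bounds: every $\mcH_S$ has $\Omega(\ell\log(\abs{\F}n)\eps^{-2})$ constraints w.h.p.\ (using the randomness of the supports), and every such semirandom subinstance has value $\leq 1/\abs{\F}+\eps$ w.h.p., with a union bound over the $\binom{n}{\ell}$ subsets. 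For Item 3 the support Chernoff bound is unavailable, so subsets with $\abs{\mcH_S} < \Theta(\eps)\abs{\mcH}$ are discarded (contributing at most $\eps/2$ to the output), and the concentration argument is applied to the rest, costing the extra $\eps^{-1}$.

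The gap in your proposal concerns the claim that a spectral SoS certificate on a $(k/2)$-flattened matrix $M_t$ ``naturally produces'' the $(n/\ell)^{k-1}$ threshold. The SoS Cauchy--Schwarz step you describe, relating $\tilde{\E}_\ell[F_t]$ to $\Norm{M_t}$ with $M_t$ indexed by $(k/2)$-subsets, is a degree-$k$ inequality, so the resulting bound cannot depend on $\ell$ once $\ell \geq k$; you would get a fixed threshold of the form $n^{\,k/2}\cdot\poly(\log)$, not $n(n/\ell)^{k-1}$. To get any $\ell$-dependence you would need a genuine Kikuchi matrix of size $\binom{n}{\ell}$, but that construction produces the threshold $(n\abs{\F^*}/\ell)^{k/2-1}$ with the polynomial $\abs{\F^*}$-dependence this theorem is specifically designed to avoid — and as the paper explains, that is precisely the regime (large $\abs{\F}$, large $\ell$) where the trivial algorithm outperforms the Kikuchi one. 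The brute-force $(n/\ell)^{k-1}$ threshold is not a spectral phenomenon: it arises because a random $k$-constraint lands inside a fixed $\ell$-subset with probability $\Theta((\ell/n)^k)$, so $\abs{\mcH}\cdot(\ell/n)^k \gtrsim \ell$ is the condition that each $\mcI_S$ is dense enough to be locally unsatisfiable. Your observation that the one-hot constraints should ``collapse the $\abs{\F}$-fold blowup'' is the right intuition for why the field dependence can be logarithmic, but no spectral argument is given that realizes this; the paper sidesteps the issue entirely by never flattening anything — the $\log\abs{\F}$ comes simply from the union bound over the $\abs{\F}^\ell$ local assignments in the unsatisfiability Chernoff bound.
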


\cref{thm:trivialalgorithm} is nearly identical to \cref{thm:refutation}: the main difference is that, ignoring the lower order $\poly(\log n, \eps^{-1})$ term, the ``constraint threshold'' is now $n \cdot (n/\ell)^{k-1}$ instead of $n \cdot (n \abs{\F^*}/\ell)^{k/2 - 1}$, and this is smaller when $\ell \geq n/\abs{\F^*}^{1 - 2/k}$. As this algorithm is ``captured'' by the Sum-of-Squares hierarchy, the Sum-of-Squares lower bound in \cref{thm:infsoslowerbound} is false when $\ell \geq n/\abs{\F^*}^{1 - 2/k}$. Thus, the range of $\ell$ where the lower bound in \cref{thm:infsoslowerbound} holds is nearly tight: we show it holds for $\ell \leq n/\abs{\F^*}$, and it is false for $\ell \geq n/\abs{\F^*}^{1 - 2/k}$.

\parhead{Extensions to finite Abelian groups.}
We also extend our refutation result for $k$-$\LIN(\F)$ to $k$-$\LIN(\Z_m)$ for composite $m$, and more generally to any finite Abelian group $G$. Below, we define the natural extension of \cref{def:semirandomklin} to Abelian groups, and then state our generalization of \cref{thm:refutation} to this setting.  Recall that by the fundamental theorem of finite Abelian groups, any finite Abelian group $G$ is isomorphic to $\bigotimes_{i=1}^r \Z_{m_i}$ for some $m_1,..., m_r \in \N$. 

\begin{definition}[(Semirandom) $k$-$\LIN$ over an Abelian group $G$]
\label{def:groupsemirandomklin}
    Let $G = \bigotimes_{i=1}^r \Z_{m_i}$ for some $m_1,..., m_r \in \N$. An instance of $k$-$\LIN(G)$ is $\mcI = (\mcH, \{b_v\}_{v \in \mcH})$, where $\mcH$ is a set of $k$-sparse vectors in $G^n$ and $b_v \in G$ for all $v \in \mcH$. We view $\mcI$ as representing the system of linear equations with variables $x_1, \dots, x_n$ specified by $\ip{v,x} = b_v$ for each $v \in \mcH$. Note that the inner product notation here represents
    \begin{equation*}
        \ip{v, x} = \sum_{i=1}^n v_i \cdot x_i\mcom
    \end{equation*}
    where the multiplication is direct product multiplication over each $\Z_{m_i}$. The value of the instance, which we denote by $\val(\mcI)$, is the maximum over $x \in \F^n$ of the fraction of constraints satisfied by $x$. That is, $\val(\mcI) = \max_{x \in G^n} \frac{1}{\abs{\mcH}} \sum_{v \in \mcH} \1(\ip{x,v} = b_v)$.

     An instance of $k$-$\LIN$ is \emph{random} if $\mcH$ is a random subset of $k$-sparse vectors and each $b_v$ is drawn independently and uniformly from $G$.    
    
    An instance of $k$-$\LIN$ is \emph{semirandom} if each $b_v$ is drawn independently and uniformly from $G$ (but $\mcH$ may be arbitrary).
\end{definition}

\cref{def:groupsemirandomklin} allows each equation to have coefficients on the variables, which was natural in the case of finite fields (\cref{def:semirandomklin}), but may appear strange in the case of a general finite Abelian group, where it is perhaps more natural to only have coefficients that are $1$. However, because we are working with semirandom instances, the ``left-hand sides'' of the equations are arbitrary, and so semirandom instances ``capture'' the special case where the coefficients are all $1$. We choose this perhaps nonstandard definition because it is more general; it seamlessly captures both the case of a  finite field $\F$ or a ring $\Z_m$, where coefficients are natural, and also a finite Abelian group.

Our final result generalizes \cref{thm:refutation} to the case of $k$-$\LIN(G)$.

\begin{theorem}[Tight refutation of semirandom $k$-$\LIN(G)$]
    \label{thm:grouprefutation}
    Fix $\ell \geq k/2$. There is an algorithm that takes as input a $k$-$\LIN(G)$ instance $\mcI = (\mcH, \{b_v\}_{v \in \mcH}$ in $n$ variables and outputs a number $\algval(\mcI) \in [0,1]$ in time $(\abs{G} n)^{O(\ell)}$ with the following two guarantees:
    \begin{enumerate}
        \item \label{item:grouprefutation1} $\algval(\mcI) \geq \val(\mcI)$ for every instance $\mcI$;
        \item \label{item:grouprefutation2} If $\abs{\mcH} \geq O(n) \cdot  \left(\frac{n\abs{G}}{\ell}\right)^{k/2-1} \cdot \log(\abs{G}n) \cdot \varepsilon^{-5}$ and $\mcI$ is drawn from the semirandom distribution described in \cref{def:groupsemirandomklin}, then with probability $\geq 1-\frac{1}{\poly(n)}$ over the draw of the semirandom instance, i.e., the randomness of $\{b_v\}_{v \in \mcH}$, it holds that $\algval(\mcI) \leq \frac{1}{\abs{G}} + \varepsilon$.
    \end{enumerate}
\end{theorem}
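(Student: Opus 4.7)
The plan is to reduce refutation of $k$-$\LIN(G)$ to a character-by-character analysis on the Pontryagin dual $\widehat{G}$ and then invoke a group-theoretic generalization of the Kikuchi matrix machinery underlying \cref{thm:refutation}. Using the Fourier identity $\1[y = 0_G] = \frac{1}{\abs{G}} \sum_{\chi \in \widehat{G}} \chi(y)$, the value decomposes as
\begin{equation*}
\val(\mcI) \;=\; \frac{1}{\abs{G}} \;+\; \frac{1}{\abs{G}\abs{\mcH}} \max_{x \in G^n} \sum_{\chi \neq 1}\, \sum_{v \in \mcH} \chi(\ip{x,v})\, \overline{\chi(b_v)}\mper
\end{equation*}
Pulling the maximum inside the sum over characters reduces the task, for each nontrivial $\chi \in \widehat{G}$, to certifying the bound
\begin{equation*}
\max_{x \in G^n}\, \Bigl| \sum_{v \in \mcH} \chi(\ip{x,v})\, \overline{\chi(b_v)} \Bigr| \;\leq\; \varepsilon \, \abs{\mcH}\mcom
\end{equation*}
so that after a union bound over the $\abs{G}-1$ nontrivial characters, the total character-sum contribution is at most $\varepsilon$ (the $\abs{G}$ in the denominator cancelling the $\abs{G}$ factor from the union).

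For each nontrivial $\chi$, I would build a complex Kikuchi matrix $A_\chi$ in direct analogy with the finite-field construction: rows and columns are indexed by size-$\ell$ ``buckets'' (subsets, together with an appropriate $G$-valued partial labeling), and the $(S,T)$-entry accumulates $\overline{\chi(b_v)}$ times a deterministic unit-modulus character weight in $\chi(G)$ for every $v \in \mcH$ whose support is compatible with the transition $S \to T$. The design enforces the same extra compatibility constraint that the paper highlights as the crucial fix beyond the naive generalization of the $\F_2$ construction, and its purpose is to ensure that, for an appropriate lift $y_x \in \C^N$ of the prospective assignment $x$, the quadratic form $\langle y_x, A_\chi y_x \rangle$ equals, up to a known deterministic scalar and controllable lower-order terms, the square of the character sum I want to bound.

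The spectral bound $\Norm{A_\chi}_{\mathrm{op}} \lesssim (\abs{G}n/\ell)^{(k-2)/4}\sqrt{\abs{\mcH}\,\ell \log(\abs{G}n)}$ with high probability follows from a trace-moment / matrix-Khintchine computation of the same flavor as in the proof of \cref{thm:refutation}. The new ingredient is that when $b_v$ is uniform on $G$ and $\chi$ is nontrivial, the scalars $\chi(b_v)$ are i.i.d.\ uniform on the cyclic subgroup $\chi(G) \subseteq \C^*$; Steinhaus symmetrization then lets them play the role of Rademacher signs in the Boolean analysis. Combined with a lower bound $\|y_x\|_2^2 \gtrsim \abs{\mcH}$ for the lifted vector and rearranging yields the $\varepsilon\abs{\mcH}$ character-sum bound precisely at the clause-count threshold stated in the theorem; the worsened $\varepsilon^{-5}$ (vs.\ $\varepsilon^{-4}$ in \cref{thm:refutation}) absorbs the extra $\varepsilon^{-1}$ needed so that the union bound over the $\abs{G}$ characters, together with the failure probability in the Khintchine tail, closes with room to spare.

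The main obstacle is the construction of $A_\chi$ itself when $G = \bigotimes_i \Z_{m_i}$ has composite factors: the coefficient set is a ring with zero divisors, nonzero coefficients need not be invertible, and a character $\chi$ of $G$ does not factor through a field quotient. The combinatorial ``support-matching'' condition carving out the nonzero entries of $A_\chi$ therefore has to be phrased entirely in terms of group operations while simultaneously preserving (i) the quadratic-form identity with the character sum and (ii) a low-rank structure on each $v$'s contribution that is tight enough to drive the trace method to the stated bound. I expect the right definition to index rows and columns by pairs consisting of an $\ell$-subset and a $G$-valued partial coloring, with a group-level cancellation condition on the colorings that generalizes the indset/color constraints used in the proof of \cref{thm:refutation}.
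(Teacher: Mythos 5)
Your proposal takes a genuinely different route from the paper, but it has two substantive gaps that I do not think you can close as stated.

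\paragraph{Quantitative loss from the per-character split.}
Your decomposition $\val(\mcI) = \frac{1}{\abs{G}} + \frac{1}{\abs{G}\abs{\mcH}}\sum_{\chi\neq 1}\sum_v \chi(\ip{x,v})\overline{\chi(b_v)}$, followed by pushing the $\max_x$ inside the $\chi$-sum, is the analogue of fixing $\beta$ in the paper's notation and handling each $\beta$ separately. But the paper's key design choice is to \emph{not} do this: their Kikuchi matrix $A = \sum_{v,\beta}c_{v,\beta}A_{v,\beta}$ sums over all $\beta$ inside one matrix, and the ``support symmetric difference'' constraint (\cref{rem:nontrivialkikuchi}) ensures that each vertex $U$ has at most one $(V,\beta)$ neighbor per constraint $v$. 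This makes the average degree scale as $\abs{G^*}\cdot(\ell/\abs{G}n)^{k/2}\abs{\mcH}$ (\cref{fact:groupavgdegree}) while the trace-moment encoding overhead is no worse than for a single $\beta$. If you freeze $\chi$ (equivalently, fix $\beta$) and build $A_\chi$, each constraint contributes only $\Delta$ edges rather than $\abs{G^*}\Delta$, so the average degree drops by a factor $\abs{G^*}$ and the trace-moment bound forces $\abs{\mcH}\gtrsim n\abs{G}\cdot(n\abs{G}/\ell)^{k/2-1}\log(\abs{G}n)\varepsilon^{-2}$ per character --- an extra $\abs{G}$ compared to \cref{item:grouprefutation2}. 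The cancellation of ``the $\abs{G}$ in the denominator'' against ``the $\abs{G}$ from the union bound'' that you invoke does not rescue you: the per-character \emph{threshold} already carries the extra $\abs{G}$, not just the union over characters. For $\abs{G}\gg\varepsilon^{-3}$ (say $\abs{G}=n^{\delta}$, $\varepsilon$ constant) this is worse than the claimed bound by a polynomial factor, and $\varepsilon^{-5}$ does not have room to absorb it.

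\paragraph{The zero-divisor obstacle is named but not resolved.}
You correctly identify the core difficulty: for composite-order $G$ the relation $U-V=\beta v$ can hold for many $\beta\in G^*$, so the ``at most one $\beta$'' property that drives the encoding fails. The paper's resolution is twofold and neither piece is present in your write-up: (i) it introduces $\lambda$-robustness and shows (\cref{obs:groupdeg}) that the multiplicity of $\beta$ is bounded by $1/\lambda$, paying $1/\lambda$ in the trace-moment encoding (\cref{lem:groupmaincountingbacktrackingwalks}); and (ii) it preprocesses the instance by quotienting $G$ by a carefully chosen subgroup $H$ (\cref{lem:groupchoice}) so that every nontrivial subgroup of $G/H$ has index $O(\varepsilon^{-1})$, guaranteeing $1/\lambda=O(\varepsilon^{-1})$ and handing the stray $\varepsilon^{-1}$ to the final constant. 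Your paragraph ends with ``I expect the right definition to index rows and columns by pairs consisting of an $\ell$-subset and a $G$-valued partial coloring, with a group-level cancellation condition'' --- this is precisely the paper's \cref{def:groupkikuchimatrix}, but the construction alone does not close the gap; the $\lambda$-robustness reduction is what makes the bound go through. Without it, the bound you would get carries an unconditional $\abs{G}$ factor on top of the one from the previous paragraph.

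\paragraph{Minor.}
``Steinhaus symmetrization'' is a red herring here: $\chi(b_v)$ is uniform on the finite cyclic subgroup $\chi(G)\subseteq\mathbb{T}$, not on the full circle, and the trace-moment calculation uses the exact Fourier orthogonality $\E_{b_v}[\chi_\alpha(b_v)]=\1[\alpha=0]$ to identify the trivially closed walks (\cref{def:grouptriviallyclosedwalks}), not a symmetrization argument.
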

The proof of \cref{thm:grouprefutation} encounters additional technical difficulties compared to \cref{thm:refutation}, arising from zero divisors in $\Z_m$ for composite $m$. Roughly, this allows a semirandom instance to embed equations within a subgroup of $\Z_m$, by, e.g., choosing only equations where the coefficients are divisible by some integer $d \geq 2$, and handling this issue requires an additional step and an extra factor $\varepsilon^{-1}$.

\parhead{Organization.}
The rest of the paper is organized as follows. In \cref{sec:prelims}, we introduce notation and standard facts that we use. In \cref{sec:refutation}, we prove \cref{thm:refutation} for even $k$. This section also serves as a proof overview and warmup for the odd $k$ and general group case. In \cref{sec:grouprefutation}, we prove \cref{thm:grouprefutation} in the even case. In 
\cref{sec:mainrefutation}, we prove \cref{thm:refutation} and \cref{thm:grouprefutation} for odd $k$. In \cref{sec:sos}, we prove \cref{thm:infsoslowerbound}, and finally in \cref{sec:trivialalgorithm}, we prove \cref{thm:trivialalgorithm}.

\section{Preliminaries}
\label{sec:prelims}

\subsection{Notation}
We let $[n]$ denote the set $\{1, \dots, n\}$. For two subsets $S, T \subseteq [n]$, we let $S \oplus T$ denote the symmetric difference of $S$ and $T$, i.e., $S \oplus T \coloneqq \{i : (i \in S \wedge i \notin T) \vee (i \notin S \wedge i \in T)\}$. For a natural number $t \in \N$, we let ${[n] \choose t}$ be the collection of subsets of $[n]$ of size exactly $t$. 

For a rectangular matrix $A \in \C^{m \times n}$, we let $\norm{A}_2 \coloneqq \max_{x \in \C^m, y \in \C^n: \norm{x}_2 = \norm{y}_2 = 1} x^{\dagger} A y$ denote the spectral norm of $A$.

For a vector $v\in \F^n$, we let $\supp(v) \defeq \{i : v_i \ne 0\}$ and $\wt(v) \defeq \abs{\supp(v)}$. For a field $\F$ with $\chara(\F) = p$, we let $\Tr(\cdot)$ denote the trace map of $\F$ over $\F_p$. We denote for two vectors $u, v \in \F^n$ the set of indices in $\supp(u) \cap \supp(v)$ such that $u_i = v_i$ as the agreement of $u$ and $v$, or simply $u \sqcap v$. Analogously we say $u \sqsubseteq v$ if $u \sqcap v = \supp(u)$.

For a matrix $A \in \C^{n \times n}$, we let $\tr(A)$ be the trace of $A$, i.e., $\sum_{i = 1}^n A_{i,i}$. This should not be confused with the trace map for field elements, which we denote by $\Tr(\cdot)$.
For two vectors $x, y \in \C^n$ we define the following inner product:
\begin{equation*}
    \langle x, y \rangle = x^\dagger y = \sum_{i = 1}^n \overline{x_i} \cdot y_i\mper
\end{equation*}

\subsection{Fourier analysis}

Let $G$ be an Abelian group isomorphic to $\Z_{m_1} \times ... \times \Z_{m_r}$ via the isomorphism $\psi$. For $m \in \N$, we let $\omega_m := e^{\frac{2\pi i}{m}}$. For $\alpha, x \in G$, we define
\begin{equation*}
    \chi_\alpha(x) = \prod_{i=1}^r \omega_{m_i}^{\psi(\alpha)_i \psi(x)_i}\mper
\end{equation*}
These functions form a Fourier basis for $G$, as shown in \cite{ODonnell14}. This extends to a Fourier basis for $G^n$ as follows. For $v, x \in G^n$, we define
\begin{equation*}
    \chi_v(x) = \prod_{i=1}^n \chi_{v_i}(x_i)\mper
\end{equation*}
For a function $f \colon G^n \to \C$, we have that for each $x \in G^n$,
\begin{equation*}
    f(x) = \sum_{v \in G^n} \hat{f}(v) \cdot \chi_v(x)\mcom
\end{equation*}
where $\hat{f}(v) = \E_{x \in G^n}\sbra{f(x) \cdot \overline{\chi_v(x)}}$. 

For the special case of functions $f \colon \F^n \to \C$ with $\textrm{char}(\F) = p$, we note that the standard Fourier basis is simply
\begin{equation*}
    \chi_v(x) = \omega_p^{\Tr(\langle v, x \rangle)}\mper
\end{equation*}

\subsection{Binomial coefficient inequalities}

\begin{fact}
\label{fact:binomest}
Let $n, \ell, q$ be positive integers with $\ell \leq n$. Let $q$ be constant and $\ell, n$ be asymptotically large with $\ell \leq n/2$. Then, 
\begin{flalign*}
&\frac{ {n \choose \ell - q}}{ {n \choose \ell }} = \Theta\left(\left(\frac{\ell}{n}\right)^q\right) \mcom \\
&\frac{ {n - q \choose \ell} }{{n \choose \ell}} = \Theta(1) \mper
\end{flalign*}
\end{fact}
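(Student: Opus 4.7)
The plan is to expand both binomial coefficient ratios in terms of factorials, cancel common factors, and then apply elementary asymptotic estimates to the resulting short products. Since $q$ is constant while $n, \ell \to \infty$, each of the products that arises will have exactly $q$ factors, which is the source of the $\Theta(\cdot)$ with constants depending only on $q$.

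For the first identity, I would write
\begin{equation*}
\frac{\binom{n}{\ell-q}}{\binom{n}{\ell}} = \frac{\ell!\,(n-\ell)!}{(\ell-q)!\,(n-\ell+q)!} = \frac{\prod_{i=0}^{q-1}(\ell - i)}{\prod_{i=1}^{q}(n-\ell+i)} \mper
\end{equation*}
Since $q$ is a constant and $\ell \to \infty$, the numerator satisfies $\prod_{i=0}^{q-1}(\ell-i) = \Theta(\ell^q)$. For the denominator, the hypothesis $\ell \leq n/2$ gives $n - \ell \geq n/2 \to \infty$, so $\prod_{i=1}^{q}(n-\ell+i) = \Theta((n-\ell)^q) = \Theta(n^q)$. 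Taking the ratio yields $\Theta((\ell/n)^q)$, as desired.

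For the second identity, analogously,
\begin{equation*}
\frac{\binom{n-q}{\ell}}{\binom{n}{\ell}} = \frac{(n-q)!\,(n-\ell)!}{n!\,(n-q-\ell)!} = \frac{\prod_{i=0}^{q-1}(n-\ell-i)}{\prod_{i=0}^{q-1}(n-i)} \mper
\end{equation*}
Both the numerator and denominator are products of $q$ terms; each term in the denominator is $n - O(1) = \Theta(n)$, and each term in the numerator is $n - \ell - O(1)$, which is at least $n/2 - O(1) = \Theta(n)$ and at most $n = O(n)$. Hence both products are $\Theta(n^q)$ with constants depending only on $q$, and the ratio is $\Theta(1)$.

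There is no real obstacle here: the only thing to be careful about is to use $\ell \leq n/2$ to bound $n - \ell$ from below by $n/2$ (which keeps the denominator factors in the first identity, and the numerator factors in the second identity, from degenerating), and to use that $q$ is constant so that the hidden constants in the $\Theta(\cdot)$ do not depend on $n$ or $\ell$.
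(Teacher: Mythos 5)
Your proof is correct and follows essentially the same approach as the paper: expand the ratios, reduce to products of $q$ factors, and use $\ell \le n/2$ and $q = O(1)$ to control each factor. The only cosmetic difference is in the first identity, where the paper first rewrites $\binom{n}{\ell-q}/\binom{n}{\ell}$ as $\binom{\ell}{q}/\binom{n-\ell+q}{q}$ and then applies the standard bound $(a/b)^b \le \binom{a}{b} \le (ea/b)^b$, whereas you estimate the factorial products directly; the two are interchangeable.
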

\begin{proof}
We have that
\begin{flalign*}
&\frac{ {n \choose \ell - q}}{ {n \choose \ell }} = \frac{ { \ell \choose q}}{ {n - \ell + q \choose q}} \mper
\end{flalign*}
Using that $\left(\frac{a}{b} \right)^{b} \leq {a \choose b} \leq \left(\frac{e a}{b} \right)^{b}$ finishes the proof of the first equation.

We also have that
\begin{flalign*}
&\frac{ {n - q \choose \ell} }{{n \choose \ell}} = \frac{(n - q)! (n - \ell)!}{n! (n - \ell - q)!} = \prod_{i = 0}^{q -1 } \frac{n - \ell - i}{n - i} = \prod_{i = 0}^{q-1} \left(1 - \frac{\ell}{n - i}\right) \mcom
\end{flalign*}
and this is $\Theta(1)$ since $\ell \leq n/2$ and $q$ is constant.
\end{proof}

\subsection{The Sum-of-Squares algorithm}
\label{sec:sum-of-squares}
We briefly define the key Sum-of-Squares facts that we use. These facts are all taken from~\cite{BarakS16,FlemingKP19}.
\begin{definition}[Pseudo-expectations over the hypercube] \label{def:pseudo-expectation}
A degree $d$ pseudo-expectation $\pE$ over $\Bits^n$ is a linear operator that maps degree $\leq d$ polynomials on $\Bits^n$ into real numbers with the following three properties:
\begin{enumerate}
    \item (Normalization) $\pE_\mu[1] = 1$.
	\item (Booleanity) For any $x_i$ and any polynomial $p$ of degree $\leq d-2$, $\pE_\mu[p(x_i^2 - x_i)] = 0$. 
	\item (Positivity) For any polynomial $p$ of degree at most $d/2$, $\pE_\mu[p^2] \geq 0$. 
\end{enumerate} 
\end{definition}
We note that if $\E$ is the expectation operator of a distribution over $\Bits^n$, then $\E$ is a degree $d$ pseudo-expectation (for any $d$), and thus $\max_{x \in \Bits^n} f(x) \leq \max_{\pE} \pE[f]$, where the second max is taken over all degree $d$ pseudo-expectations $\pE$.

The Sum-of-Squares algorithm shows that we can efficiently maximize $\pE[f]$ over degree $d$ pseudo-expectations $\pE$ for a polynomial $f$. 
\begin{fact}[Sum-of-Squares algorithm, Corollary 3.40 in \cite{FlemingKP19}]
\label{fact:sosalg}
Let $f(x_1, \dots, x_n)$ be a polynomial of degree $k$, where the coefficients of $f$ are rational numbers with $\poly(n)$ bit complexity. Let $d \geq k$. There is an algorithm that, on input $f,d$, runs in time $n^{O(d)}$ and outputs a value $\alpha$ such that $\beta + 2^{-n} \geq \alpha \geq \beta$, where $\beta$ is the maximum, over all degree $d$ pseudo-expectations $\pE$ over $\Bits^n$, of $\pE[f]$. 
\end{fact}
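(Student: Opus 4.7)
The plan is to reduce the optimization of $\pE[f]$ over degree $d$ pseudo-expectations to a semidefinite program (SDP) of size $n^{O(d)}$, and then invoke a standard polynomial-time SDP solver to obtain an additively $2^{-n}$-approximate answer. Throughout, I think of a pseudo-expectation as a linear functional on the vector space of polynomials on $\Bits^n$ of degree $\leq d$, which (modulo the Booleanity relations $x_i^2 = x_i$) has a monomial basis $\{x_S\}_{S \subseteq [n], |S| \leq d}$ of dimension $N := \sum_{j=0}^d \binom{n}{j} = n^{O(d)}$.

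First I would encode a candidate $\pE$ by the $N$ real variables $y_S := \pE[x_S]$ for $|S| \leq d$. The normalization condition becomes the single linear equation $y_\emptyset = 1$, and the Booleanity condition is automatically enforced by the choice of multilinear monomial basis (any polynomial of degree $\leq d-2$ in the Booleanity axioms reduces to a linear combination of multilinear monomials). The positivity condition is equivalent to the statement that the degree-$d/2$ moment matrix $M(y) \in \R^{N' \times N'}$, indexed by subsets $S, T \subseteq [n]$ with $|S|, |T| \leq d/2$ and defined by $M(y)_{S,T} := y_{S \cup T}$, is positive semidefinite; this is the standard observation that $\pE[p^2]$ for $p = \sum_S c_S x_S$ equals $c^\top M(y) c$ after multilinear reduction. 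The objective $\pE[f] = \sum_S \coeff{f}_S y_S$ is linear in $y$. Thus the problem is the SDP: maximize $\sum_S \coeff{f}_S y_S$ subject to $y_\emptyset = 1$ and $M(y) \succeq 0$.

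Next I would bound the feasible region to prepare for the ellipsoid method. By Cauchy-Schwarz for pseudo-expectations (which follows from positivity applied to $p = \lambda x_S \pm x_T$), together with $\pE[x_S^2] = \pE[x_S] \in [0,1]$ on the Boolean hypercube, every $|y_S| \leq 1$. So the feasible region sits inside the box $[-1,1]^N$. It also contains a ball of radius $\exp(-n^{O(d)})$ around the uniform pseudo-expectation $y_S = 2^{-|S|}$ (whose moment matrix is strictly positive definite with eigenvalues bounded below). These two properties — a bounded feasible region with a small interior ball around a known point — are exactly the hypotheses needed to run the ellipsoid method (or an interior point method) to additive accuracy $2^{-n}$ in $\poly(N, n) = n^{O(d)}$ time, provided one has an efficient separation oracle.

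The main obstacle, and thus the step that needs the most care, is the separation oracle and the bit-complexity of the iterates. Given candidate rationals $y$, the oracle must decide whether $M(y) \succeq 0$ and, if not, output a separating hyperplane; this reduces to computing an approximate smallest eigenvector of $M(y)$, which can be done in polynomial time in the bit-length of $y$ and $N$ by standard numerical linear algebra. One must verify that the ellipsoid iterates maintain bit-length $\poly(N)$ and that rounding errors in the eigenvector computation can be absorbed into the $2^{-n}$ additive slack on the returned value $\alpha$. Given the bounded feasible region and PD interior point established above, this is routine, yielding the claimed runtime $n^{O(d)}$ and the two-sided bound $\beta \leq \alpha \leq \beta + 2^{-n}$.
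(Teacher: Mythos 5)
The paper does not prove this statement; it is imported verbatim as Corollary 3.40 of \cite{FlemingKP19}, and your sketch is essentially the standard SDP-plus-ellipsoid argument that underlies the cited result. Your argument is correct in outline; the one place that deserves a touch more care is the claim $|y_S|\leq 1$ for $d/2 < |S|\leq d$ (split $S=S_1\cup S_2$ with $|S_i|\leq d/2$ and apply Cauchy--Schwarz to $x_{S_1},x_{S_2}$, after first deriving $0\leq \pE[x_T]\leq 1$ for $|T|\leq d/2$ from $\pE[x_T^2]\geq 0$ and $\pE[(1-x_T)^2]\geq 0$), and one should also note explicitly that the ellipsoid method's dependence on the target accuracy is polylogarithmic, so accuracy $2^{-n}$ costs only a further $\poly(n)$ factor inside the $n^{O(d)}$ bound.
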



\section{Refuting Semirandom \texorpdfstring{$k$-$\LIN$}{k-LIN} over Fields for Even $k$}
\label{sec:refutation}

In this section, we give a complete proof of \cref{thm:refutation} in the case when $k$ is even. As in~\cite{GuruswamiKM22, HsiehKM23}, the proof is substantially simpler in the case of even $k$, so this section serves as a warmup to the proof for odd $k$, which we do in \cref{sec:mainrefutation}.

Our refutation algorithm for semirandom $k$-$\LIN$ roughly follows the framework established in \cite{GuruswamiKM22, HsiehKM23}. The main technical tool we use is a generalization of the Kikuchi matrix of~\cite{WeinAM19} for $\F_2$ to arbitrary finite fields $\F$. Analyzing the spectral norm of this matrix requires a more complicated trace moment calculation as compared to the case of $\F_2$, and requires a careful choice of the Kikuchi matrix (see \cref{rem:nontrivialkikuchi}).

We let $\textrm{char}(\F) = p$ and $\omega_p = e^{2 \pi i/p}$ denote a primitive $p$-th root of unity in $\C$.

\subsection{Step 1: Expressing a \texorpdfstring{$k$-$\LIN(\F)$}{k-LIN(F)} instance as a polynomial in $\C$} 
As the first step in the proof, we make the following observation, which shows that we can express the fraction of constraints satisfied by an assignment $x \in \F^n$ as a polynomial in $n$ variables in $\C$.
\begin{observation}
    \label{obs:advantageoverpolynomial}
    For a $k$-$\LIN(\F)$ instance $\mcI = (\mcH, \{b_v\}_{v \in \mcH})$, let $\val(\mcI, x)$ denote the fraction of constraints satisfied by an assignment $x \in \F^n$. Then, we can express $\val(\mcI, x)$ as a polynomial in $\C$. That is,
    \begin{equation*}
        \val(\mcI, x) = \frac{1}{\abs{\F}} + \frac{1}{\abs{\mcH} \abs{\F}}\sum_{v \in \mcH} \sum_{\beta \in \F^*} \omega_p^{\Tr(\beta b_v)} \cdot \overline{\chi_{\beta v}(x)} := \frac{1}{\abs{\F}} + \Phi(x) \mper
    \end{equation*}
\end{observation}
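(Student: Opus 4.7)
The plan is to use the Fourier orthogonality identity over $\F$ to encode the indicator of each constraint. Concretely, the non-degeneracy of the trace $\Tr \colon \F \to \F_p$ combined with standard character orthogonality yields, for every $y \in \F$,
\begin{equation*}
\1(y = 0) = \frac{1}{\abs{\F}} \sum_{\beta \in \F} \omega_p^{\Tr(\beta y)} \mper
\end{equation*}
Setting $y = b_v - \langle v, x \rangle$ rewrites the indicator that the constraint labeled by $v$ is satisfied as a character sum over $\F$.

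Substituting this expansion into $\val(\mcI, x) = \frac{1}{\abs{\mcH}} \sum_{v \in \mcH} \1(\langle v, x \rangle = b_v)$ and exchanging the order of summation produces
\begin{equation*}
\val(\mcI, x) = \frac{1}{\abs{\mcH} \abs{\F}} \sum_{v \in \mcH} \sum_{\beta \in \F} \omega_p^{\Tr(\beta b_v)} \cdot \omega_p^{-\Tr(\langle \beta v, x \rangle)} \mper
\end{equation*}
Next, the plan is to isolate the $\beta = 0$ contribution: each such summand equals $1$, so the $\beta = 0$ piece evaluates to $\frac{\abs{\mcH}}{\abs{\mcH}\abs{\F}} = \frac{1}{\abs{\F}}$, which is precisely the fraction a uniformly random assignment is expected to satisfy. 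For the remaining $\beta \in \F^*$, the factor $\omega_p^{-\Tr(\langle \beta v, x \rangle)}$ is, by the definition of $\chi$ recalled in the preliminaries, exactly $\overline{\chi_{\beta v}(x)}$. Assembling these two pieces gives the claimed identity with $\Phi(x)$ as defined.

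No step here is technically delicate; this is essentially a routine Fourier expansion and there is no real obstacle to overcome. The only points requiring minor care are (a) aligning the sign and conjugation conventions so that the expansion naturally produces $\overline{\chi_{\beta v}(x)}$ rather than $\chi_{\beta v}(x)$ -- which is why it is convenient to expand the indicator of $b_v - \langle v, x \rangle = 0$ rather than $\langle v, x\rangle - b_v = 0$ -- and (b) separating the $\beta = 0$ term cleanly so that the $\frac{1}{\abs{\F}}$ baseline appears as an explicit constant rather than being absorbed into the character sum.
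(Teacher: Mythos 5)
Your proof is correct and takes essentially the same approach as the paper: both expand the constraint indicator via the character-sum identity $\1(y=0) = \frac{1}{\abs{\F}}\sum_{\beta\in\F}\omega_p^{\Tr(\beta y)}$ applied to $y = b_v - \langle v,x\rangle$, exchange sums, and peel off the $\beta = 0$ term to produce the $\frac{1}{\abs{\F}}$ baseline. The only cosmetic difference is that you cite character orthogonality as a standalone fact whereas the paper verifies the two cases inline; the substance is identical.
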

\begin{proof}
    Recall that a constraint in $\mcI$ takes the form $\langle v, x \rangle = b_v$ for $v \in \mcH$, where $x \in \F^n$ are the variables. The indicator variable for this event is simply:
    \begin{equation*}
        \1(\langle v, x \rangle = b_v) = \E_{\beta \sim \F}\sbra{\omega_{p}^{\Tr\left(\beta b_v - \beta \langle v, x \rangle\right)}} = \frac{1}{\abs{\F}}\sum_{\beta \in \F} \omega_p^{\Tr(\beta b_v)} \cdot \overline{\chi_{\beta v}(x)} \mper
    \end{equation*}
    where $p = \chara(\F)$.
    Indeed, if $\langle v, x \rangle = b_v$, then $\Tr(\beta b_v - \beta\ip{v,x}) = 0$ for all $\beta \in \F$. If $b_v - \ip{v,x} \ne 0$, i.e., it is some $\alpha \in \F^*$, then $\E_{\beta \sim \F}\sbra{\omega_p^{\Tr(\beta \alpha)}} = \E_{\beta \sim \F}\sbra{\omega_p^{\Tr(\beta)}} = 0$. Hence, it follows that
    \begin{flalign*}
    &\val(\mcI, x) = \frac{1}{\abs{\mcH}} \sum_{v \in \mcH}  \1(\langle v, x \rangle = b_v) = \frac{1}{\abs{\mcH}} \sum_{v \in \mcH}  \frac{1}{\abs{\F}} \sum_{\beta \in \F} \omega_p^{\Tr(\beta b_v)} \cdot \overline{\chi_{\beta v}(x)} \\
    &= \frac{1}{\abs{\F}} + \frac{1}{\abs{\mcH} \abs{\F}} \sum_{v \in \mcH} \sum_{\beta \in \F^*} \omega_p^{\Tr(\beta b_v)} \cdot \overline{\chi_{\beta v}(x)}  \mcom   \end{flalign*}
    which finishes the proof.
\end{proof}

\subsection{Step 2: Expressing $\Phi(x)$ as a quadratic form on a Kikuchi matrix}
In light of \cref{obs:advantageoverpolynomial}, it thus remains to find a certificate that bounds $\max_{x \in \F^n} \Phi(x)$. We do this by generalizing the analysis of~\cite{GuruswamiKM22} and constructing a Kikuchi matrix whose spectral norm provides a certificate bounding the maximum value of $\Phi$.

\begin{definition}{(Even-arity Kikuchi matrix over $\F$).}
\label{def:kikuchimatrix}
Let $k/2 \leq \ell \leq n/2$ be a parameter,\footnote{Note that it suffices to prove \cref{thm:refutation} for $\ell$ in this range.} and let $N = \abs{\F^*}^\ell {n \choose \ell}$. For each $k$-sparse vector $v \in \F^n$ and $\beta \in \F^*$, we define a matrix $A_{v, \beta} \in \C^{N \times N}$ as follows. First, we identify $N$ with the set of $\ell$-sparse vectors in $\F^n$. Then, for $\ell$-sparse vectors $U, V \in \F^n$, we let
    \begin{equation*}
        A_{v, \beta}(U, V) = \begin{cases}
                      1  & U\xrightarrow{\text{$v, \beta$}} V\\
                      0 & \text{otherwise}
                    \end{cases}
    \end{equation*}
    where we say $U\xrightarrow{\text{$v, \beta$}} V$ if $U-V = \beta v$ and $\supp(U) \oplus \supp(V) = \supp(v)$.
    
Let $\Phi(x) = \frac{1}{\abs{\F} \abs{\mcH}}\sum_{v \in \mcH} \sum_{\beta \in \F^*} c_{v, \beta} \cdot \chi_{\beta v}$ be a polynomial defined by a set $\mcH$ of $k$-sparse vectors from $\F^n$ and complex coefficients $\cbra{c_{v, \beta}}_{\substack{v \in \mcH \\ \beta \in \F^*}}$. We define the level-$\ell$ Kikuchi matrix for this polynomial to be $A = \sum_{v \in \mcH} \sum_{\beta \in \F^*} c_{v, \beta} \cdot A_{v, \beta}$. We refer to the graph (with complex weights) defined by the underlying adjacency matrix as the Kikuchi graph.
\end{definition}

\begin{remark}
\label{rem:nontrivialkikuchi}
Our Kikuchi matrix in~\cref{def:kikuchimatrix} has an additional condition that $\supp(U) \oplus \supp(V) = \supp(v)$. A perhaps more natural generalization of the $\F_2$ Kikuchi matrix of \cite{WeinAM19,GuruswamiKM22} would be the matrix where this condition is removed, i.e., we only require that $U - V = \beta v$. As we shall see, when we do the trace moment calculation at the end of \cref{sec:tracemethod}, it is crucial that for any $U$ and $v$, there is at most one $\beta \in \F^*$ such that $U\xrightarrow{\text{$v, \beta$}} V$ for some $V$. That is, the number of edges adjacent to $U$ ``coming from'' a constraint $v$ is at most $1$. If this uniqueness of $\beta$ did not hold, we would lose an additional factor of $\F^*$ in the number of constraints that we require in \cref{thm:refutation}. This is a substantial increase, as e.g., this would increase the number of constraints when $k = 2$ to $\sim n \abs{\F^*}$ when the correct dependence is $\sim n$. The condition that $\supp(U) \oplus \supp(V) = \supp(v)$ implies uniqueness of $\beta$ above, and without this condition we could have a $U$ with $\supp(v) \subseteq U$, in which case $U - \alpha v = V$ for an $\ell$-sparse vector $V$ in $\F^n$ for all $\alpha \in \F$.
\end{remark}

\begin{remark}
We note that in \cref{def:kikuchimatrix}, we have $A_{v, \beta} = A_{\beta v, 1}$. The reason we use the above definition with two parameters $v$ and $\beta$ is that it is more convenient when counting walks in the matrix $A$, as it makes explicit the choice of $v$ and $\beta$. Note that in $\mcH$, there could exist $v$ and $v'$ with $\beta v = v'$ for some $\beta \in \F^*$, and we need to count these terms separately.
\end{remark}

\begin{observation}
    \label{fact:hermitian}
    The Kikuchi matrix $A$ is always Hermitian.
\end{observation}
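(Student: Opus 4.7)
The plan is to show $A^\dagger = A$ by combining two symmetries: a combinatorial symmetry of the single-constraint matrices $A_{v,\beta}$ under transpose, and a conjugate-symmetry of the coefficients $c_{v,\beta}$ that follows from $\Phi$ being real-valued. First I would fix a pair of $\ell$-sparse vectors $U, V \in \F^n$ and unpack the defining relation of $A_{v,\beta}$: the condition $U \xrightarrow{v,\beta} V$ says $U - V = \beta v$ and $\supp(U) \oplus \supp(V) = \supp(v)$. Negating the first equation and using that symmetric difference is symmetric, this is equivalent to $V - U = (-\beta) v$ and $\supp(V) \oplus \supp(U) = \supp(v)$, i.e., $V \xrightarrow{v,-\beta} U$. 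Since the entries of each $A_{v,\beta}$ are $0/1$ real, this combinatorial involution translates directly to the matrix identity $A_{v,\beta}^\dagger = A_{v,-\beta}$.

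Next I would establish the conjugate-symmetry $c_{v,-\beta} = \overline{c_{v,\beta}}$. This is immediate from the explicit form used in the application: by \cref{obs:advantageoverpolynomial}, up to rewriting $\overline{\chi_{\beta v}(x)} = \chi_{-\beta v}(x)$ in the Fourier expansion, the coefficients take the form $c_{v,\beta} = \omega_p^{\Tr(-\beta b_v)}$, and $\overline{\omega_p^{\Tr(-\beta b_v)}} = \omega_p^{\Tr(\beta b_v)} = c_{v,-\beta}$. (More intrinsically, this is the standard conjugate-symmetry forced on the Fourier coefficients of any real-valued function on $\F^n$.) Combining the two observations and re-indexing $\beta \mapsto -\beta$ over $\F^*$,
\begin{equation*}
A^\dagger = \sum_{v \in \mcH} \sum_{\beta \in \F^*} \overline{c_{v,\beta}}\, A_{v,\beta}^\dagger = \sum_{v \in \mcH} \sum_{\beta \in \F^*} c_{v,-\beta}\, A_{v,-\beta} = A\mper
\end{equation*}
There is no real obstacle here; the only thing to notice is that the natural transpose symmetry of the Kikuchi graph sends $\beta \to -\beta$, which is exactly the symmetry one needs in order to match it with the conjugation on the coefficients coming from the reality of $\Phi$.
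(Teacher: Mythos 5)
Your proof is correct and takes essentially the same route as the paper's: you identify the transpose symmetry $A_{v,\beta}^\dagger = A_{v,-\beta}$ (using that negating $U-V=\beta v$ and commutativity of $\oplus$ give $V \xrightarrow{v,-\beta} U$) and pair it with the conjugate symmetry $\overline{c_{v,\beta}} = c_{v,-\beta}$ of the Fourier coefficients. The paper states the same facts more tersely, but there is no substantive difference in the argument.
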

\begin{proof}
    To see this note that $U - V = \beta v \iff V - U = - \beta v$, $\overline{\chi_\beta} = \chi_{-\beta}$, and $\oplus$ is commutative.
\end{proof}

The following observation shows that we can express $\Phi(x)$ as a quadratic form on the matrix $A$ defined in \cref{def:kikuchimatrix}. Thus, $\norm{A}_2$ bounds $\max_{x \in \F^n} \Phi(x)$.
\begin{observation}
    \label{fact:bilinearforms}
    For $x \in \F^n$ define $y \in \C^N$ as follows. For each $\ell$-sparse $U \in \F^n$, we set $y_U = \chi_U(x)$. Then
    \begin{equation*}
        \Phi(x) = \frac{1}{\abs{\mcH}\abs{\F} \Delta} y^\dagger A y\mcom
    \end{equation*}
    where $\Delta := {k \choose k/2} {n-k \choose \ell - k/2}\abs{\F^*}^{\ell-k/2}$.
\end{observation}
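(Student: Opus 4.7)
The strategy is to expand $y^{\dagger} A y$ by linearity over $\mcH \times \F^*$, show that each block $y^{\dagger} A_{v,\beta} y$ evaluates to $\Delta \cdot \overline{\chi_{\beta v}(x)}$, and then identify the sum with $\abs{\mcH}\abs{\F}\Delta \cdot \Phi(x)$. The first two bullets are trivial manipulations; the content is in pinning down $y^{\dagger} A_{v,\beta} y$, which reduces to counting the pairs $(U,V)$ with $U \xrightarrow{v,\beta} V$.

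First, I would observe that for any pair with $U \xrightarrow{v,\beta} V$, we have $V - U = -\beta v$, so that by multiplicativity of characters
\begin{equation*}
\overline{y_U} y_V \;=\; \overline{\chi_U(x)} \chi_V(x) \;=\; \chi_{V-U}(x) \;=\; \chi_{-\beta v}(x) \;=\; \overline{\chi_{\beta v}(x)}\mper
\end{equation*}
This factor is the \emph{same} for every pair $(U,V)$ with $U \xrightarrow{v,\beta} V$, so it pulls out of the sum, and the inner quadratic form $y^{\dagger} A_{v,\beta} y$ becomes $\overline{\chi_{\beta v}(x)}$ times the number of such ordered pairs.

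Next I would structurally describe these pairs. The condition $\supp(U)\oplus\supp(V)=\supp(v)$ forces the common support $S := \supp(U)\cap\supp(V)$ to be disjoint from $\supp(v)$, because for every $i\in\supp(v)$ the coordinates $U_i,V_i$ differ by $\beta v_i \neq 0$ so exactly one of them vanishes. Outside $\supp(v)$, $U_i - V_i = 0$ forces $U_i = V_i$. Writing $S_U := \supp(U)\cap\supp(v)$ and $S_V := \supp(V)\cap\supp(v)$, one gets a disjoint decomposition $S_U \sqcup S_V = \supp(v)$ together with a set $S \subseteq [n]\setminus\supp(v)$ on which $U$ and $V$ agree. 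The sparsity constraints $\abs{\supp(U)} = \abs{\supp(V)} = \ell$ force $\abs{S_U}=\abs{S_V}=k/2$ and $\abs{S}=\ell-k/2$, and once $(S_U,S,\beta,v)$ is fixed, the coordinates of $U$ on $S_U$ are forced to equal $\beta v_{|S_U}$, while the coordinates on $S$ can be chosen arbitrarily in $\F^*$ (and then $V$ is determined).

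Counting gives $\binom{k}{k/2}$ choices for the split of $\supp(v)$, $\binom{n-k}{\ell - k/2}$ choices for $S$, and $\abs{\F^*}^{\ell-k/2}$ choices for the common free coordinates of $U$ on $S$, for a total of exactly $\Delta$ pairs. Therefore $y^{\dagger} A_{v,\beta} y = \Delta \cdot \overline{\chi_{\beta v}(x)}$. Plugging this into $y^{\dagger} A y = \sum_{v\in\mcH}\sum_{\beta\in\F^*} c_{v,\beta}\, y^{\dagger} A_{v,\beta} y$ and comparing with the definition of $\Phi$ from \cref{obs:advantageoverpolynomial} gives the claim. The only potential pitfall is the clean decomposition $\supp(U) = S_U \sqcup S$ with $S\cap\supp(v)=\emptyset$, which is exactly what the extra condition $\supp(U)\oplus\supp(V)=\supp(v)$ in \cref{def:kikuchimatrix} buys us (cf.\ \cref{rem:nontrivialkikuchi}); without it, the count would not factor so cleanly and the uniqueness of $\beta$ needed later would fail.
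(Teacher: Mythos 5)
Your proof is correct and takes essentially the same route as the paper: expand $y^\dagger A y$ into blocks $y^\dagger A_{v,\beta} y$, pull out the common factor $\overline{\chi_{\beta v}(x)}$, and count the pairs $(U,V)$ with $U \xrightarrow{v,\beta} V$ via the decomposition $\supp(U)=S_U\sqcup S$, $\supp(V)=S_V\sqcup S$. Your write-up is in fact slightly more careful than the paper's at the step deducing $\abs{S_U}=\abs{S_V}=k/2$ (which needs both the $\oplus$-condition and the equal-sparsity constraint, not the $\oplus$-condition alone); the only nit is that the ``because'' clause attributes the disjointness of $S$ from $\supp(v)$ to $U_i-V_i=\beta v_i\ne 0$, whereas it really follows directly from the symmetric-difference condition.
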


\begin{proof}
    \begin{align*}
        y^\dagger A y &= \sum_{\substack{U, V \in \F^n\\ \wt(U) = \wt(V) = \ell}} A(U, V) \cdot \overline{\chi_U(x)} \cdot \chi_V(x)\\
        &= \sum_{\substack{U, V \in \F^n\\ \wt(U) = \wt(U) = \ell}} \sum_{v \in \mcH, \beta \in \F^*}
        \1\left(U\xrightarrow{\text{$v, \beta$}} V\right) \cdot c_{v, \beta} \cdot \overline{\chi_U(x)} \cdot \chi_V(x)\\
                &= \sum_{\substack{U, V \in \F^n\\ \wt(U) = \wt(U) = \ell}} \sum_{v \in \mcH, \beta \in \F^*} \1\left(U\xrightarrow{\text{$v, \beta$}} V\right) \cdot c_{v, \beta} \cdot \overline{\chi_{U - V}(x)} \\
        &= \sum_{\substack{U, V \in \F^n\\ \wt(U) = \wt(U) = \ell}} \sum_{v \in \mcH, \beta \in \F^*} \1\left(U\xrightarrow{\text{$v, \beta$}} V\right) \cdot c_{v, \beta} \cdot \overline{\chi_{\beta v}(x)} \mper
    \end{align*}
    For each $v \in \mcH$ and $\beta \in \F^*$, the term $c_{v, \beta} \cdot \overline{\chi_{\beta v}(x)}$ appears once for each pair of vertices $(U, V)$ with $U\xrightarrow{\text{$v, \beta$}} V$. Let us now argue that the number of such pairs $(U,V)$ is exactly $\Delta = {k \choose k/2} {n-k \choose \ell - k/2}\abs{\F^*}^{\ell-k/2}$. We count the number of pairs $(U,V)$ by first specifying $\supp(U)$ and $\supp(V)$, and then by specifying $U_i$ for each $i \in \supp(U)$ (and same for $V$). We first require that $\supp(U) \oplus \supp(V) = \supp(v)$, which in turn means that $\supp(U)$ has intersection exactly $k/2$ with $\supp(v)$ and likewise for $\supp(V)$. Thus, we can pay ${k \choose k/2}$ to count the number of ways to split $\supp(v)$ into two equal parts. Second, we need to specify $\supp(U) \setminus \supp(v)$, which is equal to $\supp(V) \setminus \supp(v)$, which is ${n - k\choose \ell - k/2}$ choices. Finally, we need to specify $U_i$ for each $i \in \supp(U)$ and $V_i$ for each $i \in \supp(V)$. For each $i \in \supp(U) \cap \supp(v)$, we set $U_i = (\beta v)_i$, and for each $i \in \supp(U) \setminus \supp(v)$, we can set $U_i$ to be any element in $\F^*$. Note that specifying $U$ then determines $V$, so we have $\abs{\F^*}^{\ell - k/2}$ choices. This finishes the proof.
 \end{proof}

Next, we compute the average degree (or number of non-zero entries) in a row/column in $A$.
\begin{observation}
    \label{fact:avgdegree}
    For $U \in \F^n$ with $\wt(U) = \ell$, we define the graph degree as normal:
    \begin{equation*}
    \deg(U) := \abs{\{\beta v \mid \beta \in \F^*, v \in \mcH \text{ s.t. } \exists V \in \F^n, \wt(V) = \ell, U \xrightarrow{v, \beta} V\}}\mper
    \end{equation*}
    Then $\E[\deg(U)] \geq \frac{\abs{\F^*}}{2} \left({\frac{\ell}{\abs{\F^*}n}}\right)^{k/2} \cdot \abs{\mcH}$.
\end{observation}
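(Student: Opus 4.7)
The plan is to prove this via a double-counting argument on $\sum_U \deg(U)$, where the sum runs over all $\ell$-sparse $U \in \F^n$ so that $\E[\deg(U)] = N^{-1}\sum_U \deg(U)$ with $N = \abs{\F^*}^\ell \binom{n}{\ell}$. Under the mild assumption that $\mcH$ contains no two scalar multiples of the same vector---which holds WLOG in the semirandom setting, since a pair $v, \alpha v \in \mcH$ yields constraints $\ip{v,x} = b_v$ and $\ip{v,x} = \alpha^{-1} b_{\alpha v}$ that are either inconsistent for random $b$ (giving a trivial refutation) or redundant---the distinct vectors $\beta v$ in the set defining $\deg(U)$ are in bijection with the pairs $(v,\beta) \in \mcH \times \F^*$ producing them. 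Thus $\deg(U)$ equals the number of pairs $(v,\beta) \in \mcH \times \F^*$ for which some valid $V$ exists.

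Next, I swap the order of summation:
\[
\sum_U \deg(U) \;=\; \sum_{(v, \beta) \in \mcH \times \F^*} \abs{\cbra{U : \wt(U) = \ell,\ \exists V,\ U \xrightarrow{v, \beta} V}}\mper
\]
For each fixed $(v, \beta)$, the calculation in the proof of \cref{fact:bilinearforms} shows that the number of valid pairs $(U, V)$ equals $\Delta = \binom{k}{k/2} \binom{n-k}{\ell - k/2} \abs{\F^*}^{\ell - k/2}$, and since $V = U - \beta v$ is uniquely determined by $(U, v, \beta)$, the number of valid $U$'s is also $\Delta$. Summing over $(v, \beta)$ gives $\sum_U \deg(U) = \abs{\mcH} \cdot \abs{\F^*} \cdot \Delta$, and hence
\[
\E[\deg(U)] \;=\; \frac{\abs{\mcH}\, \abs{\F^*}\, \Delta}{N} \;=\; \abs{\mcH} \cdot \abs{\F^*}^{1 - k/2} \binom{k}{k/2} \cdot \frac{\binom{n-k}{\ell-k/2}}{\binom{n}{\ell}}\mper
\]

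Finally, I invoke \cref{fact:binomest} on $\binom{n-k}{\ell - k/2}/\binom{n}{\ell}$: factoring as $[\binom{n-k}{\ell-k/2}/\binom{n}{\ell-k/2}] \cdot [\binom{n}{\ell-k/2}/\binom{n}{\ell}]$, the first ratio is $\Theta(1)$ and the second is $\Theta((\ell/n)^{k/2})$ (taking $q = k$ and $q = k/2$ respectively), so the product is $\Theta((\ell/n)^{k/2})$. Combined with $\binom{k}{k/2} \geq 2$ for $k \geq 2$ and explicit bookkeeping (using $\ell \leq n/2$ to bound the $(1-\ell/n)^{k/2}$-type factors coming from the falling factorials), this yields the claimed lower bound $\E[\deg(U)] \geq \frac{\abs{\F^*}}{2}\Paren{\frac{\ell}{\abs{\F^*}n}}^{k/2}\abs{\mcH}$. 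The main subtlety is the bijection between distinct $\beta v$ vectors and pairs $(v,\beta)$, handled by the scalar-free reduction: without it, one only obtains the weaker bound $\sum_U \deg(U) \geq \abs{\mcH}\Delta$ (losing a factor of $\abs{\F^*}$), which is why this WLOG step is essential to match the claimed dependence on $\abs{\F^*}$. The binomial manipulation itself is otherwise routine but requires care to extract the explicit constant $1/2$.
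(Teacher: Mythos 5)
Your proof takes the same route as the paper's one-line argument: double-count $\sum_U \deg(U)$ to get $\abs{\mcH}\,\abs{\F^*}\,\Delta$ (each of the $\Delta$ pairs $(U,V)$ per $(v,\beta)$ contributing one unit to $\deg(U)$), divide by $N$, and apply \cref{fact:binomest}. Your observation about scalar multiples is a legitimate refinement the paper glosses over: as literally written, $\deg(U)$ is a set-cardinality over vectors $\beta v$, so the identity $\sum_U \deg(U) = \abs{\mcH}\abs{\F^*}\Delta$ — and indeed the claimed lower bound, once $\abs{\mcH}$ is large — can actually fail when $\mcH$ contains $\F^*$-multiples of the same vector; your WLOG preprocessing resolves this, while the paper's later usages (e.g., $\tr(\Gamma) = 2\abs{\mcH}\abs{\F^*}\Delta$ and the walk encoding that assigns one index per new edge) reveal it is implicitly treating $\deg(U)$ as a count over pairs $(v,\beta)$, which coincides with your definition precisely under your WLOG assumption.
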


\begin{proof}
    Each $v \in \mcH$ contributes $\abs{\F^*}\Delta$ to the total degree, so the average degree is $\E[\deg(U)] = \frac{\abs{\mcH}\abs{\F^*}\Delta}{N}$. We then have:
    \begin{equation*}
    \E[\deg(U)] = \frac{\abs{\F^*}\Delta}{N} \cdot \abs{\mcH} = \frac{\abs{\F^*}^{\ell-k/2+1} {k \choose k/2}{n-k \choose \ell-k/2}}{\abs{\F^*}^\ell {n \choose \ell}} \cdot \abs{\mcH} \geq \frac{\abs{\F^*}}{2} \left({\frac{\ell}{\abs{\F^*}n}}\right)^{k/2} \cdot \abs{\mcH} \mcom
    \end{equation*}
    where the last inequality follows from \cref{fact:binomest}.
\end{proof}

\subsection{Step 3: Bounding the spectral norm of $A$ via the trace moment method}
\label{sec:tracemethod}
The following spectral norm bound now implies \cref{thm:refutation}.
\begin{lemma}
    \label{lem:countingbacktrackingwalks}
    Let $A$ be the level-$\ell$ Kikuchi matrix over $\F^n$ defined in \cref{def:kikuchimatrix} for the $k$-$\LIN$ instance $\mcI = (\mcH, \{b_v\}_{v \in \mcH})$. Let $\Gamma \in \C^{N \times N}$ be the diagonal matrix $\Gamma = D + d \Id$ where $D_{U, U} := \deg(U)$ and $d = \E[\deg(U)]$. Suppose that the $b_v$'s are drawn independently and uniformly from $\F$, i.e., the instance $\mcI$ is \emph{semirandom} (\cref{def:semirandomklin}). Then, with probability $\geq 1 - \frac{1}{\poly(n)}$, it holds that
    \begin{equation*}
    \norm{\Gamma^{-1/2} A \Gamma^{-1/2}}_2 \leq O\left(\sqrt{\frac{\ell \log(\abs{\F^*}n)}{d}}\right) \mper
    \end{equation*}
\end{lemma}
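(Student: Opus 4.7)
The plan is to apply the trace moment method to $M := \Gamma^{-1/2} A \Gamma^{-1/2}$. For any even integer $q$, Markov's inequality gives
\begin{equation*}
    \Pr\sbra{\Norm{M}_2 \geq \lambda} \;\leq\; \lambda^{-2q}\, \E\sbra{\tr(M^{2q})}\mper
\end{equation*}
Expanding $\tr(M^{2q})$ as a sum over closed walks $W = (U_0, U_1, \ldots, U_{2q} = U_0)$ of length $2q$ in the Kikuchi graph of \cref{def:kikuchimatrix}, each walk contributes
\begin{equation*}
    \prod_{i=1}^{2q} \frac{A_{U_{i-1}, U_i}}{\sqrt{\Gamma_{U_{i-1}, U_{i-1}} \, \Gamma_{U_i, U_i}}}\mper
\end{equation*}
By \cref{def:kikuchimatrix} and \cref{rem:nontrivialkikuchi}, each non-zero step carries a \emph{unique} label $(v_i, \beta_i) \in \mcH \times \F^*$ with weight $c_{v_i, \beta_i} = \omega_p^{\Tr(\beta_i b_{v_i})}$ as in \cref{obs:advantageoverpolynomial}.

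I would then take expectation over the semirandom $\cbra{b_v}_{v \in \mcH}$, which are independent and uniform in $\F$. Writing $S_v := \sum_{i \,:\, v_i = v} \beta_i$, independence gives
\begin{equation*}
    \E\sbra{\prod_{i=1}^{2q} c_{v_i, \beta_i}} \;=\; \prod_{v \in \mcH} \E_{b_v}\sbra{\omega_p^{\Tr(S_v\, b_v)}} \;=\; \prod_{v \in \mcH} \1(S_v = 0)\mper
\end{equation*}
Thus only \emph{cancelling} walks---those whose $\beta$-coefficients cancel in $\F$ at every used hyperedge---contribute. Since every $\beta_i \in \F^*$, any used hyperedge must be traversed at least twice, so the number $s$ of distinct hyperedges in the walk satisfies $s \leq q$.

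The technical heart of the proof is a count of cancelling walks, closely following the enumeration in~\cite{GuruswamiKM22,HsiehKM23} for $\F_2$. The encoding I have in mind specifies $U_0$ ($N$ choices) and, at each step $i$, a short instruction that classifies step $i$ as either a \emph{fresh} visit to a previously unseen hyperedge $v_i \in \mcH$ or a \emph{return} to one of the $\leq s$ already-used hyperedges (with $\poly(k, \ell)$ bookkeeping specifying where within that hyperedge the walk currently sits). A fresh step is charged the out-degree $\deg(U_{i-1})$ and later offset against the normalization $\sqrt{\Gamma_{U_{i-1}, U_{i-1}}\, \Gamma_{U_i, U_i}} = \Theta(d)$ (with concentration of $\deg(U)$ around $d$ handled separately via standard tail bounds on the random choice of semirandom instance), while return steps are paid purely combinatorially. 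Adapting the walk enumeration of~\cite{GuruswamiKM22,HsiehKM23} should then yield a moment bound of the form $\E[\tr(M^{2q})] \leq N \cdot (C \ell / d)^{q}$ for an absolute constant $C > 0$. Choosing $q = \Theta(\ell \log(\abs{\F^*} n))$ and applying Markov delivers the claim with probability $\geq 1 - 1/\poly(n)$.

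The main obstacle is verifying that the walk count over general $\F$ does not blow up relative to the $\F_2$ case. Uniqueness of $\beta_i$ given $(U_{i-1}, v_i)$, highlighted in \cref{rem:nontrivialkikuchi}, is essential: without it, each fresh step would cost $\abs{\F^*} \cdot \deg(U_{i-1})$ rather than $\deg(U_{i-1})$, inflating the number of constraints required by a factor of $\abs{\F^*}^{k/2}$. Moreover, the cancellation condition over a larger field is strictly weaker than the even-use condition over $\F_2$, since the $\beta$'s can sum to zero in many non-trivial ways. I expect to resolve this by partitioning cancelling walks according to the multiplicity vector $(n_v)_{v \in \mcH}$ of hyperedge uses, and bounding each class by the $\F_2$-style bracket enumeration times a factor depending only on $k$ and $\ell$, so that uniqueness of $\beta$ is already enough to absorb all the field-dependent overhead.
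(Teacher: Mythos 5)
Your overall roadmap — trace moment method, independence of the $b_v$'s reducing the expectation to an indicator of ``cancelling'' walks, the observation that $\beta_i \in \F^*$ forces every used hyperedge to appear at least twice, and the encoding argument relying on uniqueness of $\beta$ — matches the paper's proof (\cref{lem:countingbacktrackingwalks} reduces to \cref{lem:maincountingbacktrackingwalks}, which is proved by exactly this encoding). The multiplicity-vector partition you anticipate needing is actually unnecessary: the encoding only needs the fact that each used hyperedge repeats, never the full multiplicity pattern, so there is no field-dependent overhead to absorb beyond what the $\beta$-uniqueness already handles.

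There is, however, a real gap in how you handle the degree normalization. In the semirandom model (\cref{def:semirandomklin}) the hypergraph $\mcH$ is adversarial and only the right-hand sides $b_v$ are random, so $\deg(U)$ is a \emph{deterministic} function of the instance, not a random variable. Your claim that $\sqrt{\Gamma_{U,U}\Gamma_{U',U'}} = \Theta(d)$ is therefore false in general — the degrees can be arbitrarily non-uniform, and there is no ``random choice of semirandom instance'' over which concentration of $\deg(U)$ around $d$ could be established. This is precisely why $\Gamma = D + d\Id$ is used rather than $d\Id$: since $\Gamma_{U,U} \geq \max(\deg(U), d)$ pointwise, a \emph{fresh} step at $U_{i-1}$ (which has $\deg(U_{i-1})$ choices) is offset by $\Gamma^{-1}_{U_{i-1}} \leq 1/\deg(U_{i-1})$, while a \emph{return} step (which has only $O(t)$ choices for the earlier index, independent of the degree of the current vertex) is offset by $\Gamma^{-1}_{U_{i-1}} \leq 1/d$. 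The two inequalities hold deterministically, so no concentration is needed or available. Without this observation, your encoding argument does not close on an adversarial $\mcH$.

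A minor point on bookkeeping: the moment bound you expect, $\E[\tr(M^{2q})] \leq N\,(C\ell/d)^q$, is stronger than what is true or needed; the encoding gives $N \cdot 2^{2q}(2q/d)^q$, i.e.\ $q$ rather than $\ell$ appears inside the parenthesis. Setting $q = \Theta(\log N) = \Theta(\ell\log(\abs{\F^*}n))$ to kill the $N^{1/2q}$ factor is then what produces the $\sqrt{\ell\log(\abs{\F^*}n)/d}$ in the final bound.
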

We postpone the proof of \cref{lem:countingbacktrackingwalks} to the end of this section, and now finish the proof of \cref{thm:refutation}.
\begin{proof}[Proof of \cref{thm:refutation} from \cref{lem:countingbacktrackingwalks}]
Let $\mcI = (\mcH, \{b_v\}_{v \in \mcH})$ be the input to the algorithm. Given $\ell$, the algorithm constructs the matrix $A$ and computes $\algval(\mcI) = \frac{1}{\abs{\F}} +  \frac{2\abs{\F^*}}{\abs{\F}}\norm{\tilde{A}}_2 $, where $\tilde{A} = \Gamma^{-1/2} A \Gamma^{-1/2}$. It remains to argue that this quantity has the desired properties. 

Let $\Phi(x)$ be the polynomial defined in \cref{obs:advantageoverpolynomial}. For each $x \in \F^n$, letting $y \in \C^n$ be the vector defined in \cref{fact:bilinearforms}, we have
    \begin{flalign*}
 &\Phi(x) = \frac{1}{\abs{\F}  \abs{\mcH} \Delta} \cdot y^\dagger A y =  \frac{1}{\abs{\F}  \abs{\mcH} \Delta} \cdot (\Gamma^{1/2} y)^\dagger \tilde{A} (\Gamma^{1/2} y) \leq \frac{1}{\abs{\F}  \abs{\mcH} \Delta} \cdot \norm{\tilde{A}}_2 \norm{\Gamma^{1/2} y}_2^2 \\
 &=  \frac{1}{\abs{\F}  \abs{\mcH} \Delta} \cdot \norm{\tilde{A}}_2 \cdot \tr(\Gamma) = \frac{2\abs{\F^*}}{\abs{\F}}\norm{\tilde{A}}_2  \mcom
     \end{flalign*}
     where we use that $\norm{\Gamma^{1/2} y}_2^2 = y^{\dagger} \Gamma y = \sum_{U} \Gamma_U \abs{y_U}^2 = \sum_U \Gamma_U = \tr(\Gamma)$ since $\abs{y_U} = 1$ for all $U$, and that $\tr(\Gamma) = 2 \abs{\mcH} \abs{\F^*} \Delta$. Hence, 
     \begin{equation*}
     \val(\mcI) = \frac{1}{\abs{\F}} +  \max_{x \in \F^n} \Phi(x) \leq \frac{1}{\abs{\F}} +  \frac{2\abs{\F^*}}{\abs{\F}}\norm{\tilde{A}}_2  \mcom
     \end{equation*}
     which proves \cref{item:refutation1} in \cref{thm:refutation}.
     
     To prove \cref{item:refutation2}, we observe that by \cref{lem:countingbacktrackingwalks}, if $\mcI$ is semirandom, then with high probability over the draw of the $b_v$'s, it holds that
     \begin{equation*}
    \norm{\tilde{A}}_2  \leq O\left(\sqrt{\frac{\ell \log(\abs{\F^*}n)}{d}}\right) \mper
     \end{equation*}
    From \cref{fact:avgdegree}, we have $d \geq \frac{\abs{\F^*}}{2} \left(\frac{\ell}{\abs{\F^*}n}\right)^{k/2} \cdot \abs{\mcH}$. Hence, if $\abs{\mcH} \geq C n \log(\abs{\F^*}n) \left(\frac{\abs{\F^*}n}{\ell}\right)^{k/2-1} \varepsilon^{-2}$ for a sufficiently large constant $C$, then $\norm{\tilde{A}}_2 \leq \eps$ with probability $1 - 1/\poly(n)$. This proves \cref{item:refutation2}.
\end{proof}

\parhead{The trace moment method.} It remains to prove \cref{lem:countingbacktrackingwalks}, which we do using the trace moment method.
\begin{proof}[Proof of \cref{lem:countingbacktrackingwalks}]
By \cref{fact:hermitian}, we have that $\norm{\tilde{A}}_2 \leq \tr((\Gamma^{-1} A)^{2t})^{1/2t}$ for any positive integer $t$. Because the $b_v$'s are drawn independently from $\F$, the matrix $\tilde{A}$ is a random matrix. By Markov's inequality,
\begin{equation*}
\Pr\left[\tr((\Gamma^{-1} A)^{2t}) \geq N \cdot \E[\tr((\Gamma^{-1} A)^{2t})]\right] \leq \frac{1}{N} \mper
\end{equation*}
We note this event is the same as $\tr((\Gamma^{-1} A)^{2t})^{1/2t} \geq N^{1/2t} \cdot \E[\tr((\Gamma^{-1} A)^{2t})]^{1/2t}$, and for $2t \geq \log N $ we have $N^{1/2t} \leq O(1)$. This immediately gives us that with probability $\geq 
1- \frac{1}{N}$, $\norm{\tilde{A}}_2 \leq O\left(\E[\tr((\Gamma^{-1} A)^{2t})]^{1/2t}\right)$. We then have that
\begin{flalign*}
    \E\left[\tr\left(\left(\Gamma^{-1} A\right)^{2t}\right)\right] 
    &= \E\left[\tr\left(\left(\Gamma^{-1} \sum_{v \in \mcH, \beta \in \F^*} c_{v, \beta} \cdot A_{v, \beta}\right)^{2t}\right) \right]\\
    &= \E\left[\tr\left(\sum_{(v_1, \beta_1) ,..., (v_{2t}, \beta_{2t}) \in \mcH \times \F^*} \prod_{i = 1}^{2t} \Gamma^{-1} \cdot c_{v_i, \beta_i} \cdot A_{v_i, \beta_i} \right) \right]\\
    &= \sum_{(v_1, \beta_1) ,..., (v_{2t}, \beta_{2t}) \in \mcH \times \F^*} \E\left[\tr\left(\prod_{i = 1}^{2t} \Gamma^{-1} \cdot c_{v_i, \beta_i} \cdot A_{v_i, \beta_i} \right) \right] \\
        &= \sum_{(v_1, \beta_1) ,..., (v_{2t}, \beta_{2t}) \in \mcH \times \F^*} \E\left[\prod_{i=1}^{2t} c_{v_i, \beta_i} \right] \cdot \tr\left(\prod_{i = 1}^{2t} \Gamma^{-1}A_{v_i, \beta_i} \right)\mper
    \end{flalign*}
    Let us now make the following observation. Let $(v_1, \beta_1) ,..., (v_{2t}, \beta_{2t}) \in \mcH \times \F^*$ be a term in the above sum. Fix $v \in \mcH$, and let $R(v)$ denote the set of $i \in [2t]$ such that $v_i = v$. We observe that if for some $v \in \mcH$, $\sum_{i \in R(v)} \beta_i \ne 0$, then $\E\left[\prod_{i=1}^{2t} c_{v_i, \beta_i} \right] = 0$. Indeed, this is because $b_v$ is independent for each $v \in \mcH$, and so $\E\left[\prod_{i=1}^{2t} c_{v_i, \beta_i} \right]  = \prod_{v \in \mcH} \E\left[\prod_{i \in R(v)} c_{v, \beta_i} \right]$, and 
    \begin{equation*}
    \E\left[\prod_{i \in R(v)}c_{v, \beta_i} \right] = \E\left[\prod_{i \in R(v)} \omega_p^{\Tr(\beta_i b_v)}\right] = \E\left[\omega_p^{\Tr((\sum_{i \in R(v)}\beta_i) b_v)}\right] \mper
    \end{equation*}
    Then, since $b_v$ is uniform from $\F$, it follows that $\E\left[\omega_p^{\Tr((\sum_{i \in R(v)}\beta_i) b_v)}\right] = 0$ if $\sum_{i \in R(v)} \beta_i \ne 0$, and $\E\left[\omega_p^{\Tr((\sum_{i \in R(v)}\beta_i) b_v)}\right] = 1$ if $\sum_{i \in R(v)} \beta_i = 0$. This motivates the following definition.
    \begin{definition}[Trivially closed sequence]
    \label{def:triviallyclosedwalks}
    Let $(v_1, \beta_1) ,..., (v_{2t}, \beta_{2t}) \in \mcH \times \F^*$. We say that $(v_1, \beta_1) ,..., (v_{2t}, \beta_{2t}) \in \mcH \times \F^*$ is trivially closed with respect to $v$ if it holds that $\sum_{i \in R(v)} \beta_i = 0$. We say that the sequence is trivially closed if it is trivially closed with respect to all $v \in \mcH$.
    \end{definition}
    With the above definition in hand, we have shown that
\begin{flalign*}
  & \E\left[\tr\left(\left(\Gamma^{-1} A\right)^{2t}\right)\right] 
    = \sum_{\substack{(v_1, \beta_1) ,..., (v_{2t}, \beta_{2t}) \\ \text{trivially closed}}} \tr\left(\prod_{i = 1}^{2t} \Gamma^{-1}A_{v_i, \beta_i} \right)\mper
\end{flalign*}

The following lemma yields the desired bound on $\E[\tr((\Gamma^{-1} A)^{2t})]$.
\begin{lemma}
    \label{lem:maincountingbacktrackingwalks}
    $\sum_{\substack{(v_1, \beta_1) ,..., (v_{2t}, \beta_{2t}) \\ \text{trivially closed}}} \tr\left(\prod_{i = 1}^{2t} \Gamma^{-1}A_{v_i, \beta_i} \right) \leq N \cdot 2^{2t} \cdot \left(\frac{2t}{d}\right)^t$.
\end{lemma}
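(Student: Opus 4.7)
The plan is to bound the trace sum by the standard trace moment method adapted to our $(v,\beta)$-labeled Kikuchi graph. I would begin by expanding
\begin{equation*}
\sum_{\substack{(v_1,\beta_1),\dots,(v_{2t},\beta_{2t}) \\ \text{trivially closed}}} \tr\left(\prod_{i=1}^{2t}\Gamma^{-1} A_{v_i,\beta_i}\right)
= \sum_{(W, L)} \prod_{i=1}^{2t}\frac{1}{\Gamma_{U_{i-1}}}\mcom
\end{equation*}
where the outer sum on the right is over pairs $(W, L)$ consisting of a closed walk $W = (U_0, U_1,\dots, U_{2t} = U_0)$ on the $\ell$-sparse vectors of $\F^n$ together with a label sequence $L = ((v_1,\beta_1),\dots,(v_{2t},\beta_{2t}))$ that is trivially closed in the sense of \cref{def:triviallyclosedwalks} and such that $A_{v_i,\beta_i}(U_{i-1}, U_i) = 1$ for each $i$.

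Two structural observations drive the bound. First, because every $\beta_i \in \F^*$ is nonzero, a trivially closed sequence cannot have any $v \in \mcH$ appearing exactly once, so if $T$ denotes the number of distinct labels then $T \leq t$. Second, by the uniqueness discussed in \cref{rem:nontrivialkikuchi}, for any fixed $U$ and $v$ there is at most one $\beta \in \F^*$ for which a valid transition $U \xrightarrow{v,\beta} V$ exists; in particular, once we know $U_{i-1}$ and $v_i$, both $\beta_i$ and $U_i$ are determined.

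I would then use an encoding argument, traversing the walk in order. First, pick the starting vertex $U_0$: at most $N$ options. Next, classify each step as \emph{new} (the label $v_i$ is appearing for the first time) or \emph{old} (a repeat): at most $\binom{2t}{T} \leq 2^{2t}$ patterns. At a new step from $U_{i-1}$, summing over the choice of new $(v_i, \beta_i)$ gives at most $\deg(U_{i-1})$ neighbors, and this is absorbed by $1/\Gamma_{U_{i-1}} \leq 1/\deg(U_{i-1})$ (using $\Gamma = D + d\Id$), contributing at most $1$ per new step. At an old step, we choose which of the previously-introduced labels to reuse, at most $T \leq t$ options; by uniqueness this determines $\beta_i$ and $U_i$, and combined with $1/\Gamma_{U_{i-1}} \leq 1/d$ we get at most $t/d$ per old step. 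For a pattern with $T \leq t$ news and $2t - T \geq t$ olds, the total contribution is at most $N \cdot (t/d)^{2t-T}$, and summing over patterns gives
\begin{equation*}
\sum_{T = 0}^{t}\binom{2t}{T}\cdot N\cdot (t/d)^{2t-T} \leq N \cdot 2^{2t}\cdot (t/d)^{t} \leq N \cdot 2^{2t}\cdot (2t/d)^{t}\mcom
\end{equation*}
where I used that $(t/d)^{2t-T}$ is maximized at $T = t$ in the relevant regime $d \geq t$ (which is automatic for the application to \cref{thm:refutation}, where $t \sim \log(\abs{\F^*}n)$ and $d$ is large).

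The main subtlety I expect is the degree absorption for new steps. Because we sum over $(v,\beta) \in \mcH \times \F^*$ and distinct labels can produce the same $\beta v$ (and hence the same Kikuchi edge), the count of ``new labels'' at $U_{i-1}$ can, in principle, exceed $\deg(U_{i-1})$. The fix, consistent with \cref{rem:nontrivialkikuchi}, is to identify a ``new step'' with the introduction of a fresh $\beta v$-value (i.e., a new neighbor), accompanied by a separate multiplicity bookkeeping for the $(v,\beta)$ labels producing that value; the support condition $\supp(U_{i-1}) \oplus \supp(U_i) = \supp(v_i)$ together with uniqueness of $\beta$ ensures this accounting lines up with $\deg$ and lets the $1/\Gamma$ normalization absorb exactly the right factors.
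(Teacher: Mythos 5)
Your proof is correct and follows essentially the same encoding argument as the paper's: write down the start vertex, flag each step as introducing a fresh vector from $\mcH$ or reusing an old one, absorb the fresh-step choices into the $1/\Gamma$ normalization, charge $O(t/d)$ per reuse step, and use that triviality forces at least $t$ reuses. The subtlety you raise at the end disappears if $\deg(U)$ is read as the number of pairs $(v,\beta)\in\mcH\times\F^*$ admitting an outgoing edge from $U$ (equivalently, by the uniqueness of $\beta$ noted in \cref{rem:nontrivialkikuchi}, the number of $v\in\mcH$ with an outgoing edge from $U$) rather than the number of distinct products $\beta v$ — this is the quantity the computation of $\E[\deg(U)]$ in \cref{fact:avgdegree} actually tracks — and with this reading the fresh-step count is exactly $\deg(U_{i-1})$, so no separate multiplicity bookkeeping is needed.
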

With \cref{lem:maincountingbacktrackingwalks}, we thus have the desired bound $\E[\tr((\Gamma^{-1} A)^{2t})]$. Taking $t$ to be $c \log_2 N$ for a sufficiently large constant $c$ and applying Markov's inequality finishes the proof.
\end{proof}

\begin{proof}[Proof of  \cref{lem:maincountingbacktrackingwalks}]
    We bound the sum as follows. First, we observe that for a trivially closed sequence $(v_1, \beta_1) ,..., (v_{2t}, \beta_{2t})$, we have
    \begin{flalign*}
    \tr\left(\prod_{i = 1}^{2t} \Gamma^{-1}A_{v_i, \beta_i} \right) = \sum_{U_0, U_1, \dots, U_{2t-1}} \prod_{i = 0}^{2t - 1} \Gamma^{-1}_{U_i} \cdot \1\left(U_i\xrightarrow{\text{$v_{i+1}, \beta_{i+1}$}} U_{i+1}\right) \mcom
    \end{flalign*}
    where we define $U_{2t} = U_0$. Thus, the sum that we wish to bound in \cref{lem:maincountingbacktrackingwalks} simply counts the total weight of ``trivially closed walks'' $U_0, v_1, \beta_1, U_1, \dots, U_{2t-1}, v_{2t}, \beta_{2t}, U_{2t}$ (where $U_{2t} = U_0$) in the Kikuchi graph $A$, where the weight of a walk is simply $\prod_{i = 0}^{2t-1} \Gamma^{-1}_{U_i}$.
    
    Let us now bound this total weight by uniquely encoding a walk $U_0, v_1, \beta_1, U_1, \dots, U_{2t-1}, v_{2t}, \beta_{2t}, U_{2t}$ as follows.
    \begin{itemize}
        \item First, we write down the start vertex $U_0$.
        \item For $i = 1, \dots, 2t$, we let $z_i$ be $1$ if $v_i = v_j$ for some $j < i$. In this case, we say that the edge is ``old''. Otherwise $z_i = 0$, and we say that the edge is ``new''.
	\item For $i = 1, \dots, 2t$, if $z_i$ is $1$ then we encode $U_{i}$ by writing down the smallest $j \in [2t]$ such that $v_i = v_j$. We note that we \emph{do not} need to specify the element $\beta_i$, as for any vertex $U$, there is at most one $V$ and one $\beta \in \F^*$ such that $\1(U \xrightarrow{\text{$v_{i}, \beta$}} V)$. As pointed out in \cref{rem:nontrivialkikuchi}, this crucially saves us a factor of $\abs{\F^*}$ in the total number of constraints that we require.
	\item For $i = 1, \dots, 2t$, if $z_i$ is $0$ then we encode $U_i$ by writing down an integer in $1, \dots, \deg(U_{i-1})$ that specifies the edge we take to move to $U_i$ from $U_{i-1}$ (we associate $[\deg(U_{i-1})]$ to the edges adjacent to $U_{i-1}$ with an arbitrary fixed map).
    \end{itemize}
    With the above encoding, we can now bound the total weight of all trivially closed walks as follows. First, let us consider the total weight of walks for some fixed choice of $z_1, \dots, z_{2t}$. We have $N$ choices for the start vertex $U_0$. For each $i = 1, \dots, 2t$ where $z_i = 0$, we have $\deg(U_{i-1})$ choices for $U_i$, and we multiply by a weight of $\Gamma^{-1}_{U_{i-1}} \leq \frac{1}{\deg(U_{i-1})}$. For each $i = 1, \dots, 2t$ where $z_i = 1$, we have at most $2t$ choices for the index $j < i$, and we multiply by a weight of $\Gamma^{-1}_{U_{i-1}} \leq \frac{1}{d}$. Hence, the total weight for a specific $z_1, \dots, z_{2t}$ is at most $N \left(\frac{2t}{d}\right)^{r}$, where $r$ is the number of $z_i$ such that $z_i = 1$.
    
    Finally, we observe that any trivially closed walk must have $r \geq t$. Hence, after summing over all $z_1, \dots, z_{2t}$, we have the final bound of $N 2^{2t} \left(\frac{2t}{d}\right)^{t}$, which finishes the proof.
\end{proof}

\section{Refuting Semirandom \texorpdfstring{$k$-$\LIN$}{k-LIN} over Abelian Groups for Even $k$}
\label{sec:grouprefutation}

In this section, we prove the even case of \cref{thm:grouprefutation}, extending our refutation techniques in the finite field setting to general finite Abelian groups. A naive application of the Kikuchi matrix refutation in \cref{sec:refutation} produces a version of \cref{thm:refutation} with an extra factor of $\abs{G}$ in the number of equations required in \cref{item:grouprefutation2}, coming from zero divisors in the group inducing self-loops or multiedges in the Kikuchi graph. In the case where $\varepsilon < \frac{1}{\abs{G}}$, this extra factor is only a mild $\varepsilon^{-1}$, as stated in the final result of \cref{thm:grouprefutation}, but this fails when $\abs{G} \gg \frac{1}{\varepsilon}$.

This leaves us in the curious position where the ``easier'' case (with $\varepsilon \gg \frac{1}{\abs{G}}$) has a larger dependence on $\abs{G}$. The reason is simply that we are doing extra work to obtain a bound $\frac{1}{\abs{G}} + \varepsilon$, even though the factor $\frac{1}{\abs{G}}$ is dominated by $\varepsilon$. A reasonable fix is to refute the instance in a quotient group of size $\frac{1}{\varepsilon^{O(1)}}$. A natural question is what to do if such a quotient group does not exist, and it turns out that the only such case is when $G$ is a field, where the extra $O(\abs{G})$ factor can be removed naturally as in \cref{sec:refutation}. 

\subsection{Properties of \texorpdfstring{$k$-$\LIN$}{k-LIN} over finite Abelian groups}

Throughout this section, we think of $G = \bigotimes_{i=1}^r \Z_{m_i}$ and recall that, as in \cref{def:groupsemirandomklin}, multiplication within $G$ is defined as direct product multiplication between the cyclic groups. For $x \in G$ we denote by $x^{(i)}$ the component from $\Z_{m_i}$.

\begin{definition}[$\lambda$-thin equations]
    Given a $k$-$\LIN(G)$ equation represented by a $k$-sparse vector $v \in G^n$, we collect the non-zero coefficients associated with a particular subgroup $\{v^{(i)}_j\}_{j \in [n], v_j \neq 0}$ for $i \in [r]$ and call the subgroup of $\Z_{m_i}$ generated by this collection $H_i(v)$. We could also define $H_i(v)$ equivalently as $\mu\Z_{m_i}$ where $\mu = \mathrm{gcd}(m_i, \{v_j^{(i)}\}_{j \in [n], v_j \neq 0})$. We define the representative group of the equation as the product $H(v) = \bigotimes_{i=1}^r H_i(v)$. We define the \textit{thinness} of a equation as
    \begin{equation*}
        \lambda(v) := \frac{\abs{H(v)}}{\abs{G}}= \frac{1}{\abs{G}}\prod_{i=1}^r \abs{H_i(v)} \mcom
    \end{equation*}
    and say a equation is $\lambda$-thin if $\lambda(v) \geq \lambda$.
\end{definition}

\begin{example}
    Let $G = \Z_{2p}$ for some prime $p \in \Z$. The equation ``$px_1 + px_2 + px_3$'' would be $\frac{2}{\abs{G}}$-thin since the representative group is $p\Z_{2p}$ with size $2$. In contrast, the equation ``$2x_1 + 2x_2 + 2x_3$'' is $\frac{1}{2}$-thin.
\end{example}

\begin{remark}
    \label{rem:size}
    Note that in the case $G = \F_p$ for prime $p$ then every equation is necessarily $1$-thin, so \cref{sec:refutation} can be seen as a special case of what follows.
\end{remark}

The thinness of a equation turns out to be crucial in understanding when equations are satisfiable.

\begin{observation}
    \label{obs:groupprob}
    Given a $k$-$\LIN(G)$ equation $v \in G^n$ with associated $b_v \in G$, it is only possible to satisfy $v$ by any assignment if $b_v$ is in the representative group $H(v)$.
\end{observation}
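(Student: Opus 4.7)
The plan is to verify that $\ip{v,x}$ necessarily lies in $H(v)$ for every assignment $x \in G^n$; the observation is immediate from this, since $b_v = \ip{v,x}$ then forces $b_v \in H(v)$. I would work componentwise using the decomposition $G = \bigotimes_{i=1}^r \Z_{m_i}$, since both the inner product $\ip{v,x} = \sum_j v_j \cdot x_j$ and the representative group $H(v) = \bigotimes_{i=1}^r H_i(v)$ respect this direct product structure. So the whole argument reduces to showing $\ip{v,x}^{(i)} \in H_i(v)$ for every $i \in [r]$.

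For each $i \in [r]$, the useful presentation is the explicit one $H_i(v) = \mu_i \Z_{m_i}$, where $\mu_i = \mathrm{gcd}(m_i, \cbra{v_j^{(i)}}_{j : v_j \neq 0})$. The structural fact to exploit is that $\mu_i \Z_{m_i}$ is not merely a subgroup of $\Z_{m_i}$ but an ideal, i.e.\ closed under multiplication by arbitrary elements of $\Z_{m_i}$. By construction, every nonzero coefficient $v_j^{(i)}$ is a multiple of $\mu_i$, so each summand $v_j^{(i)} \cdot x_j^{(i)}$ remains in $\mu_i \Z_{m_i}$ regardless of $x_j^{(i)}$; coordinates with $v_j = 0$ contribute $0$, which also lies in $H_i(v)$. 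Summing over $j \in [n]$ and using additive closure of $H_i(v)$ gives $\ip{v,x}^{(i)} \in H_i(v)$.

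Assembling these coordinatewise statements across $i \in [r]$ yields $\ip{v,x} \in H(v)$, and any satisfying assignment therefore forces $b_v \in H(v)$. There is no serious obstacle; the one point worth isolating is that $H_i(v)$ must be used as an ideal of $\Z_{m_i}$ rather than merely as the additive subgroup generated by the coefficients, which is exactly why the $\mu_i \Z_{m_i}$ presentation (rather than the ``subgroup generated by $\cbra{v_j^{(i)}}$'' presentation) is the convenient one to carry through the calculation.
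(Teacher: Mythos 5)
Your proof is correct and follows the same idea the paper sketches in its one-sentence remark (``multiplication of any coefficient by any $x_i$ is equivalent to generation by that element''): each product $v_j^{(i)} \cdot x_j^{(i)}$ is a multiple of $v_j^{(i)}$, hence lands in $H_i(v) = \mu_i \Z_{m_i}$, and the sum stays there. You simply spell out the componentwise reduction and the ideal/closure structure that the paper leaves implicit.
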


This can be seen by noting that multiplication of any coefficient by any $x_i$ is equivalent to generation by that element, so the left hand side falls in $H(v)$. An immediate consequence of above is that if $b_v$ is chosen uniformly from $G$ for $\lambda$-thin $v$, the probability that the equation is satisfiable is exactly $\lambda$. We also prove the following lemma that characterizes the structure of our Kikuchi matrix for $G$ according to thinness.

\begin{observation}
    \label{obs:groupdivisors}
    Let $H(v) = \bigotimes_{i=1}^r \mu_i\Z_{m_i}$. Then if $\beta \in G$ has $\beta^{(i)} \in (m_i/\mu_i)\Z_{m_i} + \alpha^{(i)}$ (where by convention $m_i\Z_{m_i} = \{0\}$) then $\beta v = \alpha v$.
\end{observation}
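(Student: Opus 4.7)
The plan is to show that $(\beta - \alpha) v = 0$ in $G^n$, which by distributivity of the componentwise multiplication is equivalent to saying that for every $j \in [n]$ and every index $i \in [r]$, the scalar product $(\beta^{(i)} - \alpha^{(i)}) \cdot v_j^{(i)}$ vanishes in $\Z_{m_i}$. I would therefore reduce to a purely coordinatewise divisibility computation inside each cyclic factor.

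First I would fix $i \in [r]$ and unpack the hypothesis. By definition of $H_i(v)$, we have $\mu_i = \gcd(m_i, \{v_j^{(i)}\}_{j \,:\, v_j \neq 0})$, so $\mu_i$ divides $v_j^{(i)}$ for every $j$; write $v_j^{(i)} = \mu_i c_j^{(i)}$ for integers $c_j^{(i)}$. The hypothesis $\beta^{(i)} \in (m_i/\mu_i) \Z_{m_i} + \alpha^{(i)}$ likewise means that $\beta^{(i)} - \alpha^{(i)} = (m_i/\mu_i) d^{(i)}$ for some integer $d^{(i)}$ (interpreted in $\Z_{m_i}$). Multiplying,
\begin{equation*}
(\beta^{(i)} - \alpha^{(i)}) \cdot v_j^{(i)} = (m_i/\mu_i) d^{(i)} \cdot \mu_i c_j^{(i)} = m_i \cdot d^{(i)} c_j^{(i)} \equiv 0 \pmod{m_i}\mper
\end{equation*}
This is the single algebraic identity doing all the work: the factor $\mu_i$ pulled out of $v_j^{(i)}$ and the factor $m_i/\mu_i$ sitting inside $\beta^{(i)} - \alpha^{(i)}$ recombine to the full modulus $m_i$.

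I would then handle the boundary case $\mu_i = m_i$ separately for clarity: in this case $H_i(v) = \{0\}$ forces all nonzero $v_j^{(i)}$ to be zero in $\Z_{m_i}$, and the convention $m_i \Z_{m_i} = \{0\}$ makes the stated congruence condition on $\beta^{(i)}$ vacuous while the conclusion $\beta^{(i)} v_j^{(i)} = \alpha^{(i)} v_j^{(i)} = 0$ holds trivially. Ranging over all $i$ and $j$, we conclude $\beta v - \alpha v = (\beta-\alpha) v = 0$, so $\beta v = \alpha v$.

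There is no real obstacle here; the whole content is the bookkeeping that the $i$-th component of $H(v)$ being $\mu_i \Z_{m_i}$ is dual to the ``annihilator'' $(m_i/\mu_i) \Z_{m_i}$, in the sense that anything in the annihilator kills anything in $H(v)$ componentwise. The only thing to be careful about is the degenerate case $\mu_i = m_i$ so that the convention in the statement is consistent with the argument.
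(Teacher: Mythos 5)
Your proof is correct and is essentially the paper's proof reorganized: the paper expands $\beta^{(i)} v_j^{(i)} = ((m_i/\mu_i)b + \alpha^{(i)})\cdot \mu_i a = m_i ab + \alpha^{(i)} v_j^{(i)}$ directly, whereas you factor out $(\beta - \alpha)v$ and show it is zero, but both hinge on the same single cancellation $\mu_i \cdot (m_i/\mu_i) = m_i$. One small narrative slip in your boundary-case aside: you attribute the convention $m_i\Z_{m_i} = \{0\}$ to the case $\mu_i = m_i$, but that convention actually pertains to $\mu_i = 1$ (where $m_i/\mu_i = m_i$); when $\mu_i = m_i$ the congruence condition on $\beta^{(i)}$ is vacuous simply because $(m_i/\mu_i)\Z_{m_i} = 1\cdot\Z_{m_i} = \Z_{m_i}$ is the whole group, not because of the convention. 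This does not affect the validity of the argument, since your main-case computation already covers every value of $\mu_i$ (when $\mu_i = m_i$ it gives $v_j^{(i)} = 0$ and the product vanishes trivially), so the separate boundary case is harmless but unnecessary.
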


\begin{proof}
    Note that $\mu_i = \mathrm{gcd}(m_i, \{v^{(i)_j}\}_{j \in [n], v_j \neq 0})$, so every such $v^{(i)}_j$ has $\mu_i$ as a divisor, so we write it as $\mu_i a$. For any $\beta^{(i)} \in (m_i/\mu_i)\Z_{m_i} + \alpha^{(i)}$ we may write it as $(m_i/\mu_i)b + \alpha^{(i)}$ for some $b \in \Z_{m_i}$. We then have $\beta^{(i)}v^{(i)}_j = m_iab + \alpha^{(i)}v_j^{(i)}$ which is the same as $\alpha^{(i)}v_j^{(i)}$ mod $m_i$.
\end{proof}

The result of the above is that we can group elements of $G$ according to the equivalence classes of their action on $v \in G^n$, with each class consisting of exactly $1/\lambda$ identical elements. Later this shows up as every edge from a $\lambda$-thin equation in our Kikuchi graph actually clusters into a multi-edge of $1/\lambda$ edges. The degree of the Kikuchi matrix is also further characterized by the following stricter definition.

\begin{definition}[$\lambda$-robust equations]
    Given a $k$-$\LIN(G)$ equation represented by a $k$-sparse vector $v \in G^n$, we consider the set of $k/2$-size subequations of $v$, $\{v_S\}_{S \in {[k] \choose k/2}}$. We say $v$ is $\lambda$-robust if all $v_S$ are at worst $\lambda$-thin and an instance is $\lambda$-robust if all equations are $\lambda$-robust.
\end{definition}

We are now ready to state our main technical result, which is our Kikuchi matrix refutation algorithm for even $k$.

\begin{lemma}[Kikuchi matrix refutation of semirandom $k$-$\LIN(G)$]
    \label{lem:grouprefutation}
    Fix $\ell \geq k/2$. There is an algorithm that takes as input a $k$-$\LIN(G)$ subinstance $\mcI = (\mcH, \{b_v\}_{v \in \mcH})$ in $n$ variables and outputs a number $\algval(\mcI) \in [0,1]$ in time $(\abs{G} n)^{O(\ell)}$ with the following two guarantees:
    \begin{enumerate}[(1)]
        \item \label{item:lemgrouprefutation1} $\algval(\mcI) \geq \val(\mcI)$ for every instance $\mcI$;
        \item \label{item:lemgrouprefutation2} If $\abs{\mcH} \geq O(n) \cdot \frac{1}{\lambda} \cdot \left(\frac{n\abs{G}}{\ell}\right)^{k/2-1} \cdot \log(\abs{G}n) \cdot \varepsilon^{-2}$ and $\mcI$ is $\lambda$-robust and drawn from the semirandom distribution described in \cref{def:groupsemirandomklin}, then with probability $\geq 1-\frac{1}{\poly(n)}$ over the draw of the semirandom instance, i.e., the randomness of $\{b_v\}_{v \in \mcH}$, it holds that $\algval(\mcI) \leq \frac{1}{\abs{G}} + \varepsilon$.
    \end{enumerate}
\end{lemma}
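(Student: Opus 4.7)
The plan is to mimic the three-step Kikuchi template of \cref{sec:refutation}, replacing field arithmetic with the group character basis from \cref{sec:prelims} and tracking the $\lambda$ factors introduced by zero divisors in $G$. First I would redo \cref{obs:advantageoverpolynomial} using the character expansion $\mathbf{1}(\ip{v,x} = b_v) = \frac{1}{\abs{G}} \sum_{\beta \in G} \overline{\chi_\beta(b_v)} \chi_{\beta v}(x)$, obtaining
\begin{equation*}
    \val(\mcI, x) = \frac{1}{\abs{G}} + \Phi(x)\mcom \quad \Phi(x) = \frac{1}{\abs{\mcH}\abs{G}} \sum_{v \in \mcH} \sum_{\beta \in G \setminus \{0\}} \overline{\chi_\beta(b_v)} \cdot \chi_{\beta v}(x)\mper
\end{equation*}
I would then define the Kikuchi matrix over $G^n$ exactly as in \cref{def:kikuchimatrix}, i.e., $A_{v,\beta}(U,V) = 1$ iff $U - V = \beta v$ and $\supp(U) \oplus \supp(V) = \supp(v)$, and $A = \sum_{v,\beta} \overline{\chi_\beta(b_v)} A_{v,\beta}$, and mirror \cref{fact:bilinearforms} to write $\Phi(x) = \frac{1}{\abs{\mcH}\abs{G}\Delta} y^\dagger A y$ with $y_U = \chi_U(x)$.

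The $\lambda$-robustness hypothesis controls two intertwined quantities in the subsequent spectral analysis. In the analog of \cref{fact:avgdegree}, \cref{obs:groupdivisors} tells us that for any $k/2$-subset $S \subseteq \supp(v)$, the map $\beta \mapsto \beta v_S$ has image of size $\lambda(v_S)\abs{G} \geq \lambda \abs{G}$ and kernel $\mathrm{stab}(v_S)$ of size at most $1/\lambda$. In the trace-moment calculation (the analog of \cref{lem:countingbacktrackingwalks,lem:maincountingbacktrackingwalks}), independence of the $b_v$'s and orthogonality of group characters force surviving sequences to satisfy $\sum_{i \in R(v)} \beta_i = 0 \in G$ for every $v$; since each $\beta_i \neq 0$ this still entails at least $t$ of the $2t$ steps being ``old''. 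The walk-encoding argument then proceeds exactly as in \cref{lem:maincountingbacktrackingwalks} with one essential change: at each old step, recording $v_i$ via some $j < i$ and solving $\beta_i v_{i,S} = U_{i-1}|_S$ leaves $\abs{\mathrm{stab}(v_{i,S})} \leq 1/\lambda$ valid choices of $(\beta_i, U_i)$ instead of a unique choice, inflating each old step by a factor $1/\lambda$. This yields
\begin{equation*}
    \sum_{\substack{(v_1,\beta_1),\ldots,(v_{2t},\beta_{2t}) \\ \text{trivially closed}}} \tr\Bigl(\prod_{i=1}^{2t} \Gamma^{-1} A_{v_i,\beta_i}\Bigr) \leq N \cdot 2^{2t} \cdot \Bigl(\frac{2t}{\lambda d}\Bigr)^t\mper
\end{equation*}
Taking $t = \Theta(\log N)$ and combining with the degree bound delivers the threshold $\abs{\mcH} \geq O(n) \cdot \lambda^{-1} \cdot (n\abs{G}/\ell)^{k/2-1} \log(\abs{G}n) \eps^{-2}$ claimed in the lemma.

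The main obstacle I anticipate is consistent bookkeeping of the $\lambda$ factors across the degree computation and the walk-encoding, so that the $1/\lambda$ arising from stabilizer multiplicity enters \emph{exactly once} in the final bound rather than being double-counted or missed. This requires carefully choosing the convention for $\deg(U)$ (distinct $\beta v$ values, as in \cref{fact:avgdegree}, versus distinct $(v,\beta)$ pairs) and matching it to the encoding rules for new and old steps, using \cref{obs:groupdivisors} to identify equivalent $\beta$'s across each coset of $\mathrm{stab}(v)$. A secondary subtlety is that unsatisfiable equations (with $b_v \notin H(v)$, by \cref{obs:groupprob}) still appear in $A$, but after summing $\overline{\chi_\beta(b_v)}$ over each stabilizer coset their contribution collapses to a harmless constant and requires no special casing. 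Both points are resolved uniformly by the $\lambda$-robustness hypothesis, which bounds $\abs{\mathrm{stab}(v_S)} \leq 1/\lambda$ for every $k/2$-subequation of every $v \in \mcH$.
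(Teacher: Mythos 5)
Your proposal captures the trace-moment bookkeeping correctly — in particular, the observation that $\lambda$-robustness allows up to $1/\lambda$ valid $\beta$'s per old step, inflating the walk count to $N \cdot 2^{2t}(2t/(\lambda d))^t$, matches the paper's \cref{lem:groupmaincountingbacktrackingwalks}. But there is a genuine gap earlier: you cannot reuse \cref{def:kikuchimatrix} verbatim, and the identity $\Phi(x) = \frac{1}{|\mcH||G|\Delta} y^\dagger A y$ you write down is false when $G$ has zero divisors. In the field Kikuchi graph the vertices are \emph{exactly} $\ell$-sparse vectors, and for an edge $U \to_{v,\beta} V$ the argument of \cref{fact:bilinearforms} forces $U_i = (\beta v)_i$ for $i \in \supp(U) \cap \supp(v)$, hence needs $(\beta v)_i \neq 0$. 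Over a field with $\beta, v_i \neq 0$ this is automatic; over a group such as $\Z_4$ with $\beta = v_i = 2$, we get $(\beta v)_i = 0$, so no exactly-$\ell$-sparse $U$ exists and $A_{v,\beta}$ has no edges. Consequently every $(v,\beta)$ with $\supp(\beta v) \subsetneq \supp(v)$ contributes zero to $y^\dagger A y$ even though the corresponding term $\overline{\chi_\beta(b_v)}\chi_{\beta v}(x)$ is present in $\Phi(x)$. These missing terms are not negligible; their number per $v$ scales with the stabilizers of the single coordinates $v_i$ and is not controlled by $\lambda$-robustness (which only constrains $k/2$-subequations). Note also that the uniform constant $\Delta$ multiplying every Fourier mode — the exact ingredient that makes $y^\dagger A y$ proportional to $\Phi(x)$ — breaks down, since $\Delta$ is nonzero only for the surviving $(v,\beta)$ pairs.

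The paper resolves this in \cref{def:groupkikuchimatrix} by changing the vertex set to flags $(U, S)$ with $U \in G^n$, $S \in \binom{[n]}{\ell}$, and $\supp(U) \subseteq S$ (so $N = |G|^\ell \binom{n}{\ell}$, with $|G|$ rather than $|G^*|$), and placing the symmetric-difference constraint on the ambient sets $S \oplus T = \supp(v)$ rather than on $\supp(U) \oplus \supp(V)$. Since $U$ may legally vanish at positions in $S$, the equation $U_i = (\beta v)_i$ has a solution even when $(\beta v)_i = 0$, and the counting in \cref{fact:groupbilinearforms} recovers a uniform $\Delta = \binom{k}{k/2}\binom{n-k}{\ell-k/2}|G|^{\ell - k/2}$. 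This flag construction is the key new idea in the group case, and it is not present in your proposal; without it the reduction from $\Phi(x)$ to a quadratic form never starts, so the downstream $\lambda$ bookkeeping you describe — which is otherwise on target — does not apply. Your secondary remark about $b_v \notin H(v)$ terms ``collapsing over stabilizer cosets'' is also unnecessary: in the paper's analysis the semirandomness of $b_v$ makes the trace calculation uniform over all $v$, and no special casing by satisfiability is needed.
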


\begin{remark}
    The key difference between \cref{lem:grouprefutation} and \cref{thm:grouprefutation} is the $\lambda$-robust assumption, which yields an extra factor of $1/\lambda$ in the number of constraints. As every instance is trivially $1/\abs{G}$-robust, this extra factor is at most $\abs{G}$. The rest of this section is devoted to preprocessing the instance so that this blowup is instead $\poly(1/\varepsilon)$.
\end{remark}

Our main goal is to apply \cref{lem:grouprefutation} to an instance that is $\lambda$-robust where $\lambda$ is not too small. This means if the original instance is $\lambda$-robust only for a small $\lambda$, which corresponds to $G$ being large and having many zero divisors, we would like to switch to a smaller (but still larger than $O(\varepsilon^{-1})$) or more robust group. The reason we can do this is the following observation.

\begin{observation}
    Let $G$ be an Abelian group and fix a $k$-$\LIN(G)$ instance $\mathcal{I}$. For any subgroup $H$ of $G$, let $\tilde{\mcI}$ be the result of modding each equation in $\mcI$ by $H$, i.e., we map the equations over $G$ to equations over $G / H$ using the natural surjective group homomorphism. Then $\val(\mcI) \leq \val(\tilde{\mcI})$.
\end{observation}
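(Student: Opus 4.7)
The plan is to push an optimal assignment through the coordinate-wise quotient map and observe that each individual constraint is preserved. Concretely, let $\pi \colon G \to G/H$ denote the natural surjective group homomorphism, and extend $\pi$ coordinate-wise to $\pi \colon G^n \to (G/H)^n$. Let $x^* \in G^n$ be an optimal assignment to $\mcI$, so that $\val(\mcI) = \frac{1}{\abs{\mcH}}\sum_{v \in \mcH} \1(\ip{v,x^*} = b_v)$, and define $\tilde{x} \defeq \pi(x^*) \in (G/H)^n$.

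The core step is then to check that for every equation $v \in \mcH$, if $\ip{v, x^*} = b_v$ in $G$, then $\ip{\pi(v), \tilde{x}} = \pi(b_v)$ in $G/H$. Since $G = \bigotimes_{i=1}^r \Z_{m_i}$ and every subgroup of $\Z_{m_i}$ is of the form $d\Z_{m_i}$ (an ideal), the quotient map $\pi$ is compatible with the direct-product multiplication used to define $\ip{\cdot, \cdot}$, i.e., $\pi(v_i x^*_i) = \pi(v_i)\pi(x^*_i)$ coordinatewise; since $\pi$ is additive, we get $\pi(\ip{v,x^*}) = \ip{\pi(v), \tilde{x}}$, from which the desired implication is immediate. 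Thus $\tilde{x}$ satisfies at least every equation of $\tilde{\mcI}$ whose corresponding equation in $\mcI$ was satisfied by $x^*$.

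Taking the max over assignments to $\tilde{\mcI}$ then gives
\begin{equation*}
    \val(\tilde{\mcI}) \geq \frac{1}{\abs{\mcH}}\sum_{v \in \mcH} \1\!\left(\ip{\pi(v), \tilde{x}} = \pi(b_v)\right) \geq \frac{1}{\abs{\mcH}}\sum_{v \in \mcH} \1(\ip{v,x^*} = b_v) = \val(\mcI)\mper
\end{equation*}
There is essentially no obstacle here: the only thing to verify is the compatibility of $\pi$ with the multiplicative structure appearing in the inner product, and this is automatic because the quotient of $\Z_{m_i}$ by any subgroup is again a quotient ring, so $\pi$ is a ring homomorphism on each factor.
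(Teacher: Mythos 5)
Your proof is correct and follows essentially the same route as the paper's: push an optimal assignment through the coordinate-wise quotient map and observe that each satisfied constraint remains satisfied, since $\pi$ is additive and compatible with the direct-product multiplication on each $\Z_{m_i}$ factor. The paper's version is just terser, omitting the explicit verification that $\pi$ respects the multiplicative structure in the inner product.
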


\begin{proof}
    Note that for any assignment $x \in G^n$, if $x$ satisfies an equation in $\mcI$ then $x$ modded by $H$ satisfies the equation as well.
\end{proof}

We can thus mod the equations over $G$ by some subgroup $H$ without decreasing the value of the instance. There are now two ways we can succeed in lowering $\frac{1}{\lambda}$: (1) find a subgroup $H$ such that of $G/H$ has size $\approx \varepsilon^{-1}$, or (2) find a subgroup $H$ such that $G/H$ is a field, i.e., it is a group of prime order, and so the $\frac{1}{\lambda}$ can be removed completely. The following lemma shows that one of these must always be possible.

\begin{lemma}
    \label{lem:groupchoice}
     For any $t \in \N$ and Abelian group $G$, there is a subgroup $H$ of $G$ such that (1) $\abs{G/H} \leq t$, or (2) $\abs{G/H} > t$ and $G/H$ is prime order, or (3) $t < \abs{G/H} \leq t^2$ and any non-trivial subgroup of $G/H$ has size at least $\frac{1}{t}\abs{G/H}$. Moreover, the subgroup $H$ can be found in time $\poly(\abs{G})$.
\end{lemma}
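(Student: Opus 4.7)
The plan is to build $H$ by iterative quotienting, peeling off one prime factor of $\abs{G/H}$ at a time until one of the three cases in the statement is triggered. Starting from $H = \{0\}$, at each step I would examine the current quotient $Q \defeq G/H$. If $\abs{Q} \leq t$, return $H$ (case (1)); if $\abs{Q}$ is prime (which is then $>t$), return $H$ (case (2)); otherwise let $p$ be the smallest prime dividing $\abs{Q}$. By Cauchy's theorem $Q$ contains a cyclic subgroup $K$ of order $p$, and $p$ is exactly the order of the smallest nontrivial subgroup of $Q$. If $p \geq \abs{Q}/t$, return $H$ (case (3), verified below); otherwise replace $H$ by the preimage $\pi^{-1}(K)$ under the quotient map $\pi \colon G \to Q$, and repeat.

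Termination is immediate because $\abs{G/H}$ is divided by $p \geq 2$ at each iteration, so the procedure halts in at most $\log_2 \abs{G}$ rounds, each of which costs $\poly(\abs{G})$ (factor $\abs{Q}$ by trial division, and locate an element of prime order $p$ by enumeration). Cases (1) and (2) are read off directly from the stopping criteria. The substantive check is case (3), where the key point is the upper bound $\abs{Q} \leq t^2$: at exit $Q$ is composite, so its two smallest prime factors (counted with multiplicity) are both at least $p$, forcing $\abs{Q} \geq p^2$ and hence $\abs{Q}/p \geq p$; meanwhile the exit condition $p \geq \abs{Q}/t$ rearranges to $\abs{Q}/p \leq t$. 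Combining gives $p \leq t$ and $\abs{Q} = p \cdot (\abs{Q}/p) \leq p \cdot t \leq t^2$. That every nontrivial subgroup of $Q$ has size $\geq \abs{Q}/t$ is automatic: any such subgroup, being nontrivial, contains a cyclic subgroup of prime order at least $p \geq \abs{Q}/t$ by Cauchy's theorem applied within it.

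The main obstacle, though a mild one, is the case-(3) verification above: one has to calibrate the exit condition so that $\abs{Q}$ is guaranteed to land in the window $t < \abs{Q} \leq t^2$, and this uses in an essential way both Cauchy's theorem (to identify the smallest nontrivial subgroup as cyclic of prime order) and the compositeness of $Q$ at the exit step (to force $\abs{Q} \geq p^2$). Beyond this, the whole argument is elementary structure theory for finite Abelian groups, and a brute-force implementation of the subroutines yields the $\poly(\abs{G})$ runtime claimed in the lemma.
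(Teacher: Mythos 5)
Your proof is correct, and it takes a genuinely different route from the paper's. The paper works directly from the fundamental-theorem decomposition $G \cong \bigotimes_i \Z_{m_i}$ with the $m_i$ prime powers sorted by decreasing base prime: after disposing of cases (1) and (2), it searches a one-parameter family of product subgroups of the form $(\bigotimes_{i < s} m_i\Z_{m_i}) \otimes (\mu_s \Z_{m_s}) \otimes (\bigotimes_{i > s} \Z_{m_i})$, selects the largest such $H$ with $\abs{G/H} > t$, and then uses \emph{maximality} of that choice to argue $\abs{G/H} \leq p_s t \leq t^2$. You instead build $H$ iteratively, at each round peeling off a cyclic subgroup of the smallest prime order $p$ dividing the current quotient and stopping once $p \geq \abs{Q}/t$; the explicit structure theorem is replaced by Cauchy's theorem plus normality of subgroups of an Abelian group. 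The case-(3) window $t < \abs{Q} \leq t^2$ comes out for the same underlying reason in both arguments — the smallest prime $p$ of $\abs{G/H}$ is sandwiched as $\abs{G/H}/t \leq p \leq t$ — but where the paper extracts the inequality $p \leq t$ from having pre-excluded large primes, your argument derives it from the compositeness of $Q$ at the exit step via $\abs{Q} \geq p^2$. Your version is somewhat cleaner algorithmically and more elementary, at the modest cost of describing the output $H$ implicitly through repeated preimages rather than as an explicit product subgroup.
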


\begin{proof}
    First, we assume that $\abs{G} \geq t$, as if $\abs{G} < t$ then we may take $H$ to be the group with one element so that $G/H = G$.

    By the Fundamental Theorem of Finite Abelian groups, we can assume that $G$ is isomorphic to $\bigotimes_{i = 1}^r \Z_{m_i}$, where the $m_i$ are prime powers $p_i^{e_i}$ and are arranged such that the $p_i$ are in decreasing order.
    Let $(\mu_1, ..., \mu_r)$ with each $\mu_i \in \Z$ with $1 \leq \mu_i \leq m_i$ for each $i$, and consider the subgroup $H$ given by $\bigotimes_{i = 1}^r \mu_i \Z_{m_i}$. We observe that $\abs{H} = \bigotimes_{i = 1}^r \frac{m_i}{\mu_i}$, and that $\abs{G/H} = \prod_{i = 1}^r \mu_i$.

    Next, we assume that $p_i \leq t$ for all $i$, as if $p_s > t$ for some $s$, then we may take $H$ to be the subgroup $(\bigotimes_{i \ne s} \Z_{m_i}) \bigotimes (p_s\Z_{m_s})$, so that $G/H$ is isomorphic to $\Z_{p_s}$, which is prime order and has size $p_s > t$.

    Now, we can search through elements $(\mu_1, ..., \mu_r) \in G$ of the form $(m_1,...,m_{s-1}, \mu_s, 1, ..., 1)$ where $\mu_s = p_s^{e'_s}$ satisfies $1 < \mu_s \leq m_s$. Note that there are at most $\abs{G}$ such elements. Let $H$ given by $\bigotimes_{i = 1}^r \mu_i \Z_{m_i}$. By the above, the quotient group $G/H$ has size $\left(\prod_{i = 1}^{s-1} m_i \right) \mu_s$. Fix $H$ to be the largest subgroup of the form $\bigotimes_{i=1}^r \mu_i \Z_{m_i}$ where the $\mu_i$'s are in the above set and $\abs{H} \leq \abs{G}/t$ (equivalently $\abs{G/H} > t$). Note that such a subgroup $H$ exists since $\abs{G} \geq t$, so this is well-defined.

   Let  $H = \bigotimes_{i = 1}^{s-1} m_i \Z_{m_i} \bigotimes (\mu_s \Z_{m_s}) \bigotimes_{i = s + 1}^r \Z_{m_i}$. Recall that we have shown that we may assume $p_s \leq t$. It remains to argue that $\abs{G/H} \leq t^2$ and that any non-trivial subgroup of $G/H$ has size at least $\frac{1}{t} \abs{G/H}$. We have that $G/H$ is isomorphic to $\left(\bigotimes_{i = 1}^{s-1} \Z_{m_i}\right) \bigotimes \Z_{\mu_s}$, where recall that $m_i = p_i^{e_i}$ and $\mu_s = p_s^{e'_s}$ and the $p_i$'s are in decreasing order. Since $\mu_s > 1$, it then follows that the smallest subgroup of $G/H$ is isomorphic to $\Z_{p_s}$. By the maximality assumption on $H$, the subgroup $H'$ given by the tuple 
    $(m_1,...,m_{s-1}, \mu_s/p_s, 1, ..., 1)$ has size $\abs{H'} \geq \abs{G}/t$. Note that $\abs{H'} = p_s \abs{H}$, and so it follows that $p_s \abs{H} \geq \abs{G}/t$, which implies that $\abs{G/H} \leq p_s t$. Since any non-trivial subgroup of $G/H$ has size at least $p_s$, this is at least $\abs{G/H}/t$ by the above. Furthermore, since $p_s \leq t$, we have $\abs{G/H} \leq p_s t \leq t^2$. 

    Finally, we observe that the above operations to find the subgroup $H$ (and thereby the quotient group $G/H$ also) can be clearly done in time $\poly(\abs{G})$ when $G$ is given in the form $\bigotimes_{i = 1}^r \Z_{m_i}$. This finishes the proof.
\end{proof}

We are now ready to state our full algorithm.

\begin{tcolorbox}[
    width=\textwidth,   
    colframe=black,  
    colback=white,   
    title=Reduction to $\lambda$-robust $k$-$\LIN(G)$ Refutation,
    colbacktitle=white, 
    coltitle=black,      
    fonttitle=\bfseries,
    center title,   
    enhanced,       
    frame hidden,           
    borderline={1pt}{0pt}{black},
    sharp corners,
    toptitle=2.5mm
]
\textbf{Input:} A $k$-$\LIN(G)$ instance $\mcI = (\mcH, \{b_v\}_{v \in \mcH})$.\\

\textbf{Output:} $\algval(\mcI) \in [0,1]$ with guarantee $\algval(\mcI) \geq \max_{x \in G^n} \val(\mcI, x)$.\\

\textbf{Algorithm:}
\begin{enumerate}
    \item Find a subgroup $H$ of $G$ with $t = 4\varepsilon^{-1}$ according to \cref{lem:groupchoice} and mod $\mcI$ into $G/H$ as $\mcI_1$.
    \item Move any equations with multiple zero coefficients after modding into a separate instance $\mathcal{I}_2$ (with the coefficients deleted). If they have none or a single non-zero coefficient left, additionally move them to $\mathcal{I}_0$.
    \item If $\frac{\abs{\mcH_1}}{\abs{\mcH}} \leq \frac{\varepsilon}{4}$ then let $\algval(\mcI_1)$ (and do the same for $\mcI_2$). Otherwise, run the Kikuchi matrix refutation on $\mcI_1$ as a $k$-$\LIN(G/H)$ instance and on $\mathcal{I}_2$ as a $(k-2)$-$\LIN(G/H)$ instance. Compute the value of $\mathcal{I}_0$ exactly.\footnote{For trivial equations, this is done by checking if the right hand side is $0$. For equations with a single coefficient, this can be done choosing the assignment to each variable separately to maximize the number of equations satisfied.}
    \item Output $\frac{\abs{\mcH_{\mcI_1}}}{\abs{\mcH}} \algval(\mcI_1) + \frac{\abs{\mcH_{\mathcal{I}_2}}}{\abs{\mcH}} \algval(\mathcal{I}_2) + \frac{\abs{\mcH_{\mathcal{I}_0}}}{\abs{\mcH}} \val(\mathcal{I}_0)$.
\end{enumerate}

\end{tcolorbox}

\begin{proof}[Proof of \cref{thm:grouprefutation} from \cref{lem:grouprefutation}]
    We compute
    \begin{flalign*}
        \val(\mcI) &= \max_{x \in G^n} \val(\mcI, x) = \max_{x \in G^n} \frac{\abs{\mcH_{\mcI_1}}}{\abs{\mcH}} \val(\mcI_1,x) + \frac{\abs{\mcH_{\mathcal{I}_2}}}{\abs{\mcH}} \val(\mathcal{I}_2,x) + \frac{\abs{\mcH_{\mathcal{I}_0}}}{\abs{\mcH}} \val(\mathcal{I}_0,x)\\
        &\leq \frac{\abs{\mcH_{\mcI_1}}}{\abs{\mcH}} \val(\mcI_1) + \frac{\abs{\mcH_{\mathcal{I}_2}}}{\abs{\mcH}} \val(\mathcal{I}_2) + \frac{\abs{\mcH_{\mathcal{I}_0}}}{\abs{\mcH}} \val(\mathcal{I}_0)\mper
    \end{flalign*}
    By \cref{item:lemgrouprefutation1} of \cref{lem:grouprefutation}, $\frac{\abs{\mcH_{\mcI_1}}}{\abs{\mcH}} \algval(\mcI_1) + \frac{\abs{\mcH_{\mathcal{I}_2}}}{\abs{\mcH}} \algval(\mathcal{I}_2) + \frac{\abs{\mcH_{\mathcal{I}_0}}}{\abs{\mcH}} \val(\mathcal{I}_0)$ is indeed a certificate on $\val(\mcI)$, giving \cref{item:grouprefutation1}.

    Since modding preserves semirandomness, we have
    \begin{equation*}
        \frac{\abs{\mcH_{\mcI_1}}}{\abs{\mcH}} \algval(\mcI_1) \leq \frac{\abs{\mcH_{\mcI_1}}}{\abs{\mcH}} \cdot \frac{1}{\abs{H}} + \frac{\abs{\mcH_{\mcI_1}}}{\abs{\mcH}} \cdot \sqrt{\frac{\abs{\mcH}}{\abs{\mcH_{\mcI_1}}}} \frac{\varepsilon}{4} \leq \frac{\abs{\mcH_{\mcI_1}}}{\abs{\mcH}} \cdot \frac{1}{\abs{H}} +\frac{\varepsilon}{4}\mcom
    \end{equation*}
    w.h.p. if $\abs{\mcH} \geq C n \log(\abs{G}n) \left(\frac{\abs{G}n}{\ell}\right)^{k/2-1} \varepsilon^{-3}$ for sufficiently large $C > 0$, by absorbing the $\frac{\abs{\mcH_{\mcI_1}}}{\abs{\mcH}}$ into the loss $\varepsilon$ in \cref{lem:grouprefutation}. Note the factor $\frac{1}{\lambda} \leq \frac{4}{\varepsilon}$ here since \cref{lem:groupchoice} guarantees there is no subgroup of $G/H$ of thinness less than $\frac{\varepsilon}{4}$ and since there is at most a single zero coefficient, it cannot form a thin subequation when $k > 2$. Similarly, we have $\frac{\abs{\mcH_{\mcI_2}}}{\abs{\mcH}} \algval(\mcI_2) \leq \frac{\abs{\mcH_{\mcI_2}}}{\abs{\mcH}} \cdot \frac{1}{\abs{H}} + \sqrt{\frac{\abs{\mcH_{\mcI_2}}}{\abs{\mcH}}} \frac{\varepsilon}{4}$ w.h.p. if $\abs{\mcH} \geq C n \log(\abs{G}n) \left(\frac{\abs{G}n}{\ell}\right)^{k/2-1} \varepsilon^{-2}$. Unlike in the previous case, we do not assume $\frac{1}{\lambda} < \abs{G}$ in our use of \cref{lem:grouprefutation}, instead relying on the fact that we reduced the arity by $2$, making the threshold required a factor $\abs{G}$ lower already.

    For $\mcI_0$, we can appeal to standard Chernoff bounds (see \cref{lem:unsatisfiability}) to claim that the value concentrates within $\frac{1}{\abs{H}} + \frac{\varepsilon}{4}$ at the above equation density so long as $\frac{\abs{\mcH_{\mcI_0}}}{\abs{\mcH}} \geq \frac{\varepsilon}{4}$. We conclude $\frac{\abs{\mcH_{\mcI_0}}}{\abs{\mcH}} \val(\mcI_0) \leq \max\left(\frac{\abs{\mcH_{\mcI_0}}}{\abs{\mcH}} (\frac{1}{\abs{H}} + \frac{\varepsilon}{4}), \frac{\varepsilon}{4}\right)$.

    Putting the bounds together we get with high probability
    \begin{equation*}
        \frac{\abs{\mcH_{\mcI_1}}}{\abs{\mcH}} \algval(\mcI_1) + \frac{\abs{\mcH_{\mathcal{I}_2}}}{\abs{\mcH}} \algval(\mathcal{I}_2) + \frac{\abs{\mcH_{\mathcal{I}_0}}}{\abs{\mcH}} \val(\mathcal{I}_0) \leq \frac{1}{\abs{H}} + \frac{3\varepsilon}{4}\mper
    \end{equation*}
    If we have $G/H = G$, this achieves \cref{item:grouprefutation2}. Otherwise, by the choice of $H$, we have $\frac{1}{\abs{G/H}} \leq \frac{\varepsilon}{4}$, also achieving \cref{item:grouprefutation2}.
\end{proof}

\subsection{Finite Abelian Kikuchi matrices}

Following the previous section, we have reduced the proof of \cref{thm:grouprefutation} to showing how to accomplish $\lambda$-robust refutation in $G$ as in \cref{lem:grouprefutation}, which we prove here. As in the proof of \cref{thm:refutation}, the key is in defining a suitable Kikuchi matrix over $G = \bigotimes_{i=1}^r \Z_{m_i}$. 

\begin{definition}{(Even-arity Kikuchi matrix over $G = \bigotimes_{i=1}^r \Z_{m_i}$).}
\label{def:groupkikuchimatrix}
Let $k/2 \leq \ell \leq n/2$ be a parameter, and let $N = \abs{G}^\ell {n \choose \ell}$. For each $k$-sparse\footnote{Technically, we allow vectors that are less than $k$-sparse in order to accommodate zeroed out coefficients when modding.} vector $v \in G^n$ and $\beta \in G^*$ for $G^* = G \setminus \{0\}$, we define a matrix $A_{v, \beta} \in \C^{N \times N}$ as follows. First, we identify $N$ with the set of pairs $(U, S)$ where $U \in G^n$ and $S \in {[n] \choose \ell}$ with the condition $\supp(U) \subseteq S$. We call the set of all such pairs $I(G, n, \ell)$. Then, for any pair $(U,S)$ and $(V,T)$ we define
    \begin{equation*}
        A_{v, \beta}((U, S), (V, T)) = \begin{cases}
                      1  & (U, S)\xrightarrow{\text{$v, \beta$}} (V,T)\\
                      0 & \text{otherwise}
                    \end{cases}
    \end{equation*}
    where we say $(U,S)\xrightarrow{\text{$v, \beta$}} (V,T)$ if $U-V = \beta v$ and $S \oplus T = \supp(v)$.
    
Let $\Phi(x) = \frac{1}{\abs{G} \abs{\mcH}}\sum_{v \in \mcH} \sum_{\beta \in G^*} c_{v, \beta} \cdot \chi_{\beta v}$ be a polynomial defined by a set $\mcH$ of $k$-sparse vectors from $G^n$ and complex coefficients $\cbra{c_{v, \beta}}_{\substack{v \in \mcH \\ \beta \in G^*}}$. We define the level-$\ell$ Kikuchi matrix for this polynomial to be $A = \sum_{v \in \mcH} \sum_{\beta \in G^*} c_{v, \beta} \cdot A_{v, \beta}$.
\end{definition}

\begin{observation}
    \label{obs:groupadvantageoverpolynomial}
    For a $k$-$\LIN(G)$ instance $\mcI = (\mcH, \{b_v\}_{v \in \mcH})$, let $\val(\mcI, x)$ denote the fraction of constraints satisfied by an assignment $x \in G^n$. Then, we have
    \begin{equation*}
        \val(\mcI, x) = \frac{1}{\abs{G}} + \frac{1}{\abs{\mcH} \abs{G}}\sum_{v \in \mcH} \sum_{\beta \in G^*} \chi_{\beta}(b_v) \cdot \overline{\chi_{\beta v}(x)} := \frac{1}{\abs{G}} + \Phi(x) \mcom
    \end{equation*}
\end{observation}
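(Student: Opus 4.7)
The proof proposal is to mirror the argument of \cref{obs:advantageoverpolynomial} from the field case, replacing the additive trace-character identity over $\F$ with Fourier orthogonality over the Abelian group $G$. Specifically, the plan is to use that for any $\alpha \in G$, $\sum_{\beta \in G} \chi_\beta(\alpha)$ equals $\abs{G}$ if $\alpha = 0$ and $0$ otherwise, which gives the identity
\begin{equation*}
    \1(\ip{v,x} = b_v) = \1(b_v - \ip{v,x} = 0) = \frac{1}{\abs{G}} \sum_{\beta \in G} \chi_\beta\!\left(b_v - \ip{v,x}\right)\mper
\end{equation*}

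Next, I would use that each $\chi_\beta$ is a group homomorphism from $G$ to $\C^\times$ to split $\chi_\beta(b_v - \ip{v,x}) = \chi_\beta(b_v) \cdot \overline{\chi_\beta(\ip{v,x})}$. The key bookkeeping step is to verify, directly from the definition $\chi_\alpha(y) = \prod_{i=1}^r \omega_{m_i}^{\psi(\alpha)_i \psi(y)_i}$, that $\chi_\beta(\ip{v,x}) = \chi_{\beta v}(x)$. Unpacking the inner product and using that multiplication in $G$ is coordinate-wise, both sides equal $\prod_{i=1}^n \prod_{j=1}^r \omega_{m_j}^{\psi(\beta)_j \psi(v_i)_j \psi(x_i)_j}$, so they agree. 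Hence each indicator can be rewritten as $\frac{1}{\abs{G}} \sum_{\beta \in G} \chi_\beta(b_v) \overline{\chi_{\beta v}(x)}$.

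Finally, I would average over $v \in \mcH$ and isolate the $\beta = 0$ contribution, for which $\chi_0(b_v) = 1$ and $\chi_{0 \cdot v}(x) = \chi_0(x) = 1$; summing over $v$ gives the additive constant $\frac{1}{\abs{G}}$, and the remaining terms with $\beta \in G^*$ produce $\Phi(x)$ exactly as stated. Since every manipulation above is a direct group-theoretic analogue of the field calculation in \cref{obs:advantageoverpolynomial}, and the multiplicativity of the Fourier characters is built into their product definition, there is no substantive obstacle—the only point requiring care is confirming the identity $\chi_\beta(\ip{v,x}) = \chi_{\beta v}(x)$ when the ``multiplication'' of $\beta$ and $v_i$ is interpreted as the direct-product multiplication on $G$, which is exactly what makes the Kikuchi matrix construction in \cref{def:groupkikuchimatrix} meaningful.
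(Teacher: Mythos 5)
Your proposal is correct and follows essentially the same approach as the paper's proof: both expand the indicator $\1(\ip{v,x}=b_v)$ via Fourier orthogonality over $G$, split $\chi_\beta(b_v - \ip{v,x})$ into $\chi_\beta(b_v)\overline{\chi_{\beta v}(x)}$, and then isolate the $\beta=0$ term. The only cosmetic difference is that the paper unpacks the orthogonality relation factor-by-factor over the cyclic components $\Z_{m_i}$, whereas you invoke the group-level character orthogonality and the homomorphism property directly and then verify $\chi_\beta(\ip{v,x})=\chi_{\beta v}(x)$ by expanding coordinates; these are the same calculation organized slightly differently.
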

\begin{proof}
    Recall that a constraint in $\mcI$ takes the form $\langle v, x \rangle = b_v$ for $v \in \mcH$, where $x \in G^n$ are the variables. The indicator variable for this event is simply:
    \begin{align*}
        \1(\langle v, x \rangle = b_v) &= \prod_{i=1}^r \1(\langle v^{(i)}, x^{(i)} \rangle = b_v^{(i)})\\
        &= \prod_{i=1}^r \E_{\beta_i \sim \Z_{m_i}}\sbra{\omega_{m_i}^{\beta_i b_v^{(i)} - \beta_i \langle v^{(i)}, x^{(i)} \rangle}}\\ 
        &= \prod_{i=1}^r \frac{1}{\abs{\Z_{m_i}}}\sum_{\beta_i \in \Z_{m_i}}\omega_{m_i}^{\beta_i b_v^{(i)} - \beta_i \langle v^{(i)}, x^{(i)} \rangle}\\ 
        &=\frac{1}{\abs{G}}\sum_{\beta \in G} \prod_{i=1}^r \omega_{m_i}^{\beta^{(i)} b_v^{(i)} - \beta^{(i)} \langle v^{(i)}, x^{(i)} \rangle}\\ 
        &= \frac{1}{\abs{G}}\sum_{\beta \in G} \chi_\beta(b_v) \cdot \overline{\chi_{\beta v}(x)} \mper
    \end{align*}
    If $\langle v, x \rangle = b_v$ at each coordinate $i \in [r]$, then $\beta^{(i)} b_v^{(i)} - \beta^{(i)}\ip{v,x}^{(i)} = 0$ for all $\beta \in G$. If $b_v - \ip{v,x} \ne 0$, we have some coordinate $i$ that is non-zero, then $\E_{\beta_i \sim \Z_{m_i}}\sbra{\omega_{m_i}^{\beta_i (b_v^{(i)} - \langle v, x \rangle^{(i)})}} = 0$. Hence, it follows that
    \begin{flalign*}
    &\val(\mcI, x) = \frac{1}{\abs{\mcH}} \sum_{v \in \mcH}  \1(\langle v, x \rangle = b_v) = \frac{1}{\abs{\mcH}} \sum_{v \in \mcH}  \frac{1}{\abs{G}} \sum_{\beta \in G} \chi_\beta(b_v) \cdot \overline{\chi_{\beta v}(x)} \\
    &= \frac{1}{\abs{G}} + \frac{1}{\abs{\mcH} \abs{G}} \sum_{v \in \mcH} \sum_{\beta \in G^*} \chi_\beta(b_v) \cdot \overline{\chi_{\beta v}(x)}  \mper  \end{flalign*}
\end{proof}

The following observation shows that we can express $\Phi(x)$ as a quadratic form on the matrix $A$ defined in \cref{def:groupkikuchimatrix}.
\begin{observation}
    \label{fact:groupbilinearforms}
    For $x \in G^n$ define $y \in \C^N$ as follows. For each $(U, S) \in I(G, n, \ell)$ as in \cref{def:groupkikuchimatrix}, we set $y_U = \chi_U(x)$. Then
    \begin{equation*}
        \Phi(x) = \frac{1}{\abs{\mcH}\abs{G} \Delta} y^\dagger A y \mper
    \end{equation*}
    where $\Delta := {k \choose k/2} {n-k \choose \ell - k/2}\abs{G}^{\ell-k/2}$.
\end{observation}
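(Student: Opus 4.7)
The plan is to mimic the proof of the analogous statement (\cref{fact:bilinearforms}) from the field case, with the main new ingredient being careful bookkeeping of the pair structure $(U,S)$ used in \cref{def:groupkikuchimatrix} (where $U$ is a group-valued vector and $S$ is an $\ell$-subset containing $\supp(U)$), instead of the $\ell$-sparse vectors used in the field case.

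First I would expand the quadratic form directly from the definition of $A$:
\[
y^\dagger A y = \sum_{v \in \mcH, \beta \in G^*} c_{v,\beta} \sum_{(U,S),(V,T) \in I(G,n,\ell)} \1\Paren{(U,S) \xrightarrow{v,\beta} (V,T)} \cdot \overline{\chi_U(x)} \chi_V(x) \mper
\]
Using that $(U,S) \xrightarrow{v,\beta} (V,T)$ forces $U - V = \beta v$, and that $\chi$ is a group homomorphism on $G^n$, each inner term becomes $\overline{\chi_U(x)}\chi_V(x) = \chi_{V-U}(x) = \chi_{-\beta v}(x) = \overline{\chi_{\beta v}(x)}$. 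Thus
\[
y^\dagger A y = \sum_{v \in \mcH, \beta \in G^*} c_{v,\beta} \cdot N(v,\beta) \cdot \overline{\chi_{\beta v}(x)} \mcom
\]
where $N(v,\beta)$ counts the number of pairs $((U,S),(V,T))$ such that $(U,S) \xrightarrow{v,\beta} (V,T)$. Comparing with \cref{obs:groupadvantageoverpolynomial}, it remains to show $N(v,\beta) = \Delta$ for every $(v,\beta)$.

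The main step is the counting argument, which I would carry out as follows. The condition $S \oplus T = \supp(v)$ combined with $|S|=|T|=\ell$ forces $|S \cap T| = \ell - k/2$, with $S \cap T \subseteq [n] \setminus \supp(v)$, and $\supp(v)$ partitioned into the two size-$k/2$ pieces $S \setminus T$ and $T \setminus S$. This gives $\binom{n-k}{\ell-k/2}$ choices for $S \cap T$ and $\binom{k}{k/2}$ choices for the partition of $\supp(v)$. Given $(S,T)$, the relation $U - V = \beta v$ uniquely determines $U$ and $V$ on $\supp(v)$ (for $i \in S \setminus T$ we have $V_i = 0$ and $U_i = \beta v_i$; for $i \in T \setminus S$ we have $U_i = 0$ and $V_i = -\beta v_i$), and forces $U_i = V_i$ for $i \in S \cap T$, where this common value can be any element of $G$, contributing $|G|^{\ell - k/2}$. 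Multiplying these gives $N(v,\beta) = \binom{k}{k/2}\binom{n-k}{\ell-k/2} |G|^{\ell - k/2} = \Delta$ as needed, so combining with the formula for $\Phi(x)$ in \cref{obs:groupadvantageoverpolynomial} gives the claim.

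The only subtle point, and the main conceptual difference from \cref{fact:bilinearforms}, is that in the group setting the values $U_i$ on $S \cap T$ range over all of $G$ (rather than $\F^*$) because in \cref{def:groupkikuchimatrix} we index rows/columns by pairs $(U,S)$ with $\supp(U) \subseteq S$, rather than by strictly $\ell$-sparse vectors; this is precisely why $\Delta$ carries a factor of $|G|^{\ell-k/2}$ instead of $|\F^*|^{\ell-k/2}$ and why $N = |G|^\ell \binom{n}{\ell}$. There is no further obstacle beyond this bookkeeping, since zero divisors in $G$ do not affect the count for a single $(v,\beta)$ pair: the uniqueness enforced by the $S \oplus T = \supp(v)$ condition (as in \cref{rem:nontrivialkikuchi}) plays the same role here as in the field case.
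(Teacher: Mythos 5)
Your proof is correct and follows essentially the same route as the paper: expand $y^\dagger A y$, reduce each summand to $\overline{\chi_{\beta v}(x)}$ via the character identity, and then count the pairs contributing to each $(v,\beta)$ by choosing $(S\cap T)$ and the split of $\supp(v)$, then filling in the free entries of $U$ on $S\cap T$. In fact, your phrasing is slightly more careful than the paper's (which reuses the $\F$-case language about $\supp(U)\oplus\supp(V)$ instead of $S\oplus T$, even though in the group setting $\supp(U)$ can be a strict subset of $S$); working explicitly with the pairs $(U,S)$ and noting that the entries on $S\cap T$ range over all of $G$ is exactly the right way to present it.
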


\begin{proof}
    \begin{align*}
        y^\dagger A y &= \sum_{(U,S), (V,T) \in I(G, n, \ell)} A((U,S), (V,T)) \cdot \overline{\chi_U(x)} \cdot \chi_V(x)\\
        &= \sum_{(U,S), (V,T) \in I(G, n, \ell)} \sum_{v \in \mcH, \beta \in G^*}
        \1\left((U,S)\xrightarrow{\text{$v, \beta$}} (V,T)\right) \cdot c_{v, \beta} \cdot \overline{\chi_U(x)} \cdot \chi_V(x)\\
        &= \sum_{(U,S), (V,T) \in I(G, n, \ell)} \sum_{v \in \mcH, \beta \in G^*} \1\left((U,S)\xrightarrow{\text{$v, \beta$}} (V,T)\right) \cdot c_{v, \beta} \cdot \overline{\chi_{U - V}(x)} \\
        &= \sum_{(U,S), (V,T) \in I(G, n, \ell)} \sum_{v \in \mcH, \beta \in G^*} \1\left((U,S)\xrightarrow{\text{$v, \beta$}} (V,T)\right) \cdot c_{v, \beta} \cdot \overline{\chi_{\beta v}(x)} \mper
    \end{align*}
    For each $v \in \mcH$ and $\beta \in G^*$, the term $c_{v, \beta} \cdot \overline{\chi_{\beta v}(x)}$ appears once for each pair of vertices $((U,S), (V,T))$ with $(U,S)\xrightarrow{\text{$v, \beta$}} (V,T)$. Let us now argue that the number of such pairs is exactly $\Delta = {k \choose k/2} {n-k \choose \ell - k/2}\abs{G}^{\ell-k/2}$. We count the number by first specifying $\supp(U)$ and $\supp(V)$, and then by specifying $U_i$ for each $i \in \supp(U)$ (and same for $V$). We first require that $\supp(U) \oplus \supp(V) = \supp(v)$, which in turn means that $\supp(U)$ has intersection exactly $k/2$ with $\supp(v)$ and likewise for $\supp(V)$. Thus, we can pay ${k \choose k/2}$ to count the number of ways to split $\supp(v)$ into two equal parts. Second, we need to specify $\supp(U) \setminus \supp(v)$, which is equal to $\supp(V) \setminus \supp(v)$, which is ${n - k\choose \ell - k/2}$ choices. Finally, we need to specify $U_i$ for each $i \in \supp(U)$ and $V_i$ for each $i \in \supp(V)$. For each $i \in \supp(U) \cap \supp(v)$, we set $U_i = (\beta v)_i$, and for each $i \in \supp(U) \setminus \supp(v)$, we can set $U_i$ to be any element in $G$. Note that specifying $U$ then determines $V$, so we have $\abs{G}^{\ell - k/2}$ choices. This finishes the proof.
 \end{proof}

Next, we compute the average degree (or number of non-zero entries) in a row/column in $A$.
\begin{observation}
    \label{fact:groupavgdegree}
    For $(U,S) \in I(G, n, \ell)$ we define the graph degree as normal:
    \begin{equation*}
    \deg((U, S)) := \abs{\{(\beta, v) \mid \beta \in G^*, v \in \mcH \text{ s.t. } \exists (V, T) \in I(G, n, \ell), (U,S) \xrightarrow{v, \beta} (V,T)\}} \mper
    \end{equation*}
    Then $\E[\deg(U)] \geq \frac{\abs{G^*}}{2} \left({\frac{\ell}{\abs{G}n}}\right)^{k/2} \cdot \abs{\mcH}$.
\end{observation}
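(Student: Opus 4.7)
The plan is to mirror the double-counting argument used in the proof of Observation 3.4 (\cref{fact:avgdegree}) for the finite field case. By linearity, it suffices to compute the total degree $\sum_{(U, S) \in I(G, n, \ell)} \deg((U, S))$ and then divide by $N = \abs{G}^\ell \binom{n}{\ell}$.

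First, I would swap the order of summation to rewrite $\sum_{(U, S)} \deg((U,S))$ as $\sum_{v \in \mcH, \beta \in G^*} M(v, \beta)$, where $M(v, \beta)$ denotes the number of pairs $(U, S) \in I(G, n, \ell)$ for which there exists some $(V, T)$ with $(U, S) \xrightarrow{v, \beta} (V, T)$. A quick uniqueness check shows that for each $(v, \beta)$ and each outgoing $(U, S)$, the destination is forced: we must take $T = S \oplus \supp(v)$ and then $V = U - \beta v$. Thus $M(v, \beta)$ is exactly the number of valid ordered edge endpoints $(U, S)$ labelled by $(v, \beta)$.

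Next, I would enumerate these $(U, S)$ as in the proof of \cref{fact:groupbilinearforms}. The condition $\abs{S} = \abs{T} = \ell$ together with $S \oplus T = \supp(v)$ forces $\abs{S \cap \supp(v)} = k/2$, giving $\binom{k}{k/2}$ ways to split $\supp(v)$ into $S \setminus T$ and $T \setminus S$, and $\binom{n-k}{\ell - k/2}$ ways to choose the remaining part $S \cap T \subseteq [n] \setminus \supp(v)$. The vector $U$ is then forced on $S \cap \supp(v)$ (where $U_i = \beta v_i$) and free on $S \setminus \supp(v)$ (any element of $G$, giving $\abs{G}^{\ell - k/2}$ choices); consistency on $T \setminus S$ and on $[n] \setminus (S \cup T)$ follows automatically from $\supp(U) \subseteq S$. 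Hence $M(v, \beta) = \Delta = \binom{k}{k/2} \binom{n-k}{\ell - k/2} \abs{G}^{\ell - k/2}$.

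Summing over $(v, \beta) \in \mcH \times G^*$ yields $\sum_{(U, S)} \deg((U, S)) = \abs{\mcH} \abs{G^*} \Delta$, so $\E[\deg((U, S))] = \abs{\mcH} \abs{G^*} \Delta / N$. Applying \cref{fact:binomest} gives $\Delta / N \geq \tfrac{1}{2} \abs{G}^{-k/2} (\ell/n)^{k/2}$, which plugged in delivers the claimed lower bound $\tfrac{\abs{G^*}}{2} \left(\tfrac{\ell}{\abs{G} n}\right)^{k/2} \abs{\mcH}$. There is no real obstacle here; the only point to check carefully, relative to the field case, is the uniqueness of $(V, T)$ given $(v, \beta)$ and $(U, S)$, which now follows purely from the set equation $T = S \oplus \supp(v)$ rather than invertibility in $\F^*$, so zero divisors in $G$ cause no issue at this step.
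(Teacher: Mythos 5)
Your argument is correct and is essentially the same as the paper's: both compute the total degree as $\abs{\mcH}\abs{G^*}\Delta$ by observing that each $(v,\beta)$ label contributes exactly $\Delta$ outgoing endpoints (reusing the count from \cref{fact:groupbilinearforms}), divide by $N$, and apply \cref{fact:binomest}. Your extra remark that $(V,T)$ is uniquely determined from $(U,S)$ via $T = S \oplus \supp(v)$ and $V = U - \beta v$ is a correct and worthwhile sanity check, but it is implicit in the paper's counting.
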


\begin{proof}
    Each $v \in \mcH$ contributes $\abs{G^*}\Delta$ to the total degree, so the average degree is $\E[\deg((U,S))] = \frac{\abs{\mcH}\abs{G^*}\Delta}{N}$. We then have:
    \begin{equation*}
    \E[\deg((U,S))] = \frac{\abs{G^*}\Delta}{N} \cdot \abs{\mcH} = \abs{G^*}\frac{\abs{G}^{\ell-k/2} {k \choose k/2}{n-k \choose \ell-k/2}}{\abs{G}^\ell {n \choose \ell}} \cdot \abs{\mcH} \geq \frac{\abs{G^*}}{2} \left({\frac{\ell}{\abs{G}n}}\right)^{k/2} \cdot \abs{\mcH} \mcom
    \end{equation*}
    where the last inequality follows from \cref{fact:binomest}.
\end{proof}

Unlike in the field case, we do not have the guarantee that for fixed $(U, S)$ and $v$ we have at most one $\beta $-labeled adjacent edge. Instead, this number is categorized by the robustness of the equations.

\begin{observation}
        \label{obs:groupdeg}
        Given $(U,S) \in I(G, n, \ell)$ of a Kikuchi matrix from a $\lambda$-robust equation set and $v \in \mcH$, there can be at most $\frac{1}{\lambda}$ choices of $\beta \in G$ such that there is $(V, T) \in I(G, n, \ell)$ with
        \begin{equation*}
            (U, S) \xrightarrow{v, \beta} (V, T) \mper
        \end{equation*}
\end{observation}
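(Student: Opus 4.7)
The plan is to reduce counting valid $\beta$ to the size of a specific kernel subgroup of $G$ and then invoke $\lambda$-robustness. First I would observe that in the edge relation $(U,S) \xrightarrow{v,\beta} (V,T)$, the set $T$ is forced by $T = S \oplus \supp(v)$, so it is not a free parameter. Setting $R \defeq S \cap \supp(v)$, which has size $k/2$ since $\abs{S} = \abs{T} = \ell$ and $\abs{S \oplus T} = \abs{\supp(v)} = k$, I note that $R \cap T = \emptyset$, and therefore the requirement $\supp(V) \subseteq T$ forces $V_i = 0$ for each $i \in R$. Combined with $V = U - \beta v$, this yields the system $\beta v_i = U_i$ for all $i \in R$, a linear system in the single unknown $\beta \in G$.

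Next, I would argue that the set of $\beta \in G$ solving this system is either empty or a single coset of the kernel
\begin{equation*}
K \defeq \{\gamma \in G : \gamma v_i = 0 \text{ for all } i \in R\}\mper
\end{equation*}
It remains to verify that the other conditions defining a valid edge are automatic on this coset: for $i \in T \setminus S$ one has $U_i = 0$ and $V_i = -\beta v_i$ with $i \in T$; for $i \in S \cap T$ one has $V_i = U_i$ with $i \in T$; and for $i \notin S \cup T$ one has $U_i = 0$ and $v_i = 0$ so $V_i = 0$. Hence the number of valid $\beta$ is at most $\abs{K}$.

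Finally, I would compute $\abs{K}$ using the coordinate decomposition $G = \bigotimes_{j=1}^r \Z_{m_j}$. An elementary coordinate-wise calculation, essentially the kernel-side analog of \cref{obs:groupdivisors}, shows that $\gamma^{(j)}$ must be a multiple of $m_j/\mu_j$, where $\mu_j = \gcd(m_j, \{v_i^{(j)}\}_{i \in R, v_i \ne 0})$; this is because $\gamma^{(j)} v_i^{(j)} \equiv 0 \pmod{m_j}$ is equivalent to $(m_j/\gcd(m_j, v_i^{(j)})) \mid \gamma^{(j)}$, and the lcm of these over $i \in R$ is $m_j/\mu_j$. Thus $\abs{K} = \prod_j \mu_j$, while $\abs{H(v_R)} = \prod_j (m_j/\mu_j)$ by definition, giving $\abs{K} \cdot \abs{H(v_R)} = \abs{G}$ and hence $\abs{K} = 1/\lambda(v_R)$. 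Since $\abs{R} = k/2$ and the instance is $\lambda$-robust, $\lambda(v_R) \geq \lambda$, which yields $\abs{K} \leq 1/\lambda$ as claimed. The only step with any real content is the kernel-size computation for $K$; everything else is routine bookkeeping, so I do not anticipate a substantial obstacle.
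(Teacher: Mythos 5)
Your proof is correct and takes essentially the same route as the paper: both reduce to the affine constraint $\beta v_i = U_i$ over the $k/2$-size intersection $R = S \cap \supp(v)$, recognize the solution set as (at most) a coset of the kernel subgroup $K = \{\gamma : \gamma v_i = 0 \text{ for } i \in R\}$, and compute $\abs{K} = 1/\lambda(v_R) \leq 1/\lambda$ by $\lambda$-robustness. Your write-up is slightly more explicit about why no $\beta$ outside the coset can work (the paper cites the one-directional \cref{obs:groupdivisors} and leaves the converse kernel containment implicit), but this is a matter of exposition rather than a genuinely different argument.
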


\begin{proof}
    This follows from \cref{obs:groupdivisors}. Suppose there is at least one $\beta$ for which the above holds and that $U$ matches on $v_U$ with representative group $H(v_U) = \bigotimes_{i=1}^r \mu_i \Z_{m_i}$. Note that $\beta^{(i)}$ is then interchangeable within the coset $(m_i/\mu_i)\Z_{m_i} + \beta^{(i)}$, giving $\mu_i$ options to change $\beta^{(i)}$ to without changing the value of $\beta v_U$, thus inducing another edge on $(U, S)$. Since this holds for all $i \in [r]$, we have a total of $\prod_{i=1}^r \mu_i$ options. From here we note $\abs{H_i(v_U)} = m_i/\mu_i$ so rewrite the number to $\prod_{i=1}^r \frac{m_i}{\abs{H_i(v_U)}}$ which is exactly $\frac{1}{H(v_U)}$ so by $\lambda$-robustness is at most $\frac{1}{\lambda}$.
\end{proof}

The following spectral norm bound immediately implies \cref{thm:grouprefutation} in the even case.
\begin{lemma}
    \label{lem:groupcountingbacktrackingwalks}
    Let $A$ be the level-$\ell$ Kikuchi matrix over $G^n$ defined in \cref{def:groupkikuchimatrix} for the $k$-$\LIN(G)$ instance $\mcI = (\mcH, \{b_v\}_{v \in \mcH})$ for some choice of $\{G^*\}_{v \in \mcH}$. Let $\Gamma \in \C^{N \times N}$ be the diagonal matrix $\Gamma = D + d \Id$ where $D_{(U,S), (U,S)} := \deg((U,S))$ and $d = \E[\deg((U,S))]$. Suppose the $b_v$'s are drawn independently and uniformly from $G$ and the instance $\mcI$ is $\lambda$-robust. Then, with probability $\geq 1 - \frac{1}{\poly(n)}$, it holds that
    \begin{equation*}
    \norm{\Gamma^{-1/2} A \Gamma^{-1/2}}_2 \leq O\left(\sqrt{\frac{\ell \log(\abs{G}n)}{\lambda d}}\right) \mper
    \end{equation*}
\end{lemma}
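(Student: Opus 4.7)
The plan is to adapt the trace moment method used for the field case in \cref{lem:countingbacktrackingwalks}, tracking the single place where the group setting differs: for a fixed pair $((U,S), v)$, there are up to $1/\lambda$ choices of $\beta \in G^*$ that induce an edge, rather than a unique $\beta$ as in the field case. This will translate into an extra $1/\lambda$ factor per ``old'' edge in the walk encoding, and consequently into a $1/\lambda$ inside the square root of the final spectral norm bound.

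Write $\tilde{A} = \Gamma^{-1/2} A \Gamma^{-1/2}$. Since $A$ is Hermitian, $\|\tilde{A}\|_2^{2t} \leq \tr((\Gamma^{-1}A)^{2t})$ for any positive integer $t$, so by Markov's inequality it suffices to bound $\E[\tr((\Gamma^{-1}A)^{2t})]$ and then take $2t = \Theta(\log N)$ with $N = \abs{G}^{\ell}\binom{n}{\ell}$. Expanding the trace as in the field case gives
\begin{equation*}
    \E\bigl[\tr((\Gamma^{-1}A)^{2t})\bigr] = \sum_{(v_1,\beta_1),\dots,(v_{2t},\beta_{2t})} \E\!\left[\prod_{i=1}^{2t} c_{v_i,\beta_i}\right] \tr\!\left(\prod_{i=1}^{2t} \Gamma^{-1} A_{v_i,\beta_i}\right),
\end{equation*}
where $c_{v,\beta} = \chi_\beta(b_v)$. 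Since the $b_v$'s are independent and uniform on $G$, and $\prod_{i \in R(v)} \chi_{\beta_i}(b_v) = \chi_{\sum_{i\in R(v)} \beta_i}(b_v)$ has expectation zero unless $\sum_{i\in R(v)} \beta_i = 0$ in $G$, the sum collapses to an identical notion of \emph{trivially closed} sequences as in \cref{def:triviallyclosedwalks} (where ``$= 0$'' is now read in $G$).

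The remaining task is the walk-counting analog of \cref{lem:maincountingbacktrackingwalks}. Each trace term corresponds to a closed walk $(U_0,S_0), v_1, \beta_1, (U_1,S_1), \dots, (U_{2t-1}, S_{2t-1}), v_{2t}, \beta_{2t}, (U_0, S_0)$ weighted by $\prod_{i=0}^{2t-1} \Gamma^{-1}_{(U_i,S_i)}$. I will encode such a walk by: (i) the start vertex ($N$ choices); (ii) for each $i$ a bit $z_i$ indicating whether $v_i$ is ``new'' (not seen before) or ``old''; (iii) for a new step, an integer in $[\deg((U_{i-1},S_{i-1}))]$ specifying the edge (which will cancel against the weight $\Gamma^{-1}_{(U_{i-1},S_{i-1})} \leq 1/\deg((U_{i-1},S_{i-1}))$); (iv) for an old step, the smallest index $j < i$ with $v_j = v_i$ (at most $2t$ choices), \emph{and additionally} the label $\beta_i \in G^*$. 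Here is where \cref{obs:groupdeg} enters: since $\mcI$ is $\lambda$-robust, once $v_i$ and $(U_{i-1},S_{i-1})$ are fixed, there are at most $1/\lambda$ admissible $\beta_i$'s, so an old step costs $2t/\lambda$ choices against weight $\Gamma^{-1}_{(U_{i-1},S_{i-1})} \leq 1/d$, contributing $2t/(\lambda d)$ per old step. Because any trivially closed sequence must have at least $t$ old steps (each new $v$ must be seen again to cancel in its $R(v)$-sum), summing over $z \in \{0,1\}^{2t}$ yields
\begin{equation*}
    \E\bigl[\tr((\Gamma^{-1}A)^{2t})\bigr] \leq N \cdot 2^{2t} \cdot \left(\frac{2t}{\lambda d}\right)^{t}.
\end{equation*}
Taking $2t = c\log N$ and applying Markov gives $\|\tilde{A}\|_2 \leq O(\sqrt{\ell \log(\abs{G}n)/(\lambda d)})$ with probability $1 - 1/N$, which proves the lemma.

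The main obstacle is the bookkeeping around old-edge encoding: verifying that \cref{obs:groupdeg} really does give the claimed uniform $1/\lambda$ bound on the number of parallel $\beta$-labeled edges from a vertex along a fixed equation $v$, and confirming that no extra factor leaks in from the pair-valued labeling $(U,S)$ (as opposed to just $U$) used in \cref{def:groupkikuchimatrix}. Everything else, including the ``at least $t$ old edges'' observation and the conversion from trace bound to spectral norm, is syntactically identical to the field case.
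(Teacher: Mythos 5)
Your proposal is correct and follows essentially the same approach as the paper: the trace moment method with a walk-encoding argument in which old edges are encoded by a back-pointer $j < i$ together with a label $\beta_i$ taking at most $1/\lambda$ values by the $\lambda$-robustness assumption (\cref{obs:groupdeg}), yielding the extra $1/\lambda$ inside the square root. The paper's proof of \cref{lem:groupcountingbacktrackingwalks} and its supporting \cref{lem:groupmaincountingbacktrackingwalks} match your outline step for step, including the observation that trivially closed walks over $G$ force at least $t$ old edges.
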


We further postpone the proof of \cref{lem:groupcountingbacktrackingwalks} and now finish the proof of \cref{thm:grouprefutation}.

\begin{proof}[Proof of \cref{lem:grouprefutation} from \cref{lem:groupcountingbacktrackingwalks}]
Let $\mcI = (\mcH, \{b_v\}_{v \in \mcH})$ be the input to the algorithm. Given $\ell$, the algorithm constructs the matrix $A$ for $\mcI$ and computes $\frac{2\abs{G^*}}{\abs{G}} \norm{\tilde{A}}_2$. Let $\Phi(x)$ be the polynomial defined in \cref{obs:groupadvantageoverpolynomial}. For each $x \in \F^n$, letting $y \in \C^n$ be the vector defined in \cref{fact:groupbilinearforms}, we have
    \begin{flalign*}
 &\Phi(x) = \frac{1}{\abs{G}  \abs{\mcH} \Delta} \cdot y^\dagger A y =  \frac{1}{\abs{G}  \abs{\mcH} \Delta} \cdot (\Gamma^{1/2} y)^\dagger \tilde{A} (\Gamma^{1/2} y) \leq \frac{1}{\abs{G}  \abs{\mcH} \Delta} \cdot \norm{\tilde{A}}_2 \norm{\Gamma^{1/2} y}_2^2 \\
 &=  \frac{1}{\abs{G}  \abs{\mcH} \Delta} \cdot \norm{\tilde{A}}_2 \cdot \tr(\Gamma) \leq \frac{2\abs{G^*}}{\abs{G}}\norm{\tilde{A}}_2  \mcom
     \end{flalign*}
     where we use that $\norm{\Gamma^{1/2} y}_2^2 = y^{\dagger} \Gamma y = \sum_{U} (\Gamma)_U \abs{y_U}^2 = \sum_U (\Gamma)_U = \tr(\Gamma)$ since $\abs{y_U} = 1$ for all $U$, and that $\tr(\Gamma) = 2 \abs{\mcH} \abs{G^*} \Delta$. Hence, 
     \begin{equation*}
     \max_{x \in G^n} \Phi(x) \leq \frac{2\abs{G^*}}{\abs{G}}\norm{\tilde{A}}_2  \mcom
     \end{equation*}
     which proves \cref{item:lemgrouprefutation1} in \cref{lem:grouprefutation} for $\mcI$. To prove \cref{item:lemgrouprefutation2}, we observe that by \cref{lem:groupcountingbacktrackingwalks}, since $\mcI$ is semirandom and $\lambda$-robust then with high probability over the draw of the $b_v$'s, it holds that
     \begin{equation*}
        \norm{\tilde{A}}_2  \leq O\left(\sqrt{\frac{\ell \log(\abs{G}n)}{\lambda d}}\right) \mcom
     \end{equation*}
     From \cref{fact:groupavgdegree}, we have $d \geq \frac{\abs{G^*}}{2} \left(\frac{\ell}{\abs{G}n}\right)^{k/2} \cdot \abs{\mcH}$ and can plug this in to get
     \begin{equation*}
        \frac{2\abs{G^*}}{\abs{G}}\norm{\tilde{A}}_2 \leq O(1) \cdot \sqrt{\frac{\ell \log(\abs{G}n)}{\lambda \abs{G}\left(\frac{\ell}{\abs{G}n}\right)^{k/2} \abs{\mcH}}}\mper
     \end{equation*}
     If $\abs{\mcH} \geq C \frac{n}{\lambda} \log(\abs{G}n) \left(\frac{\abs{G}n}{\ell}\right)^{k/2-1} \varepsilon^{-2}$ for a sufficiently large constant $C$, then $\frac{2\abs{G^*}}{\abs{G}}\norm{\tilde{A}}_2 \leq \varepsilon$ with probability $1 - 1/\poly(n)$, which is good enough for \cref{item:lemgrouprefutation2}.
\end{proof}

\begin{proof}[Proof of \cref{lem:groupcountingbacktrackingwalks}]

Since the matrix is Hermitian, we have that $\norm{\tilde{A}}_2 \leq \tr((\Gamma^{-1} A)^{2t})^{1/2t}$ for any positive integer $t$. Because the $b_v$'s are drawn independently from $\F$, the matrix $\tilde{A}$ is a random matrix. By Markov's inequality,
\begin{equation*}
\Pr\left[\tr((\Gamma^{-1} A)^{2t}) \geq N \cdot \E[\tr((\Gamma^{-1} A)^{2t})]\right] \leq \frac{1}{N} \mper
\end{equation*}
We note this event is the same as $\tr((\Gamma^{-1} A)^{2t})^{1/2t} \geq N^{1/2t} \cdot \E[\tr((\Gamma^{-1} A)^{2t})]^{1/2t}$, and for $2t \geq \log N $ we have $N^{1/2t} \leq O(1)$. This immediately gives us that with probability $\geq 
1- \frac{1}{N}$, $\norm{\tilde{A}}_2 \leq O\left(\E[\tr((\Gamma^{-1} A)^{2t})]^{1/2t}\right)$. We then have that
\begin{flalign*}
    \E\left[\tr\left(\left(\Gamma^{-1} A\right)^{2t}\right)\right] 
    &= \E\left[\tr\left(\left(\Gamma^{-1} \sum_{v \in \mcH, \beta \in G^*} c_{v, \beta} \cdot A_{v, \beta}\right)^{2t}\right) \right]\\
    &= \E\left[\tr\left(\sum_{(v_1, \beta_1) ,..., (v_{2t}, \beta_{2t}) \in \mcH \times G^*} \prod_{i = 1}^{2t} \Gamma^{-1} \cdot c_{v_i, \beta_i} \cdot A_{v_i, \beta_i} \right) \right]\\
    &= \sum_{(v_1, \beta_1) ,..., (v_{2t}, \beta_{2t}) \in \mcH \times G^*} \E\left[\tr\left(\prod_{i = 1}^{2t} \Gamma^{-1} \cdot c_{v_i, \beta_i} \cdot A_{v_i, \beta_i} \right) \right] \\
        &= \sum_{(v_1, \beta_1) ,..., (v_{2t}, \beta_{2t}) \in \mcH \times G^*} \E\left[\prod_{i=1}^{2t} c_{v_i, \beta_i} \right] \cdot \tr\left(\prod_{i = 1}^{2t} \Gamma^{-1}A_{v_i, \beta_i} \right)\mper
    \end{flalign*}
    Let us now make the following observation. Let $(v_1, \beta_1) ,..., (v_{2t}, \beta_{2t}) \in \mcH \times G^*$ be a term in the above sum. Fix $v \in \mcH$, and let $R(v)$ denote the set of $i \in [2t]$ such that $v_i = v$. We observe that if for some $v \in \mcH$, $\sum_{i \in R(v)} \beta_i \neq 0$, then $\E\left[\prod_{i=1}^{2t} c_{v_i, \beta_i} \right] = 0$. Indeed, this is because $b_v$ is independent for each $v \in \mcH$, and so $\E\left[\prod_{i=1}^{2t} c_{v_i, \beta_i} \right]  = \prod_{v \in \mcH} \E\left[\prod_{i \in R(v)} c_{v, \beta_i} \right]$, and 
    \begin{equation*}
    \E\left[\prod_{i \in R(v)}c_{v, \beta_i} \right] = \E\left[\prod_{i \in R(v)} \chi_{\beta_i}(b_v)\right] = \E\left[\chi_{\sum_{i \in R(v)} \beta_i}(b_v)\right]\mper
    \end{equation*}
    Then, since $b_v$ is uniform from $G$, it follows that $\E\left[\chi_{\sum_{i \in R(v)} \beta_i}(b_v)\right] = 0$ if $\sum_{i \in R(v)} \beta_i \neq 0$, and $1$ if $\sum_{i \in R(v)} \beta_i = 0$. This motivates the following definition.
    \begin{definition}[Trivially closed sequence]
    \label{def:grouptriviallyclosedwalks}
    Let $(v_1, \beta_1) ,..., (v_{2t}, \beta_{2t}) \in \mcH \times G^*$. We say that $(v_1, \beta_1) ,..., (v_{2t}, \beta_{2t}) \in \mcH \times G^*$ is trivially closed with respect to $v$ if it holds that $\sum_{i \in R(v)} \beta_i = 0$. We say that the sequence is trivially closed if it is trivially closed with respect to all $v \in \mcH$.
    \end{definition}
    With the above definition in hand, we have shown that
    \begin{flalign*}
      & \E\left[\tr\left(\left(\Gamma^{-1} A\right)^{2t}\right)\right] 
        = \sum_{\substack{(v_1, \beta_1) ,..., (v_{2t}, \beta_{2t}) \\ \text{trivially closed}}} \tr\left(\prod_{i = 1}^{2t} \Gamma^{-1}A_{v_i, \beta_i} \right)\mper
    \end{flalign*}

The following lemma yields the desired bound on $\E[\tr((\Gamma^{-1} A)^{2t})]$.
\begin{lemma}
    \label{lem:groupmaincountingbacktrackingwalks}
    $\sum_{\substack{(v_1, \beta_1) ,..., (v_{2t}, \beta_{2t}) \\ \text{trivially closed}}} \tr\left(\prod_{i = 1}^{2t} \Gamma^{-1}A_{v_i, \beta_i} \right) \leq N \cdot 2^{2t} \cdot \left(\frac{2t}{\lambda d}\right)^t$.
\end{lemma}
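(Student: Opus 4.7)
The plan is to mirror the trace moment encoding argument used for \cref{lem:maincountingbacktrackingwalks} in the field case, paying a new multiplicative price of $1/\lambda$ on every ``old'' edge to account for the lack of uniqueness of $\beta$ in the group setting.

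First I would rewrite each summand as a sum over closed walks in the Kikuchi graph. Expanding
\begin{equation*}
\tr\Paren{\prod_{i=1}^{2t} \Gamma^{-1} A_{v_i,\beta_i}} = \sum_{(U_0,S_0),\dots,(U_{2t-1},S_{2t-1})} \prod_{i=0}^{2t-1} \Gamma^{-1}_{(U_i,S_i)} \cdot \1\Paren{(U_i,S_i) \xrightarrow{v_{i+1},\beta_{i+1}} (U_{i+1},S_{i+1})}\mcom
\end{equation*}
with $(U_{2t},S_{2t}) = (U_0,S_0)$, shows that the sum in the lemma counts the total weight of closed walks of length $2t$ in the Kikuchi graph whose underlying sequence $(v_i,\beta_i)$ is trivially closed, where the weight of a walk is $\prod_{i=0}^{2t-1} \Gamma^{-1}_{(U_i,S_i)}$ and each entry satisfies $\Gamma^{-1}_{(U_i,S_i)} \leq \min(1/\deg((U_i,S_i)),1/d)$.

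Next I would encode each such walk uniquely. Write down the starting vertex $(U_0,S_0)$ ($N$ options), and for every $i \in [2t]$ a bit $z_i$ indicating whether $v_i = v_j$ for some $j<i$ (``old'') or not (``new''), contributing a total of $2^{2t}$ options across all step-type patterns. If $z_i = 0$, I specify which incident edge of $(U_{i-1},S_{i-1})$ is traversed using an integer in $[\deg((U_{i-1},S_{i-1}))]$; this is paired with the weight $\Gamma^{-1}_{(U_{i-1},S_{i-1})} \leq 1/\deg((U_{i-1},S_{i-1}))$ and so contributes at most $1$. If $z_i = 1$, I first specify $j<i$ with $v_i = v_j$ (at most $2t$ options), and then, crucially, specify which of the (at most $1/\lambda$) valid $\beta$'s from \cref{obs:groupdeg} to take; this is paired with the weight $\Gamma^{-1}_{(U_{i-1},S_{i-1})} \leq 1/d$, so each old step contributes at most $2t/(\lambda d)$.

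Finally I would observe that any trivially closed walk has at least $t$ old steps. Indeed, if $v$ appears in some $R(v)$, then since every $\beta_i \in G^*$ is nonzero, the relation $\sum_{i \in R(v)} \beta_i = 0$ forces $|R(v)| \geq 2$, so the number of distinct $v$'s used is at most $t$, meaning at least $t$ of the indices $i$ have $z_i = 1$. Multiplying the contributions gives the bound $N \cdot 2^{2t} \cdot (2t/(\lambda d))^t$. The main technical point, and the only real difference from the field proof, is the extra $1/\lambda$ factor per old step, which is precisely what \cref{obs:groupdeg} supplies, so I would not expect any real obstacle beyond carefully stating this encoding so that the resulting multiplicative bound comes out with $\lambda$ in the right place.
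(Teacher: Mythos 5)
Your proposal matches the paper's proof essentially step for step: the same trace expansion into trivially closed walks, the same encoding scheme with old/new bits, the same use of \cref{obs:groupdeg} to pay an extra $1/\lambda$ per old step, and the same $r \geq t$ counting argument. No substantive differences.
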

With \cref{lem:groupmaincountingbacktrackingwalks}, we thus have the desired bound $\E[\tr((\Gamma^{-1} A)^{2t})]$. Taking $t$ to be $c \log_2 N$ for a sufficiently large constant $c$ and applying Markov's inequality finishes the proof.
\end{proof}

\begin{proof}[Proof of  \cref{lem:groupmaincountingbacktrackingwalks}]
    We bound the sum as follows. First, we observe that for a trivially closed sequence $(v_1, \beta_1) ,..., (v_{2t}, \beta_{2t})$, we have for pairs $P_i \in I(G, n, \ell)$
    \begin{flalign*}
    \tr\left(\prod_{i = 1}^{2t} \Gamma^{-1}A_{v_i, \beta_i} \right) = \sum_{P_0, P_1, \dots, P_{2t-1}} \prod_{i = 0}^{2t - 1} \Gamma^{-1}_{P_i} \cdot \1\left(P_i\xrightarrow{\text{$v_{i+1}, \beta_{i+1}$}} P_{i+1}\right) \mcom
    \end{flalign*}
    where we define $P_{2t} = P_0$. Thus, the sum that we wish to bound in \cref{lem:groupmaincountingbacktrackingwalks} simply counts the total weight of ``trivially closed walks'' $P_0, v_1, \beta_1, P_1, \dots, P_{2t-1}, v_{2t}, \beta_{2t}, P_{2t}$ (where $P_{2t} = P_0$) in the Kikuchi graph $A$, where the weight of a walk is simply $\prod_{i = 0}^{2t-1} \Gamma^{-1}_{P_i}$.
    
    Let us now bound this total weight by uniquely encoding a walk $P_0, v_1, \beta_1, U_1, \dots, P_{2t-1}, v_{2t}, \beta_{2t}, P_{2t}$ as follows.
    \begin{itemize}
        \item First, we write down the start vertex $P_0 = (U_0, S_0)$.
        \item For $i = 1, \dots, 2t$, we let $z_i$ be $1$ if $v_i = v_j$ for some $j < i$. In this case, we say that the edge is ``old''. Otherwise $z_i = 0$, and we say that the edge is ``new''.
	\item For $i = 1, \dots, 2t$, if $z_i$ is $1$ then we encode $P_i = (U_i, S_i)$ by writing down the smallest $j \in [2t]$ such that $v_i = v_j$ and $\tau \in [1/\lambda]$ denoting $\beta_i$. Note that \cref{obs:groupdeg} and the $\lambda$-robust assumption guarantees there are at most $1/\lambda$ choices for what $\beta_i$ can be out of a fixed $(U_i, S_i)$.
	\item For $i = 1, \dots, 2t$, if $z_i$ is $0$ then we encode $P_i$ by writing down an integer in $1, \dots, \deg(P_{i-1})$ that specifies the edge we take to move to $P_i$ from $P_{i-1}$ (we associate $[\deg(P_{i-1})]$ to the edges adjacent to $P_{i-1}$ with an arbitrary fixed map).
    \end{itemize}
    With the above encoding, we can now bound the total weight of all trivially closed walks as follows. First, let us consider the total weight of walks for some fixed choice of $z_1, \dots, z_{2t}$. We have $N$ choices for the start vertex $U_0$. For each $i = 1, \dots, 2t$ where $z_i = 0$, we have $\deg(U_{i-1})$ choices for $U_i$, and we multiply by a weight of $\Gamma^{-1}_{U_{i-1}} \leq \frac{1}{\deg(U_{i-1})}$. For each $i = 1, \dots, 2t$ where $z_i = 1$, we have at most $2t$ choices for the index $j < i$ and $1/\lambda$ choices for $\beta_i$, and we multiply by a weight of $\Gamma^{-1}_{U_{i-1}} \leq \frac{1}{d}$. Hence, the total weight for a specific $z_1, \dots, z_{2t}$ is at most $N \left(\frac{2t}{\lambda d}\right)^{r}$, where $r$ is the number of $z_i$ such that $z_i = 1$.
    
    Finally, we observe that any trivially closed walk must have $r \geq t$. Hence, after summing over all $z_1, \dots, z_{2t}$, we have the final bound of $N 2^{2t} \left(\frac{2t}{\lambda d}\right)^{t}$, which finishes the proof.
\end{proof}

\section{Refuting Semirandom \texorpdfstring{$k$-$\LIN$}{k-LIN} over Fields/Abelian Groups for Odd $k$}
\label{sec:mainrefutation}

In this section, we prove \cref{thm:refutation} and \cref{thm:grouprefutation} for the case of odd $k$. We primarily tailor our exposition to the finite field case of \cref{thm:refutation} with interspersed commentary on where the proof of \cref{thm:grouprefutation} diverges. 

Our refutation algorithm follows the framework set out in \cite{GuruswamiKM22, HsiehKM23}, once again utilizing a generalization of the Kikuchi matrix for arbitrary finite fields $\F$. Since naive reductions or modifications to the even-arity case seem to fail, this Kikuchi matrix differs greatly and requires additional setup.

\subsection{Refuting bipartite polynomials}

We begin our proof by defining a structured type of $k$-$\LIN$ instance that lends itself well to the more delicate odd-arity Kikuchi matrix.

\begin{definition}[$\mcU$-bipartite decompositions of vector sets]
    A $t$-sparse $\mcU$-bipartite vector set $\mcH$ over $\F^n$ is a collection $\{\mcH_u\}_{u \in \mcU}$ where $\mcU$ is a set of $t$-sparse vectors in $\F^n$ and each $\mcH_u$ is a set of vectors $v$ in $\F^n$ with the property that $u \sqsubseteq v$. Given a vector set $\mcH$ over $\F^n$ with a partition $\{\mcH_u\}_{u \in \mcU}$ forming a $t$-sparse $\mcU$-bipartite vector set, we say that this partition forms a $t$-sparse $\mcU$-bipartite decomposition for $\mcH$. By slight abuse of notation, we will refer to both the ambient set of vectors and the collection $\{\mcH_u\}_{u \in \mcU}$ as $\mcH$.
\end{definition}

\begin{definition}[$(\varepsilon, \ell)$-regularity in $\mcU$-bipartite sets]
    Given a $\mcU$-bipartite vector set $\{\mcH_u\}_{u \in \mcU}$, a partition $\mcH_u$ of vectors in $\F^n$ is $(\varepsilon, \ell)$-regular if there does not exist non-zero $w \in \F^n$ with $\wt(w) > \wt(u)$ and a subset $\mcH'$ with the property that $w \sqsubseteq v$ for all $v \in \mcH'$ and $\abs{\mcH'} > \max\left(\left(\frac{n \abs{\F}}{\ell}\right)^{k/2-1-\wt(w)}, 1\right) \cdot \varepsilon^{-2}$. The entire collection $\{\mcH_u\}_{u \in \mcU}$ is said to be $(\varepsilon, \ell)$-regular if all partitions $\mcH_u$ are $(\varepsilon, \ell)$-regular. If $\varepsilon = 1$, we abbreviate to just $\ell$-regular.
\end{definition}

Thinking of the vectors as representing coefficients of $k$-$\LIN$ equations, the above can be informally interpreted as a decomposition of equations such that within each partition $\mcH_u$, there is no larger subequation than $u$ that overlaps many equations in $\mcH_u$.

Given a $k$-$\LIN$ instance $\mcI$ and a bipartite decomposition of its underlying equation set, the following trick gives us a new way to bound $\val(\mcI)$.

\begin{lemma}[Cauchy-Schwarz Trick]
    \label{obs:bipartitepolynomial}
    Recall for a $k$-$\LIN(\F)$ instance $\mcI = (\mcH, \{b_v\}_{v \in \mcH})$ we relate its satisfiability to the polynomial
    \begin{equation*}
        \val(\mcI, x) = \frac{1}{\abs{\F}} + \frac{1}{\abs{\mcH} \abs{\F}}\sum_{v \in \mcH} \sum_{\beta \in \F^*} \omega_p^{\Tr(\beta b_v)} \cdot \overline{\chi_{\beta v}(x)} := \frac{1}{\abs{\F}} + \Phi(x)\mper
    \end{equation*}
    Given a $\mcU$-bipartite decomposition $\mcH = \cbra{\mcH_u}_{u \in \mcU}$ we have the bound
    \begin{equation*}
        \abs{\Phi(x)}^2 \leq \frac{\abs{\mcU}}{\abs{\mcH}^2\abs{\F}} \sum_{u \in \mcU} \sum_{\beta \in \F^*} \sum_{v, v' \in \mcH_u} \omega_p^{\Tr(\beta b_{v})} \cdot \omega_p^{-\Tr(\beta b_{v'})} \cdot \overline{\chi_{\beta v}(x)} \cdot \chi_{\beta v'}(x)\mper
    \end{equation*}
    We refer to polynomials in this form as bipartite polynomials.
\end{lemma}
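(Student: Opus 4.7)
The plan is to establish this inequality as a direct application of the Cauchy-Schwarz inequality, after regrouping the defining sum of $\Phi(x)$ according to the $\mcU$-bipartite decomposition. First, since $\{\mcH_u\}_{u \in \mcU}$ is by definition a partition of $\mcH$, I would rewrite
\begin{equation*}
\Phi(x) = \frac{1}{\abs{\mcH}\abs{\F}} \sum_{\beta \in \F^*} \sum_{u \in \mcU} \psi_u(x,\beta) \mcom \qquad \psi_u(x,\beta) \defeq \sum_{v \in \mcH_u} \omega_p^{\Tr(\beta b_v)} \overline{\chi_{\beta v}(x)} \mper
\end{equation*}
This is the one place where the bipartite structure enters; before this regrouping there is nothing to Cauchy-Schwarz against.

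Next, I would apply Cauchy-Schwarz to the double-indexed sum over $(\beta, u) \in \F^* \times \mcU$, obtaining
\begin{equation*}
\abs{\textstyle\sum_{\beta, u} \psi_u(x, \beta)}^2 \;\le\; \abs{\F^*}\abs{\mcU} \sum_{\beta \in \F^*} \sum_{u \in \mcU} \abs{\psi_u(x,\beta)}^2 \;\le\; \abs{\F}\abs{\mcU} \sum_{\beta, u} \abs{\psi_u(x,\beta)}^2 \mper
\end{equation*}
Dividing through by the $\abs{\mcH}^2\abs{\F}^2$ coming from the normalization of $\Phi$ produces the prefactor $\frac{\abs{\mcU}}{\abs{\mcH}^2\abs{\F}}$ that appears in the target bound. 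Finally, I would simply expand the absolute value squared as
\begin{equation*}
\abs{\psi_u(x,\beta)}^2 = \sum_{v, v' \in \mcH_u} \omega_p^{\Tr(\beta b_v)} \cdot \omega_p^{-\Tr(\beta b_{v'})} \cdot \overline{\chi_{\beta v}(x)} \cdot \chi_{\beta v'}(x) \mcom
\end{equation*}
using that $\overline{\omega_p^{\Tr(\beta b_{v'})}} = \omega_p^{-\Tr(\beta b_{v'})}$ (since $\omega_p$ is a root of unity and $\Tr$ takes values in $\F_p$). Substituting this into the previous display gives exactly the right-hand side stated in the lemma.

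There is no real obstacle here: the proof is a textbook two-step Cauchy-Schwarz, with the only substantive choice being that we group the $v$-sum by $u \in \mcU$ \emph{before} squaring, so that after Cauchy-Schwarz the inner sum ranges over pairs $v, v'$ in the same $\mcH_u$ rather than arbitrary pairs in $\mcH$. This structural feature is precisely what enables the odd-$k$ Kikuchi analysis in later sections. The tiny relaxation $\abs{\F^*} \le \abs{\F}$ is harmless and yields the cleaner prefactor quoted in the statement.
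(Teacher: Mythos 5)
Your proof is correct and takes essentially the same approach as the paper: group the sum defining $\Phi$ according to the $\mcU$-bipartite decomposition, apply Cauchy--Schwarz over the $\abs{\F^*}\abs{\mcU}$ index pairs $(\beta, u)$, expand the resulting square of $\psi_u(x,\beta)$, and relax $\abs{\F^*}\le\abs{\F}$ to clean up the prefactor. The paper dresses the Cauchy--Schwarz step slightly differently, first factoring $\overline{\chi_{\beta v}(x)} = \overline{\chi_{\beta u}(x)}\cdot\overline{\chi_{\beta(v-u)}(x)}$ and pairing against the unit-modulus vector $(\overline{\chi_{\beta u}(x)})_{\beta,u}$, then observing after squaring that the $\chi_{\beta u}$ factors cancel back; since $\abs{\chi_{\beta u}(x)}=1$ always, this is identical in effect to your more direct pairing against the all-ones vector, so no substantive difference exists.
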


\begin{proof}[Proof of \cref{obs:bipartitepolynomial}]
    First we partition the sum across $\mcH$ based on the decomposition $\cbra{\mcH_u}_{u \in \mcU}$. For any $v \in \mcH_u$, we can rewrite a term $\overline{\chi_{\beta v}(x)} = \overline{\chi_{\beta u}(x)} \cdot \overline{\chi_{\beta v-\beta u}(x)}$. This allows us to pull out the common term $\overline{\chi_{\beta u}(x)}$ across the entire partition $\mcH_u$.
    \begin{align*}
        \abs{\Phi(x)}^2 &= \frac{1}{\abs{\mcH}^2 \abs{\F}^2} \left|\sum_{v \in \mcH} \sum_{\beta \in \F^*} \omega_p^{\Tr(\beta b_v)} \cdot \overline{\chi_{\beta v}(x)}\right|^2 \\
        &= \frac{1}{\abs{\mcH}^2 \abs{\F}^2} \left|\sum_{\beta \in \F^*} \sum_{u \in \mcU}\overline{\chi_{\beta u}(x)} \sum_{v \in \mcH_u}\omega_p^{\Tr(\beta b_{v})} \cdot \overline{\chi_{\beta (v-u)}(x)}\right|^2\mper
    \end{align*}

    Now we can apply the Cauchy-Schwarz inequality to the above quantity in such a way that the shared term $\overline{\chi_{\beta u}(x)}$ cancels.
    \begin{align*}
        \left|\sum_{\beta \in \F^*, u \in \mcU}\overline{\chi_{\beta u}(x)} \sum_{v \in \mcH_u}\omega_p^{\Tr(\beta b_v)} \cdot \overline{\chi_{\beta (v-u)}(x)}\right|^2 
        &\leq \sum_{\beta \in \F^*, u \in \mcU} \left|\overline{\chi_{\beta u}(x)}\right|^2 \sum_{\beta \in \F^*, u \in \mcU} \left|\sum_{v \in \mcH_u} \omega_p^{\Tr(\beta b_{v})} \cdot \overline{\chi_{\beta (v-u)}(x)}\right|^2 \\
        &= \abs{\F^*} \abs{\mcU} \sum_{\beta \in \F^*, u \in \mcU} \sum_{v, v' \in \mcH_u}  \omega_p^{\Tr(\beta b_{v})} \cdot \omega_p^{-\Tr(\beta b_{v'})} \cdot \overline{\chi_{\beta (v-u)}(x)} \cdot \chi_{\beta (v'-u)}(x)\mper
    \end{align*}
    Notice that we may rewrite $\overline{\chi_{\beta (v - u)}(x)} = \overline{\chi_{\beta v}(x)} \cdot \chi_{\beta u}(x)$ and $\chi_{\beta (v' - u)}(x) = \chi_{\beta v'}(x) \cdot \overline{\chi_{\beta u}(x)}$. Since $\chi_u(x) \cdot \overline{\chi_u(x)} = 1$, we have $\overline{\chi_{\beta (v-u)}(x)} \cdot \chi_{\beta (v'-u)}(x) = \overline{\chi_{\beta v}(x)} \cdot \chi_{\beta v'}(x)$ and after combining the above this rewrite finishes the claim.
\end{proof}

Given any instance $\mcI$, we can always apply \cref{obs:bipartitepolynomial} using the trivial bipartite decomposition of $\mcH$ as itself, but since this decomposition is not guaranteed to be regular, our refutation algorithm and resulting analysis end up failing. Instead, our goal is to find a non-trivial regular bipartite decomposition to apply the trick to. With this roadmap in mind, we are ready to state our main technical results to prove \cref{thm:refutation} in full.

\begin{proof}[Proof of \cref{thm:refutation}]
    To achieve $\varepsilon$-refutation of a $k$-$\LIN$ instance $\mcI = (\mcH, \{b_v\}_{v \in \mcH})$, we partition $\mcI$ in to $k$ subinstances $\cbra{\mcI^{(t)}}_{t \in [k]}$ and refute all simultaneously. In this context, a subinstance $\mcI^{(t)}$ is a subset $\mcH^{(t)}$ of equations from $\mcH$ and the corresponding right-hand sides. We observe that given a collection of subinstances and an $\varepsilon^{(t)}$-refutation of each we have
    \begin{equation*}
        \val(\mcI, x) = \sum_{t \in [k]} \frac{\abs{\mcH^{(t)}}}{\abs{\mcH}} \val(\mcI^{(t)}, x) \leq \sum_{t \in [k]} \frac{\abs{\mcH^{(t)}}}{\abs{\mcH}} \left(\frac{1}{\abs{\F}} + \varepsilon^{(t)}\right) = \frac{1}{\abs{\F}} + \sum_{t \in [k]} \frac{\abs{\mcH^{(t)}}}{\abs{\mcH}}\varepsilon^{(t)} \mcom
    \end{equation*}
    using $\sum_{t=1}^k \frac{\abs{\mcH^{(t)}}}{\abs{\mcH}} = 1$. We aim to bound the contribution of each $\frac{\abs{\mcH^{(t)}}}{\abs{\mcH}}\varepsilon^{(t)} \leq \frac{\varepsilon}{k}$. In order to tightly refute each subinstance, we need the regularity guaranteed by the following algorithm.

    \begin{lemma}[Regular decomposition algorithm for $k$-$\LIN$ instances]
        \label{lem:decompositionalg} 
        Fix $k \geq 2$. There is an algorithm that takes as input a $k$-$\LIN(\F)$ instance $\mcI = (\mcH, \{b_v\}_{v \in \mcH})$ and outputs a set of instances $\mcI^{(t)} = (\mcH^{(t)}, \{b_v\}_{v \in \mcH^{(t)}})$ each equivalent \footnote{Equivalent here means that we may scale constraints. For example, we could take $x_1 + 2x_2 = 1$ in $\F_3$ and replace it with $2x_1 + x_2 = 2$, which is scaled by $2$. Vitally, this never changes satisfiability.} to a subinstance in a fixed partition of $\mcI$ and $t$-sparse $\mcU^{(t)}$-bipartite decompositions $\{\mcH^{(t)}_u\}_{u \in \mcU^{(t)}}$ for each $\mcH^{(t)}$ in time $\left(\abs{\F^*}n\right)^{O(\ell)}$ with the guarantees:
        \begin{enumerate}
            \item \label{item:decompositionalg2} For $t \neq 1$ and all $u \in \mcU^{(t)}$, $\abs{\mcH^{(t)}_u} = \tau_t := \max(1, \left(\frac{n\abs{\F^*}}{\ell}\right)^{k/2-t}) \cdot 4k^2\varepsilon^{-2}$.
            \item \label{item:decompositionalg3} For all $u \in \mcU^{(1)}$, $\abs{\mcH^{(1)}_u} \leq \tau_1$.
            \item \label{item:decompositionalg4} For all $t$, $\abs{\mcU^{(t)}} \leq \frac{2\abs{\mcH}}{\tau_t}$.
            \item \label{item:decompositionalg5} $\mcH$ is $\left(\frac{\varepsilon}{2k}, \ell\right)$-regular.
        \end{enumerate}
    \end{lemma}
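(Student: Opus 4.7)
The plan is a greedy top-down extraction. Initialize $\mcR \leftarrow \mcH$. For $t = k, k-1, \ldots, 2$ in decreasing order, repeat: while there exists a weight-$t$ vector $u \in \F^n$ with $\abs{\{v \in \mcR : u \sqsubseteq v\}} \geq \tau_t$, select such a $u$, peel off an arbitrary $\tau_t$ of the matching equations to form $\mcH^{(t)}_u$, add $u$ to $\mcU^{(t)}$, and remove these equations from $\mcR$; once no such weight-$t$ bucket remains, decrement $t$. At $t = 1$, continue peeling at threshold $\tau_1$ and then place any leftover equations in $\mcR$ into buckets of size at most $\tau_1$ by assigning each $v$ to the bucket indexed by its lexicographically smallest weight-$1$ subvector.

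Conditions (1)--(3) are immediate from the construction, and condition (4) follows by double counting: at each $t \geq 2$ the extracted buckets all have size exactly $\tau_t$, so $\abs{\mcU^{(t)}} \leq \abs{\mcH}/\tau_t$, with the factor of $2$ absorbing the residual $t = 1$ buckets. The runtime bound $(\abs{\F^*}n)^{O(\ell)}$ follows by enumeration: there are at most $(\abs{\F^*}n)^{O(k)} \leq (\abs{\F^*}n)^{O(\ell)}$ candidate $t$-sparse vectors to test (using $\ell \geq k/2$), and only polynomially many peel-off steps occur.

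The main content, and the main obstacle, is condition (5): the $(\eps/(2k), \ell)$-regularity of each $\mcH^{(t)}_u$, which asserts that no weight-$s$ witness $w$ with $s > t$ is $\sqsubseteq$-contained in more than $\tilde\tau_s := \max((n\abs{\F}/\ell)^{k/2-1-s}, 1) \cdot (2k/\eps)^2$ equations of $\mcH^{(t)}_u$. The stopping rule only yields the weaker bound $\tau_s$ on the count of $w$-containing equations in $\mcR$ at the time $\mcH^{(t)}_u$ is extracted, and $\tau_s$ exceeds $\tilde\tau_s$ by roughly a factor of $n\abs{\F}/\ell$. To close this gap, the plan is a promotion argument: if $\mcH' \subseteq \mcH^{(t)}_u$ with $\abs{\mcH'} > \tilde\tau_s$ shares a common weight-$s$ witness $w$, then each $v \in \mcH'$ contains both $u$ and $w$, forcing $u$ and $w$ to agree on $\supp(u) \cap \supp(w)$, and their common extension is a witness of weight $t' = t + s - \abs{\supp(u) \cap \supp(w)} \in [s, k]$. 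When $\supp(u) \not\subseteq \supp(w)$ we have $t' \geq s + 1 > t$, and a direct comparison of the exponents in $\tau_{t'}$ and $\tilde\tau_s$ gives $\tilde\tau_s \geq \tau_{t'}$, contradicting the stopping rule at level $t'$ (already enforced before level $t$ began). The residual case $\supp(u) \subseteq \supp(w)$, where $t' = s$, is the delicate one and is handled by averaging over the weight-$(t+1)$ extensions of $u$ supported inside any fixed $v \in \mcH^{(t)}_u$: pigeonhole forces one such extension to be shared by enough equations of $\mcH^{(t)}_u$ to violate the stopping rule at level $t+1$. Carefully balancing constants and exponents across these two cases is the technical crux of the proof.
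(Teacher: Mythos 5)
Your algorithm is the same top-down greedy peel the paper uses, and the first three conclusions go through as you say, with two small caveats: peeled vectors must be rescaled so that the shared subvector $u$ is literally $\sqsubseteq$ each element of $\mcH_u^{(t)}$ (the paper replaces $v$ by $\beta v$ for the appropriate $\beta\in\F^*$), and absorbing the $\le n$ fallback singleton buckets at $t=1$ into the factor of $2$ requires $n \le \abs{\mcH}/\tau_1$, which the paper obtains from the density hypothesis of \cref{item:refutation2}.

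The genuine gap is in your treatment of regularity, and specifically in the residual case. You correctly flag the mismatch: the displayed regularity threshold $\tilde\tau_s := \max((n\abs{\F}/\ell)^{k/2-1-s},1)\cdot 4k^2\eps^{-2}$ sits roughly a factor $n\abs{\F^*}/\ell$ \emph{below} the stopping threshold $\tau_s$ (exponent $k/2-s$). Your promotion step handles $\supp(u)\not\subseteq\supp(w)$: then $t':=\abs{\supp(u)\cup\supp(w)} \ge s+1$, so $\tau_{t'}\le\tilde\tau_s$, and $\abs{\mcH'}>\tilde\tau_s\ge\tau_{t'}$ contradicts the stopping rule at the earlier level $t'$. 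But when $\supp(u)\subseteq\supp(w)$ the argument collapses. Since $w\sqsubseteq v$ for every $v\in\mcH'$, \emph{every} weight-$(t+1)$ extension $w'\sqsubseteq w$ of $u$ is already shared by all of $\mcH'$, so the pigeonhole over extensions buys you nothing; what you obtain is $N(w')\ge\abs{\mcH'}>\tilde\tau_s$, while the exponent of $\tilde\tau_s$ is $k/2-1-s\le k/2-t-2$, strictly smaller than the exponent $k/2-t-1$ of $\tau_{t+1}$. So $\abs{\mcH'}>\tilde\tau_s$ does not trigger the stopping rule at level $t+1$ (nor at any level), and no combinatorial trick can close this: the stopping rule simply does not certify the threshold $\tilde\tau_s$. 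Reading the paper's own proof of this lemma, it asserts that a regularity violation ``by definition'' yields $\abs{\mcH'}\ge\tau_{t'}$, i.e.\ it is implicitly treating $\tau_{\wt(w)}$ as the regularity threshold, which is also exactly what is consumed downstream in the proof of \cref{lem:edgedeletionanalysis}. The coherent reading is that the exponent in the regularity definition should match $\tau_{\wt(w)}$, after which the one-line greedy contradiction suffices and no promotion step is needed.
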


We prove \cref{lem:decompositionalg} in \cref{sec:vectorsetreg}. Using the significant structure guaranteed on the subinstances $\mcI^{(t)}$ identified by this algorithm, namely the existence of an $\left(\frac{\varepsilon}{2k}, \ell\right)$-regular decomposition $\{\mcH^{(t)}_u\}_{u \in \mcU^{(t)}}$, we provide sufficient refutations on the components arising from \cref{obs:bipartitepolynomial}. More formally, we have that the advantage polynomial is
\begin{equation*}
    \val(\mcI, x) = \frac{1}{\abs{\F}} + \frac{1}{\abs{\mcH}\abs{\F}} \sum_{v \in \mcH^{(t)}} \sum_{\beta \in \F^*} \omega_p^{\Tr(\beta b_v)} \cdot \overline{\chi_{\beta v}(x)} := \frac{1}{\abs{\F}} + \Phi_t(x) \mper
\end{equation*}
To limit the contribution, we scale by $\frac{\abs{\mcH^{(t)}}}{\abs{\mcH}}$ and show $\frac{\abs{\mcH^{(t)}}}{\abs{\mcH}} \max_{x \in \F^n} \abs{\Phi_t(x)} \leq \frac{\varepsilon}{k}$. Since each $\mcI^{(t)}$ output above has a regular decomposition, we can apply \cref{obs:bipartitepolynomial} directly to write
\begin{equation*}
    \frac{k^2\abs{\mcH^{(t)}}^2}{\abs{\mcH}^2}\abs{\Phi_t(x)}^2 \leq \frac{k^2\abs{\mcU^{(t)}}}{\abs{\mcH}^2\abs{\F}}\sum_{u \in \mcU^{(t)}} \sum_{\beta \in \F^*} \sum_{v, v' \in \mcH_u^{(t)}} \omega_p^{\Tr(\beta b_{v})} \cdot \omega_p^{-\Tr(\beta b_{v'})} \cdot \overline{\chi_{\beta v}(x)} \cdot \chi_{\beta v'}(x) \mper
\end{equation*}
To finish the proof, it suffices to show the latter term is bound by $\varepsilon^2$, which we accomplish through the following refutation algorithm.

\begin{lemma}[$\varepsilon^2$-refutation of semirandom bipartite polynomials]
    \label{lem:mainrefutation}
    Fix $k \geq 2$, $\ell \geq k/2$, and $1 \leq t \leq k$. There is an algorithm that takes as input an instance $\mcI^{(t)} = (\mcH^{(t)}, \{b_v\}_{v \in \mcH^{(t)}})$ with an $\mcU^{(t)}$-bipartite decomposition $\mcH^{(t)} = \{\mcH^{(t)}_u\}_{u \in \mcU^{(t)}}$ coming from $\mcH$ describing an associated polynomial $\Psi_t = \frac{k^2\abs{\mcU^{(t)}}}{\abs{\mcH}^2\abs{\F}} \sum_{u \in \mcU^{(t)}} \sum_{\beta \in \F^*} \sum_{v, v' \in \mcH_u^{(t)}} \omega_p^{\Tr(\beta b_{v})} \cdot \omega_p^{-\Tr(\beta b_{v'})} \cdot \overline{\chi_{\beta v}} \cdot \chi_{\beta v'}$ and outputs a certificate $\algval(\Psi_t) \in \R$ in time $(\abs{\F^*}n)^{O(\ell)}$ with the guarantees:
    \begin{enumerate}
        \item \label{item:ref1} $\algval(\Psi_t) \geq \max_{x \in \F^n} \abs{\Psi_t(x)}$.
        \item \label{item:ref2} $\algval(\Psi_t) < \varepsilon^2$ with high probability (over the randomness in $b_{v}$) given:
        \begin{enumerate}
            \item \label{item:ref2a} The hypothesis of \cref{item:refutation2} in \cref{thm:refutation} holds.
            \item \label{item:ref2b} The output guarantees of \cref{lem:decompositionalg} hold.
        \end{enumerate}
    \end{enumerate}
\end{lemma}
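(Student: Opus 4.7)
The plan is to follow the template of \cref{sec:refutation}: construct a Kikuchi-type matrix whose spectral norm controls $\max_x |\Psi_t(x)|$ and can be computed in time $(|\F^*| n)^{O(\ell)}$, then bound that spectral norm via the trace method. The twist is that edges are now labeled by \emph{pairs} of equations within a single bucket $\mcH_u^{(t)}$, reflecting the bipartite structure of $\Psi_t$. Concretely, for each $u \in \mcU^{(t)}$, ordered pair $(v, v') \in \mcH_u^{(t)} \times \mcH_u^{(t)}$ with $v \neq v'$, and $\beta \in \F^*$, I define an indicator matrix $A_{u, v, v', \beta}$ on $\ell$-sparse vectors in $\F^n$ by $A(U, V) = 1$ iff $U - V = \beta(v - v')$ and $\supp(U) \oplus \supp(V) = \supp(v - v')$; since $u \sqsubseteq v, v'$, the vector $\beta(v - v')$ has weight between $2$ and $2(k - t)$ and is supported on $[n]\setminus\supp(u)$. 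The full matrix $A$ is the complex-weighted sum of these with weights $\omega_p^{\Tr(\beta(b_v - b_{v'}))}$. The diagonal contribution from the $v = v'$ terms of $\Psi_t$ is handled separately: by \cref{item:decompositionalg2,item:decompositionalg3,item:decompositionalg4} it totals $O(k^2/\tau_t) \leq \varepsilon^2/2$ for our choice of $\tau_t$.

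Paralleling \cref{fact:bilinearforms,fact:avgdegree}, routine calculations give $\Psi_t(x) - \text{(diag)} = \frac{1}{Z} y^\dagger A y$ for $y_U = \chi_U(x)$ and combinatorial normalization $Z$, so $\max_x |\Psi_t(x)| - \varepsilon^2/2$ is controlled by $\|\tilde A\|_2$ with $\tilde A = \Gamma^{-1/2} A \Gamma^{-1/2}$, and the average degree comes out to $d \gtrsim |\F^*|\,(\ell/(n|\F^*|))^{k-t}\,|\mcH|\tau_t$. Plugging in $\tau_t = (n|\F^*|/\ell)^{k/2-t}\cdot 4k^2\varepsilon^{-2}$ together with the hypothesis $|\mcH| \geq C n (n|\F^*|/\ell)^{k/2-1} \log(|\F^*|n) \varepsilon^{-4}$ into a target spectral bound of the form $\|\tilde A\|_2 \lesssim \sqrt{\ell \log(|\F^*|n)/d}$ yields $\max_x |\Psi_t(x)| \leq \varepsilon^2$ as required. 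The algorithm outputs $\varepsilon^2/2$ plus this scaled spectral norm.

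The trace calculation $\E[\tr((\Gamma^{-1} A)^{2t'})]$ mirrors \cref{lem:countingbacktrackingwalks}: independence of $\{b_v\}_{v\in\mcH}$ kills all but ``trivially closed'' walks as in \cref{def:triviallyclosedwalks}, where for each $v \in \mcH$ the sum over steps $i$ of $\pm\beta_i$---with sign depending on whether $v$ appears as the first or second element of the pair at step $i$---vanishes. The support-splitting condition again preserves uniqueness of $\beta$ given $(U, u, v, v')$ (cf.~\cref{rem:nontrivialkikuchi}), avoiding a spurious $|\F^*|$ factor in the walk encoding.

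The main obstacle, and the reason the odd case is genuinely harder than \cref{sec:refutation}, is the branching factor for ``old'' walk steps. When $v_i = v_j$ for some earlier $j < i$, the partner $v'_i$ need not equal $v'_j$, so a naive bound on the number of continuations loses a factor of $|\mcH_u^{(t)}| \lesssim \tau_t$, which would be fatal. This is precisely where $(\varepsilon/2k, \ell)$-regularity from \cref{item:decompositionalg5} is used: it bounds $|\{v' \in \mcH_u^{(t)} : w \sqsubseteq v'\}|$ for every nontrivial refinement $w \supsetneq u$, restricting the branching factor at each ``old'' step by a quantity whose product over the $\geq t'$ old steps balances against $d^{t'}$. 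Summing the refined bound over walks produces a target of the form $N \cdot 2^{O(t')}(\mathrm{poly}(\ell)/d)^{t'}$, and Markov with $t' = \Theta(\log N)$ concludes.
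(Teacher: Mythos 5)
There is a genuine gap, and it is exactly at the step you flag as ``the main obstacle.''

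You observe correctly that when encoding an ``old'' step with $v_i = v_j$ for $j < i$, the partner $v'_i$ is not determined and a naive bound loses a factor of $|\mcH_u^{(t)}| \approx \tau_t$, and that this is where the regularity must enter. But the claim that $(\varepsilon/2k,\ell)$-regularity ``directly restricts the branching factor at each old step'' does not hold as stated. Regularity bounds $|\{v' \in \mcH_u^{(t)} : w \sqsubseteq v'\}|$ for a \emph{fixed} refinement $w \supsetneq u$; however, the set of $v'$ for which there is an edge out of a given vertex labeled $(v, v', \beta)$ is not the set of $v'$ extending a single $w$. Different $v'$ can match the ``target pattern'' dictated by the vertex at different index subsets, so to control the branching via regularity you have to sum over roughly $\binom{\ell}{(k-t)/2}$ candidate patterns, and this sum does not contract. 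Plugging the trivial bound $\eta_t \le \tau_t$ into the final estimate of the form $\frac{\varepsilon^2}{2}\sqrt{\tfrac{\eta_t}{\tau_t}(n|\F^*|/\ell)^{k/2-t}}$ leaves a residual factor $(n|\F^*|/\ell)^{(k/2-t)/2}$, which diverges when $t < k/2$. This is precisely why the paper introduces the edge-deletion algorithm (\cref{lem:edgedeletion}) as an extra preprocessing stage: one first removes edges at vertices with excessive local degree, uses the regularity in the analysis of the deletion (\cref{lem:edgedeletionanalysis}) to show only an $O(1)$ fraction of edges are lost, and then additionally prunes to restore uniformity across $(v,v',\beta)$-types so that the quadratic form remains a fixed rescaling of $\Psi_t$. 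Only after this does the walk encoding with branching factor $\eta_t = \exp(k)\varepsilon^{-2}$ close. Asserting that regularity ``balances against $d^{t'}$'' without this two-stage argument is where the proposal breaks for $t < k/2$.

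A secondary, structural concern: your single-track Kikuchi matrix on $\ell$-sparse vectors with the condition $U - V = \beta(v - v')$ and $\supp(U)\oplus\supp(V)=\supp(v-v')$ does not yield a clean bilinear-form identity, because $\wt(v-v')$ varies with the pair $(v,v')$ (anywhere between $2$ and $2(k-t)$, depending on how much $v$ and $v'$ coincide outside $u$). Consequently the number of edges per $(v,v',\beta)$-type is not uniform, so the quadratic form represents a \emph{weighted} version of $\Psi_t$ rather than $\Psi_t$ itself. The paper's two-track construction (\cref{def:oddkikuchimatrix}), with vertices $(U^{(1)},U^{(2)})$ and separate conditions $U^{(1)}\to V^{(1)}$ via $v_u$ and $U^{(2)}\to V^{(2)}$ via $v'_u$, deliberately fixes the edge count per type because $v_u$ and $v'_u$ are each exactly $(k-t)$-sparse. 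Keeping the two tracks separate is also what makes the local-degree analysis and the edge-deletion argument tractable, since those arguments reason about the support of $v'$ alone rather than of $v - v'$.
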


We delay the proof of \cref{lem:mainrefutation} to \cref{sec:kikuchirefutation}. The algorithm for \cref{thm:refutation} then falls into place. First, we run the algorithm in \cref{lem:decompositionalg} which gives a set of regular subinstances $\{\mcI^{(t)}\}_{t \in [k]}$ with regular bipartite decompositions. We then run the algorithm of \cref{lem:mainrefutation} on each, and by \cref{item:ref2} succeeds in showing $\algval(\Psi_t) < \varepsilon^2$ for each $\mcI^{(t)}$ with high probability given the regularity properties of \cref{lem:decompositionalg} and semirandomness. Using \cref{item:ref1} then bounds the true value of $\Psi_t$.
\end{proof}

\begin{tcolorbox}[
    width=\textwidth,   
    colframe=black,  
    colback=white,   
    title=Semirandom $k$-$\LIN(\F)$ Refutation Algorithm,
    colbacktitle=white, 
    coltitle=black,      
    fonttitle=\bfseries,
    center title,   
    enhanced,       
    frame hidden,           
    borderline={1pt}{0pt}{black},
    sharp corners,
    toptitle=2.5mm
]
\textbf{Input:} A $k$-$\LIN(\F)$ instance $\mcI = (\mcH, \{b_v\}_{v \in \mcH})$.\\

\textbf{Output:} $\algval(\mcI) \in [0,1]$ with guarantee $\algval(\mcI) \geq \max_{x \in \F^n} \val(\mcI, x)$.\\

\textbf{Algorithm:}

\begin{enumerate}
    \item Using the algorithm $\mathcal{A}$ specified in \cref{lem:decompositionalg} let $\cbra{\mcI^{(t)}}_{t \in [k]} := \mathcal{A}(\mcH)$.
    \item For each $t \in [k]$, using the algorithm $\mathcal{B}$ specified in \cref{lem:mainrefutation} let $\textrm{adv}(\mcI^{(t)}) := \sqrt{\mathcal{B}\left(\mcI^{(t)}\right)}$.
    \item Output $\algval(\mcI) := \frac{1}{\abs{\F}} + \sum_{t \in [k]} \textrm{adv}(\mcI^{(t)})$.
\end{enumerate}

\end{tcolorbox}

\subsection{Vector set regularity algorithm}
\label{sec:vectorsetreg}

\begin{tcolorbox}[
    width=\textwidth,   
    colframe=black,  
    colback=white,   
    title=$k$-$\LIN(\F)$ Regularity Decomposition Algorithm,
    colbacktitle=white, 
    coltitle=black,      
    fonttitle=\bfseries,
    center title,   
    enhanced,       
    frame hidden,           
    borderline={1pt}{0pt}{black},
    sharp corners,
    toptitle=2.5mm
]
\textbf{Input:} A $k$-$\LIN(\F)$ instance $\mcI = (\mcH, \{b_v\}_{v \in \mcH})$.\\

\textbf{Output:} A set of instances $\cbra{\mcI^{(t)}}_{t \in [k]}$ satisfying the criteria of \cref{lem:decompositionalg}.\\

\textbf{Algorithm:}
\begin{enumerate}
    \item Let $t = k$, and while $\exists u \in \F^n$ with $\wt(u) = t$ such that $\abs{\{\beta v \in \mcH \mid \beta \in \F^*, u \sqsubseteq \beta v\}} \geq \tau_t := \max(1, (\frac{n \abs{\F^*}}{\ell})^{k/2-t}) \cdot 4k^2 \varepsilon^{-2}$, do the following. Otherwise, decrement $t$.
    \begin{enumerate}
        \item Let $\mcH_u^{(t)}$ hold $\beta v$ for exactly $\tau_t$ such vectors and move $\mcH_u^{(t)}$ from $\mcH$ to $\mcH^{(t)}$.
        \item For each $v \in \mcH^{(t)}_u$, set $b_{\beta v} = \beta b_v$.
    \end{enumerate}
    \item When $t = 0$, for all remaining $v \in \mcH$:
    \begin{enumerate}
        \item Identify the first $i \in [n]$ such that $v_i \neq 0$.
        \item Let $\beta = v_i^{-1}$ and add $\beta v$ to $\mcH^{(1)}_{e_i}$.
    \end{enumerate}
\end{enumerate}

\end{tcolorbox}

\begin{proof}[Proof of \cref{lem:decompositionalg}]
    The $t$-sparsity of $\mcU^{(t)}$ and \cref{item:decompositionalg2} follow simply from the loop condition, which enforces that only $\tau_t$ sized sets with $\wt(u) = t$ get added to $\mcH^{(t)}$. For \cref{item:decompositionalg3}, note that by the greediness of the algorithm any set added outside the for loop must have size $< \tau_1$ otherwise it would have been added within.
    
    For \cref{item:decompositionalg4}, assume $t > 1$ and note that by \cref{item:decompositionalg2} we have $\abs{\mcU^{(t)}} \leq \frac{\abs{\mcH}}{\tau_t}$. When $t = 1$, we have two kinds of sets $\mcH_u$, ones added in the for loop and those added outside. The number added in the for loop is $\leq \frac{\abs{\mcH}}{\tau_1}$ following the case above. The number added outside is at most $n$ counting possible shared entries. Since $\abs{\mcH} \geq C n \cdot \log(\abs{\F^*}n) \left(\frac{n \abs{\F^*}}{\ell}\right)^{k/2-1} \cdot 4k^2 \varepsilon^{-2}$ and $\tau_1 = \left(\frac{n \abs{\F^*}}{\ell}\right)^{k/2-1} \cdot 4k^2\varepsilon^{-2}$ we have $n \leq \frac{\abs{\mcH}}{\tau_1}$ for $C \geq 1$, which gives in total $\abs{\mcU^{(1)}} \leq \frac{2 \abs{\mcH}}{\tau_1}$.

    \cref{item:decompositionalg5} follows by the greediness of the algorithm. Assume for sake of contradiction there is some partition $\mcH^{(t)}$ which is not $\left(\frac{\varepsilon}{2k}, \ell\right)$-regular. By definition, there is some $\mcH' \subseteq \mcH_u^{(t)}$ and $w \in \F^n$ such that all $v \in \mcH'$ have $w \sqsubseteq v$ and moreover $\wt(w) = t' > t$ and $\abs{\mcH'} \geq \tau_{t'}$. Since iteration $t'$ happens before $t$, such a set would have been available on iteration $t'$ and would have been added then instead, a contradiction.
\end{proof}

\begin{groupcase}
    In the general Abelian case, we cannot necessarily invert elements as we do in Step 2. To get around this, we need to group zero divisors separately. Luckily, the only case we have to do this is when we are embeeded in a group $H$ with $\abs{H} \leq \frac{1}{\varepsilon^2}$, so this only induces $\varepsilon^{-2}$ more buckets, which is absorbed into the equation density.
\end{groupcase}

\subsection{Odd-arity Kikuchi matrices}
\label{sec:kikuchirefutation}

We now prove the main technical component, \cref{lem:mainrefutation}. The proof proceeds as in \cref{sec:refutation} by constructing a generalization of the odd-arity Kikuchi matrix in \cite{GuruswamiKM22, HsiehKM23}. At a high-level, the proof is much the same, reducing to counting certain walks in our defined Kikuchi matrix, but some more complicated design choices in the Kikuchi matrix and an extra degree-regularization step are needed, the latter of which we use the regularity of the instances.

\begin{lemma}[$\varepsilon^2$-refutation of semirandom bipartite polynomials (\cref{lem:mainrefutation} restated)]
    Fix $k \geq 2$, $\ell \geq k/2$, and $1 \leq t \leq k$. There is an algorithm that takes as input an instance $\mcI^{(t)} = (\mcH^{(t)}, \{b_v\}_{v \in \mcH^{(t)}})$ with a $t$-sparse $\mcU^{(t)}$-bipartite decomposition $\mcH^{(t)} = \{\mcH^{(t)}_u\}_{u \in \mcU^{(t)}}$ coming from $\mcH$ describing an associated polynomial $\Psi_t = \frac{k^2\abs{\mcU^{(t)}}}{\abs{\mcH}^2\abs{\F}} \sum_{u \in \mcU^{(t)}} \sum_{\beta \in \F^*} \sum_{v, v' \in \mcH_u^{(t)}} \omega_p^{\Tr(\beta b_{v})} \cdot \omega_p^{-\Tr(\beta b_{v'})} \cdot \overline{\chi_{\beta v}} \cdot \chi_{\beta v'}$ and outputs a certificate $\algval(\Psi_t) \in \R$ in time $(\abs{\F^*}n)^{O(\ell)}$ with the guarantees:
    \begin{enumerate}
        \item $\algval(\Psi_t) \geq \max_{x \in \F^n} \abs{\Psi_t(x)}$.
        \item $\algval(\Psi_t) < \varepsilon^2$ with high probability (over the randomness in $b_{v}$) given:
        \begin{enumerate}
            \item \label{item:newref2a} The hypothesis of \cref{item:refutation2} in \cref{thm:refutation} holds.
            \item \label{item:newref2b} The following criteria holds for $\tau_t = \max(1, \left(\frac{n\abs{\F^*}}{\ell}\right)^{k/2-t}) \cdot 4k^2\varepsilon^{-2}$.
            \begin{enumerate}
                \item For all $u \in \mcU^{(t)}$, $\abs{\mcH^{(t)}_u} \leq \tau_t$.
                \item $\abs{\mcU^{(t)}} \leq \frac{2\abs{\mcH}}{\tau_t}$.
                \item $\mcH$ is $\left(\frac{\varepsilon}{2k}, \ell\right)$-regular.
            \end{enumerate}
        \end{enumerate}
    \end{enumerate}
\end{lemma}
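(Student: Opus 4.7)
The plan is to construct an odd-arity Kikuchi matrix $A$ tailored to the bipartite polynomial $\Psi_t$, realize $\max_{x \in \F^n} \abs{\Psi_t(x)}$ as (a scalar times) $\abs{y^\dagger A y}$ for an explicit unit-norm vector $y$, and then control $\norm{A}_2$ (after degree normalization) by the trace moment method, using the $(\varepsilon/(2k), \ell)$-regularity of $\{\mcH_u^{(t)}\}_{u \in \mcU^{(t)}}$ to handle rows whose degree exceeds the average. The output $\algval(\Psi_t)$ will be (a scalar times) $\norm{\Gamma^{-1/2} A \Gamma^{-1/2}}_2$ for a suitable diagonal $\Gamma$, in direct analogy with the proof of \cref{thm:refutation} for even $k$.

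Concretely, I would index the rows/columns of $A$ by $\ell'$-sparse vectors in $\F^n$ for an appropriate $\ell' \leq \ell$. For each $u \in \mcU^{(t)}$, each ordered pair $(v, v') \in \mcH_u^{(t)} \times \mcH_u^{(t)}$, and each $\beta \in \F^*$, I add an ``edge matrix'' $A_{u, v, v', \beta}$ whose $(U, V)$ entry is $1$ iff $U - V = \beta(v - v')$ and $\supp(U) \oplus \supp(V) = \supp(v) \oplus \supp(v')$ (enforcing uniqueness of $\beta$ per \cref{rem:nontrivialkikuchi}), and weight its contribution to $A$ by $\omega_p^{\Tr(\beta b_v) - \Tr(\beta b_{v'})}$. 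Since $u \sqsubseteq v$ and $u \sqsubseteq v'$, we have $\wt(v-v') \leq 2(k-t)$, so these edges are consistent with level $\ell'$. As in \cref{fact:bilinearforms}, I then obtain $\Psi_t(x) = c \cdot y^\dagger A y$ for $y_U = \chi_U(x)$ and an explicit constant $c$, and the average degree $d$ computed as in \cref{fact:avgdegree} is a product of $\abs{\mcH_u^{(t)}}^2$, $\abs{\mcU^{(t)}}$, binomial ratios and powers of $\abs{\F^*}$; plugging in the hypotheses of \cref{item:newref2a,item:newref2b} makes $d$ large enough that a bound of the form $O(\sqrt{\ell \log(\abs{\F^*} n)/d})$ on the normalized spectral norm translates to $\algval(\Psi_t) < \varepsilon^2$.

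With $A$ in hand, the spectral norm estimate follows the template of \cref{lem:countingbacktrackingwalks} and \cref{lem:maincountingbacktrackingwalks}. Expanding $\tr((\Gamma^{-1} A)^{2s})$ produces a sum over closed walks of length $2s$, where each step uses a tuple $(u_i, v_i, v'_i, \beta_i)$. Taking the expectation over semirandom $\{b_v\}$ enforces a strengthened trivial-closure condition: for every $v \in \mcH^{(t)}$, the sum of $\beta_i$'s over indices where $v = v_i$ minus the sum over indices where $v = v'_i$ must vanish, else the contribution is zero. Encoding old versus new pair-edges as in the proof of \cref{lem:maincountingbacktrackingwalks} and charging $\Gamma^{-1}_{U_i} \leq \min(1/\deg(U_i), 1/d)$ yields an upper bound of the form $N \cdot 2^{O(s)} \cdot (s/d)^s$, and Markov plus $s = \Theta(\log N)$ gives the spectral estimate with high probability.

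The main obstacle, absent from the even-$k$ case, is that a single row of $A$ may have degree far exceeding the average $d$: for a fixed $U$ and fixed $u$, many pairs $(v, v') \in \mcH_u^{(t)} \times \mcH_u^{(t)}$ could route through the same edge out of $U$ because the non-$u$ portions of $v$ and $v'$ align with $U$ heavily, which would break the ``at most $2s$ choices for an old step'' charge. This is precisely where \cref{item:decompositionalg5} enters: the absence of any $w$ with $\wt(w) > t$ embedding into many equations of $\mcH_u^{(t)}$ means that the number of such pairs sharing a large non-$u$ overlap is controlled. To make this rigorous I would adopt the degree-pruning technique of \cite{GuruswamiKM22,HsiehKM23}: bucket rows of $A$ by dyadic degree ranges, use regularity to argue that each high-degree bucket is small, and bound $\norm{A}_2$ by combining a union bound over buckets with the trace estimate on each regularized submatrix. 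Combining this pruning step with the trace moment calculation above will yield $\algval(\Psi_t) < \varepsilon^2$ with probability $1 - 1/\poly(n)$, completing the proof.
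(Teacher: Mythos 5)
Your high-level roadmap (Cauchy--Schwarz to a bipartite polynomial, Kikuchi matrix, trace moments with a degree-normalizing $\Gamma$, regularity to tame high-degree behavior) is the right one, and you correctly identify the genuinely new obstacle relative to the even-$k$ case: for an ``old'' step the encoding must recover both $v_i$ and $v'_i$, not just one, so you need a bound on how many $v'$ can be paired with a fixed $v$ out of a fixed vertex. However, the specific Kikuchi construction you propose and the pruning you invoke both have real problems.

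First, the matrix. You index by single $\ell'$-sparse vectors and define an edge for $(u, v, v', \beta)$ by $U - V = \beta(v-v')$ and $\supp(U) \oplus \supp(V) = \supp(v) \oplus \supp(v')$. The issue is that the ``effective constraint'' $v - v'$ has sparsity anywhere between $0$ and $2(k-t)$ depending on how much $v$ and $v'$ overlap beyond $u$; consequently the number of $(U,V)$ pairs realizing a given $(u, v, v', \beta)$ is not a single $\Delta$ but varies wildly across pairs, which breaks the clean ``$\Phi = c \cdot y^\dagger A y$'' identity and the uniform average-degree count. Worse, when $\supp(v) = \supp(v')$ (which the regularity hypothesis permits to happen for a bounded number of pairs, not zero), $\supp(v) \oplus \supp(v') = \emptyset$, your support condition degenerates, and the $\beta$-uniqueness you cite from \cref{rem:nontrivialkikuchi} can fail. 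The paper's \cref{def:oddkikuchimatrix} avoids all of this by indexing rows by \emph{pairs} $(U^{(1)}, U^{(2)})$ with $\wt(U^{(1)})+\wt(U^{(2)}) = \ell$, so that $U^{(1)}$ interacts only with $v - u$ and $U^{(2)}$ only with $v' - u$; this keeps the edge count $\Delta$ uniform regardless of cancellations between $v$ and $v'$, and the support conditions on each factor separately enforce $\beta$-uniqueness.

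Second, the pruning. You propose bucketing rows of $A$ by dyadic ranges of their \emph{row} degree and arguing each high-degree bucket is small. But row-degree variability is already handled by the normalizer $\Gamma = D + d\Id$; what actually breaks the walk encoding is high \emph{local} degree $d_{U,v,b}$ --- the number of $v'$ forming an edge out of $U$ together with a fixed $v$ --- and a row can have perfectly average total degree while still having one bad $v$ with local degree $\Theta(\tau_t)$. The paper's \cref{lem:edgedeletion} deletes edges (not rows) specifically at $(U,v)$ pairs with local degree exceeding a threshold $\eta = D^k\varepsilon^{-2}$, then uses the $(\varepsilon/(2k),\ell)$-regularity (\cref{fact:crossterms}, \cref{lem:edgedeletionanalysis}) to show this removes at most half the edges, after which a second rebalancing step restores uniformity of $\Delta$. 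Your dyadic row-bucket argument does not supply the per-$(U,v)$ local-degree bound that the encoding in \cref{lem:genmaincountingbacktrackingwalks} requires, so the trace moment bound as you describe it would not go through. The remedy is to replace the row bucketing with a targeted edge-deletion step keyed to the $d_{U,v,b}$ statistics, and to switch to the pair-of-vertices Kikuchi index so that $\Delta$ is uniform.
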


    Note first that if $v = v'$ in the sum then $\omega_p^{\Tr(\beta b_{v})} \cdot \omega_p^{-\Tr(\beta b_{v'})} \cdot \overline{\chi_{\beta v}} \cdot \chi_{\beta v'} = 1$, so we trivially have
    \begin{flalign*}
        &\Psi_t(x) = \frac{k^2\abs{\mcU^{(t)}}}{\abs{\mcH}^2\abs{\F}} \sum_{u \in \mcU^{(t)}} \abs{\mcH^{(t)}_u} + \frac{k^2\abs{\mcU^{(t)}}}{\abs{\mcH}^2\abs{\F}} \sum_{u \in \mcU^{(t)}} \sum_{\beta \in \F^*} \sum_{v \neq v' \in \mcH^{(t)}_u} \omega_p^{\Tr(\beta b_{v})} \cdot \omega_p^{-\Tr(\beta b_{v'})} \cdot \overline{\chi_{\beta v}} \cdot \chi_{\beta v'}\\
        &= \frac{k^2\abs{\mcH^{(t)}}\abs{\mcU^{(t)}}}{\abs{\mcH}^2\abs{\F}} + \frac{k^2\abs{\mcU^{(t)}}}{\abs{\mcH}^2\abs{\F}} \sum_{u \in \mcU^{(t)}} \sum_{\beta \in \F^*} \sum_{v \neq v' \in \mcH^{(t)}_u} \omega_p^{\Tr(\beta b_{v})} \cdot \omega_p^{-\Tr(\beta b_{v'})} \cdot \overline{\chi_{\beta v}} \cdot \chi_{\beta v'}\mper
    \end{flalign*}

    To achieve \cref{item:ref1} we compute the former term directly and provide a spectral certificate on the latter term in the vein of \cref{lem:countingbacktrackingwalks}. We begin by giving our new Kikuchi matrix definition here.

\begin{definition}{(Odd-arity Kikuchi matrix over $\F$).}
\label{def:oddkikuchimatrix}
Let $k/2 \leq \ell \leq n/2$ be a parameter and let $N = \abs{\F^*}^\ell {2n \choose \ell}$. For each pair $(v, v')$ of $(k-t)$-sparse vectors in $\F^n$ and $\beta \in \F^*$, we define a matrix $A_{v, v', \beta} \in \C^{N \times N}$ as follows. First, we identify $N$ with the set of pairs of vectors $(U^{(1)}, U^{(2)})$ in $\F^n$ with the condition $\wt(U^{(1)}) + \wt(U^{(2)}) = \ell$. Then for any such pairs $U$ and $V$ we let
    \begin{equation*}
        A_{v, v', \beta}(U, V) = \begin{cases}
                      1  & U\xrightarrow{\text{$v, v', \beta$}} V\\
                      0 & \text{otherwise}
                    \end{cases}
    \end{equation*}
    where we say $U\xrightarrow{\text{$v, v', \beta$}} V$ if the following conditions hold
    \begin{enumerate}
        \item $U^{(1)}\xrightarrow{\text{$v, \beta$}} V^{(1)}$.
        \item $U^{(2)}\xrightarrow{\text{$v', -\beta$}} V^{(2)}$.
        \item $\abs{\supp(U^{(1)}) \oplus \supp(v)} = \lfloor\frac{k-t}{2}\rfloor$ and $\abs{\supp(U^{(2)}) \oplus \supp(v')} = \lceil \frac{k-t}{2} \rceil$ or vice versa.
    \end{enumerate}
    
Let $\Psi_t(x) = \frac{k^2\abs{\mcH^{(t)}}\abs{\mcU^{(t)}}}{\abs{\mcH}^2\abs{\F}} + \frac{k^2\abs{\mcU^{(t)}}}{\abs{\mcH}^2\abs{\F}} \sum_{u \in \mcU^{(t)}} \sum_{\beta \in \F^*} \sum_{v \neq v' \in \mcH^{(t)}_u} \omega_p^{\Tr(\beta b_{v})} \cdot \omega_p^{-\Tr(\beta b_{v'})} \cdot \overline{\chi_{\beta v}} \cdot \chi_{\beta v'}$ be a polynomial defined by a set $\mcH^{(t)} \subseteq \mcH$ of $k$-sparse vectors from $\F^n$ with a decomposition $\mcH^{(t)} = \{\mcH^{(t)}_u\}_{u \in \mcU^{(t)}}$ and complex coefficients $\cbra{c_{v, \beta}}_{\substack{v \in \mcH \\ \beta \in \F^*}}$. We define the level-$\ell$ Kikuchi matrix for this polynomial to be $A^{(t)} = \sum_{u \in \mcU^{(t)}} \sum_{\beta \in \F^*} \sum_{v \neq v' \in \mcH^{(t)}_u} c_{v, \beta} \cdot \overline{c_{v', \beta}} \cdot A_{v-u, v'-u, \beta}$. For our purposes, we often shorthand $A_{v-u, v'-u, \beta} = A^u_{v, v', \beta}$ and let it be understood that $v, v'$ intersect $u$. We refer to the graph (with complex weights) defined by the underlying adjacency matrix as the Kikuchi graph.
\end{definition}

\begin{remark}
    For a fixed $\mcH^{(t)}_u$, the corresponding term in $A^{(t)}$ is defined as a graph with $(k-t)$-sparse vectors labels on the edges. This is because the ambient vectors $v$ in $\mcH$ are $k$-sparse and we subtract out the $t$-sparse shared part $u \in \mcU^{(t)}$. For notational convenience, we refer to $v-u$ as just $v_u$.
\end{remark}

\begin{groupcase}
        The odd-arity Kikuchi matrix generalizes by enforcing support constraints as in \cref{def:groupkikuchimatrix}.
    \end{groupcase}

Similar to \cref{fact:bilinearforms} in \cref{sec:refutation} we make the following connection between the bilinear forms of our constructed Kikuchi matrix and the value of the polnomial that we are seeking to refute.

\begin{observation}
    \label{fact:genbilinearforms}
    For $x \in \F^n$ define $y \in \C^N$ which we index by pair $U = (U^{(1)}, U^{(2)})$ with $U^{(1)}, U^{(2)} \in \F^n$ and $\wt(U^{(1)}) + \wt(U^{(2)}) = \ell$ and set $y_U = \chi_{U^{(1)}}(x) \cdot \chi_{U^{(2)}}(x)$. Then:
    \begin{equation*}
        \Psi_t(x) = \frac{k^2\abs{\mcH^{(t)}}\abs{\mcU^{(t)}}}{\abs{\mcH}^2\abs{\F}} + \frac{k^2\abs{\mcU^{(t)}}}{\Delta\abs{\mcH}^2\abs{\F}} \cdot y^\dagger A^{(t)} y\mcom
    \end{equation*}
    where $\Delta := {k-t \choose \lceil \frac{k-t}{2} \rceil}{k-t \choose \lfloor \frac{k-t}{2} \rfloor}{2n-2(k-t) \choose \ell-k-t} \cdot \abs{\F^*}^{\ell - k-t} \cdot 2^{\1(\text{$k-t$ odd})}$.
\end{observation}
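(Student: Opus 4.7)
The plan is to generalize the bilinear-form argument used in \cref{fact:bilinearforms}: expand $y^\dagger A^{(t)} y$ using the definition of $A^{(t)}$, translate each non-zero matrix entry into the expected character product $\overline{\chi_{\beta v}(x)}\,\chi_{\beta v'}(x)$ with coefficient $c_{v,\beta}\,\overline{c_{v',\beta}} = \omega_p^{\Tr(\beta b_v)}\,\omega_p^{-\Tr(\beta b_{v'})}$, and then count the number of vertex pairs $(U,V)$ producing each fixed tuple $(u,v,v',\beta)$. The diagonal $v = v'$ contribution has already been pulled out into the explicit constant $\frac{k^2\abs{\mcH^{(t)}}\abs{\mcU^{(t)}}}{\abs{\mcH}^2\abs{\F}}$, so only the off-diagonal piece needs to come from the quadratic form.

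For the character step, suppose $U = (U^{(1)},U^{(2)}) \xrightarrow{v-u,\,v'-u,\,\beta} V = (V^{(1)},V^{(2)})$. Conditions 1 and 2 of \cref{def:oddkikuchimatrix} give $V^{(1)} - U^{(1)} = -\beta(v-u)$ and $V^{(2)} - U^{(2)} = \beta(v'-u)$ (the sign flip on the second component comes from the ``$-\beta$'' in condition 2). Since $u \sqsubseteq v$ and $u \sqsubseteq v'$, the multiplicativity of $\chi$ yields the factorizations $\chi_{\beta v}(x) = \chi_{\beta u}(x)\,\chi_{\beta(v-u)}(x)$ and analogously for $v'$, so the $\chi_{\beta u}(x)$ factors cancel inside $\overline{y_U}\,y_V = \chi_{V^{(1)}-U^{(1)}}(x)\,\chi_{V^{(2)}-U^{(2)}}(x)$, leaving $\overline{\chi_{\beta v}(x)}\,\chi_{\beta v'}(x)$, exactly as needed.

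It remains to count $\Delta$, the number of pairs $(U,V)$ with $U \xrightarrow{v-u,v'-u,\beta} V$ for a fixed admissible tuple $(u,\beta,v,v')$. I would parameterize the count by first choosing $S_1 := \supp(U^{(1)}) \cap \supp(v-u)$, which together with $U^{(1)} - V^{(1)} = \beta(v-u)$ completely determines both $U^{(1)}$ and $V^{(1)}$ on $\supp(v-u)$, and analogously $S_2 := \supp(U^{(2)}) \cap \supp(v'-u)$. The balance condition 3 forces $\{\abs{S_1},\abs{S_2}\} = \{\lfloor (k-t)/2\rfloor,\lceil (k-t)/2\rceil\}$, contributing $\binom{k-t}{\lfloor (k-t)/2\rfloor}\binom{k-t}{\lceil (k-t)/2\rceil}$ together with a factor of $2^{\1(\text{$k-t$ odd})}$ for the ``or vice versa''. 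Since $U^{(j)}$ and $V^{(j)}$ must agree off the respective supports and $\abs{S_1}+\abs{S_2} = k-t$, we still need to choose the $\ell-(k-t)$ additional support positions freely across the two components from the $2n-2(k-t)$ remaining coordinates, with each assigned an arbitrary value in $\F^*$. Multiplying the three factors gives the stated $\Delta$, and rescaling $y^\dagger A^{(t)} y$ by $\frac{k^2\abs{\mcU^{(t)}}}{\Delta\abs{\mcH}^2\abs{\F}}$ produces the claim.

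The main obstacle I anticipate is the $\Delta$ count in the odd-$(k-t)$ case, specifically confirming that the ``or vice versa'' clause of condition 3 contributes the advertised $2^{\1(\text{$k-t$ odd})}$ factor (and no double-counting when $k-t$ is even), and that the choice of ``extra support'' positions across the two components $U^{(1)},U^{(2)}$ really pools into a single binomial $\binom{2n-2(k-t)}{\ell-(k-t)}$ rather than into a product of two binomials summed over all splits between the two sides.
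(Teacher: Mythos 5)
Your proposal is correct and follows essentially the same route as the paper's proof: expand $y^\dagger A^{(t)} y$ edgewise, use multiplicativity of characters together with $U^{(1)}-V^{(1)}=\beta(v-u)$ and $U^{(2)}-V^{(2)}=-\beta(v'-u)$ to recover $\overline{\chi_{\beta v}(x)}\chi_{\beta v'}(x)$, and then count $\Delta$ by fixing the intersections $S_1,S_2$ with $\supp(v-u),\supp(v'-u)$ (giving the two binomials and the $2^{\1(k-t\text{ odd})}$ orientation factor) before placing the $\ell-(k-t)$ free support positions among the $2n-2(k-t)$ remaining slots with arbitrary $\F^*$ values. The concerns you flag at the end are correctly resolved by your own argument: the two side-universes are disjoint copies of $[n]\setminus\supp(v-u)$ and $[n]\setminus\supp(v'-u)$, so the split pools into a single $\binom{2n-2(k-t)}{\ell-(k-t)}$, and when $k-t$ is even the two branches of condition~3 coincide so no double-counting occurs.
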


\begin{proof}
    Compute
    \begin{align*}
        y^\dagger A^{(t)} y
        &= \sum_{\substack{U, V \in \F^n\\ \wt(U^{(1)}) + \wt(U^{(2)}) = \ell\\ \wt\left(V^{(1)}\right) + \wt\left(V^{(2)}\right) = \ell}} A^{(t)}(U, V) \cdot \overline{\chi_{U^{(1)}}(x)} \cdot \overline{\chi_{U^{(2)}}(x)} \cdot \chi_{V^{(1)}}(x) \cdot \chi_{V^{(2)}}(x)\\
        &= \sum_{\substack{U, V \in \F^n\\ \wt(U^{(1)}) + \wt(U^{(2)}) = \ell\\ \wt\left(V^{(1)}\right) + \wt\left(V^{(2)}\right) = \ell}} \sum_{u \in \mcU} \sum_{v \neq v' \in \mcH_u, \beta \in \F^*} \1\left(U\xrightarrow{\text{$v_u, v'_u, \beta$}} V\right) \cdot c_{v, \beta} \cdot \overline{c_{v', \beta}} \cdot \overline{\chi_{U^{(1)}-V^{(1)}}(x)} \cdot \overline{\chi_{U^{(2)}-V^{(2)}}(x)}\\
        &= \sum_{\substack{U, V \in \F^n\\ \wt(U^{(1)}) + \wt(U^{(2)}) = \ell\\ \wt\left(V^{(1)}\right) + \wt\left(V^{(2)}\right) = \ell}} \sum_{u \in \mcU} \sum_{v \neq v' \in \mcH_u, \beta \in \F^*} \1\left(U\xrightarrow{\text{$v_u, v'_u, \beta$}} V\right) \cdot c_{v, \beta} \cdot \overline{c_{v', \beta}} \cdot \overline{\chi_{\beta v}(x)} \cdot \chi_{\beta v'}\mper
    \end{align*}

    Each term appears once for every pair $(U, V)$ with $U\xrightarrow{\text{$v_u, v'_u, \beta$}} V$. We count the number of such pairs as $\Delta := {k-t \choose \lceil \frac{k-t}{2} \rceil}{k-t \choose \lfloor \frac{k-t}{2} \rfloor}{2n-k-t \choose \ell-k-t} \cdot \abs{\F^*}^{\ell - k-t} \cdot 2^{\1(\text{$k-t$ odd})}$. We first specify the supports of $U^{(1)}$, $U^{(2)}$, $V^{(1)}$, and $V^{(2)}$ and then the non-zero elements $U^{(1)}_i$ for $i \in \supp(U^{(1)})$ (and the same for $U^{(2)}$, $V^{(1)}$, and $V^{(2)}$). The condition $\abs{\supp(U^{(1)}) \oplus \supp(v_u)} = \lfloor\frac{k-t}{2}\rfloor$ and $\abs{\supp(U^{(2)}) \oplus \supp(v'_u)} = \lceil \frac{k-t}{2} \rceil$ or the other way around gives $2$ options when $k-t$ is odd (otherwise the conditions are the same). Without loss of generality we let $\abs{\supp(U^{(1)}) \oplus \supp(v_u)} = \lfloor\frac{k-t}{2}\rfloor$, then there are ${k-t \choose \lfloor \frac{k-t}{2} \rfloor}$ choices for $\supp(U^{(1)}) \cap \supp(v_u)$. Similarly we have ${k-t \choose \lceil \frac{k-t}{2} \rceil}$ ways to pick the intersection $\supp(U^{(2)}) \cap \supp(v'_u)$. Among $\supp(U^{(1)}) \setminus v_u$ and $\supp(U^{(2)}) \setminus v'_u$, there are then $\ell - k-t$ indices, and each falls outside of $[n] \setminus \supp(v_u)$ and $[n] \setminus \supp(v'_u)$, giving ${2n-2k-t \choose \ell - k-t}$ options. The condition $U^{(1)}\xrightarrow{\text{$v, \beta$}} V^{(1)}$ enforces that $\supp(U^{(1)}) \oplus \supp(V^{(1)}) = \supp(v_u)$ which requires $\supp(V^{(1)}) \cap \supp(v_u) = \supp(v_u) \setminus (\supp(U^{(1)}) \cap \supp(v_u))$ and $\supp(V^{(1)}) \setminus \supp(v_u) = \supp(U^{(1)}) \setminus \supp(v_u)$, fully determining $\supp(V^{(1)})$. Likewise $\supp(V^{(2)})$ is fully determined. Now we specify the actually values. Note for $i \in \supp(U^{(1)}) \cap \supp(v_u)$ we have $U_i = \beta v_i$, and likewise for the other vectors. For the other $\ell - k-t$ non-zero entries across $U^{(1)}$ and $U^{(2)}$, we may choose any value in $\F^*$. The corresponding entries in $V^{(1)}$ and $V^{(2)}$ are then determined as they must cancel, totaling to $\abs{\F^*}^{\ell -k-t}$ points. This finishes the proof.
\end{proof}

We also establish a bound on the average degree of the underlying Kikuchi graph akin to \cref{fact:avgdegree}.

\begin{observation}
    \label{fact:genavgdegree}
    Let $A^{(t)}$ be a level-$\ell$ Kikuchi matrix defined from $\mcH^{(t)}$. Then the average degree $d_t \geq \abs{\F^*} \left(\frac{\ell}{2n\abs{\F^*}}\right)^{k-t} \sum_{u \in \mcU^{(t)}} {\abs{\mcH^{(t)}_u} \choose 2}$.
\end{observation}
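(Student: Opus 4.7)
The plan is a direct double-counting argument generalizing \cref{fact:avgdegree} to the two-block Kikuchi matrix from \cref{def:oddkikuchimatrix}. First, I would observe that the counting already carried out in the proof of \cref{fact:genbilinearforms} shows that for each fixed tuple $(u, v, v', \beta)$ with $u \in \mcU^{(t)}$, $v \neq v' \in \mcH^{(t)}_u$, and $\beta \in \F^*$, the matrix $A^u_{v, v', \beta}$ has exactly $\Delta$ non-zero entries, where $\Delta$ is the quantity defined in \cref{fact:genbilinearforms}, namely $\binom{k-t}{\lceil (k-t)/2 \rceil}\binom{k-t}{\lfloor (k-t)/2 \rfloor}\binom{2n-2(k-t)}{\ell - (k-t)} \cdot \abs{\F^*}^{\ell - (k-t)} \cdot 2^{\1(k-t \text{ odd})}$.

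Next, I would sum these contributions over all admissible tuples. Since $(v, v')$ ranges over ordered pairs of distinct elements of $\mcH^{(t)}_u$, the number of tuples for each $u$ and each $\beta \in \F^*$ is $2\binom{\abs{\mcH^{(t)}_u}}{2}$. Therefore the total number of non-zero positions, counted with multiplicity, in $A^{(t)} = \sum_{u} \sum_{\beta} \sum_{v \neq v'} c_{v,\beta}\overline{c_{v',\beta}} \cdot A^u_{v,v',\beta}$ is $2\abs{\F^*}\Delta \sum_{u \in \mcU^{(t)}} \binom{\abs{\mcH^{(t)}_u}}{2}$, and dividing by $N = \abs{\F^*}^\ell \binom{2n}{\ell}$ gives
\begin{equation*}
d_t \;=\; \frac{2\abs{\F^*}\Delta}{N} \sum_{u \in \mcU^{(t)}} \binom{\abs{\mcH^{(t)}_u}}{2}\mper
\end{equation*}

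Finally, I would simplify the ratio $\Delta/N$ using \cref{fact:binomest} with $n \mapsto 2n$ and $q = k-t$, which yields $\binom{2n - 2(k-t)}{\ell - (k-t)}/\binom{2n}{\ell} = \Theta((\ell/(2n))^{k-t})$. The factor $\abs{\F^*}^{\ell - (k-t)}$ in $\Delta$ cancels against $\abs{\F^*}^\ell$ in $N$ to leave $\abs{\F^*}^{-(k-t)}$, and the combinatorial prefactors $\binom{k-t}{\lfloor (k-t)/2 \rfloor}\binom{k-t}{\lceil (k-t)/2 \rceil} \cdot 2^{\1(k-t \text{ odd})}$ together with the factor $2$ from ordered pairs contribute a positive constant that is at least $1$. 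Combining these pieces yields the claimed bound $d_t \geq \abs{\F^*}\bigl(\ell/(2n\abs{\F^*})\bigr)^{k-t} \sum_u \binom{\abs{\mcH^{(t)}_u}}{2}$.

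The main thing to watch is purely bookkeeping: one must verify that the product of binomial, sign, and ordered-pair constants together with the implicit constant from \cref{fact:binomest} produces a coefficient at least $1$ so that the inequality holds with no extra suppressed factor. No ideas beyond those already appearing in the proof of \cref{fact:avgdegree} are required; the only novelty is tracking the two blocks $U^{(1)}, U^{(2)}$ and the $2^{\1(k-t \text{ odd})}$ factor arising from the two ways of assigning the block sizes $\lceil(k-t)/2\rceil$ and $\lfloor(k-t)/2\rfloor$ when $k-t$ is odd.
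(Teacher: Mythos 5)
Your proof is correct and follows essentially the same route as the paper: count the $\Delta$ non-zero entries per triple $(v, v', \beta)$, sum over all admissible triples, divide by $N$, and simplify with \cref{fact:binomest}. Worth noting: you sum over \emph{ordered} pairs and so obtain a total degree of $2\abs{\F^*}\Delta\sum_u\binom{\abs{\mcH^{(t)}_u}}{2}$, whereas the paper writes the total as $\abs{\F^*}\Delta\sum_u\binom{\abs{\mcH^{(t)}_u}}{2}$ — i.e., it appears to drop the factor of $2$ coming from $\abs{\mcH^{(t)}_u}(\abs{\mcH^{(t)}_u}-1) = 2\binom{\abs{\mcH^{(t)}_u}}{2}$. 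Since the claim is only a lower bound this is harmless, and your version is if anything the more careful accounting. Your final caveat about the combined constant being at least $1$ is the right thing to flag; the combinatorial prefactors $\binom{k-t}{\lfloor(k-t)/2\rfloor}\binom{k-t}{\lceil(k-t)/2\rceil}\cdot 2^{\1(k-t\ \text{odd})}$ do indeed absorb the constants hidden in the $\Theta(\cdot)$ from \cref{fact:binomest}, matching what the paper tacitly assumes.
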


\begin{proof}
    Note that every triple $(v, v', \beta)$ uniformly adds $\Delta = {k-t \choose \lceil \frac{k-t}{2} \rceil}{k-t \choose \lfloor \frac{k-t}{2} \rfloor}{2n-2(k-t) \choose \ell-k-t} \cdot \abs{\F^*}^{\ell - k-t} \cdot 2^{\1(\text{$k-t$ odd})}$ edges. The total degree can then be written as $\Delta \abs{\F^*} \sum_{u \in \mcU^{(t)}} {\abs{\mcH^{(t)}_u} \choose 2}$ and the average using standard binomial approximations is:
    \begin{flalign*}
        &d_t = \frac{\Delta\abs{\F^*}}{N} \sum_{u \in \mcU} {\abs{\mcH^{(t)}_u} \choose 2}
        = \frac{{k-t \choose \lceil \frac{k-t}{2} \rceil}{k-t \choose \lfloor \frac{k-t}{2} \rfloor}{2n-(k-t) \choose \ell-k-t} \cdot \abs{\F^*}^{\ell - k-t} \cdot 2^{\1(\text{$k-t$ odd})}\abs{\F^*}}{\abs{\F^*}^\ell {2n \choose \ell}} \sum_{u \in \mcU} {\abs{\mcH^{(t)}_u} \choose 2}\\
        &\geq \abs{\F^*}\left({\frac{\ell}{2n \abs{\F^*}}}\right)^{k-t}  \sum_{u \in \mcU} {\abs{\mcH^{(t)}_u} \choose 2}\mper
    \end{flalign*}
    The inequality follows from \cref{fact:binomest}.
\end{proof}

There is one other property that is important to track, which we can think of as a typed local degree.

\begin{definition}[Local degree]
    Let $U = (U^{(1)}, U^{(2)})$ be a vertex of the level-$\ell$ Kikuchi graph and $v \in \mcH^{(t)}_u$. We define the $v$-local degrees of a vertex:
    \begin{align*}
        &d_{U, v, 0} = \{v' \in \mcH^{(t)}_u \mid \exists V \text{ vertex and } \beta \in \F^* \text{ such that } U  \xrightarrow{\text{$v_u, v'_u, \beta$}} V\}\mper\\
        &d_{U, v, 1} = \{v' \in \mcH^{(t)}_u \mid \exists V \text{ vertex and } \beta \in \F^* \text{ such that } U  \xrightarrow{\text{$v'_u, v_u, \beta$}} V\}\mper
    \end{align*}
    We call a Kikuchi graph $\eta$-bounded local degree if for all $d_{U, v, b} \leq \eta$ for all $U, v, b$.
\end{definition}

With these definitions in place, we are able to specify our spectral certificate (an analog of \cref{lem:countingbacktrackingwalks}) and show how it can be used to achieve \cref{lem:mainrefutation}.

\begin{lemma}
    \label{lem:maincountbackwalks}
    Let $A$ be the level-$\ell$ Kikuchi matrix for a bipartite polynomial specified by an $\mcU$-bipartite vector collection $\mcH = \cbra{\mcH_u}_{u \in \mcU}$ of $(k-t)$-sparse vectors from $\F^n$ and complex coefficients $\{c_{v, \beta}\}_{\substack{v \in \mcH \\ \beta \in \F^*}}$. Let $\Gamma \in \C^{N \times N}$ be $\Gamma = D + d \Id$ where $D_{U, U} := \deg(U)$ and average degree $d$. Suppose additionally that the underlying Kikuchi graph is $\eta$-bounded local degree and that the $c_{v, \beta}$ are drawn from the distribution specified in \cref{item:refutation2} of \cref{thm:refutation}. Then,
    \begin{equation*}
    \norm{\Gamma^{-1/2} A \Gamma^{-1/2}}_2 \leq 8\sqrt{\frac{\eta\ell \log (\abs{\F^*} n)}{d}}\mper
    \end{equation*}
\end{lemma}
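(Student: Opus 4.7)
The plan is to apply the trace moment method as in the proof of \cref{lem:countingbacktrackingwalks}, with two key modifications: edges in the Kikuchi graph are now labeled by triples $(v, v', \beta)$ rather than pairs $(v, \beta)$, and the coefficient assigned to each edge is $c_{v, \beta} \overline{c_{v', \beta}}$. First I would verify that $A$ is Hermitian by checking that $U \xrightarrow{v, v', \beta} V$ implies $V \xrightarrow{v, v', -\beta} U$, so that together with the identity $c_{v, -\beta} = \overline{c_{v, \beta}}$ we have $A(U, V) = \overline{A(V, U)}$. This yields the standard trace bound $\norm{\Gamma^{-1/2} A \Gamma^{-1/2}}_2^{2t} \leq \tr((\Gamma^{-1} A)^{2t})$, so that Markov's inequality reduces the task to bounding $\E[\tr((\Gamma^{-1} A)^{2t})]$ up to a factor of $N^{1/2t} = O(1)$ for $2t \geq \log N$.

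I would then expand the trace as a sum over closed length-$2t$ walks $(U_0, (v_1, v'_1, \beta_1), U_1, \dots, U_{2t} = U_0)$ in the Kikuchi graph, weighted by $\prod_{i=1}^{2t} c_{v_i, \beta_i} \overline{c_{v'_i, \beta_i}} \cdot \Gamma^{-1}_{U_{i-1}}$. Taking expectation and using independence of the $b_v$'s, the expectation factors over distinct $v \in \mcH$. For each such $v$, letting $R^+(v) = \{i : v_i = v\}$ and $R^-(v) = \{i : v'_i = v\}$, the contribution is nonzero only when $\sum_{i \in R^+(v)} \beta_i - \sum_{i \in R^-(v)} \beta_i = 0$ in $\F$, the ``trivially closed'' condition generalizing \cref{def:triviallyclosedwalks}. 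Since each $\beta_i \in \F^*$, this forces every $v$ appearing in the walk to occupy at least two of the $4t$ label slots.

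The heart of the proof is an encoding of trivially closed walks, analogous to \cref{lem:maincountingbacktrackingwalks}. For each step $i \in [2t]$, let $s_i \in \{0, 1, 2\}$ count how many of $v_i, v'_i$ are seen for the first time at step $i$. When $s_i = 2$, I record the edge among the $\deg(U_{i-1})$ options adjacent to $U_{i-1}$, which combined with $\Gamma^{-1}_{U_{i-1}} \leq 1/\deg(U_{i-1})$ contributes a factor of $1$ per step. When $s_i \in \{0, 1\}$, I record an index in $[2(i-1)] \subseteq [4t]$ for an old label (together with a bit indicating which slot it refers to); this, combined with $U_{i-1}$ and the support-difference conditions in items 1 and 2 of \cref{def:oddkikuchimatrix}, uniquely determines $\beta_i$, and the $\eta$-bounded local degree hypothesis then gives at most $\eta$ options for the remaining label. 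Together with $\Gamma^{-1}_{U_{i-1}} \leq 1/d$, each such step contributes at most $O(t\eta/d)$.

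Finally, since each distinct $v$ in a trivially closed walk occupies $\geq 2$ of the $4t$ slots, the number of distinct $v$'s is $\leq 2t$, so $2n_2 + n_1 \leq 2t$ where $n_s = \abs{\{i : s_i = s\}}$. This gives $n_0 + n_1 \geq t$, so the product of per-step weights is at most $O(t\eta/d)^t$. Summing over the $3^{2t}$ patterns of $(s_1, \dots, s_{2t})$ and the $N$ choices of start vertex yields $\E[\tr((\Gamma^{-1} A)^{2t})] \leq N \cdot O(t\eta/d)^t$. Choosing $2t = \Theta(\log N) = \Theta(\ell \log(\abs{\F^*} n))$ and applying Markov's inequality yields the claimed bound $\norm{\Gamma^{-1/2} A \Gamma^{-1/2}}_2 \leq O(\sqrt{\eta \ell \log(\abs{\F^*} n)/d})$ with high probability. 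The main technical delicacy I anticipate is the uniqueness of $\beta_i$ from a single old label: without the support-difference conditions in \cref{def:oddkikuchimatrix}, we would incur an extra factor of $\abs{\F^*}$ per return step, which would ultimately cost a factor of $\abs{\F^*}$ in the threshold for the number of constraints, erasing precisely the field-dependence improvement that is the point of the paper.
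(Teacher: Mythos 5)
Your proof is correct and follows essentially the same trace-moment-plus-walk-encoding argument as the paper's proof (expanding $\E\bigl[\tr((\Gamma^{-1}A)^{2t})\bigr]$, restricting to trivially closed sequences via independence of the $b_v$'s, and encoding walks by distinguishing steps where labels repeat). The only cosmetic difference is that you track $s_i\in\{0,1,2\}$ counting new labels per step where the paper uses a single bit $z_i$ indicating whether any repeat occurs; both yield the same per-step cost $O(t\eta/d)$ on the $\geq t$ non-fresh steps (with the $\eta$-bounded local degree supplying the missing partner and the support conditions pinning down $\beta_i$), hence the same final bound.
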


\begin{groupcase}
    The main lemma can be generalized for $\lambda$-robust $k$-$\LIN(G)$ equations as in \cref{lem:groupcountingbacktrackingwalks} at the cost of a factor $\sqrt{\frac{1}{\lambda}}$.
\end{groupcase}

\begin{proof}[Proof of \cref{lem:mainrefutation} from \cref{lem:maincountbackwalks}]

By \cref{fact:genbilinearforms} we have
\begin{equation*}
    \Psi_t(x) = \frac{k^2\abs{\mcH^{(t)}}\abs{\mcU^{(t)}}}{\abs{\mcH}^2\abs{\F}} + \frac{k^2\abs{\mcU^{(t)}}}{\Delta\abs{\mcH}^2\abs{\F}} \cdot y^\dagger A^{(t)} y\mper
\end{equation*}
Letting $\tilde{A}^{(t)} = \Gamma^{-1/2}A^{(t)} \Gamma^{-1/2}$ we write
\begin{flalign*}
    &\frac{k^2\abs{\mcU^{(t)}}}{\Delta\abs{\mcH}^2\abs{\F}} \cdot y^\dagger A^{(t)} y = \frac{k^2\abs{\mcU^{(t)}}}{\Delta\abs{\mcH}^2\abs{\F}} \cdot (\Gamma^{1/2}y)^\dagger \tilde{A}^{(t)} (\Gamma^{1/2} y) \leq \frac{k^2\abs{\mcU^{(t)}}}{\Delta\abs{\mcH}^2\abs{\F}} \cdot \norm{\tilde{A}^{(t)}}_2 \norm{\Gamma^{1/2} y}_2^2 \\
    &= \frac{k^2\abs{\mcU^{(t)}}}{\Delta\abs{\mcH}^2\abs{\F}} \cdot \norm{\tilde{A}^{(t)}}_2 \cdot \tr(\Gamma) = \frac{2k^2\abs{\mcU^{(t)}}\abs{\F^*}}{\abs{\mcH}^2\abs{\F}} \sum_{u \in \mcU^{(t)}} {\abs{\mcH^{(t)}_u} \choose 2} \cdot \norm{\tilde{A}^{(t)}}_2  \mcom
\end{flalign*}
where we use that $\norm{\Gamma^{1/2} y}_2^2 = y^{\dagger} \Gamma y = \sum_{U} \Gamma_U \abs{y_U}^2 = \sum_U \Gamma_U = \tr(\Gamma)$ since $\abs{y_U} = 1$ for all $U$, and that $\tr(\Gamma) = 2 \Delta \abs{\F^*} \sum_{u \in \mcU^{(t)}} {\abs{\mcH^{(t)}_u} \choose 2}$. We use this inequality directly to define our certificate $\algval(\Psi_t)$. For every $x \in \F^n$ we have
\begin{flalign*}
    &\Psi_t(x) = \frac{k^2\abs{\mcH^{(t)}}\abs{\mcU^{(t)}}}{\abs{\mcH}^2\abs{\F}} + \frac{k^2\abs{\mcU^{(t)}}}{\Delta\abs{\mcH}^2\abs{\F}} \cdot y^\dagger A^{(t)} y\\ 
    &\leq \frac{k^2\abs{\mcH^{(t)}}\abs{\mcU^{(t)}}}{\abs{\mcH}^2\abs{\F}} + \frac{2k^2\abs{\mcU^{(t)}}\abs{\F^*}}{\abs{\mcH}^2\abs{\F}} \sum_{u \in \mcU^{(t)}} {\abs{\mcH^{(t)}_u} \choose 2} \cdot \norm{\tilde{A}^{(t)}}_2 := \algval^*(\Psi_t)\mper
\end{flalign*}
To fulfill \cref{item:ref1}, we essentially compute and output $\algval^*(\Psi_t)$ (we will end up computing a slightly different $\algval(\Psi)$ that we define shortly). This can be accomplished with simple arithmetic operations, with the bulk of the computation coming from computing $\norm{\tilde{A}^{(t)}}_2$. The matrix is $N \times N$ where $N = (\abs{\F^*}n)^{O(\ell)}$, meaning this can be computed in time $(\abs{\F^*}n)^{O(\ell)}$ using standard linear algebraic techniques.

To achieve \cref{item:ref2}, we bound the two parts of $\algval^*(\Psi_t)$ independently by $\frac{\varepsilon^2}{2}$ as
\begin{equation*}
    \algval^*(\Psi_t) \leq \frac{k^2\abs{\mcH^{(t)}}\abs{\mcU^{(t)}}}{\abs{\mcH}^2\abs{\F}} + \frac{2k^2\abs{\mcU^{(t)}}\abs{\F^*}}{\abs{\mcH}^2\abs{\F}} \sum_{u \in \mcU^{(t)}} {\abs{\mcH^{(t)}_u} \choose 2} \cdot \norm{\tilde{A}^{(t)}}_2 \leq \frac{\varepsilon^2}{2} + \frac{\varepsilon^2}{2} \leq \varepsilon^2 \mper
\end{equation*}
By the assumption \cref{item:newref2b}, we have $\abs{\mcU^{(t)}} \leq \frac{2 \abs{\mcH}}{\tau_t}$, and since $\tau_t \geq 4k^2\varepsilon^{-2}$, this is enough to conclude $\frac{k^2\abs{\mcH^{(t)}}\abs{\mcU^{(t)}}}{\abs{\mcH}^2\abs{\F}} \leq \frac{\varepsilon^2}{4\abs{\F}} \cdot \frac{\abs{\mcH^{(t)}}}{\abs{\mcH}}$ which is a stronger condition than $\leq \frac{\varepsilon^2}{2}$.

For the latter bound, we must show $\frac{2k^2\abs{\mcU^{(t)}}\abs{\F^*}}{\abs{\mcH}^2\abs{\F}} \sum_{u \in \mcU^{(t)}} {\abs{\mcH^{(t)}_u} \choose 2} \cdot \norm{\tilde{A}^{(t)}}_2 \leq \frac{\varepsilon^2}{2}$ with high probability. We show that this is quantitatively achieved by \cref{lem:maincountbackwalks} when $\abs{\mcH}$ is sufficiently large. To begin, note that a direct application of \cref{lem:maincountbackwalks} and \cref{fact:genavgdegree} in tandem yields
\begin{align*}
    \frac{2k^2\abs{\mcU^{(t)}}\abs{\F^*}}{\abs{\mcH}^2\abs{\F}} \sum_{u \in \mcU^{(t)}} {\abs{\mcH^{(t)}_u} \choose 2} \cdot \norm{\tilde{A}^{(t)}}_2 
    &\leq \frac{16k^2\abs{\mcU^{(t)}}\abs{\F^*}}{\abs{\mcH}^2\abs{\F}} \sum_{u \in \mcU^{(t)}} {\abs{\mcH^{(t)}_u} \choose 2} \cdot \sqrt{\frac{\eta_t\ell \log (\abs{\F^*} n)}{d}}\\
    &\leq \frac{16k^2\abs{\mcU^{(t)}}\abs{\F^*}}{\abs{\mcH}^2\abs{\F}} \sum_{u \in \mcU^{(t)}} {\abs{\mcH^{(t)}_u} \choose 2} \cdot \sqrt{\frac{\eta_t\ell \log (\abs{\F^*} n)}{\abs{\F^*} \sum_{u \in \mcU^{(t)}} {\abs{\mcH^{(t)}_u} \choose 2}} \cdot \left(\frac{2n\abs{\F^*}}{\ell}\right)^{k-t}}\\
    &\leq \frac{16k^2\abs{\mcU^{(t)}}\abs{\F^*}}{\abs{\mcH}^2\abs{\F}} \cdot \sqrt{\frac{\eta_t\ell \log (\abs{\F^*} n)}{\abs{\F^*}} \cdot \left(\frac{2n\abs{\F^*}}{\ell}\right)^{k-t} \sum_{u \in \mcU^{(t)}} {\abs{\mcH^{(t)}_u} \choose 2}}\\
    &\leq \frac{16k^2\abs{\mcU^{(t)}}\abs{\F^*}}{\abs{\mcH}^2\abs{\F}} \cdot \sqrt{\frac{\eta_t\ell \log (\abs{\F^*} n)}{\abs{\F^*}} \cdot \left(\frac{2n\abs{\F^*}}{\ell}\right)^{k-t} \abs{\mcU^{(t)}} \tau_t^2}\mper
\end{align*}
In the last line we use that $\abs{\mcH^{(t)}_u} \leq \tau_t$ for all $u \in \mcU^{(t)}$. We now use the property $\abs{\mcU^{(t)}} \leq \frac{2\abs{\mcH}}{\tau_t}$ to claim
\begin{equation*}
    \frac{2k^2\abs{\mcU^{(t)}}\abs{\F^*}}{\abs{\mcH}^2\abs{\F}} \sum_{u \in \mcU^{(t)}} {\abs{\mcH^{(t)}_u} \choose 2} \cdot \norm{\tilde{A}^{(t)}}_2 \leq \frac{64k^2\abs{\F^*}}{\abs{\F}} \cdot \sqrt{\frac{\eta_t\ell \log (\abs{\F^*} n)}{\abs{\F^*}\abs{\mcH}\tau_t} \cdot \left(\frac{2n\abs{\F^*}}{\ell}\right)^{k-t}}\mper
\end{equation*}
Finally we use that $\tau_t = \left(\frac{n\abs{\F^*}}{\ell}\right)^{k/2-t} \cdot 4k^2\varepsilon^{-2}$ we observe that when $\abs{\mcH} \geq C^k \cdot n \log(\abs{\F^*}n) \left(\frac{n\abs{\F^*}}{\ell}\right)^{k/2-1} \cdot \varepsilon^{-4}$ for sufficiently large constant $C > 0$, we obtain a bound of $\frac{\varepsilon^2}{2}\sqrt{\frac{\eta_t}{\tau_t} \left(\frac{n\abs{\F^*}}{\ell}\right)^{k/2-t}}$. We may trivially bound $\eta_t \leq \abs{\mcH^{(t)}_u} \leq \tau_t$ and when $t \geq \frac{k}{2}$ we have $\left(\frac{n\abs{\F^*}}{\ell}\right)^{k/2-t} \leq 1$, finishing our bound.

When $t < \frac{k}{2}$ the trivial bound is not sufficient. In particular if $\eta_t$ is sufficiently close to its max $\tau_t$ then this value diverges as $n \to \infty$. Instead we need the following lemma which gives us a way to reduce the local degree of an arbitrary Kikuchi matrix while maintaining high global degree.

    \begin{lemma}
        \label{lem:edgedeletion}
        Let $K_\ell$ be a level-$\ell$ Kikuchi graph with average degree $d(K_\ell)$, for each $v \neq v' \in \mcH^{(t)}$ $(k-t)$-sparse and $\beta \in \F^*$ there are $\Delta$ edges of type $(v, v', \beta)$, and satisfying the output criteria in \cref{lem:decompositionalg}. Then we can find a subgraph $\hat{K}_\ell$ in time $(\abs{\F^*}n)^{O(\ell)}$ with the following properties:
        \begin{itemize}
            \item $\hat{K}_\ell$ is $D^k \varepsilon^{-2}$-bounded degree for some constant $D > 0$.
            \item The number of $(v, v', \beta)$ type edges is $\geq (1-\rho) \Delta$ for some $\rho \in [0, \frac{1}{2}]$.
        \end{itemize}
    \end{lemma}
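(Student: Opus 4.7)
The plan is to construct $\hat{K}_\ell$ by a simple simultaneous truncation: for every triple $(U, v, b)$ with $d_{U,v,b} > \eta := D^k \varepsilon^{-2}$, arbitrarily mark $d_{U,v,b} - \eta$ of the Kikuchi edges incident to $U$ that carry $v$ on side $b$ for deletion, and let $\hat{K}_\ell$ be the subgraph that remains after all marked edges are removed. By construction, the local degree condition $d_{U,v,b} \le \eta$ is then met at every vertex, so the first guaranteed property is immediate. The procedure takes time polynomial in the Kikuchi graph size by iterating over the $N = \abs{\F^*}^\ell \binom{2n}{\ell}$ vertices and enumerating, at each, the incident edges grouped by the label $(v, b)$; the total cost is therefore $(\abs{\F^*} n)^{O(\ell)}$.

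The substantive step is to show the second property, namely that the truncation deletes at most a $\rho \le 1/2$ fraction of the $\Delta$ edges of any single type $(v, v', \beta)$. The $(\varepsilon/2k, \ell)$-regularity of the decomposition from Criterion iii of \cref{lem:decompositionalg} bounds, for any support pattern $w'$ of weight $>t$, the number of $v'' \in \mcH^{(t)}_u$ containing $w'$ by $(2k/\varepsilon)^2$ (the exponent $k/2-1-\wt(w')$ in the regularity bound is non-positive since $\wt(w') \ge t + \lceil(k-t)/2\rceil \ge k/2$). On the other hand, for fixed $v$ and side $b$, $\sum_U d_{U,v,b}$ counts the edges incident to $v$ on side $b$, which totals $(\abs{\mcH^{(t)}_u}-1) \cdot \abs{\F^*} \cdot \Delta$. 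These two bounds together let us argue that only a $\rho$-fraction of the vertices $U$ incident to any given type-$(v, v', \beta)$ edge are bad with respect to $v$, and symmetrically for $v'$: the regularity spreads the potential ``bad'' $v'$'s across many distinct intersection patterns, preventing too many of them from coinciding with a single $(v', \beta)$ pair.

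The main obstacle is precisely this double-counting step. The worst-case local degree can be as large as $\binom{\ell}{\lceil(k-t)/2\rceil} \cdot (2k/\varepsilon)^2$ because of the many possible intersections $\supp(v'_u) \cap \supp(U^{(2)})$ contributing to $d_{U,v,0}$, so the raw regularity bound alone does not yield an $\eta$-bounded local degree. One must argue that this polynomial-in-$\ell$ overcount is not achieved simultaneously at many vertices, and in particular not at many of the $\Delta$ endpoints of a fixed $(v, v', \beta)$-type edge. This is done by a Markov-style inequality applied to the aggregate local degree, with the constant $D$ calibrated so that $\eta = D^k \varepsilon^{-2}$ comfortably exceeds the average local degree $|\mcH^{(t)}_u| \cdot \abs{\F^*} \cdot \Delta / N$; the $\tau_t \ge 4k^2 \varepsilon^{-2}$ lower bound from \cref{lem:decompositionalg} and a sufficiently large choice of $D$ then give $\rho \le 1/2$ for the deletion fraction.
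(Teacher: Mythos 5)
Your overall strategy matches the paper's: greedily truncate high-local-degree vertices, then bound the fraction of deleted edges of a fixed type $(v,v',\beta)$ by a Markov-style argument driven by regularity. But two of the supporting claims are wrong in a way that undermines the analysis exactly where the lemma is actually needed (the regime $t < k/2$).

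First, you assert that the exponent $k/2 - 1 - \wt(w')$ in the regularity bound is non-positive "since $\wt(w') \ge t + \lceil(k-t)/2\rceil \ge k/2$," so the number of $v''$ containing any pattern $w'$ is always $\le (2k/\varepsilon)^2$. This is incorrect. The pattern $w'$ that regularity controls is the \emph{agreement between $v'$ and $v''$} (in the paper's notation $v' \perp v''$), and this can have weight anywhere between $t$ (just $u$) and $k$. For small agreements $s+t < k/2$, the regularity bound is $\tau_{s+t} = (\frac{n\abs{\F^*}}{\ell})^{k/2-s-t}\cdot 4k^2\varepsilon^{-2}$, which grows polynomially in $n$ — far from $\varepsilon^{-2}$. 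You cannot bypass this by only considering $w'$ forced to contain a $\lceil(k-t)/2\rceil$-subset of some $U^{(2)}$: those subsets belong to the ambient vertex, not to $v'$, so they do not give a vector pattern to which regularity applies.

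Second, the Markov step is not a bound on the unconditional average local degree $\abs{\mcH_u^{(t)}}\,\abs{\F^*}\Delta/N$. What the argument must bound is the \emph{conditional} expectation $\E_{(U,V)\sim E_{(v,v',\beta)}}[d_{U,v,0}-1]$, i.e., the average number of other $v''$ that form an edge at $U$, given that $U$ already hosts a $(v,v',\beta)$-edge. That conditional expectation is not the unconditional average degree and cannot be read off from it. The paper's route is to expand it as $\sum_{v''\ne v,v'}\Pr_{(U,V)\sim E_{(v,v',\beta)}}[\exists V': U\xrightarrow{v,v'',\beta}V']$, then bound each summand by a per-$v''$ "cross-term" probability that decays like $(\ell/\abs{\F^*}n)^{\lfloor(k-t)/2\rfloor - \abs{v'\perp v''} + t}$, and finally sum over intersection sizes $s+t$ weighted by the regularity bound $\tau_{s+t}$. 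The decaying cross-term probability exactly compensates the growth of $\tau_{s+t}$ for small intersections; without it the sum is unbounded for $t < k/2$. This cross-term estimate — your proposal has no analogue of it — is the crux of the lemma, not just a constant-tuning detail.

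Minor: the paper additionally deletes edges in a second pass so that \emph{every} type $(v,v',\beta)$ is reduced by exactly the worst-case fraction $\rho$, giving a uniformity that the subsequent quadratic-form and degree computations rely on. Your truncation yields the required lower bound as stated, but you should be aware that the uniform reduction is needed downstream.
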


    We think of $\rho$ as the fraction of edges missing, and the second condition guarantees an important uniformity in how many edges of each type. In particular, an inspection of the above section shows \cref{fact:genbilinearforms} and \cref{fact:genavgdegree} hold with $\Delta$ replaced with $(1-\rho)\Delta$. This allows the proof of \cref{lem:maincountbackwalks} to go through scaling $d_t$ by $1-\rho$ but with $\eta_t \leq \exp(k) \cdot \varepsilon^{-2}$. As a result, in the case $t < \frac{k}{2}$ we have for slightly larger constant $CD > 0$ in $\abs{\mcH} \geq (CD)^k \cdot n \log(\abs{\F^*}n) \left(\frac{n\abs{\F^*}}{\ell}\right)^{k/2-1} \cdot \varepsilon^{-4}$ that
    \begin{equation*}
        \frac{2k^2\abs{\mcU^{(t)}}\abs{\F^*}}{\abs{\mcH}^2\abs{\F}} \sum_{u \in \mcU^{(t)}} {\abs{\mcH^{(t)}_u} \choose 2} \cdot \norm{\tilde{A}^{(t)}}_2\leq \frac{\varepsilon^2}{2}\sqrt{\frac{\eta_t}{D^k\tau_t} \left(\frac{n\abs{\F^*}}{\ell}\right)^{k/2-t}} \leq \frac{\varepsilon^2}{2}\sqrt{\frac{\eta_t}{D^k \varepsilon^{-2}}} \leq \frac{\varepsilon^2}{2}\mper
    \end{equation*}

    \begin{tcolorbox}[
    width=\textwidth,   
    colframe=black,  
    colback=white,   
    title=Bipartite Polynomial Refutation Algorithm,
    colbacktitle=white, 
    coltitle=black,      
    fonttitle=\bfseries,
    center title,   
    enhanced,       
    frame hidden,           
    borderline={1pt}{0pt}{black},
    sharp corners,
    toptitle=2.5mm
]
\textbf{Input:} A $\mcU^{(t)}$-bipartite vector collection $\mcH^{(t)} \subseteq \mcH$ of $k$-sparse vectors from $\F^n$ and complex coefficients $\{c_{v, \beta}\}_{\substack{v \in \mcH \\ \beta \in \F^*}}$ specifying a polynomial $\Psi_t = \frac{k^2\abs{\mcU^{(t)}}}{\abs{\mcH}^2\abs{\F}} \sum_{u \in \mcU^{(t)}} \sum_{\beta \in \F^*} \sum_{v, v' \in \mcH_u^{(t)}} \omega_p^{\Tr(\beta b_{v})} \cdot \omega_p^{-\Tr(\beta b_{v'})} \cdot \overline{\chi_{\beta v}} \cdot \chi_{\beta v'}$ along with the integer $\abs{\mcH}$.\\

\textbf{Output:} $\algval(\Psi_t) \in [0,1]$ with guarantee $\algval(\Psi_t) \geq \max_{x \in \F^n} \abs{\Psi_t(x)}$.\\

\textbf{Algorithm:}
\begin{enumerate}
    \item Construct the $N \times N$ Kikuchi matrix (\cref{def:oddkikuchimatrix}) $A^{(t)}$ from the input.
    \item Utilizing the algorithm from \cref{lem:edgedeletion}, construct an adjacency submatrix $\hat{A}^{(t)}$.
    \item Compute $\Gamma$ for $\hat{A}^{(t)}$ and compute $\norm{\Gamma^{-1/2}\hat{A}^{(t)} \Gamma^{-1/2}}_2$.
    \item Output $\frac{k^2\abs{\mcH^{(t)}}\abs{\mcU^{(t)}}}{\abs{\mcH}^2\abs{\F}} + \frac{2k^2\abs{\mcU^{(t)}}\abs{\F^*}}{\abs{\mcH}^2\abs{\F}} \sum_{u \in \mcU^{(t)}} {\abs{\mcH^{(t)}_u} \choose 2} \cdot \norm{\Gamma^{-1/2}\hat{A}^{(t)} \Gamma^{-1/2}}_2$.
\end{enumerate}

\end{tcolorbox}

\end{proof}

We finish this section by proving \cref{lem:maincountbackwalks} and \cref{lem:edgedeletion}, which proceeds via a more complicated version of the trace moment method analysis in the proof of \cref{lem:countingbacktrackingwalks}.

\begin{proof}[Proof of \cref{lem:maincountbackwalks}]

By assumption we have $b_v \sim \F$ independently and may treat our Kikuchi matrix $A = \sum_{u \in \mcU} \sum_{\beta \in \F^*} \sum_{v \neq v' \in \mcH_u} c_{v, \beta} \cdot \overline{c_{v', \beta}} \cdot A^u_{v, v', \beta}$ as a random matrix with respect to $c_{v,\beta}$ and $c_{v', \beta}$. As in \cref{lem:countingbacktrackingwalks}, we use the trace power method $\norm{\Gamma^{-1/2} A \Gamma^{-1/2}}_2 \leq \tr((\Gamma^{-1}A)^{2t})^{1/2t}$ for $t = \log N$. We compute
\begin{align*}
    \E\left[\tr\left(\left(\Gamma^{-1} A\right)^{2t}\right)\right] 
    &= \E\left[\tr\left(\left(\Gamma^{-1} \sum_{u \in \mcU} \sum_{\beta \in \F^*} \sum_{v \neq v' \in \mcH_u} c_{v, \beta} \cdot \overline{c_{v', \beta}} \cdot A^u_{v, v', \beta}\right)^{2t}\right) \right]\\
    &= \E\left[\tr\left(\sum_{\substack{(v_1, v_1', \beta_1) ,..., (v_{2t}, v_{2t}', \beta_{2t}) \\ \in \cbra{\mcH_u \times \mcH_u}_{u \in \mcU} \times \F^*}} \prod_{i = 1}^{2t} \Gamma^{-1} \cdot c_{v_i, \beta_i} \cdot \overline{c_{v_i', \beta_i}} \cdot A^{u_i}_{v_i, v_i', \beta_i} \right) \right]\\
    &= \sum_{\substack{(v_1, v_1', \beta_1) ,..., (v_{2t}, v_{2t}', \beta_{2t}) \\ \in \cbra{\mcH_u \times \mcH_u}_{u \in \mcU} \times \F^*}} \E\left[\tr\left(\prod_{i = 1}^{2t} \Gamma^{-1} \cdot c_{v_i, \beta_i} \cdot \overline{c_{v_i', \beta_i}} \cdot A^{u_i}_{v_i, v_i', \beta_i} \right) \right]\\
    &= \sum_{\substack{(v_1, v_1', \beta_1) ,..., (v_{2t}, v_{2t}', \beta_{2t}) \\ \in \cbra{\mcH_u \times \mcH_u}_{u \in \mcU} \times \F^*}} \E\left[\prod_{i = 1}^{2t}c_{v_i, \beta_i} \cdot \overline{c_{v_i', \beta_i}}\right] \cdot \tr\left(\prod_{i = 1}^{2t} \Gamma^{-1} A^{u_i}_{v_i, v_i', \beta_i} \right)\mper
\end{align*}

We now define a notion of trivially closed sequence analogous to that in the proof of \cref{lem:countingbacktrackingwalks}. Let $(v_1, v_1', \beta_1) ,..., (v_{2t}, v'_{2t}, \beta_{2t})$ be a term in the above sum. Fix $v \in \mcH$, and let $R^+(v)$ denote the set of $i \in [2t]$ such that either $v_i = v$ and $R^-(v)$ those that $v_i' = v$. We observe that if for some $v \in \mcH$, $\sum_{i \in R^+(v)} \beta_i - \sum_{i \in R^-(v)} \beta_i \ne 0$, then $\E\left[\prod_{i=1}^{2t} c_{v_i, \beta_i} \right] = 0$. Indeed, this is because $b_v$ is independent for each $v \in \mcH$, and so $\E\left[\prod_{i=1}^{2t} c_{v_i, \beta_i} \right]  = \prod_{v \in \mcH} \E\left[\prod_{i \in R^+(v)} c_{v, \beta_i} \prod_{i \in R^-(v)} \overline{c_{v, \beta_i}}\right]$, and 
    \begin{equation*}
    \E\left[\prod_{i \in R^+(v)} c_{v, \beta_i} \prod_{i \in R^-(v)} \overline{c_{v, \beta_i}}\right] = \E\left[\prod_{i \in R^+(v)} \omega_p^{\Tr(\beta_i b_v)} \prod_{i \in R^-(v)} \omega_p^{-\Tr(\beta_i b_v)}\right] = \E\left[\omega_p^{\Tr((\sum_{i \in R^+(v)}\beta_i -\sum_{i \in R^-(v)} \beta_i) b_v)}\right] \mper
    \end{equation*}
    Then, since $b_v$ is uniform from $\F$, it follows that $\E\left[\omega_p^{\Tr((\sum_{i \in R^+(v)}\beta_i -\sum_{i \in R^-(v)} \beta_i) b_v)}\right] = 0$ if $\sum_{i \in R^+(v)}\beta_i -\sum_{i \in R^-(v)} \beta_i \ne 0$, and $\E\left[\omega_p^{\Tr((\sum_{i \in R^+(v)}\beta_i -\sum_{i \in R^-(v)} \beta_i) b_v)}\right] = 1$ if $\sum_{i \in R^+(v)}\beta_i -\sum_{i \in R^-(v)} \beta_i = 0$. This motivates the following definition.
\begin{definition}[Trivially closed sequence]
    \label{def:gentriviallyclosedwalks}
    Let $(v_1, v_1', \beta_1) ,..., (v_{2t}, v_{2t}',  \beta_{2t}) \in \mcH \times \mcH \times \F^*$. We say that $(v_1, v_1', \beta_1) ,..., (v_{2t}, v'_{2t}, \beta_{2t}) \in \mcH \times \mcH \times \F^*$ is trivially closed with respect to $v$ if it holds that $\sum_{i \in R^+(v)} \beta_i - \sum_{i \in R^-(v)} \beta_i = 0$. We say that the sequence is trivially closed if it is trivially closed with respect to all $v \in \mcH$.
\end{definition}
    With the above definition in hand, we have shown that
\begin{flalign*}
  &  \E\left[\tr\left(\left(\Gamma^{-1} A\right)^{2t}\right)\right] 
    = \sum_{\substack{(v_1, v_1', \beta_1) ,..., (v_{2t}, v_{2t}', \beta_{2t}) \\ \in \cbra{\mcH_u \times \mcH_u}_{u \in \mcU} \times \F^* \\ \text{trivially closed}}} \tr\left(\prod_{i = 1}^{2t} \Gamma^{-1} A^{u_i}_{v_i, v_i', \beta_i} \right)\mper
\end{flalign*}

\begin{remark}
    In the general group case, the above $0$-$1$ characterization of the walks extends since the instances we are refuting have $b_v$ drawn randomly from the corresponding representative group, which still corresponds to a symmetric set of roots of unity.
\end{remark}

The following lemma yields the desired bound.
\begin{lemma}
    \label{lem:genmaincountingbacktrackingwalks}
    $\sum_{\substack{(v_1, v_1', \beta_1) ,..., (v_{2t}, v_{2t}', \beta_{2t}) \\ \in \cbra{\mcH_u \times \mcH_u}_{u \in \mcU} \times \F^* \\ \text{trivially closed}}} \tr\left(\prod_{i = 1}^{2t} \Gamma^{-1} A^{u_i}_{v_i, v_i', \beta_i} \right) \leq N \cdot 2^{2t} \cdot \left(\frac{2t}{d}\right)^t$.
\end{lemma}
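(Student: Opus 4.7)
The plan is to follow the template of the even-arity proof of \cref{lem:maincountingbacktrackingwalks}: unfold the trace as a weighted sum of closed walks in the Kikuchi graph and bound the contribution of each trivially closed walk through an injective encoding. Concretely, expanding the matrix product gives
\begin{equation*}
\tr\Paren{\prod_{i=1}^{2t}\Gamma^{-1}A^{u_i}_{v_i,v_i',\beta_i}}
= \sum_{U_0,\dots,U_{2t-1}} \prod_{i=0}^{2t-1}\Gamma^{-1}_{U_i,U_i}\prod_{i=1}^{2t}\1\Paren{U_{i-1}\xrightarrow{v_i,v_i',\beta_i}U_i},
\end{equation*}
with the convention $U_{2t}=U_0$, so the lemma's left-hand side equals the total weight of closed walks in the odd-arity Kikuchi graph whose edge-label sequences are trivially closed.

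The plan is to encode each such walk by (i) its start vertex $U_0$ (contributing the factor $N$), (ii) a per-step bit $z_i\in\cbra{0,1}$ (contributing the $2^{2t}$ pattern factor), and (iii) a per-step payload. For a ``new'' step ($z_i=0$) the payload is an edge-index in $\sbra{\deg(U_{i-1})}$, which together with the bound $\Gamma^{-1}_{U_{i-1}}\le 1/\deg(U_{i-1})$ contributes cost exactly $1$ per step. For an ``old'' step ($z_i=1$) the payload is a single pointer $j<i$ (at most $2t$ options) identifying the earlier step whose equation is reused as $v_i$; paired with $\Gamma^{-1}_{U_{i-1}}\le 1/d$ this contributes cost $2t/d$ per old step. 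Crucially, the analogue of \cref{rem:nontrivialkikuchi} for the odd-arity Kikuchi matrix---that given $U_{i-1}$ and either of the two equations of a step, the scalar $\beta_i\in\F^*$ is uniquely determined---prevents an extra $|\F^*|$ factor from creeping into the old-step cost, since the pointer together with $U_{i-1}$ already pins down $(v_i,\beta_i)$; the companion $v_i'$ is recovered through the bipartite structure (it lies in the same partition $\mcH_{u_i}$ as $v_i$) together with the walk's full-closure constraint.

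A pigeonhole argument then shows every trivially closed walk admits at least $t$ ``old'' markings. The closed-walk condition $U_{2t}=U_0$, applied to the first Kikuchi coordinate, forces $\sum_{i=1}^{2t}\beta_i v_i=0$ in $\F^n$; under linear independence of distinct equations (ensured generically by the regularity of the decomposition from \cref{lem:decompositionalg}), each distinct $v_i$ appearing must appear at least twice, so the number of distinct $v_i$'s is at most $t$ and at least $t$ steps satisfy $v_i=v_j$ for some $j<i$. Combining: for each pattern $z\in\cbra{0,1}^{2t}$ with $r\ge t$ ones the total weight is $\le N\cdot(2t/d)^r\le N\cdot(2t/d)^t$ in the regime $2t\le d$, and summing over the $2^{2t}$ patterns yields the claimed $N\cdot 2^{2t}\cdot(2t/d)^t$.

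The main obstacle is step (iii) of the encoding: in the odd-arity case each step carries a \emph{pair} $(v_i,v_i')$ rather than a single equation, so encoding an old step with only $2t$ options (rather than $(2t)^2$ options from specifying both sides, or $2t\cdot\eta$ options from a local-degree blowup on the partner equation) is delicate. The argument must carefully exploit the decomposition's structure---the sharedness of $u_i$ between $v_i$ and $v_i'$, the two-sided uniqueness of $\beta_i$, and the walk-level constraints from full closure---so that reusing a single previous equation is sufficient to recover the full triple $(v_i,v_i',\beta_i)$ together with the Kikuchi step. Once this encoding is in place, the rest of the argument is a direct transcription of the even-arity analysis.
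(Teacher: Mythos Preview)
Your proposal correctly identifies the overall template and the main difficulty, but the encoding of ``old'' steps has a genuine gap that you yourself flag and do not actually resolve.

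In the paper's proof, an old step is one where \emph{either} of $v_i,v'_i$ matches \emph{either} of $v_j,v'_j$ for some $j<i$ (four cases, encoded by $c\in[4]$). Given the pointer $j$ and the case $c$, one of $v_i,v'_i$ is pinned down, and together with $U_{i-1}$ this determines $\beta_i$ uniquely (the odd-arity analogue of \cref{rem:nontrivialkikuchi}). But the \emph{partner} equation---say $v'_i$ when $v_i$ is the repeat---is \emph{not} determined: there may be many $v''\in\mcH_{u_i}$ for which an edge $U_{i-1}\xrightarrow{(v_i)_{u_i},v''_{u_i},\beta_i}V$ exists. The number of such $v''$ is precisely the local degree $d_{U_{i-1},v_i,0}$, and the paper bounds it by $\eta$ via the $\eta$-bounded local degree hypothesis of \cref{lem:maincountbackwalks}. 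Thus the per-old-step cost in the paper is $8\eta t/d$, and the bound actually established in the proof is $N\cdot 2^{2t}\cdot(8\eta t/d)^t$, consistent with the $\sqrt{\eta}$ appearing in \cref{lem:maincountbackwalks}. (The statement of \cref{lem:genmaincountingbacktrackingwalks} as written omits the $\eta$ and the constant; this is a typo in the paper, and your attempt to match it literally is what sends you astray.)

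Your proposed workaround---recovering $v'_i$ from ``the bipartite structure together with the walk's full-closure constraint''---does not work. The bipartite structure only tells you $v'_i\in\mcH_{u_i}$, which can be up to $\tau_t$ options, and global closure constraints on the full walk cannot be invoked step-by-step during a prefix-decodable encoding without destroying injectivity. Your appeal to ``linear independence of distinct equations ensured by regularity'' is also misplaced: the regularity of \cref{lem:decompositionalg} controls local overlap counts, not linear independence, and in any case the $r\ge t$ bound follows directly from the trivially-closed condition (\cref{def:gentriviallyclosedwalks}) via the pigeonhole you sketch, with no independence assumption needed.

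The correct fix is not to sharpen the encoding but to accept the $\eta$ factor and control it externally: this is exactly why the paper introduces the edge-deletion step (\cref{lem:edgedeletion}) to pass to a subgraph with $\eta=O_k(\eps^{-2})$ before invoking the trace bound.
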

Taking $t$ to be $c \log_2 N$ for a sufficiently large constant $c$ and applying Markov's inequality finishes the proof.
\end{proof}

\begin{proof}[Proof of \cref{lem:genmaincountingbacktrackingwalks}]
    We bound the sum as follows. First, we observe that for a trivially closed sequence $(v_1, v'_1, \beta_1) ,..., (v_{2t}, v'_{2t}, \beta_{2t})$, we have
    \begin{flalign*}
    \sum_{\substack{(v_1, v_1', \beta_1) ,..., (v_{2t}, v_{2t}', \beta_{2t}) \\ \in \cbra{\mcH_u \times \mcH_u}_{u \in \mcU} \times \F^* \\ \text{trivially closed}}} \tr\left(\prod_{i = 1}^{2t} \Gamma^{-1} A^{u_i}_{v_i, v_i', \beta_i} \right) = \sum_{U_0, U_1, \dots, U_{2t-1}} \prod_{i = 0}^{2t - 1} \Gamma^{-1}_{U_i} \cdot \1\left(U_i\xrightarrow{\text{$u, v_{i+1}, v'_{i+1}, \beta_{i+1}$}} U_{i+1}\right) \mper
    \end{flalign*}
    with $U_0 = U_{2t}$. Thus, the sum that we wish to bound in \cref{lem:genmaincountingbacktrackingwalks} simply counts the total weight of ``trivially closed walks'' $U_0, v_1, v'_1, \beta_1, U_1, \dots, U_{2t-1}, v_{2t}, v'_{2t}, \beta_{2t}, U_{2t}$in the Kikuchi graph $A$, where the weight of a walk is simply $\prod_{i = 0}^{2t-1} \Gamma^{-1}_{U_i}$.
    
    Let us now bound this total weight by uniquely encoding a walk $U_0, v_1, v'_1, \beta_1, U_1, \dots, v_{2t}, v'_{2t}, \beta_{2t}, U_{2t}$ as follows.
    \begin{itemize}
        \item First, we write down the start vertex $U_0$.
        \item For $i = 1, \dots, 2t$, we let $z_i$ be $1$ if (1) $v_i = v_j$, (2) $v_i = v_j'$, (3) $v'_i = v_j$, or (4) $v'_i = v'_j$ for some $j < i$. In this case, we say that the edge is ``old''. Otherwise $z_i = 0$, and we say that the edge is ``new''.
	\item For $i = 1, \dots, 2t$, if $z_i$ is $1$ then we encode $U_{i}$ by writing down the smallest $j \in [2t]$, the least $c \in [4]$ specifying which of the 4 cases above holds, and lastly the ``other'' parallel edge to take. For instance if we have $v_i = v_j$ we specify must $v'_i$ from $\mcH$. Note these parameters uniquely define the edge as there is only one $\beta \in \F^*$ and $V$ such that $U \xrightarrow{v_i, v'_i, \beta} V$.
	\item For $i = 1, \dots, 2t$, if $z_i$ is $0$ then we encode $U_i$ by writing down an integer in $1, \dots, \deg(U_{i-1})$ that specifies the edge we take to move to $U_i$ from $U_{i-1}$ (we associate $[\deg(U_{i-1})]$ to the edges adjacent to $U_{i-1}$ with an arbitrary fixed map).
    \end{itemize}
    With the above encoding, we can now bound the total weight of all trivially closed walks as follows. First, let us consider the total weight of walks for some fixed choice of $z_1, \dots, z_{2t}$. We have $N$ choices for the start vertex $U_0$. For each $i = 1, \dots, 2t$ where $z_i = 0$, we have $\deg(U_{i-1})$ choices for $U_i$, and we multiply by a weight of $\Gamma^{-1}_{U_{i-1}} \leq \frac{1}{\deg(U_{i-1})}$. For each $i = 1, \dots, 2t$ where $z_i = 1$, we have at most $2t$ choices for the index $j < i$ and $4$ for the bit $c$. For the choice of partner, without loss of generality we say $v'_i$, we note that $U_i, v_i$, and $v_i$'s position are fixed. The number of viable $v'_i$ is then just the local degree, and by the assumption of $\eta$-bounded local degree we can limit this to $\eta$ choices. Finally, we multiply by a weight of $\Gamma^{-1}_{U_{i-1}} \leq \frac{1}{d}$. Hence, the total weight for a specific $z_1, \dots, z_{2t}$ is at most $N \left(\frac{8\eta t}{d}\right)^{r}$, where $r$ is the number of $z_i$ such that $z_i = 1$.
    
    Finally, we observe that any trivially closed walk must have $r \geq t$. Hence, after summing over all $z_1, \dots, z_{2t}$, we have the final bound of $N 2^{2t} \left(\frac{8\eta t}{d}\right)^{t}$, which finishes the proof.
\end{proof}

\subsection{Edge deletion algorithm}

Consider the following greedy algorithm.

   \begin{tcolorbox}[
    width=\textwidth,   
    colframe=black,  
    colback=white,   
    title=Edge Deletion Algorithm,
    colbacktitle=white, 
    coltitle=black,      
    fonttitle=\bfseries,
    center title,   
    enhanced,       
    frame hidden,           
    borderline={1pt}{0pt}{black},
    sharp corners,
    toptitle=2.5mm
]
\textbf{Input:} A Kikuchi graph $K_\ell$ specified by $\mcH$.\\

\textbf{Output:} A subgraph $\hat{K}_\ell$ of $K_\ell$ of $\eta$-bounded local degree and $d(\hat{K}_\ell) \geq \frac{1}{2}d(K_\ell)$.\\

\textbf{Algorithm:}
\begin{enumerate}
    \item While there is a vertex $U$ and $v \in \mcH$ s.t. $d_{U,v,0}$ or $d_{U, v, 1} > \eta$, delete an arbitrary edge from $U$ that involves $v$.
    \item Let $\rho$ be the largest fraction of edges deleted for any $v \in \mcH$ and $\beta \in \F^*$. Delete edges arbitrarily until a $\rho$ fraction has been deleted for all $v \in \mcH$ and $\beta \in \F^*$.
    \item Output the resultant graph.
\end{enumerate}

\end{tcolorbox}

The algorithm straightforwardly deletes edges from vertices with high local degree until none remain. Thereafter, it deletes $(v, v', \beta)$-labelled edges until there are an equal number of each type. Therefore, we may define the fraction of edges deleted to be exactly the maximum fraction of edges deleted of any fixed type $(v, v', \beta)$ after Step 1. We show the following guarantee on such a quantity.

\begin{lemma}
    \label{lem:edgedeletionanalysis}
    Let $K_\ell$ be a Kikuchi graph for $\mcH^{(t)} = \cbra{\mcH^{(t)}_u}_{u \in \mcU^{(t)}}$ with $1 \leq t \leq k$ and $\mcH^{(t)}_u$ $(k-t)$-sparse vectors and $\mcU^{(t)}$ $t$-sparse vectors from $\F^n$ and satisfying the output criteria in \cref{lem:decompositionalg}. Fix $(v, v' , \beta)$. Let $E_{(v, v' , \beta)}$ be the set of edges in $K_\ell$ associated with this triple. In the second step of the algorithm above, we have the following guarantee for some constant $D > 0$
    \begin{equation*}
        \Pr_{(U, V) \sim E_{(v, v' , \beta)}}[(U, V) \text{ deleted}] \leq \frac{D^k}{\eta} \sum_{s = 0}^{k-t} \tau_{s+t} \min\left(1, \left(\frac{\ell}{\abs{\F^*}n}\right)^{\lfloor \frac{k-t}{2}\rfloor - s}\right)\mper
    \end{equation*}
\end{lemma}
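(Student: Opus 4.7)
The plan is to bound the number of type-$(v, v', \beta)$ edges deleted during Step~1 of the algorithm; Step~2 only redistributes a uniform deletion fraction across all triples and hence inherits the same bound. The key observation is that Step~1 deletes an edge incident to a vertex only while the current local degree exceeds $\eta$, so by monotonicity an edge $(U, V)$ of type $(v, v', \beta)$ can be deleted only if at least one of the four \emph{initial} local degrees $d_{U, v, 0}$, $d_{U, v', 1}$, $d_{V, v, 0}$, $d_{V, v', 1}$ exceeds $\eta$. Markov's inequality then reduces the task to bounding $\frac{1}{\Delta \eta} \sum_{(U, V) \in E_{(v, v', \beta)}} d_{U, v, 0}$, together with three symmetric sums (all handled identically).

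This sum counts pairs $(U, v'')$ such that $U^{(1)}$ matches $v_u$ in the sense required by \cref{def:oddkikuchimatrix} and $U^{(2)}$ simultaneously satisfies the fitting conditions for both $v'_u$ and $v''_u$. I would decompose by $s$, the size of the agreement (in both support and value) between $v'$ and $v''$ that falls on the ``$U^{(2)}$-side'' of their joint fit. For $s \geq 1$, such an agreement specifies a weight-$(t + s)$ vector $w$ with $u \sqsubseteq w$ and $w \sqsubseteq v', v''$, so the $(\varepsilon / 2k, \ell)$-regularity property established by \cref{lem:decompositionalg} bounds the number of such $v'' \in \mcH^{(t)}_u$ by a constant multiple of $\tau_{t+s}$; for $s = 0$, we use the trivial $|\mcH^{(t)}_u| \leq \tau_t$.

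For fixed $v, v', v''$ with $s$-agreement, the number of compatible $U$'s factors into a $U^{(1)}$-count (which matches the corresponding factor in $\Delta$ and therefore cancels after normalization) and a $U^{(2)}$-count whose ratio with its $\Delta$-counterpart I would estimate using \cref{fact:binomest}. Because $v'_u$ and $v''_u$ share commitments on the $s$ agreement indices, the joint constraint on $U^{(2)}$ leaves only $\lfloor (k-t)/2 \rfloor - s$ ``free'' coordinates when $s \leq \lfloor (k-t)/2 \rfloor$, producing a ratio of $\Theta\!\left((\ell / |\F^*| n)^{\lfloor (k-t)/2 \rfloor - s}\right)$; once $s$ exceeds $\lfloor (k-t)/2 \rfloor$ the agreement already covers the full $U^{(2)}$-fit and the ratio saturates at $O(1)$, yielding the $\min$ in the statement. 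Summing over $s \in \{0, 1, \ldots, k-t\}$ and absorbing $\binom{k-t}{s}$ and other $O(1)^k$ combinatorial constants into $D^k$ produces the claimed bound.

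The main obstacle will be the careful bookkeeping of the parity-dependent ceiling/floor asymmetry in Condition~3 of \cref{def:oddkikuchimatrix}: the agreement between $v'_u$ and $v''_u$ only reduces the $U^{(2)}$-count to the extent that it lies on the ``$U^{(2)}$-side'' rather than the ``$V^{(2)}$-side'', so the decomposition by $s$ must respect this splitting and correctly account for the $\lfloor (k-t)/2 \rfloor$ versus $\lceil (k-t)/2 \rceil$ distinction. A related subtlety is that the regularity bound $\tau_{t+s}$ requires $s \geq 1$ and $t + s \leq k$, which is consistent with the summation range but forces the $s = 0$ case to be handled separately through the raw bucket-size bound $|\mcH^{(t)}_u| \leq \tau_t$.
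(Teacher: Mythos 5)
Your proposal matches the paper's proof in all essential respects: bound deletion probability via a Markov argument on the local degree, decompose by the size $s$ of the agreement between $v'$ and $v''$, control the number of $v''$ with a given agreement set using the $(\varepsilon/2k,\ell)$-regularity from \cref{lem:decompositionalg} (with the $s=0$ case handled separately via $|\mcH^{(t)}_u|\le\tau_t$), and estimate the conditional probability of a cross-edge via the binomial ratios of \cref{fact:binomest}, producing the $\min\bigl(1,(\ell/|\F^*|n)^{\lfloor(k-t)/2\rfloor-s}\bigr)$ factor. The paper packages the last step as a separate claim (\cref{fact:crossterms}) and decomposes by the \emph{total} agreement $|v'\perp v''|$ rather than only its $U^{(2)}$-side portion, then takes the worst case ("$U^{(2)}$ must already match, in the worst case"), which is equivalent up to the $D^k$ constant. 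Your remark that both endpoints' local degrees should in principle be tracked (four quantities rather than the paper's two) is a legitimate tightening of the argument, but the difference is again a constant factor absorbed into $D^k$. Overall this is the same proof.
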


\begin{remark}
    Note that we denote the fraction of deleted of a certain type by $\Pr_{(U, V) \sim E_{(v, v' , \beta)}}[(U, V) \text{ deleted}]$ as we think of the process as sampling an edge and checking whether it is deleted. The randomness is over this hypothetical sampling and not the algorithm itself, which is fully deterministic.
\end{remark}

\begin{proof}[Proof of \cref{lem:edgedeletion} from \cref{lem:edgedeletionanalysis}]
The lemma shows that the algorithm can output an $\eta$-bounded local degree subgraph of $K_\ell$ with $\rho \leq \frac{D^k}{\eta} \sum_{s = 0}^{k-t} \tau_{s+t} \min\left(1, (\frac{\ell}{\abs{\F^*}n})^{\lfloor \frac{k-t}{2}\rfloor - s}\right)$. We show setting $\eta = 8D^k k^3 \varepsilon^{-2}$ achieves $\rho \leq \frac{1}{2}$. In the case $s + t \geq \frac{k+1}{2}$, we can note that $\tau_{s+t} = 4k^2 \varepsilon^{-2}$ and the bound is immediate via
\begin{equation*}
    \rho \leq \frac{D^k}{\eta} \sum_{s = 0}^{k-t} \tau_{s+t} \min\left(1, \left(\frac{\ell}{\abs{\F^*}n}\right)^{\lfloor \frac{k-t}{2}\rfloor - s}\right) \leq \frac{1}{8k^3\varepsilon^{-2}} \cdot  (k-t)\tau_{s+t} \leq \frac{1}{8k^2\varepsilon^{-2}} \cdot  4k^2\varepsilon^{-2} \leq \frac{1}{2}\mper
\end{equation*}

In the case $s + t \leq \frac{k-1}{2}$ we have
\begin{flalign*}
    \rho &\leq \frac{D^k}{\eta} \sum_{s = 0}^{k-t} \tau_{s+t} \min\left(1, \left(\frac{\ell}{\abs{\F^*}n}\right)^{\lfloor \frac{k-t}{2}\rfloor - s}\right) \leq \frac{D^k}{\eta} \sum_{s = 0}^{k-t} 4k^2\varepsilon^{-2}\left(\frac{n \abs{\F^*}}{\ell}\right)^{k/2-s-t} \left(\frac{\ell}{\abs{\F^*}n}\right)^{\lfloor \frac{k-t}{2}\rfloor - s}\\
    &\leq \frac{D^k}{\eta} \sum_{s = 0}^{k-t} 4k^2\varepsilon^{-2}\left(\frac{\ell}{\abs{\F^*}n}\right)^{\frac{t}{2} - \frac{\1(k-t\text{ is odd})}{2}}\leq \frac{1}{2}\left(\frac{\ell}{\abs{\F^*}n}\right)^{\frac{t}{2} - \frac{\1(k-t\text{ is odd})}{2}}\mper
\end{flalign*}

Since $t \geq 1$, we get $\frac{1}{2}$ as our desired bound.
\end{proof}

\begin{proof}[Proof of \cref{lem:edgedeletionanalysis}]
Fix $U$ a vertex in $K_\ell$ and $v \in \mcH^{(t)}_u$ for $u \in \mcU^{(t)}$ and $b \in \{0,1\}$. We will be crude and assume \textit{all} edges with $d_{U, v, b} > \eta$ for any $v \in \mcH$ are deleted. We then think of this process as removing edges involving $v$ from all the vertices that have high $v$-local degree (but only these vertices). Observe
\begin{align*}
    \Pr_{(U, V) \sim E_{(v, v' , \beta)}}[(U, V) \text{ deleted}]
    &\leq \Pr_{(U, V) \sim E_{(v, v' , \beta)}}[d_{U, v, 0} > \eta \cup d_{U, v', 1} > \eta]\\
    &\leq 2 \Pr_{(U, V) \sim E_{(v, v' , \beta)}}[d_{U, v, 0} > \eta]\\
    &\leq \frac{2}{\eta}\E_{(U,V) \sim E_{(v, v' , \beta)}}[d_{U, v, 0}-1] \tag{Markov's Ineq.}\\
    &\leq \frac{2}{\eta}\sum_{v'' \neq v, v' \in \mcH^{(t)}_u} \Pr_{(U,V) \sim E_{(v, v' , \beta)}}[\exists V' \text{ s.t. } U  \xrightarrow{\text{$v, v'', \beta$}} V']\mper
\end{align*}

We can think of this process like this: after fixing $(v, v', \beta)$ we sample an edge $(U, V)$. We are now interested in the probability that the edge we sampled is incident to an edge $(U, V')$ involving $v$. The main fact we establish is a bound on the probability such an edge exists given uniform sampling.
\begin{lemma}
    \label{fact:crossterms}
    Fix $v, v', v'' \in \mcH_u^{(t)}$.
    \begin{equation*}
    \Pr_{(U,V) \sim E_{(v, v' , \beta)}}[\exists V' \text{ s.t. } U  \xrightarrow{\text{$v, v'', \beta$}} V'] \leq {k-t \choose \lfloor \frac{k-t}{2} \rfloor}\min\left(1, \left(\frac{\ell}{\abs{\F^*}n}\right)^{\lfloor \frac{k-t}{2}\rfloor - \abs{v' \perp v''}+t}\right)\mcom
    \end{equation*}
    where we define $v \perp v' \subseteq [n]$ as $\supp(v) \cap \supp(v') \setminus \supp(v + v')$.
\end{lemma}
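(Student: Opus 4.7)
The plan is to compute the probability by directly counting the vertices $U$ that are compatible with an edge of type $(v, v', \beta)$ and are also the source of some edge of type $(v, v'', \beta)$. Since $V$ is uniquely determined by $U$ and the label $(v, v', \beta)$, the set $E_{(v, v', \beta)}$ has size exactly $\Delta$, and it suffices to bound the fraction of such $U$'s that also admit a $V'$ with $U \xrightarrow{v, v'', \beta} V'$. The first-coordinate condition $U^{(1)} \xrightarrow{v_u, \beta} \cdot$ is shared by both edges (they both carry the label $v$ on the first side), so the analysis reduces to the second coordinate: given $U^{(2)}$ uniformly random among those compatible with $(v'_u, -\beta)$, bound the probability it is also compatible with $(v''_u, -\beta)$.

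Compatibility with $(v'_u, -\beta)$ forces $S' \defeq \supp(U^{(2)}) \cap \supp(v'_u)$ to be a subset of specified size in $\{\lfloor(k-t)/2\rfloor, \lceil(k-t)/2\rceil\}$ with $U^{(2)}_i = -\beta(v'_u)_i$ on $S'$ and $U^{(2)}_i = 0$ on $\supp(v'_u) \setminus S'$; the analogous constraints apply to $S'' \defeq \supp(U^{(2)}) \cap \supp(v''_u)$. The key point is that on $i \in S' \cap S''$ we simultaneously require $U^{(2)}_i = -\beta(v'_u)_i = -\beta(v''_u)_i$, so $S' \cap S''$ is restricted to the index set controlled by $v' \perp v''$ after accounting for the shared support $u$ that contributes the $+t$ correction. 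I would then count the joint configurations by parameterizing over $c \defeq \abs{S' \cap S''}$ and enumerating the binomial choices of the intersection patterns of $S'$ inside $\supp(v'_u)$ and of $S''$ inside $\supp(v''_u)$. The positions in $S'' \setminus S'$ lie in $\supp(v''_u) \setminus \supp(v'_u)$, and each such forced support position of $U^{(2)}$ reduces the count of available free slots by one, contributing a factor $\Theta(\ell/n)$ via \cref{fact:binomest} to the ratio relative to $\Delta$, and pins a value contributing $1/\abs{\F^*}$. Summing over $c$, the dominant term corresponds to $c$ maximized (minimizing $\abs{S'' \setminus S'}$), which yields the claimed exponent $\lfloor(k-t)/2\rfloor - \abs{v' \perp v''} + t$ together with the leading $\binom{k-t}{\lfloor(k-t)/2\rfloor}$ coming from the binomial intersection choices. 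The $\min$ with $1$ simply handles the degenerate regime where the exponent is non-positive and the bound is vacuous.

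The main obstacle is the careful combinatorial case analysis that tracks $\abs{S'}, \abs{S''}, \abs{S' \cap S''}$ and their interaction with $\supp(v'_u), \supp(v''_u)$ simultaneously, while respecting the $\lfloor \cdot \rfloor / \lceil \cdot \rceil$ asymmetry when $k-t$ is odd and correctly translating the agreement structure between the $k$-sparse vectors $v', v''$ and their $u$-subtracted counterparts $v'_u, v''_u$, so that the $+t$ shift in the exponent emerges naturally from the contribution of $\supp(u)$ to the overlap structure. A secondary technical subtlety is that the binomial ratios from \cref{fact:binomest} apply cleanly only in the regime $\ell \leq n/2$, which is already the range we have restricted to; outside of it the $\min$ with $1$ takes over and the statement is trivial.
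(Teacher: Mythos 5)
Your proposal follows essentially the same route as the paper's proof: fix the $2(k-t)$ support indices forced by the $(v,v',\beta)$ edge, treat the remaining $\ell - (k-t)$ positions of $(U^{(1)},U^{(2)})$ as free indices sampled with random values, observe that positions where $v'_u$ and $v''_u$ already agree (the $v' \perp v''$ positions, accounting for the $+t$ shift from the shared part $u$) come for free, and pay a $\Theta(\ell/n) \cdot 1/\abs{\F^*}$ factor for each additional forced position. The only cosmetic difference is that you parameterize by the intersection size $c = \abs{S' \cap S''}$ and sum, whereas the paper fixes the target intersection set $T \subseteq \supp(v''_u)$ of the prescribed size and union-bounds over the at most $\binom{k-t}{\lfloor (k-t)/2 \rfloor}$ choices of $T$; both yield the same exponent and the same leading binomial coefficient.
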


Using \cref{fact:crossterms} we continue above by writing
\begin{align*}
    \Pr_{(U, V) \sim E_{(v, v' , \beta)}}[(U, V) \text{ deleted}] 
    &\leq \frac{2}{\eta}\sum_{v'' \neq v, v' \in \mcH^{(t)}_u} {k-t \choose \lfloor \frac{k-t}{2} \rfloor}\min\left(1, \left(\frac{\ell}{\abs{\F^*}n}\right)^{\lfloor \frac{k-t}{2}\rfloor - \abs{v' \perp v''}+t}\right)\\
    &\leq \frac{2}{\eta} \cdot {k-t \choose \lfloor \frac{k-t}{2} \rfloor} \sum_{s = 0}^{k-t} \sum_{\substack{v'' \neq v, v' \in \mcH^{(t)}_u \\ I = v' \perp v'' \\ \abs{I} = s+t}} \min\left(1, \left(\frac{\ell}{\abs{\F^*}n}\right)^{\lfloor \frac{k-t}{2}\rfloor - s}\right)\mper
\end{align*}

In the second line, we are essentially partitioning all vectors $v''$ based on how they intersect with $v'$. Note that since $v, v' \in \mcH^{(t)}_u$, it is guaranteed $\abs{v \perp v'} \geq t$, since they intersect on $u$. We then make the observation that for a fixed $I$ with $\abs{I} = s+t$ for $s > 0$, the maximum number of $v'' \in \mcH^{(t)}_u$ with $I = v' \perp v''$ is $\tau_{s+t}$. This follows straightforwardly from the $(\frac{\varepsilon}{2k}, \ell)$-regularity condition guaranteed in \cref{item:decompositionalg5} of \cref{lem:decompositionalg}. When $\abs{I} = t$, this bound follows more immediately from the fact that $\abs{\mcH^{(t)}_u} \leq \tau_t$. This allows us to bound the number of terms in the inner sum above nicely as
\begin{align*}
    \Pr_{(U, V) \sim E_{(v, v' , \beta)}}[(U, V) \text{ deleted}]
    &\leq \frac{2}{\eta} \cdot {k-t \choose \lfloor \frac{k-t}{2} \rfloor} \sum_{s = 0}^{k-t} \sum_{\abs{I} = s+t, I \subseteq \supp(v')} \tau_{s + t} \min\left(1, \left(\frac{\ell}{\abs{\F^*}n}\right)^{\lfloor \frac{k-t}{2}\rfloor - s}\right)\\
    &\leq \frac{2 {k-t \choose \lfloor \frac{k-t}{2} \rfloor} {k \choose k/2}}{\eta} \sum_{s = 0}^{k-t} \tau_{s+t} \min\left(1, \left(\frac{\ell}{\abs{\F^*}n}\right)^{\lfloor \frac{k-t}{2}\rfloor - s}\right)\mper
\end{align*}

Standard binomial estimates give that for some constant $D > 0$ the leading coefficient may be bound by $\frac{D^k}{\eta}$. We finish by proving \cref{fact:crossterms}.
 \end{proof}

\begin{proof}[Proof of \cref{fact:crossterms}]

For arbitrary $(v, v', \beta)$ we let $U$ be any vertex such that $U \xrightarrow{v, v', \beta} V$. We are interested in the probability that there exists $V'$ with $U \xrightarrow{v, v'', \beta} V'$ for a fixed $v''$.

Taking a look at the structure of $(U, V)$ in $E_{v, v', \beta}$ the conditions of $U \xrightarrow{v, v', \beta} V$ guarantee we have $U^{(1)}$ matches $v_u$ on either $\lfloor \frac{k-t}{2} \rfloor$ or $\lceil \frac{k-t}{2} \rceil$ indices and $U^{(2)}$ matches $v'_u$ similarly. Additionally, it guarantees that $U^{(1)}$ and $U^{(2)}$ are $0$ on all remaining indices intersecting $v_u$ and $v'_u$ respectively. In total, this fixes $2(k-t)$ indices in $U$. Since $\wt(U^{(1)})+\wt(U^{(2)}) = \ell$, there remains $\ell-k-t$ ``free'' non-zero indices outside of these. Note that these may be ``placed'' anywhere that is not already fixed, the only condition is that they match in $U$ and $V$ and therefore cancel. We can then view the process of sampling an edge, conditioned on these $2(k-t)$ fixed indices, as simply sampling the free indices from ${[2n-2k-2t] \choose \ell-k-t}$ and random values from $\F^*$ for each one.

In order for $U \xrightarrow{v, v'', \beta} V'$ to occur, we necessarily have that $U^{(2)}$ matches $v''_u$ on at least $\lfloor \frac{k-t}{2} \rfloor$ indices. Any indices in $v'_u \perp v''_u$ we get ``for free'' since $U^{(2)}$ must already match (in the worst case). The $\lfloor \frac{k-t}{2} \rfloor - \abs{v'_u \perp v''_u}$ remaining must be randomly chosen as per our sampling. Note the number of such indices is $\lfloor \frac{k-t}{2} \rfloor - \abs{v' \perp v''} + t$ since $\abs{v' \perp v''} = \abs{v'_u \perp v''_u} +t$. Fix any choice of indices $T$, of which there are at most ${k-t \choose \lfloor \frac{k-t}{2} \rfloor}$ options. We union bound over all possible $T$. The desired condition is met given (1) $T$ is contained in the set of free indices and (2) all values match those in $v''$. The probability of the latter is clearly $\frac{1}{\abs{\F^*}}^{\lfloor \frac{k-t}{2} \rfloor - \abs{v' \perp v''} + t}$. The former can be seen as ${2n-2k-2t-\gamma \choose \ell-k-t-\gamma}/{2n-2k-2t \choose \ell-k-t}$ letting $\gamma = \lfloor \frac{k-t}{2} \rfloor - \abs{v' \perp v''} + t$ and we can bound this as $\left(\frac{\ell}{n}\right)^{\lfloor \frac{k-t}{2} \rfloor - \abs{v' \perp v''} + t}$ given our binomial estimates \cref{fact:binomest}.
\end{proof}

\section{Short Linear Dependencies of $k$-Sparse Vectors: Proof of \cref{thm:feige}}
\label{sec:feige}

In the Boolean case $\F = \F_2$, \cite{GuruswamiKM22} observed that a cycle in the Kikuchi graph can be seen as an even cover of the corresponding hypergraph, which we can define as the following. Given the following observation, we see that a cycle in our generalized Kikuchi graph of \cref{def:kikuchimatrix} corresponds to a linear dependence in the underlying vector set $\mcH$.

\begin{observation}
    \label{obs:kikuchicycles}
    Consider a cycle in a Kikuchi graph over $\F$, $U_1 \xrightarrow{\text{$v_1, \beta_1$}} U_2 \xrightarrow{\text{$v_2, \beta_2$}} ... \xrightarrow{\text{$v_\ell, \beta_\ell$}} U_1$ where for $i \in [\ell]$, $U_i$ is an $\ell$-sparse vector in $\F^n$, $v_i$ is $k$-sparse from a set $\mcH$, and $\beta \in \F^*$. Since $U \xrightarrow{\text{$v, \beta$}} V$ requires $U - V = \beta v$, we observe the following telescoping sum
    \begin{equation*}
        \sum_{i = 1}^\ell \beta_i v_i = \sum_{i=1}^\ell U_i - U_{(i+1) \mod \ell} = 0\mper
    \end{equation*}
\end{observation}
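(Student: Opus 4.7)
The plan is a direct one-line unfolding of the definition of an edge in the Kikuchi graph, followed by recognizing a telescoping cyclic sum. By \cref{def:kikuchimatrix}, the notation $U_i \xrightarrow{v_i, \beta_i} U_{i+1 \bmod \ell}$ means, in particular, that $U_i - U_{(i+1) \bmod \ell} = \beta_i v_i$ as vectors in $\F^n$ (the support condition $\supp(U_i) \oplus \supp(U_{(i+1) \bmod \ell}) = \supp(v_i)$ also holds, but is not needed for this algebraic identity). First I would write down this equality for each $i \in [\ell]$ and sum the $\ell$ resulting equations, obtaining
\begin{equation*}
    \sum_{i=1}^\ell \beta_i v_i \;=\; \sum_{i=1}^\ell \bigl(U_i - U_{(i+1) \bmod \ell}\bigr).
\end{equation*}

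Second, I would observe that the right-hand side is a telescoping sum under cyclic indexing: each vertex $U_j$ appears exactly once with coefficient $+1$ (when $i = j$) and exactly once with coefficient $-1$ (when $i + 1 \equiv j \pmod \ell$), so the sum collapses to $0 \in \F^n$. Combining the two displays gives $\sum_{i=1}^\ell \beta_i v_i = 0$, which is exactly the claim.

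There is no genuine obstacle in this observation; its entire content is the translation between ``cycle in the Kikuchi graph'' and ``linear dependence in $\mcH$'', and the telescoping is immediate. The only thing worth flagging in the write-up is that the $v_i$ need not be distinct elements of $\mcH$ (a cycle may reuse edges), and the $\beta_i$ coefficients are not guaranteed to be individually nonzero \emph{as coefficients of a given vector $v \in \mcH$} after grouping repeats — they are only nonzero as the edge labels $\beta_i \in \F^*$. Thus the identity $\sum_i \beta_i v_i = 0$ is a formal $\F$-linear dependence among multiset occurrences of elements of $\mcH$, which is the sense used when this observation is invoked elsewhere in the paper.
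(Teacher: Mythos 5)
Your proposal is correct and is essentially the paper's own argument: unfold the edge condition $U_i - U_{(i+1)\bmod \ell} = \beta_i v_i$ from \cref{def:kikuchimatrix} and telescope around the cycle. Your closing remark about repeated $v_i$ (multiset occurrences, leading to either a trivially closed walk or a genuine linear dependence after grouping) matches how the paper itself uses this observation.
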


The linear dependence in \cref{obs:kikuchicycles} tells us that the labels of the edges in any Kikuchi graph cycle are either (1) trivially closed as in \cref{def:triviallyclosedwalks} or (2) a linearly dependent subset. To see this, recall that a vector $v \in \mcH$ is trivially closed in the walk $U_1, v_1, \beta_1, ..., v_{\ell-1}, \beta_{\ell-1}, U_1$ if its set of coefficients sums to 0. Then note that if $R(v)$ is the set of indices with $v_i = v$ in a fixed walk, $v$ being trivially closed implies the contribution $\sum_{i \in R(v)} \beta_i v_i = 0$. Any vector being not trivially closed implies that the sum $\sum_{i=1}^\ell \beta_i v_i$ has nontrivial terms that cancel, forming a linear dependence of size $\leq \ell$ in the ambient set $\mcH$.

If a collection of vectors $\mcH$ does not have any small linearly dependent sets, it follows that all short walks on the corresponding Kikuchi graph must be trivially closed. This means that a spectral certificate on the Kikuchi matrix behaves mirroring \cref{lem:countingbacktrackingwalks}, where we only had to count trivially closed walks. This simplification is strong enough to provide a spectral double counting argument proving an upper bound on the maximum number of vectors in such a set, which we give here.

\subsection{Generalized Feige's conjecture (even-arity case)}

\begin{theorem}[Generalized Feige's Conjecture, even-arity]
     Fix $k$ even and $\ell \geq k/2$. Let $\mcH$ be a set of $\abs{\mcH} \geq O(n) \cdot \log\left(\abs{\F^*}n\right) \left(\frac{n\abs{\F^*}}{\ell}\right)^{k/2-1}$ $k$-sparse vectors in $\F^n$. There exists a set $\mcV \subseteq \mathcal{H}$ with $\abs{\mcV} \leq \ell \log (\abs{\F^*} n)$ and coefficients $\cbra{\alpha_v}_{v \in \mcV}$ in $\F^*$ such that
     \begin{equation*}
         \sum_{v \in \mcV} \alpha_v \cdot v = 0\mper
     \end{equation*}
\end{theorem}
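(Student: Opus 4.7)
The plan is to prove the theorem by contradiction, reusing the trace moment estimate from the proof of \cref{lem:countingbacktrackingwalks}. Assume for contradiction that $\mcH$ contains no linearly dependent subset of size at most $s := \ell \log(\abs{\F^*}n)$; set $t := \lceil s/2 \rceil$, form the level-$\ell$ Kikuchi matrix $A$ from \cref{def:kikuchimatrix} with all coefficients $c_{v,\beta} = 1$, and let $\tilde A := \Gamma^{-1/2} A \Gamma^{-1/2}$ with $\Gamma = D + dI$ as in \cref{lem:countingbacktrackingwalks}.

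By \cref{obs:kikuchicycles}, every closed walk $U_0, v_1, \beta_1, U_1, \ldots, v_{2t}, \beta_{2t}, U_0$ of length $2t$ in the Kikuchi graph satisfies $\sum_{i=1}^{2t}\beta_i v_i = 0$. Collecting terms by $v \in \mcH$ gives $\sum_v \alpha_v v = 0$ with $\alpha_v := \sum_{i \in R(v)}\beta_i$, and if some $\alpha_v \ne 0$, then the set $\{v : \alpha_v \neq 0\}$ is a linear dependence on at most $2t \leq s$ vectors of $\mcH$, contradicting the assumption. Hence every length-$2t$ closed walk is trivially closed in the sense of \cref{def:triviallyclosedwalks}, and the proof of \cref{lem:maincountingbacktrackingwalks}---which only uses the trivial closure structure and otherwise ignores the coefficients $c_{v,\beta}$---applies verbatim to give
\begin{equation*}
    \tr\!\bigl((\Gamma^{-1}A)^{2t}\bigr) \;\leq\; N\cdot 2^{2t}\cdot(2t/d)^{t}\mper
\end{equation*}

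On the other hand, $\tilde A$ is Hermitian, so $\tr(\tilde A^{2t}) \geq \lambda_{\max}(\tilde A)^{2t}$. Applying the Rayleigh quotient at $z := \Gamma^{1/2}\mathbf{1}$, we find $z^\dagger \tilde A z = \mathbf{1}^\dagger A \mathbf{1} = Nd$ (the total edge weight) and $z^\dagger z = \mathbf{1}^\dagger \Gamma \mathbf{1} = 2Nd$, whence $\lambda_{\max}(\tilde A) \geq \tfrac{1}{2}$ and $\tr(\tilde A^{2t}) \geq (1/4)^t$. Combining yields $(d/(32t))^t \leq N$, hence $d \leq 32t\cdot N^{1/t} = O(t)$ once $2t \geq \log N$, which holds since $\log N \leq O(\ell\log(\abs{\F^*}n)) = O(t)$. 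However, \cref{fact:avgdegree} combined with the hypothesis on $\abs{\mcH}$ forces $d \geq \tfrac{\abs{\F^*}}{2}\bigl(\tfrac{\ell}{\abs{\F^*}n}\bigr)^{k/2}\abs{\mcH} \geq C \cdot \ell\log(\abs{\F^*}n)$ for $C$ as large as desired, by taking the hidden constant in the $O(n)$ of the hypothesis large enough. This contradicts $d = O(t)$ and completes the proof.

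I do not anticipate a serious obstacle: the argument is the ``deterministic mirror'' of \cref{lem:countingbacktrackingwalks}, where the no-short-dependence hypothesis plays the role that randomness did there---forcing only trivially closed walks to occur (here they literally cannot exist otherwise; there they vanish in expectation). The only new ingredient is the Rayleigh lower bound $\lambda_{\max}(\tilde A) \geq 1/2$, which is a one-line calculation with the all-ones vector, and a careful choice of $t$ to balance the trace-moment encoding against $\log N$.
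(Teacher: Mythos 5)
Your proposal is correct and follows essentially the same approach as the paper: assume no short dependence, observe this forces every closed walk of the relevant length in the Kikuchi graph to be trivially closed, apply the trace moment/encoding bound from Lemma 3.5, lower-bound $\norm{\tilde{A}}_2 \geq 1/2$ via the all-ones vector (your Rayleigh-quotient phrasing with $z = \Gamma^{1/2}\onevec$ is identical in substance to the paper's $\onevec^\top A \onevec \leq \norm{\tilde A}_2 \tr(\Gamma)$ double-counting), and derive a contradiction with the degree lower bound from the density of $\mcH$. The only cosmetic slip is $2t \leq s$ fails when $s$ is odd and $t = \lceil s/2\rceil$; take $t = \lfloor s/2\rfloor$ instead.
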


\begin{proof}
Assume $\mcH$ has no linearly dependent subsets of size $\leq \ell \log (\abs{\F^*} n)$. Let $A$ be the unsigned adjacency matrix for the level-$\ell$ Kikuchi graph of $\mcH$ as defined in \cref{def:kikuchimatrix}. Since $\mcH$ has no linearly dependent sets of size $\leq \ell \log (\abs{\F^*} n)$, all closed walks of size $\leq \ell \log (\abs{\F^*} n)$ correspond to trivially closed walks. It follows from the analysis in \cref{lem:countingbacktrackingwalks} with $t = \frac{1}{2} \log N \leq \ell \log (\abs{\F^*} n)$ that $\norm{\Gamma^{-1/2} A \Gamma^{-1/2}}_2 \leq O\left(\sqrt{\frac{ \log N}{d}}\right)$. This implies $A \preceq O\left(\sqrt{\frac{\log N}{d}}\right) \cdot \Gamma$, so
\begin{equation*}
    \1^\top A \1 \leq O\left(\sqrt{\frac{\log N}{d}}\right) \cdot \tr(\Gamma) = O\left(\sqrt{\frac{ \log N}{d}}\right) \cdot 2Nd\mper
\end{equation*}
We double count $\1^\top A \1 = Nd$. This implies that $d \leq O(\ell \log (\abs{\F^*} n))$. Using our alternate count of $d \geq \frac{\abs{\F^*}}{2} \left({\frac{\ell}{\abs{\F^*}n}}\right)^{k/2} \cdot \abs{\mcH}$ from \cref{fact:avgdegree} we conclude $\abs{\mcH} \leq O(n) \cdot \log\left(\abs{\F^*}n\right) \left( \frac{n \abs{\F^*}}{\ell}\right)^{k/2-1}$ as desired.
\end{proof}

\subsection{Generalized Feige's conjecture (odd-arity case)}

In this section we prove the more involved odd-arity case of above, \cref{thm:mainfeige}, which finishes \cref{thm:feige}.

\begin{theorem}[Generalized Feige's Conjecture, odd-arity]
    \label{thm:mainfeige}
     Fix $k \geq 3$ odd $\ell \geq k/2$. Let $\mcH$ be a set of $\abs{\mcH} \geq O(n) \cdot \log\left(\abs{\F^*}n\right) \left(\frac{n\abs{\F^*}}{\ell}\right)^{k/2-1}$ $k$-sparse vectors in $\F^n$. There exists a set $\mcV \subseteq \mathcal{H}$ with $\abs{\mcV} \leq \ell \log (\abs{\F^*} n)$ and coefficients $\cbra{\alpha_v}_{v \in \mcV}$ in $\F^*$ such that
     \begin{equation*}
         \sum_{v \in \mcV} \alpha_v \cdot v = 0\mper
     \end{equation*}
     That is, $\mcV$ is a linearly dependent subset of $\mathcal{H}$.
\end{theorem}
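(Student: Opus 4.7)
The plan is to adapt the even-arity argument from the previous subsection by replacing the even Kikuchi matrix with the odd-arity construction of \cref{def:oddkikuchimatrix}, fed with the regularity decomposition of \cref{lem:decompositionalg}. I would argue by contradiction: assume $\mcH$ admits no nontrivial linear combination of at most $\ell \log(\abs{\F^*}n)$ vectors summing to $0$, and try to derive a contradiction with $\abs{\mcH}$ being too large.

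First, I would apply \cref{lem:decompositionalg} to $\mcH$ with an absolute-constant $\varepsilon$ (say $\varepsilon = 1$), obtaining subinstances $\mcH^{(1)}, \dots, \mcH^{(k)}$ each equipped with an $\ell$-regular $\mcU^{(t)}$-bipartite decomposition satisfying $\abs{\mcH^{(t)}_u} \le \tau_t = O_k(1) \cdot \max(1, (n\abs{\F^*}/\ell)^{k/2-t})$ and $\abs{\mcU^{(t)}} \le 2\abs{\mcH^{(t)}}/\tau_t$. By pigeonhole, some $\mcH^{(t)}$ satisfies $\abs{\mcH^{(t)}} \ge \abs{\mcH}/k$. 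For that $t$, I would build the level-$\ell$ odd-arity Kikuchi matrix $A$ from \cref{def:oddkikuchimatrix}, but with all scalar coefficients $c_{v,\beta}$ set to $1$ (so $A$ is just the unsigned adjacency matrix of the underlying graph), and then invoke the edge-deletion step of \cref{lem:edgedeletion} to pass to a subgraph $\hat{A}$ of $O_k(1)$-bounded local degree that retains at least half of the edges of each type $(v, v', \beta)$.

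The key structural observation is that a closed walk $U_1 \xrightarrow{v_1, v'_1, \beta_1} \cdots \xrightarrow{v_m, v'_m, \beta_m} U_1$ in the Kikuchi graph telescopes in each coordinate to
\[\sum_{i=1}^m \beta_i (v_i - u_i) = 0 \quad\text{and}\quad \sum_{i=1}^m \beta_i (v'_i - u_i) = 0\mper\]
Subtracting these two identities eliminates the bucket vectors $u_i \in \mcU^{(t)}$ and yields $\sum_i \beta_i v_i - \sum_i \beta_i v'_i = 0$, a linear combination of at most $2m$ vectors from $\mcH$. For $m \le \tfrac{1}{2}\ell\log(\abs{\F^*}n)$, the contradiction hypothesis forces this combination to be trivial, which matches exactly the condition $\sum_{i \in R^+(v)} \beta_i - \sum_{i \in R^-(v)} \beta_i = 0$ for every $v \in \mcH$ from \cref{def:gentriviallyclosedwalks}. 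I can then replay the trace-moment argument of \cref{lem:genmaincountingbacktrackingwalks} with the random phases replaced by $1$ and $2t \approx \log N$: only trivially closed walks contribute, and the same counting yields $\norm{\Gamma^{-1/2}\hat{A}\Gamma^{-1/2}}_2 = O\bigl(\sqrt{\eta \ell \log(\abs{\F^*}n)/d}\bigr)$ with $\eta = O_k(1)$. PSD-dominating $\hat{A}$ by a scalar multiple of $\Gamma$ and testing against the all-ones vector gives $Nd = \mathbf{1}^{\top}\hat{A}\mathbf{1} \le O\bigl(\sqrt{\eta \ell \log(\abs{\F^*}n)/d}\bigr) \cdot 2Nd$, so $d = O_k(\ell \log(\abs{\F^*}n))$. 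Combining with the lower bound on $d$ from \cref{fact:genavgdegree} and the bucket/index bounds from the decomposition, a short calculation yields $\abs{\mcH^{(t)}} = O_k(n \log(\abs{\F^*}n) (n\abs{\F^*}/\ell)^{k/2-1})$, contradicting $\abs{\mcH^{(t)}} \ge \abs{\mcH}/k$ once the implicit constant in the theorem is chosen large enough.

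The main obstacle is the bipartite-pair telescoping: unlike the even case, where a cycle of length $m$ in the Kikuchi graph directly exhibits a short linear dependence among vectors of $\mcH$, in the odd-arity graph each edge carries a pair $(v, v')$ together with a bucket vector $u \in \mcU^{(t)}$ that is not itself in $\mcH$. Only after subtracting the first- and second-coordinate telescoped identities does one recover an honest linear combination of $\mcH$-vectors, and care is needed to check that its triviality corresponds precisely to the trivially closed condition used in the trace-moment bound. A secondary technical point is verifying that the preprocessing steps (regularity decomposition and local-degree trimming) only cost constant factors in the final density bound, so that the pigeonholed $\mcH^{(t)}$ retains the same $\Theta_k(n\log(\abs{\F^*}n)(n\abs{\F^*}/\ell)^{k/2-1})$ density that makes the argument go through.
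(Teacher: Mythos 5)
Your proposal is correct and follows essentially the same route as the paper: regularity decomposition, pigeonhole to the densest regular piece, odd-arity Kikuchi matrix with edge deletion for bounded local degree, the observation that a closed walk of length $m$ telescopes in both coordinates so that subtracting the two telescoped identities cancels the bucket vectors $u_i$ and produces a linear combination of at most $2m$ vectors of $\mcH$, then trace-moment and spectral double-counting. The only difference from the paper is that you reuse \cref{lem:decompositionalg} with $\eps = 1$ and a pigeonhole, whereas the paper introduces a dedicated vector-set decomposition (\cref{lem:vecdecompositionalg}) with an explicit garbage bucket $\mcH^{(0)}$ that it separately argues is small; the two are equivalent up to constant factors (your $\tau_t$ carries an extra $4k^2$ factor and your regularity is $(\tfrac{1}{2k},\ell)$ rather than $\ell$-regularity, which only perturbs the implicit constants in the edge-deletion analysis and the final density bound). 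You are also appropriately careful about the factor of $2$: since each Kikuchi edge carries a pair $(v,v')$, a walk of length $m$ yields a dependence among up to $2m$ vectors, so one should take $2t \approx \tfrac{1}{2}\ell\log(\abs{\F^*}n)$ in the trace-moment argument rather than the full $\ell\log(\abs{\F^*}n)$, a point the paper's statement of \cref{lem:feigecountbackwalks} glosses over.
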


    The first piece in our proof of \cref{thm:mainfeige} is a regularity decomposition on the set of vectors $\mcH$ similar to the one specified in \cref{lem:decompositionalg}. This breaks the $\mcH$ into a few regular chunks, and our goal will be to find a small linearly dependent set in the largest chunk.

    \begin{lemma}[Regular decomposition algorithm for $k$-sparse vector sets]
    \label{lem:vecdecompositionalg} 
    Fix $k \geq 2$. There is an algorithm that takes as input $\mcH$ a collection of $k$-sparse vectors in $\F^n$ and outputs a partition (up to scalar multiple) parameterized by $0 \leq t \leq k-1$ into $t$-sparse $\mcU^{(t)}$-bipartite decompositions $\cbra{\mcH^{(t)}_u}_{u \in \mcU^{(t)}}$ for each $\mcH^{(t)}$ in time $\left(\abs{\F^*}n\right)^{O(\ell)}$ with the guarantees:
    \begin{enumerate}
        \item \label{item:vecdecompositionalg1} For $t \neq 0$ and all $u \in \mcU^{(t)}$, $\abs{\mcH^{(t)}_u} = \tau_t := \max\left(1, \left(\frac{n\abs{\F^*}}{\ell}\right)^{k/2-t}\right)$.
        \item \label{item:vecdecompositionalg2} For all $t \neq 0$, $\abs{\mcU^{(t)}} \leq \frac{\abs{\mcH}}{\tau_t}$.
        \item \label{item:vecdecompositionalg3} For all $t \neq 0$, the decomposition of $\mcH^{(t)}$ is $\ell$-regular.
        \item \label{item:vecdecompositionalg4} The largest $\mcH^{(t)}$ for $0 \leq t \leq k-1$ is not $\mcH^{(0)}$.
    \end{enumerate}
\end{lemma}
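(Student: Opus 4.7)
The plan is to adapt the greedy peeling algorithm from \cref{lem:decompositionalg} to the ``no right-hand sides'' setting, at the level of vector sets only. Initialize $t = k-1$ and repeat the following outer loop: while there exists a $u \in \F^n$ with $\wt(u) = t$ such that at least $\tau_t$ vectors $v \in \mcH$ (taken modulo scalar equivalence) admit some $\beta \in \F^*$ with $u \sqsubseteq \beta v$, select any such $u$, move exactly $\tau_t$ of these covering vectors (with their corresponding scalings $\beta$) into a new bucket $\mcH^{(t)}_u$, and delete them from $\mcH$. When no such $u$ remains at the current value of $t$, decrement $t$ and continue. Once the $t = 1$ stage finishes, gather all remaining vectors into a single bucket $\mcH^{(0)}_{0}$ indexed by the zero vector. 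The overall running time is $(\abs{\F^*}n)^{O(\ell)}$ since the greedy search at each step amounts to enumerating subvectors of weight at most $k$ over all $v$ and all scalings $\beta$.

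Properties~\ref{item:vecdecompositionalg1} and \ref{item:vecdecompositionalg2} are immediate from the construction: every non-remainder bucket has size exactly $\tau_t$, and since these buckets are disjoint subsets of $\mcH$, we have $\abs{\mcU^{(t)}} \cdot \tau_t \leq \abs{\mcH}$. Property~\ref{item:vecdecompositionalg3} follows from greediness. Suppose for contradiction some $\mcH^{(t)}_u$ contains a subset $\mcH'$ of vectors all sharing a common $w$ with $w \sqsubseteq v$ (for the appropriate scaling), with $\wt(w) = t' > t$ and $\abs{\mcH'}$ exceeding the regularity threshold $\tau_{t'+1}$ (equivalently, at least $\tau_{t'}$ once the ``$+1$'' shift in the definition is absorbed). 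Then at iteration $t'$ of the outer loop, this $w$ would have witnessed a sufficiently large cover in the then-remaining $\mcH$, so the algorithm would have peeled those vectors off into $\mcH^{(t')}$ before ever reaching step $t$, contradicting $\mcH' \subseteq \mcH^{(t)}_u$.

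The main technical point is Property~\ref{item:vecdecompositionalg4}, for which it suffices to show $\abs{\mcH^{(0)}} < \abs{\mcH}/k$, since then averaging over the remaining $k-1$ buckets yields $\max_{t \geq 1} \abs{\mcH^{(t)}} \geq (\abs{\mcH}-\abs{\mcH^{(0)}})/(k-1) > \abs{\mcH^{(0)}}$. After step $t = 1$ finishes, no weight-$1$ vector $u = \alpha e_i$ covers at least $\tau_1$ vectors in $\mcH^{(0)}$. The key observation (managing the scalar-equivalence convention) is that for fixed $i$ and any $\alpha \in \F^*$, the number of $v \in \mcH^{(0)}$ admitting some $\beta \in \F^*$ with $\beta v_i = \alpha$ is precisely $N_i := \abs{\{v \in \mcH^{(0)} : v_i \ne 0\}}$, since the scaling $\beta = \alpha v_i^{-1}$ is uniquely determined by $v$ and $\alpha$. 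Thus the greedy exit condition gives $N_i < \tau_1$ for every $i \in [n]$. Summing the identity $\sum_{i=1}^n N_i = k \abs{\mcH^{(0)}}$ yields $\abs{\mcH^{(0)}} < n\tau_1/k$.

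Plugging in $\tau_1 = (n \abs{\F^*}/\ell)^{k/2-1}$ for odd $k \geq 3$ and the hypothesis $\abs{\mcH} \geq C n \log(\abs{\F^*}n) (n\abs{\F^*}/\ell)^{k/2-1}$ for a sufficiently large absolute constant $C$, we obtain $\abs{\mcH^{(0)}}/\abs{\mcH} \leq 1/(Ck \log(\abs{\F^*}n)) < 1/k$, establishing Property~\ref{item:vecdecompositionalg4}. The main conceptual obstacle is the correct handling of ``up to scalar multiple'' in the pigeonhole, and more delicately aligning the algorithm's greedy threshold $\tau_{t'}$ with the off-by-one form of the regularity threshold; the observation that the weight-$1$ covering count depends only on $\supp(v)$ (and not on which scalar representative we chose) resolves the former cleanly, while the latter is absorbed into the monotonicity $\tau_{t'} \geq \tau_{t'+1}$.
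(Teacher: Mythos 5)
Your proof follows essentially the same approach as the paper: the identical greedy peeling algorithm, items~1--3 argued by construction and greediness, and item~4 by showing $\mcH^{(0)}$ is smaller than the pigeonhole average (your counting, which notes the weight-$1$ covering count $N_i$ is scalar-independent and sums to $k\abs{\mcH^{(0)}}$, gives $\abs{\mcH^{(0)}} < n\tau_1/k$, a modest factor-$k$ tightening of the paper's crude $\abs{\mcH^{(0)}} \leq n\tau_1$). Your hedge about the off-by-one between the written $\ell$-regularity threshold $(n\abs{\F}/\ell)^{k/2-1-\wt(w)}$ and the greedy threshold $\tau_{\wt(w)} = (n\abs{\F^*}/\ell)^{k/2-\wt(w)}$ is a real mismatch in the paper's notation, not a flaw in your argument --- the greedy bound $< \tau_{\wt(w)}$ is the property actually established and actually used downstream --- and, like the paper's own proof, you correctly rely on the lower bound on $\abs{\mcH}$ supplied by the enclosing theorem even though the lemma statement omits it.
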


Let us first contrast this with \cref{lem:decompositionalg}, which is a nearly identical result. The largest difference is that the former algorithm partitions $k$-$\LIN$ instances, while this partitions sets of vectors, however this just means we have strictly less information to keep track of. We also have an extra partition $\mcH^{(0)}$ for which many guarantees do not hold. We have importantly \cref{item:vecdecompositionalg4} which guarantees that this partition is not the largest. Unlike in \cref{sec:mainrefutation}, where the proof involved refuting every subinstance partitioned, this proof will only look at the largest, meaning \cref{item:vecdecompositionalg4} can be seen as a guarantee that $\mcH^{(0)}$ is a ``garbage collector'' that can be safely ignored. Finally, we trade $(\varepsilon, \ell)$-regularity for $\ell$-regularity. With these differences highlighted, we give the proof of \cref{lem:vecdecompositionalg}.

\begin{proof}[Proof of \cref{lem:vecdecompositionalg}]
We give a modified algorithm from that of \cref{lem:decompositionalg}.

    \begin{tcolorbox}[
    width=\textwidth,   
    colframe=black,  
    colback=white,   
    title=Vector Set Regularity Decomposition Algorithm,
    colbacktitle=white, 
    coltitle=black,      
    fonttitle=\bfseries,
    center title,   
    enhanced,       
    frame hidden,           
    borderline={1pt}{0pt}{black},
    sharp corners,
    toptitle=2.5mm
]
\textbf{Input:} A set $\mcH$ of $k$-sparse vectors in $\F^n$.\\

\textbf{Output:} A partition of $\mcH$ satisfying the criteria of \cref{lem:vecdecompositionalg}.\\

\textbf{Algorithm:}
\begin{enumerate}
    \item Let $t = k-1$ and $\mcH_\text{curr} = \mcH$.
    \item While $\exists u \in \F^n$ with $\wt(u) = t$ such that $\abs{\{\beta v \in \mcH_\text{curr} \mid \beta \in \F^*, u \subseteq \beta v\}} \geq \tau_t := \max\left(2, \left(\frac{n \abs{\F^*}}{\ell}\right)^{k/2-t}\right)$, do the following. Otherwise, decrement $t$.
    \begin{enumerate}
        \item Let $\mcH_u^{(t)}$ hold $\beta v$ for exactly $\tau_t$ such vectors and move $\mcH_u^{(t)}$ from $\mcH_{\text{curr}}$ to $\mcH^{(t)}$.
    \end{enumerate}
    \item When $t = 0$, add all remaining $v \in \mcH_{\text{curr}}$ to $\mcH^{(0)}$.
\end{enumerate}

\end{tcolorbox}
The proof of the first three items mimics exactly the proof of \cref{lem:decompositionalg} up to parameter changes in $\tau$. The following observation encapsulates \cref{item:vecdecompositionalg4}.

\begin{observation}
    \label{obs:onepartlarge}
    Let $\kappa$ index the largest $\mcH^{(\kappa)}$ output above, then $\abs{\mcH^{(\kappa)}} \geq \frac{\abs{\mcH}}{k}$ and $\kappa \neq 0$.
\end{observation}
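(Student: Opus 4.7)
My plan is to establish both conclusions by combining a trivial pigeonhole observation with an upper bound on the size of the ``leftover'' set $\mcH^{(0)}$. The first conclusion, $\abs{\mcH^{(\kappa)}} \geq \abs{\mcH}/k$, is immediate from pigeonhole: the algorithm partitions $\mcH$ into the $k$ disjoint pieces $\mcH^{(0)}, \mcH^{(1)}, \dots, \mcH^{(k-1)}$, so the largest has size at least $\abs{\mcH}/k$. Then to show $\kappa \neq 0$ it suffices to prove $\abs{\mcH^{(0)}} < \abs{\mcH}/k$, since that strictly rules $\mcH^{(0)}$ out of being the largest part.

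To bound $\abs{\mcH^{(0)}}$, I would examine the state of $\mcH_{\text{curr}}$ at the moment the while loop exits with $t = 1$: by the negation of the loop guard, for every $i \in [n]$ we have
\[
\abs{\{\beta v \in \mcH_{\text{curr}} : \beta \in \F^*, \, e_i \sqsubseteq \beta v\}} < \tau_1.
\]
The condition $e_i \sqsubseteq \beta v$ is equivalent to $v_i \neq 0$ (with the unique choice $\beta = v_i^{-1}$), so this count equals $\abs{\{v \in \mcH_{\text{curr}} : v_i \neq 0\}}$. Summing over $i$ and using that every $v \in \mcH_{\text{curr}}$ is $k$-sparse gives a double counting identity
\[
k \cdot \abs{\mcH_{\text{curr}}} \;=\; \sum_{i=1}^n \abs{\{v \in \mcH_{\text{curr}} : v_i \neq 0\}} \;<\; n \tau_1,
\]
so $\abs{\mcH^{(0)}} = \abs{\mcH_{\text{curr}}} < n \tau_1 / k$.

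Finally I would verify the numerical comparison with the hypothesis. Since $k \geq 3$ is odd and we are in the regime where $n\abs{\F^*}/\ell \geq 1$, the ``$\max$'' in $\tau_1 = \max(2, (n\abs{\F^*}/\ell)^{k/2-1})$ is at worst a constant factor blow-up, giving $\abs{\mcH^{(0)}} = O\bigl(n (n\abs{\F^*}/\ell)^{k/2-1}\bigr)$. The hypothesis $\abs{\mcH} \geq Cn \log(\abs{\F^*}n) (n\abs{\F^*}/\ell)^{k/2-1}$ contains an extra logarithmic factor, so for sufficiently large $C$ we indeed have $\abs{\mcH^{(0)}} < \abs{\mcH}/k$, completing the proof. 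I do not anticipate any real obstacle here; the only nuisance is handling the ``$\max$'' in the definition of $\tau_1$ and checking that the constant in the lower bound on $\abs{\mcH}$ comfortably absorbs the factor of $k$ from the double counting.
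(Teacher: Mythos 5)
Your proof is correct and takes essentially the same route as the paper: pigeonhole for the size bound, and an upper bound on $\abs{\mcH^{(0)}}$ via the while-loop termination condition at $t=1$ combined with the hypothesis $\abs{\mcH} \geq \Omega(n)\log(\abs{\F^*}n)\left(\frac{n\abs{\F^*}}{\ell}\right)^{k/2-1}$ from \cref{thm:mainfeige}. The only difference is cosmetic: you double-count the $k$ nonzero coordinates per vector to get $\abs{\mcH^{(0)}} < n\tau_1/k$, whereas the paper settles for the cruder bound $\abs{\mcH^{(0)}} \leq n\tau_1$, which is already small enough.
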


The size is by pigeonhole principle. $\mcH^{(0)}$ cannot have more than $\tau_1$ vectors with intersecting support by Step 2, otherwise they would have been added to $\mcH^{(1)}$ by the greediness of the algorithm. There are crudely at most $\tau_1 n < \frac{\abs{\mcH}}{k}$ such vectors counting $n$ indices to intersect and $\tau_1$ for each. Finally, we remark in Step 2a, scalar multiples are added, which importantly does not change the appearance of linearly dependent subsets.
\end{proof}

Finally, note that $\mcH^{(\kappa)}$ has no linearly dependent subsets if $\mcH$ has none. Our overarching goal is to show that $\mcH^{(\kappa)}$ not having such a set implies that it is small, and, by the extension of our relative size lower bound, we conclude $\mcH$ is small.

Our proof proceeds similarly to the even case, utilizing a spectral double counting argument, but using the odd-arity Kikuchi matrix defined in \cref{sec:mainrefutation}.

\begin{definition}{(Odd-arity Kikuchi matrix over $\F$).}
Let $k/2 \leq \ell \leq n/2$ be a parameter and let $N = \abs{\F^*}^\ell {2n \choose \ell}$. For each pair $(v, v')$ of distinct $(k-t)$-sparse vectors in $\F^n$ and $\beta \in \F^*$, we define a matrix $A_{v, v', \beta} \in \C^{N \times N}$ as follows. First, we identify $N$ with the set of pairs of vectors $(U^{(1)}, U^{(2)})$ in $\F^n$ with the condition $\wt(U^{(1)}) + \wt(U^{(2)}) = \ell$. Then for any such pairs $U$ and $V$ we let
    \begin{equation*}
        A_{v, v', \beta}(U, V) = \begin{cases}
                      1  & U\xrightarrow{\text{$v, v', \beta$}} V\\
                      0 & \text{otherwise}
                    \end{cases}
    \end{equation*}
    where we say $U\xrightarrow{\text{$v, v', \beta$}} V$ if the following conditions hold
    \begin{enumerate}
        \item $U^{(1)}\xrightarrow{\text{$v, \beta$}} V^{(1)}$.
        \item $U^{(2)}\xrightarrow{\text{$v', -\beta$}} V^{(2)}$.
        \item $\abs{\supp(U^{(1)}) \oplus \supp(v)} = \lfloor\frac{k-t}{2}\rfloor$ and $\abs{\supp(U^{(2)}) \oplus \supp(v')} = \lceil \frac{k-t}{2} \rceil$ or vice versa.
    \end{enumerate}
    
For $\mcH^{(t)} = \cbra{\mcH_u^{(t)}}_{u \in \mcU^{(t)}}$ we let $A^{(t)} = \sum_{u \in \mcU^{(t)}} \sum_{\beta \in \F^*} \sum_{v \neq v' \in \mcH^{(t)}_u} A^u_{v, v', \beta}$ be the corresponding level-$\ell$ Kikuchi matrix.
\end{definition}

As with \cref{sec:mainrefutation}, the Kikuchi matrix allows us to provide a relevant spectral bound.

\begin{lemma}
    \label{lem:feigecountbackwalks}
    Let $A$ be the level-$\ell$ Kikuchi matrix for a $\mcU$-bipartite vector collection $\mcH = \cbra{\mcH_u}_{u \in \mcU}$ of $k$-sparse vectors from $\F^n$ and complex coefficients $\{c_{v, \beta}\}_{\substack{v \in \mcH \\ \beta \in \F^*}}$. Let $\Gamma \in \C^{N \times N}$ be $\Gamma = D + d \Id$ where $D_{U, U} := \deg(U)$ and average degree $d$. Suppose additionally that the underlying Kikuchi graph is $\eta$-bounded local degree and that $\mcH$ contains no linearly dependent subsets of size $\leq \ell \log(\abs{\F^*} n)$. Then,
    \begin{equation*}
    \norm{\Gamma^{-1/2} A \Gamma^{-1/2}}_2 \leq 8\sqrt{\frac{\eta\ell \log (\abs{\F^*} n)}{d}}\mper
    \end{equation*}
\end{lemma}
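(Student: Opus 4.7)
The plan is to mimic the proof of \cref{lem:maincountbackwalks} via the trace moment method, but with the ``trivially closed walks'' restriction arising deterministically from the no-short-dependencies hypothesis rather than probabilistically from the randomness of the $b_v$'s. Since $A$ is the Hermitian unsigned adjacency matrix, for any positive integer $t$ we have $\norm{\Gamma^{-1/2} A \Gamma^{-1/2}}_2^{2t} \leq \tr\bigl((\Gamma^{-1} A)^{2t}\bigr)$, and expanding the right-hand side as in \cref{lem:maincountbackwalks} produces a weighted sum over closed walks $U_0 \xrightarrow{v_1, v'_1, \beta_1} U_1 \xrightarrow{} \cdots \xrightarrow{v_{2t}, v'_{2t}, \beta_{2t}} U_0$ in the Kikuchi graph, with the weight of each walk equal to $\prod_{i=0}^{2t-1} \Gamma^{-1}_{U_i}$.

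The key deterministic step is to show that, provided $4t \leq \ell \log(\abs{\F^*}n)$, every closed walk of length $2t$ in the Kikuchi graph must be trivially closed in the sense of \cref{def:gentriviallyclosedwalks}. Applying \cref{obs:kikuchicycles} separately in the two halves of the Kikuchi vertex and using the defining relations $U_{i-1}^{(1)} - U_i^{(1)} = \beta_i(v_i - u_i)$ and $U_{i-1}^{(2)} - U_i^{(2)} = -\beta_i(v'_i - u_i)$, telescoping yields
\[
\sum_{i=1}^{2t} \beta_i(v_i - u_i) = 0 \qquad \text{and} \qquad \sum_{i=1}^{2t} \beta_i(v'_i - u_i) = 0 \mper
\]
Subtracting cancels the $u_i$ contributions (each $u_i$ appears with the same coefficient $\beta_i$ on both sides because $v_i$ and $v'_i$ are forced to lie in the same partition $\mcH_{u_i}$) and regrouping by distinct $v \in \mcH$ gives the identity $\sum_{v \in \mcH}\bigl(\sum_{i \in R^+(v)} \beta_i - \sum_{i \in R^-(v)} \beta_i\bigr) v = 0$. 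If the walk were not trivially closed, some coefficient would be nonzero in $\F^*$, producing a non-trivial linear dependence supported on at most $\abs{R^+ \cup R^-} \leq 4t \leq \ell \log(\abs{\F^*}n)$ distinct vectors in $\mcH$, contradicting the hypothesis.

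Having eliminated all non-trivially-closed contributions, the remaining sum is a pure count of trivially closed walks, which can be bounded by the encoding argument used in the proof of \cref{lem:genmaincountingbacktrackingwalks}: for each edge we spend either $\log \deg(U_{i-1})$ bits (absorbed by the weight $\Gamma^{-1}_{U_{i-1}} \leq 1/\deg(U_{i-1})$) or, on a repeated label, $\log(2t) + O(1)$ bits together with $\eta$ choices for the partner vector (via $\eta$-bounded local degree) and the weight $\Gamma^{-1}_{U_{i-1}} \leq 1/d$. Since trivial closure forces at least $t$ of the $2t$ edges to be repeats, the total weight is at most $N \cdot 2^{2t} \cdot (8\eta t/d)^t$. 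Choosing $t = \Theta(\log N)$ (small enough that $4t \leq \ell \log(\abs{\F^*}n)$ and large enough that $N^{1/2t} = O(1)$) and extracting the $2t$-th root gives the claimed bound $O\bigl(\sqrt{\eta \ell \log(\abs{\F^*}n)/d}\bigr)$.

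The main obstacle is the ``clean cancellation'' step when subtracting the two telescoping identities: because each edge carries a shared $u_i \in \mcU$ and two vectors $v_i, v'_i \in \mcH_{u_i}$, the identical coefficient $\beta_i$ multiplies $u_i$ in both sums, so the $u_i$'s drop out and the residual equation lives entirely within $\mcH$. Once this is made precise, the rest of the proof is a near-verbatim specialization of \cref{lem:maincountbackwalks}, with the probabilistic selection of trivially closed walks replaced by their deterministic enforcement.
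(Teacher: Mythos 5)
Your proposal is correct and follows essentially the same route as the paper: telescoping the edge relations in both halves of the Kikuchi vertex to extract a linear relation $\sum_i \beta_i(v_i - v'_i) = 0$, observing that a non-trivially-closed short walk therefore yields a short linear dependence in $\mcH$, and then running the encoding/trace-moment argument from \cref{lem:genmaincountingbacktrackingwalks} with $t = \Theta(\log N)$ so that only trivially closed walks contribute. The one place you are more careful than the paper's exposition is the explicit tracking of the shared $u_i$ terms (the paper writes $U_i^{(b)} - U_{i+1}^{(b)} = \beta v_i$ where strictly it should be $\beta(v_i - u_i)$, and the $u_i$'s only drop out after combining the two halves) and your explicit constraint $4t \leq \ell\log(\abs{\F^*}n)$ (the paper's companion fact yields a dependence of size up to $2\ell\log(\abs{\F^*}n)$, which on its face exceeds the hypothesis threshold; your choice of a slightly smaller $t$ cleanly fixes this constant-factor mismatch without affecting the final bound, since $N^{1/2t}$ remains $O(1)$ for any $t = \Omega(\log N)$).
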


With the two main lemmas identified, we are able to prove \cref{thm:mainfeige}. We delay the proof of \cref{lem:feigecountbackwalks}.

\begin{proof}[Proof of \cref{thm:mainfeige} from \cref{lem:feigecountbackwalks}]
    First, the algorithmic result in \cref{lem:vecdecompositionalg} guarantees a partition of $\mcH$ into regular $\mcH^{(t)}$ for $t \in [k-1]$ and an extra set $\mcH^{(0)}$. Let $\mcH^{(\kappa)}$ be the largest regular partition and we have by \cref{item:vecdecompositionalg4} that $\kappa \in [k-1]$, so $\mcH^{(\kappa)}$ is regular. Similar to in \cref{sec:mainrefutation}, \cref{lem:feigecountbackwalks} requires bounded local degree, so we identify a subgraph of the Kikuchi graph of $\mcH^{(\kappa)}$.

    \begin{lemma}
        \label{lem:genedgedeletion}
        Let $K_\ell$ be a level-$\ell$ Kikuchi graph with average degree $d(K_\ell)$, for each $v \neq v' \in \mcH^{(t)}$ $(k-t)$-sparse and $\beta \in \F^*$ there are $\Delta$ edges of type $(v, v', \beta)$, and satisfying the output criteria in \cref{lem:vecdecompositionalg}. Then we can find a subgraph $\hat{K}_\ell$ in time $(\abs{\F^*}n)^{O(\ell)}$ with the following properties:
        \begin{itemize}
            \item $\hat{K}_\ell$ is $D^k$-bounded degree for some constant $D > 0$.
            \item $d(\hat{K}_\ell) \geq \frac{1}{2}d(K_\ell)$.
        \end{itemize}
    \end{lemma}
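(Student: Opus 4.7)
The plan is to transplant the Edge Deletion Algorithm from \cref{sec:mainrefutation} and its analysis (\cref{lem:edgedeletion}, \cref{lem:edgedeletionanalysis}, \cref{fact:crossterms}) to this setting, exploiting two simplifications: the regularity guaranteed by \cref{lem:vecdecompositionalg} is $\ell$-regularity (rather than $(\varepsilon, \ell)$-regularity), and its $\tau_t$ thresholds carry no $\varepsilon^{-2}$ factor. Concretely, I would run the same two-step greedy procedure applied to $\mcH^{(\kappa)}$: first, while some vertex $U$ and some $v \in \mcH^{(\kappa)}$ have $d_{U,v,0} > \eta$ or $d_{U,v,1} > \eta$, delete an arbitrary edge incident to $U$ involving $v$; second, identify the maximum fraction $\rho$ of edges removed for any triple $(v, v', \beta)$ and symmetrically delete extra edges until every triple has exactly a $(1-\rho)$ fraction remaining. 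The first step enforces $\eta$-bounded local degree, and the symmetric balancing ensures $d(\hat{K}_\ell) = (1-\rho)\, d(K_\ell)$.

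The core task is to bound $\rho$. Rerunning the Markov-style argument of \cref{lem:edgedeletionanalysis} and invoking \cref{fact:crossterms}, which is a pure counting statement about support overlaps and so carries over verbatim, yields for every fixed $(v, v', \beta)$:
$$\Pr_{(U,V) \sim E_{(v,v',\beta)}}[(U,V)\text{ deleted}] \leq \frac{D^k}{\eta} \sum_{s=0}^{k-\kappa} \tau_{s+\kappa} \cdot \min\!\left(1, \left(\tfrac{\ell}{\abs{\F^*} n}\right)^{\lfloor (k-\kappa)/2\rfloor - s}\right),$$
for an absolute constant $D > 0$ absorbing the binomial coefficients. Here $\tau_{s+\kappa} = \max(1, (n\abs{\F^*}/\ell)^{k/2 - s - \kappa})$ as in \cref{lem:vecdecompositionalg}, and the cap on the number of $v'' \in \mcH_u^{(\kappa)}$ with a fixed $(v' \perp v'')$-pattern of weight $s + \kappa$ now comes from $\ell$-regularity (rather than from the $(\varepsilon/2k, \ell)$-regularity used in the refutation setting).

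I would then bound the sum by a constant depending only on $k$. Splitting at $s + \kappa = \lceil k/2 \rceil$: for $s + \kappa \geq \lceil k/2 \rceil$, $\tau_{s+\kappa} = 1$ and each summand is at most $1$; for $s + \kappa \leq \lfloor k/2 \rfloor$, the product $\tau_{s+\kappa} \cdot (\ell/(\abs{\F^*}n))^{\lfloor (k-\kappa)/2\rfloor - s}$ collapses to $(\ell/(\abs{\F^*}n))^{\kappa/2 - \1(k-\kappa\text{ odd})/2}$, which is at most $1$ as soon as $\kappa \geq 1$. Since \cref{item:vecdecompositionalg4} of \cref{lem:vecdecompositionalg} guarantees $\kappa \neq 0$, the entire sum is $O(k)$, so choosing $\eta = (2D)^k$ (absorbing the $O(k)$ factor into an enlarged base constant) drives $\rho \leq 1/2$ and simultaneously delivers both the bounded-local-degree conclusion and $d(\hat{K}_\ell) \geq \tfrac{1}{2} d(K_\ell)$.

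The main obstacle will be the exponent bookkeeping in the telescoping step: one has to check case-by-case (on the parity of $k - \kappa$ and at the boundary $\kappa = 1$) that the exponent of $\ell/(\abs{\F^*}n)$ is indeed nonnegative, so the clipped $\min(1, \cdot)$ term is controlled by $1$ rather than blowing up. Once this is settled, the remainder of the argument is a direct rerun of the proof of \cref{lem:edgedeletion}, simply with the $\varepsilon^{-2}$ factors dropped throughout.
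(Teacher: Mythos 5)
Your proposal is correct and follows the paper's approach of transplanting the Edge Deletion Algorithm and rerunning the analysis of \cref{lem:edgedeletionanalysis} and \cref{fact:crossterms} with the $\varepsilon$ factors stripped out, using $\kappa \geq 1$ to control the exponent of $\ell/(\abs{\F^*}n)$. The one cosmetic difference is that you retain the balancing step that equalizes the deletion fraction $\rho$ across edge types, whereas the paper's version for this lemma drops it (it was only needed in the refutation setting to preserve the edge-type-uniformity used by \cref{fact:genbilinearforms} and \cref{fact:genavgdegree}, and here one only needs the average degree bound, which the first step alone already delivers); keeping it is harmless since it only deletes further edges and your $\rho \leq 1/2$ bound still gives $d(\hat{K}_\ell) \geq \tfrac12 d(K_\ell)$.
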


    We finish the proof by a spectral double counting argument. \cref{lem:genedgedeletion} guarantees a suitable subgraph $\hat{K}_\ell$ of the Kikuchi graph associated with $\mcH^{(\kappa)}$. We apply \cref{lem:feigecountbackwalks} on the adjacency matrix $\hat{A}^{(\kappa)}$ for $\hat{K}_\ell$. This yields
    \begin{equation*}
        \1^\top \hat{A}^{(\kappa)} \1 \leq 8\sqrt{\frac{\eta \ell \log (\abs{\F^*} n)}{d(\hat{K}_\ell)}} \cdot \tr(\Gamma) = 16\sqrt{\frac{\eta \ell \log (\abs{\F^*} n)}{d(\hat{K}_\ell)}} \cdot Nd(\hat{K}_\ell)\mper
    \end{equation*}
    since $\hat{A}^{(\kappa)} \preceq \norm{\Gamma^{-1/2} \hat{A}^{(\kappa)} \Gamma^{-1/2}}_2 \cdot \Gamma$. Since $\onevec^\top \hat{A}^{(\kappa)} \onevec = Nd(\hat{K}_\ell)$ we get that $d(\hat{K}_\ell) \leq 256\eta \ell \log (\abs{\F^*}n)$. Further, using that $\eta \leq D^k$ since $\hat{A}^{(k)}$ is output from \cref{lem:genedgedeletion}, we conclude $d(\hat{K}_\ell) \leq 256D^k\ell \log(\abs{\F^*}n)$. We compare this to the following lower bound.

    \begin{observation}
\label{clm:degreelowerbound}
$d(\hat{K}_\ell) \geq \frac{\abs{\F^*}}{4k^22^k} \left({\frac{\ell}{n \abs{\F^*}}}\right)^{k/2} \cdot \abs{\mcH}$.
\end{observation}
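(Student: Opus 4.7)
The plan is to lower-bound $d(K_\ell)$ via \cref{fact:genavgdegree} and then transfer to $d(\hat K_\ell)$ using \cref{lem:genedgedeletion}. Since $\hat K_\ell$ is output by \cref{lem:genedgedeletion} applied to $K_\ell$ (the odd-arity Kikuchi graph of $\mcH^{(\kappa)}$), we have $d(\hat K_\ell) \geq \tfrac{1}{2} d(K_\ell)$, so it suffices to prove a matching lower bound on $d(K_\ell)$ (up to an absolute constant factor absorbed into $1/(4k^2 2^k)$).

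First, I would apply \cref{fact:genavgdegree} with $t = \kappa$, which gives
\[
d(K_\ell) \;\geq\; \abs{\F^*} \left(\frac{\ell}{2n\abs{\F^*}}\right)^{k-\kappa} \sum_{u \in \mcU^{(\kappa)}} \binom{\abs{\mcH^{(\kappa)}_u}}{2}.
\]
By \cref{item:vecdecompositionalg1} of \cref{lem:vecdecompositionalg}, every bucket satisfies $\abs{\mcH^{(\kappa)}_u} = \tau_\kappa \geq 2$, so $\binom{\tau_\kappa}{2} \geq \tau_\kappa^2/4$, and using $\abs{\mcH^{(\kappa)}} = \tau_\kappa \abs{\mcU^{(\kappa)}}$ the sum becomes at least $\tfrac{1}{4}\tau_\kappa \abs{\mcH^{(\kappa)}}$. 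Then \cref{obs:onepartlarge} supplies $\abs{\mcH^{(\kappa)}} \geq \abs{\mcH}/k$, producing
\[
d(K_\ell) \;\geq\; \frac{\abs{\F^*} \tau_\kappa}{4k \cdot 2^{k-\kappa}} \left(\frac{\ell}{n\abs{\F^*}}\right)^{k-\kappa} \abs{\mcH}.
\]

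To finish, I would split into two cases according to whether $\kappa \leq k/2$. In the first case, $\tau_\kappa \geq (n\abs{\F^*}/\ell)^{k/2 - \kappa}$, and this factor precisely cancels the part $(\ell/(n\abs{\F^*}))^{k/2 - \kappa}$ of the exponent, leaving the residual $(\ell/(n\abs{\F^*}))^{k/2}$. In the second case, $\kappa > k/2$ so $\tau_\kappa = 2$; here I would use that $k - \kappa < k/2$ and $\ell/(n\abs{\F^*}) \leq 1$ to write $(\ell/(n\abs{\F^*}))^{k-\kappa} \geq (\ell/(n\abs{\F^*}))^{k/2}$, again reducing to the target exponent. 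In both regimes this yields
\[
d(K_\ell) \;\geq\; \Omega\!\left(\frac{\abs{\F^*}}{k \cdot 2^{k}}\right) \left(\frac{\ell}{n\abs{\F^*}}\right)^{k/2} \abs{\mcH},
\]
and halving via $d(\hat K_\ell) \geq \tfrac{1}{2}d(K_\ell)$ and folding constants into the $1/(4k^2 2^k)$ stated in the claim concludes the proof.

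There is no serious obstacle: the decomposition in \cref{lem:vecdecompositionalg} has been engineered so that the threshold $\tau_\kappa$ is exactly the amount needed to offset the loss of a factor of $(\ell/(n\abs{\F^*}))^{k/2 - \kappa}$ coming from the smaller support-intersection probability for large $\kappa$. The only care needed is the two-case split, and verifying that the $1/2$ in \cref{lem:genedgedeletion} together with the $1/k$ loss from \cref{obs:onepartlarge} fit inside the quoted constant $1/(4k^2 2^k)$, which follows from the slack $\tau_\kappa \geq 2$ and $k - \kappa \leq k-1$.
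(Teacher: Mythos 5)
Your proof is correct and follows essentially the same route as the paper: start from \cref{fact:genavgdegree}, lower-bound the sum $\sum_u\binom{\abs{\mcH^{(\kappa)}_u}}{2}$ using the uniform bucket size $\tau_\kappa$, invoke \cref{obs:onepartlarge} for $\abs{\mcH^{(\kappa)}}\geq\abs{\mcH}/k$, and halve via \cref{lem:genedgedeletion}. The only cosmetic difference is that you make the case split on $\kappa\lessgtr k/2$ explicit, whereas the paper packages both cases into the single inequality $\abs{\mcU^{(\kappa)}}\leq\abs{\mcH}\cdot(\ell/(n\abs{\F^*}))^{k/2-\kappa}$ (which holds automatically when $\kappa>k/2$ because the right-hand side exceeds $\abs{\mcH}$); since the buckets all have size exactly $\tau_\kappa$, the paper's invocation of Jensen is also just the identity you use directly. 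Your bookkeeping of the constant in fact lands on $1/(4k\,2^k)$, strictly stronger than the stated $1/(4k^2 2^k)$, so the constant fits comfortably.
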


\begin{proof}
Starting with \cref{fact:genavgdegree} we have
\begin{flalign*}
    d(K_\ell) &\geq \abs{\F^*}\left({\frac{\ell}{2n \abs{\F^*}}}\right)^{k-\kappa} \sum_{u \in \mcU^{(\kappa)}} {\abs{\mcH^{(\kappa)}_u} \choose 2} 
    \geq \abs{\F^*}\left({\frac{\ell}{2n \abs{\F^*}}}\right)^{k-\kappa} \abs{\mcU^{(\kappa)}} {\abs{\mcH}/k\abs{\mcU^{(\kappa)}} \choose 2} \\
    &\geq \abs{\F^*}\left({\frac{\ell}{2n \abs{\F^*}}}\right)^{k-\kappa} \frac{\abs{\mcH}^2}{4k^2\abs{\mcU^{(\kappa)}}}\mper
\end{flalign*}

Using $\abs{\mcU^{(\kappa)}} \leq \frac{\abs{\mcH}}{\tau_\kappa} \leq \abs{\mcH} \cdot \left(\frac{\ell}{n \abs{\F^*}}\right)^{k/2-\kappa}$, Jensen's inequality, and finally that $d(\hat{K}_\ell) \geq \frac{1}{2}d(K_\ell)$ we obtain the claimed bound.
\end{proof}

Combining the upper and lower bounds yields $\abs{\mcH} \leq 256D^k n \log (\abs{\F^*} n) \left(\frac{n \abs{\F^*}}{\ell}\right)^{k/2-1}$, completing the proof of \cref{thm:mainfeige}.
\end{proof}

We finish this section by completing the outstanding proofs of \cref{lem:feigecountbackwalks} and \cref{lem:genedgedeletion}. The proof of \cref{lem:feigecountbackwalks} follows from the proof of \cref{lem:maincountbackwalks} given the following fact connecting linearly dependent subsets and trivially closed walks.

\begin{fact}
    In a level-$\ell$ Kikuchi graph $K_\ell$ specified from $\mcH$ a set of $k$-sparse vectors in $\F^n$, any walk of length $\leq \ell \log (\abs{\F^*} n)$ that is not trivially closed immediately implies a linearly dependent subset of size $\leq 2\ell \log (\abs{\F^*}n)$ in $\mcH$.
\end{fact}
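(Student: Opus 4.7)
The plan is to generalize the telescoping observation \cref{obs:kikuchicycles} from the even-arity setting to the odd-arity Kikuchi graph of \cref{def:oddkikuchimatrix}. Consider any closed walk $U_0 \xrightarrow{v_1, v_1', \beta_1} U_1 \xrightarrow{v_2, v_2', \beta_2} \cdots \xrightarrow{v_L, v_L', \beta_L} U_0$ of length $L \leq \ell \log(\abs{\F^*}n)$. Reading off the two components of each step, we have $U_{i-1}^{(1)} - U_i^{(1)} = \beta_i v_i$ and $U_{i-1}^{(2)} - U_i^{(2)} = -\beta_i v_i'$. Summing each of these around the cycle yields the two identities
\begin{equation*}
    \sum_{i=1}^L \beta_i v_i \;=\; 0 \quad\text{and}\quad \sum_{i=1}^L \beta_i v_i' \;=\; 0.
\end{equation*}

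The next step is to regroup each sum by the distinct vectors in $\mcH$. Using $R^+(v) = \{i : v_i = v\}$ and $R^-(v) = \{i : v_i' = v\}$ as in \cref{def:gentriviallyclosedwalks}, set $c^+(v) := \sum_{i \in R^+(v)} \beta_i$ and $c^-(v) := \sum_{i \in R^-(v)} \beta_i$. The telescoping identities then become
\begin{equation*}
    \sum_{v \in \mcH} c^+(v) \cdot v \;=\; 0 \quad\text{and}\quad \sum_{v \in \mcH} c^-(v) \cdot v \;=\; 0.
\end{equation*}
The walk failing to be trivially closed means $c^+(v) - c^-(v) \neq 0$ for some $v \in \mcH$, which forces at least one of $c^+(v), c^-(v)$ to be nonzero. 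Hence at least one of the two displayed identities is a nontrivial $\F$-linear dependence among elements of $\mcH$. The support of either dependence is contained in the set of distinct vectors appearing as $v_i$ or $v_i'$ in the walk, which has size at most $2L \leq 2\ell \log(\abs{\F^*}n)$, giving the claimed bound.

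I do not expect a serious obstacle here; the argument is essentially a direct telescoping followed by regrouping, and the even-arity version of the fact drops out of the same reasoning by ignoring the second component. The only subtlety is sign bookkeeping: one must verify that the minus sign in the convention ``$U^{(2)} \xrightarrow{v', -\beta} V^{(2)}$'' from \cref{def:oddkikuchimatrix} propagates correctly, so that ``not trivially closed with respect to $v$'' (i.e.\ $c^+(v) - c^-(v) \neq 0$) really does translate into a nonzero coefficient on $v$ in one of the two regrouped sums, rather than the $c^+$ and $c^-$ contributions silently cancelling in a single combined equation.
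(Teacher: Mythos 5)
Your proof is correct and follows the same telescoping idea as the paper's. The only real departure is in the last step: you keep the two coordinate-wise identities $\sum_i \beta_i v_i = 0$ and $\sum_i \beta_i v_i' = 0$ separate and observe that at least one must carry a nonzero coefficient, whereas the paper adds them into a single combined sum $\sum_i \beta_i(v_i - v_i') = 0$ whose regrouped coefficient on $v$ is exactly $c^+(v) - c^-(v)$, so ``not trivially closed'' translates literally to ``some coefficient is nonzero.'' Both work; your variant in fact gives the mildly sharper support bound of the walk length $L$ rather than $2L$ for whichever of the two identities is nontrivial, though both sit comfortably inside the stated $2\ell\log(\abs{\F^*}n)$.
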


\begin{proof}
    Observe that a closed walk $U_1 \xrightarrow{v_1, v'_1, \beta_1} ... \xrightarrow{v_{2t-1}, v'_{2t-1}, \beta_{2t-1}} U_{2t}$ has for $b \in \{1,2\}$
    \begin{equation*}
        \sum_{i = 1}^{2t} U^{(b)}_i - U^{(b)}_{i+1 \mod 2t} = U^{(b)}_1 - U^{(b)}_{2t} = 0\mcom
    \end{equation*}
    by closure. By definition $U_i^{(b)} \xrightarrow{v_i, \beta_i} U_{i+1}^{(b)}$ means $U_i^{(b)} - U_{i+1}^{(b)} = \beta v_i$ which implies individually that $\sum_{i=1}^{2t} \beta_i v_i$ and $\sum_{i=1}^{2t} -\beta_i v_i'$ are $0$, allowing us to conclude
    \begin{equation*}
        \sum_{i=1}^{2t} \beta_i v_i -\beta_i v_i' = 0\mper
    \end{equation*}
    If there is any vector $v \in \mcH$ that is not trivially closed then it appears in the above sum with a non-zero coefficient and the above sum is a nontrivial linear dependence.
\end{proof}

The proof of \cref{lem:genedgedeletion} mirrors \cref{lem:edgedeletion} given slight modifications to the algorithm as shown below.

       \begin{tcolorbox}[
    width=\textwidth,   
    colframe=black,  
    colback=white,   
    title=Edge Deletion Algorithm,
    colbacktitle=white, 
    coltitle=black,      
    fonttitle=\bfseries,
    center title,   
    enhanced,       
    frame hidden,           
    borderline={1pt}{0pt}{black},
    sharp corners,
    toptitle=2.5mm
]
\textbf{Input:} A Kikuchi graph $K_\ell$ specified by $\mcH$.\\

\textbf{Output:} A subgraph $\hat{K}_\ell$ of $K_\ell$ of $\eta$-bounded local degree and $d(\hat{K}_\ell) \geq \frac{1}{2}d(K_\ell)$.\\

\textbf{Algorithm:}
\begin{enumerate}
    \item While there is a vertex $U$ and $v \in \mcH$ s.t. $d_{U,v,0}$ or $d_{U, v, 1} > \eta$, delete an arbitrary edge.
    \item Output the resulting graph.
\end{enumerate}

\end{tcolorbox}

\begin{proof}[Proof of \cref{lem:genedgedeletion}]
    The algorithm is the same as the one in \cref{sec:mainrefutation} with parameter $\eta = D^k$ and without the degree regularizing step (which only lowers the degree). As such, the analysis remains exactly the same when we drop $\varepsilon$ terms.
\end{proof}

\section{Sum-of-Squares Lower Bounds for \texorpdfstring{$k$-$\LIN(\F)$}{k-LIN(F)}: Proof of \cref{thm:infsoslowerbound}}
\label{sec:sos}

In this section we prove \cref{thm:infsoslowerbound}. Our proof follows and generalizes the low-width resolution framework developed in \cite{Grigoriev01, Schoenebeck08} and its exposition in \cite{BarakS16}. We begin by defining two natural forms of polynomial optimization problems to represent $k$-$\LIN(\F)$.

\subsection{Sum-of-Squares relaxations for \texorpdfstring{$k$-$\LIN(\F)$}{k-LIN(F)}}

\begin{definition}[Indicator $k$-$\LIN(\F)$ constraints]
    \label{def:indicator}
    Fix a $k$-$\LIN(\F)$ instance $\mcI = (\mcH, \{b_v\}_{v \in \mcH})$ with $n$ variables and where $\textrm{char}(\F) = p$ and consider the following optimization problem.

    \begin{itemize}
        \item \textbf{Variables:} For each $i \in [n]$, $\alpha \in \F$, $x_{i, \alpha} \in \R$.

        \item \textbf{Constraints:}
        \begin{enumerate}[(1)]
            \item (Booleanity) $\cbra{x_{i, \alpha}^2 = x_{i, \alpha}}_{i \in [n], \alpha \in \F}$;
            \item (Sum) $\cbra{\sum_{\alpha \in \F} x_{i, \alpha} = 1}_{i \in [n]}$.
        \end{enumerate}
        
        \item \textbf{Objective:} $\max_{x \in \R^{\abs{\F}n}} \frac{1}{\abs{\mcH}}\sum_{v \in \mcH} \sum_{{\substack{\alpha \in (\F^*)^{\supp(v)}\\ \sum_{i \in \supp(v)} \alpha_i = b_v}}} \prod_{i \in \supp(v)} x_{i, \alpha_i v_i^{-1}}$.

    \end{itemize}

\end{definition}

In \cref{def:indicator} we have that each $x_{i, \alpha}$ is the $0$-$1$ indicator for whether or not $x_i$ is assigned $\alpha \in \F$. The booleanity constraints enforce that each is truly $0$-$1$ while the sum constraints enforce that there is exactly one assignment for each. We can also observe that the maximization asks how well the best assignment satisfies the $k$-$\LIN$ instance by summing over every satisfying assignment for each constraint. With a proper polynomial system defined, we can now state how Sum-of-Squares fails to refute random instances.

\begin{theorem}[\Cref{thm:infsoslowerbound} restated]
    \label{thm:infsoslowerbound1}
    Fix $k \geq 3$ and $\frac{n}{\max(\abs{\F^*}, k)} \geq \ell \geq k$. Let $\mcI$ be a random $k$-$\LIN(\F)$ instance $\abs{\mcH} \leq O(n) \cdot \left(\frac{n\abs{\F^*}}{\ell}\right)^{k/2-1} \cdot \varepsilon^{-2}$. Then, with high probability over the draw of $\mcI$, it holds that
    \begin{enumerate}
        \item $\val(\mcI) \leq \frac{1}{\abs{\F}} + \varepsilon$.
        \item The degree-$\tilde{O}(\ell)$ Sum-of-Squares relaxation for indicator $k$-$\LIN(\F)$ given in \cref{def:indicator} fails to refute $\mcI$.
    \end{enumerate}
\end{theorem}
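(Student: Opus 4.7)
The plan is to follow the low-width resolution framework of Grigoriev and Schoenebeck, generalizing it from $\F_2$ to an arbitrary finite field $\F$ by tracking where the field size enters the relevant hypergraph expansion threshold. The value bound $\val(\mcI) \leq 1/\abs{\F} + \varepsilon$ itself is a routine Chernoff-plus-union-bound calculation: each random equation is satisfied by any fixed assignment with probability exactly $1/\abs{\F}$, the satisfying indicators are independent across equations, and the hypothesized density $\abs{\mcH} = \Omega(n)(n\abs{\F^*}/\ell)^{k/2-1}\varepsilon^{-2}$ dominates the $n\log(\abs{\F})/\varepsilon^2$ threshold needed for a union bound over all $\abs{\F}^n$ assignments (using $\ell \leq n/\abs{\F^*}$ and $k \geq 3$). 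The substance of the theorem is therefore the SoS lower bound.

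For the SoS lower bound the key structural ingredient is a $(D,\delta)$-boundary-expansion property of the random constraint hypergraph, with $D = \tilde{\Omega}(\ell)$ and a small constant $\delta > 0$: with probability $1 - o(1)$, every subset $T \subseteq \mcH$ of size at most $D$ has at least $\delta k\abs{T}$ variables appearing in exactly one equation of $T$. This is a standard first-moment calculation. A bad $T$ of size $t$ must be supported on at most $(1 - \delta/2)kt$ distinct variables, and the probability that a uniformly random $k$-sparse vector of $\F^n$ lies on a fixed variable set of size $s$ is $\binom{s}{k}/\binom{n}{k}$ (the $\abs{\F^*}^k$ coefficient factors cancel between numerator and denominator). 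Summing $(e\abs{\mcH}/t)^t (en/s)^s (s/n)^{tk}$ over $t \leq D$ and $s \leq (1-\delta/2)kt$ gives $o(1)$ below the stated threshold, and the $\abs{\F^*}$ power in the threshold emerges when translating expansion at scale $D = \tilde{\Omega}(\ell)$ into a constraint-density bound: the correct balance yields $\abs{\mcH} \lesssim n (n\abs{\F^*}/\ell)^{k/2-1}$, matching the hypothesis.

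With expansion in hand I will construct the pseudo-expectation $\pE$ via a compatible family of local uniform distributions. For a set $S \subseteq [n] \times \F$ of variable-value pairs with $\abs{S} \leq D$, let $V(S) \subseteq [n]$ denote its variable support and let $\Gamma(S) \subseteq \mcH$ be the \emph{closure}, obtained by iteratively adjoining constraints whose support intersects the current variable set in at least $(1 - \delta)k$ coordinates. Expansion implies that $\Gamma(S)$ is $\F$-linearly independent and involves $O(\abs{S})$ variables; let $\mu_S$ be the uniform distribution on the (nonempty) coset of assignments to those variables that satisfy every constraint of $\Gamma(S)$, and define
\begin{equation*}
\pE\Bigl[\prod_{(i,\alpha) \in S} x_{i,\alpha}\Bigr] \defeq \Pr_{y \sim \mu_S}\bigl[y_i = \alpha \text{ for all } (i,\alpha) \in S\bigr]
\end{equation*}
and extend linearly. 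Consistency of $\pE$ across different choices of $S$ reduces to the standard fact that marginals of the uniform distribution on an expanding $\F$-linear subsystem match the uniform distributions on its restrictions. Normalization, booleanity, and the sum axiom $\sum_{\alpha \in \F} x_{i,\alpha} = 1$ follow immediately from the indicator structure of $\mu_S$, and positivity $\pE[p^2] \geq 0$ for polynomials $p$ of degree at most $D/2$ follows because $\pE[p^2]$ rewrites as $\E_{y \sim \mu}[p(y)^2]$ for a genuine probability distribution $\mu$ on an $O(D)$-sized variable set.

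The main obstacle will be verifying the precise expansion statement with the correct $\abs{\F^*}$ power in the density threshold, and for $\ell$ in the stated range $\ell \leq n/\abs{\F^*}$. Beyond that range the expansion property provably fails, consistently with \cref{thm:trivialalgorithm} successfully refuting instances at that density, so the first-moment bound has to be tight up to the polylogarithmic factor hidden in $\tilde{\Omega}(\ell)$. A second delicate point is choosing the closure operation $\Gamma(\cdot)$ so that $\Gamma(S)$ always involves $O(\abs{S})$ variables and so that $\mu_{S'}$ marginalizes correctly onto $\mu_S$ whenever $S \subseteq S'$; both requirements are satisfied by the right quantitative version of $(D,\delta)$-boundary expansion, with $\delta$ a sufficiently small absolute constant.
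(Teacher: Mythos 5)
Your plan for the pseudo-expectation itself (compatible local distributions $\mu_S$ via a closure operator $\Gamma$, positivity from genuine conditional distributions) is a standard alternative to the paper's max-entropy/derivation construction and would go through given the right expansion property. The paper instead routes through a ``complex'' pseudo-expectation over $\omega_p$-valued variables $y_{i,\alpha}$ and proves a separate transfer lemma (\cref{lem:pseudoreduction}) to the indicator relaxation; that is a convenience for working with Fourier characters, not a necessity, and is not where your proposal goes wrong.

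The genuine gap is the expansion statement. Boundary expansion, i.e.\ counting unique-neighbor variables, is a purely combinatorial property of the constraint hypergraph that cannot see the field. Your own first-moment computation confirms this: the per-constraint probability is $\binom{s}{k}/\binom{n}{k}$ with the $\abs{\F^*}^k$ factors cancelling, so the calculation is identical to the $\F_2$ case and the union bound closes only when $\abs{\mcH} \lesssim n \cdot (n/\ell)^{k/2-1}$. That is a factor of $\abs{\F^*}^{k/2-1}$ short of the theorem's $n \cdot (n\abs{\F^*}/\ell)^{k/2-1}$, which can be $n^{\Omega(k)}$ when $\abs{\F} = n^{\delta}$. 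Your assertion that ``the $\abs{\F^*}$ power in the threshold emerges when translating expansion at scale $D$ into a constraint-density bound'' does not hold up: plugging $t = D = \ell$ into the bound one gets $\abs{\mcH} \lesssim \ell\,(n/\ell)^{k/2}$, still with no $\abs{\F^*}$, and at density $n(n\abs{\F^*}/\ell)^{k/2-1}$ the expected number of bad sets is in fact $\gg 1$, so boundary expansion genuinely fails. The property the paper proves and uses (\cref{thm:inversefeige}) is strictly weaker and field-aware: for every $\mcV$ of size $\leq \ell$ and every choice of coefficients $\alpha_v \in \F^*$, the linear combination $\sum_v \alpha_v v$ has Hamming weight $> \beta\abs{\mcV}$ with $\beta = 1/\log(\abs{\F^*}n)$. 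The $\abs{\F^*}$ gain comes precisely from the fact that over a large field, two constraints sharing a variable almost never \emph{cancel} at that coordinate for a fixed choice of $\alpha_v$'s: cancellation requires both an index collision (probability $\approx \ell/n$) and a value match (probability $\approx 1/\abs{\F^*}$). The appendix proof (\cref{lem:largerefutations}) tracks exactly this via a template argument, and the $1/(n\abs{\F^*})$ factor per cancelled coordinate, minus the $\abs{\F^*}^d$ cost of enumerating coefficient tuples, is where $(n\abs{\F^*}/\ell)^{k/2-1}$ comes from. To repair your proposal you would need to replace boundary expansion with this coefficient-weighted notion, prove the $\F$-aware first-moment bound, and then revisit the closure operator $\Gamma(S)$ under the much weaker expansion parameter $\beta = 1/\log$ (which is also why the final SoS degree is $\tilde{\Omega}(\ell)$ rather than $\Omega(\ell)$).
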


Using standard SDP duality, the task of proving \Cref{thm:infsoslowerbound1} reduces to constructing a dual witness pseudo-expectation over $\Bits^{\abs{\F}n}$ as defined in \Cref{def:pseudo-expectation}. The exact operator we construct has the following properties.

\begin{definition}[Pseudo-expectations for indicator $k$-$\LIN(\F)$ over the hypercube] \label{def:indpseudo-expectation}
Our degree $d$ pseudo-expectation $\pE$ over $\Bits^{\abs{\F}n}$ maps degree $\leq d$ polynomials on $\Bits^{\abs{\F}n}$ to reals with the properties:
\begin{enumerate}
    \item (Normalization) $\pE[1] = 1$.
	\item (Booleanity) For any $x_{i, \alpha}$ and any polynomial $f$ of degree $\leq d-2$, $\pE[f (x_{i, \alpha}^2 - x_{i,\alpha})] = 0$. 
	\item (Positivity) For any polynomial $f$ of degree at most $d/2$, $\pE[f^2] \geq 0$.
    \item ($k$-$\LIN(\F)$ Sum) For any $i \in [n]$ and polynomial $f$ of degree at most $d-1$, $\pE[f\cdot(\sum_{\alpha \in \F} x_{i,\alpha} -1)] = 0$.
\end{enumerate} 
\end{definition}

Recall that the degree $d$ Sum-of-Squares algorithm for \cref{def:indicator} given by \cref{fact:sosalg} bounds the value of the instance polynomial, equivalently the value of the instance $\mcI$, by the polynomial's maximum value under any indicator $\pE$ consistent with \cref{def:indpseudo-expectation}. Our goal is to show a random $k$-$\LIN(\F)$ instance $\mcI$ at equation threshold $O(n) \cdot \left(\frac{n\abs{\F^*}}{\ell}\right)^{k/2-1} \cdot \varepsilon^{-2}$ satisfies the following with large probability: (1) the instance is at most $\frac{1}{\abs{\F}} + \varepsilon$-satisfiable and (2) there exists a pseudo-expectation $\pE$ assigning the indicator polynomial value $1$.

It turns to be more natural to construct $\pE$ by reducing to the construction of a slightly different pseudo-expectation built from viewing $\val(\mcI)$ as a complex polynomial optimization problem.

\begin{definition}[Embedding $k$-$\LIN(\F)$ into $\C$]
    \label{def:complexklin}
    Fix a $k$-$\LIN(\F)$ instance $\mcI = (\mcH, \{b_v\}_{v \in \mcH})$ with $n$ variables and where $\textrm{char}(\F) = p$. Define the following related optimization problem over $\C$.

    \begin{itemize}
        \item \textbf{Variables:} For each $i \in [n]$, $\alpha \in \F^*$, $y_{i, \alpha} \in \C$.

        \item \textbf{Constraints:}
        \begin{enumerate}
            \item (Validity) $\cbra{y_{i, \alpha}^p = 1}_{i \in [n], \alpha \in \F^*}$.
            \item (Consistency) $\cbra{y_{i, \alpha} \cdot \overline{y_{i, \beta}} = y_{i, \alpha - \beta}}_{i \in [n], \alpha, \beta \in \F^*}$ letting $y_{i, 0} = 1$ by convention.
        \end{enumerate}
        
        \item \textbf{Objective:} $\max_{y \in \C^{\abs{\F^*}n}} \frac{1}{\abs{\F}} + \frac{1}{\abs{\mcH}\abs{\F}}\sum_{v \in \mcH} \sum_{\beta \in \F^*} \omega_p^{\Tr(\beta b_v)} \cdot \prod_{i \in \supp(v)} \overline{y_{i, \beta v_i}}$.

    \end{itemize}

\end{definition}

With some work one can prove that this system has objective equivalent to $\val(\mcI)$. Now, given a complex polynomial system as above, we can develop a notion of pseudo-expectation over complex polynomials respecting the constraints as follows. First, we introduce the following non-standard notion of degree for the polynomials arising in \cref{def:complexklin}.

\begin{definition}[Variable degree]
    For a monomial $P$ in $n\abs{\F^*}$ indeterminates $y_{i, \alpha}$, we define $\vardeg(P)$ to be the number of $i \in [n]$ such that $y_{i, \alpha}$ appears for some $\alpha \in \F^*$ with non-zero exponent. A polynomial's variable degree is the maximum across all its monomials. A degree $d$ pseudo-expectation then respects the standard pseudo-expectation constraints with respect to variable degree.
\end{definition}

With a suitable notion of degree for polynomials over $\C^{\abs{\F^*}n}$, we define a corresponding pseudo-expectation as such.

\begin{definition}[Pseudo-expectations for $k$-$\LIN(\F)$ over $\C$] \label{def:fourierpseudo-expectation} A degree $d$ pseudo-expectation $\pE$ maps degree $\leq d$ polynomials in $\abs{\F^*}n$ complex indeterminates to complex numbers with the properties:
\begin{enumerate}
    \item (Normalization) $\pE[1] = 1$.
	\item (Positivity) For any polynomial $f$ of variable degree at most $d/2$, $\pE[\abs{f}^2] \geq 0$.
    \item (Validity) For any $y_{i, \alpha}$ and polynomial $f$ of variable degree at most $d-1$, $\pE\sbra{f \cdot (y_{i,\alpha}^p -1)} = 0$.
    \item (Consistency) For any $y_{i, \alpha}$ and $y_{i,\beta}$ and polynomial $f$ of variable degree at most $d-2$, $\pE\sbra{f \cdot (y_{i,\alpha} \cdot \overline{y_{i,\beta}} - y_{i, \alpha - \beta})} = 0$ letting $y_{i, 0} = 1$ by convention.
\end{enumerate} 
\end{definition}

\subsection{Reducing from indicator to complex pseudo-expectations}

Our roadmap for \cref{thm:infsoslowerbound1} is to first construct a ``complex'' pseudo-expectation $\pE$ consistent with \cref{def:fourierpseudo-expectation} and then utilize it to construct a ``Boolean'' $\pE'$ consistent with \cref{def:indpseudo-expectation}.

\begin{lemma}
    \label{lem:pseudoreduction}
Let $\mcI = (\mcH, \{b_v\}_{v \in \mcH})$ be a $k$-$\LIN(\F)$ instance with $n$ variables where $\textrm{char}(\F) = p$ and let $P_\mcI$ be the indicator polynomial and $Q_\mcI$ the complex embedding of $\mcI$. Suppose $\pE$ is a degree $d$ complex pseudo-expectation (as in \cref{def:fourierpseudo-expectation}) under which $Q_\mcI$ has objective value $1$. Then there exists a degree $d$ indicator pseudo-expectation $\pE'$ (as in \cref{def:indpseudo-expectation}) under which $P_\mcI$ objective value $1$.   
\end{lemma}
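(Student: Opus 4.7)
The natural approach is to invert the Fourier transform, since the Boolean indicator $x_{i,\alpha}=\mathbf{1}[x_i=\alpha]$ admits the expansion $\frac{1}{\abs{\F}}\sum_{\beta \in \F}\omega_p^{-\Tr(\beta\alpha)}\chi_\beta(x_i)$. Letting $\chi_\beta(x_i)$ be represented by $y_{i,\beta}$ (with the convention $y_{i,0}=1$) as in \cref{def:complexklin}, I will define the substitution $x_{i,\alpha}\mapsto \tilde{x}_{i,\alpha}:=\frac{1}{\abs{\F}}\sum_{\beta\in\F}\omega_p^{-\Tr(\beta\alpha)}y_{i,\beta}$, and for a polynomial $f$ in the $x$-variables write $\tilde{f}$ for the resulting polynomial in the $y$-variables. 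Since $\tilde{x}_{i,\alpha}$ only involves the variables $y_{i,\cdot}$, a degree $\leq d$ polynomial $f$ in the $x$'s becomes a variable degree $\leq d$ polynomial $\tilde{f}$ in the $y$'s. Then set $\pE'[f]:=\pE[\tilde{f}]$.

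\textbf{Verifying the pseudo-expectation axioms.} Normalization is immediate. The $k$-$\LIN$ sum axiom is essentially free, since by Fourier orthogonality $\sum_{\alpha\in\F}\tilde{x}_{i,\alpha}=y_{i,0}=1$ as a formal polynomial identity. For Booleanity, I first derive two useful identities from $\pE$'s consistency axiom: instantiating it at $\alpha=0$ gives $\overline{y_{i,\beta}}\equiv y_{i,-\beta}$ modulo the consistency ideal, and combining this with consistency again gives the multiplication rule $y_{i,\beta}\cdot y_{i,\gamma}\equiv y_{i,\beta+\gamma}$. Expanding $\tilde{x}_{i,\alpha}^{2}=\frac{1}{\abs{\F}^2}\sum_{\beta,\gamma}\omega_p^{-\Tr((\beta+\gamma)\alpha)}y_{i,\beta}y_{i,\gamma}$, applying this rule, changing variables to $\delta=\beta+\gamma$, and using Fourier orthogonality yields $\tilde{x}_{i,\alpha}^{2}\equiv\tilde{x}_{i,\alpha}$ modulo consistency, so the Booleanity axiom of $\pE'$ reduces to the consistency axiom of $\pE$.

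\textbf{Positivity---the main obstacle.} Since $\tilde{f}$ has complex coefficients, the complex positivity of $\pE$ provides only $\pE[\abs{\tilde{f}}^2]\geq 0$, not $\pE[\tilde{f}^{2}]\geq 0$ directly. The crucial observation, again flowing from $\overline{y_{i,\beta}}\equiv y_{i,-\beta}$, is that $\overline{\tilde{x}_{i,\alpha}}\equiv\tilde{x}_{i,\alpha}$ modulo consistency, and therefore $\overline{\tilde{f}}\equiv\tilde{f}$ for any polynomial $f$ with real coefficients. Consequently $\tilde{f}^{2}\equiv\tilde{f}\cdot\overline{\tilde{f}}=\abs{\tilde{f}}^2$ modulo consistency, which gives $\pE[\tilde{f}^{2}]=\pE[\abs{\tilde{f}}^2]\geq 0$. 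A secondary point is that $\pE'[f]$ must be real for real $f$; by the same identity $\overline{\tilde{f}}\equiv\tilde{f}$, this holds after WLOG replacing $\pE$ with its Hermitization $\tfrac{1}{2}(\pE[\cdot]+\overline{\pE[\overline{\,\cdot\,}]})$, which remains a valid complex pseudo-expectation attaining $\pE[Q_\mcI]=1$ because $\overline{Q_\mcI}\equiv Q_\mcI$ modulo consistency (by the same conjugation identity applied to each factor $\overline{y_{i,\beta v_i}}$).

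\textbf{Matching the objective.} It remains to check $\pE'[P_\mcI]=1$. Expand the substituted objective $\tilde{P}_\mcI$ term by term: for each constraint $v$, the product $\prod_i \tilde{x}_{i,\alpha_i v_i^{-1}}$ summed over admissible $\alpha$'s can be collapsed, via a second application of Fourier inversion on the linear relation $\sum_i\alpha_i=b_v$, to a one-parameter diagonal sum of the form $\frac{1}{\abs{\F}}\sum_{\lambda\in\F}\omega_p^{-\Tr(\lambda b_v)}\prod_{i\in\supp(v)}y_{i,\lambda v_i}$. Pulling out the $\lambda=0$ contribution as the constant $1/\abs{\F}$ and using $y_{i,\lambda v_i}\equiv\overline{y_{i,-\lambda v_i}}$ modulo consistency reproduces the defining sum of $Q_\mcI$, so $\tilde{P}_\mcI\equiv Q_\mcI$ in the consistency quotient. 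Therefore $\pE'[P_\mcI]=\pE[\tilde{P}_\mcI]=\pE[Q_\mcI]=1$, completing the construction.
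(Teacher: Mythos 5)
Your proof is essentially the paper's argument: define $\pE'[f] := \pE[\tilde f]$ via the Fourier substitution $x_{i,\alpha}\mapsto \frac{1}{\abs\F}\sum_{\beta\in\F}\omega_p^{-\Tr(\beta\alpha)}y_{i,\beta}$, verify normalization, the sum axiom, booleanity and positivity by appealing to the consistency and positivity of $\pE$, and then match the objective by re-expanding. Two differences are cosmetic: the paper defines $\pE'$ only on multilinear monomials and extends by exponent reduction (so booleanity holds by construction, and consistency is used elsewhere), whereas you define $\pE'$ on all polynomials and derive booleanity from the identity $\tilde x_{i,\alpha}^{\,2}\equiv\tilde x_{i,\alpha}$ modulo consistency; and your objective check packages the paper's case analysis on the Fourier index $w$ (the diagonal $w=\lambda v$ contributes, all other terms vanish) as a single Fourier collapse of the indicator $\mathbf 1[\sum_i\alpha_i=b_v]$. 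One substantive point you add that the paper glosses over is the requirement that the Boolean pseudo-expectation be \emph{real}-valued: nothing in the complex axioms forces $\pE[\tilde f]\in\R$ for real $f$ of degree between $d/2$ and $d$, and your Hermitization of $\pE$ (equivalently, taking the real part of $\pE'$ at the end, using that every axiom and the objective are already known to evaluate to reals) is a correct and worthwhile fix for this small gap. One thing to be slightly more careful about if this is written up: deriving the product rule $y_{i,\beta}y_{i,\gamma}\equiv y_{i,\beta+\gamma}$ requires applying consistency to a factor like $\tilde f\cdot y_{i,\beta}$, which can have variable degree $d-1$ rather than $d-2$; this costs a constant in the degree parameter, which is harmless for the theorem but worth flagging since the paper's multilinear-only definition sidesteps it entirely.
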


\cref{lem:pseudoreduction} can be seen as a reduction from constructing indicator pseudo-expectations to complex pseudo-expectations, a more natural task we accomplish in the next section. The main idea is to define a mapping between degree $d$ Boolean polynomials and complex polynomials such that the properties we want, such as positivity, propagate through from $\pE$ to $\pE'$. In a sense what we prove is more than just a Sum-of-Squares lower bound against the standard Boolean relaxation; we also rule out any algorithms based on maximizing across the complex pseudo-expectations seen in \cref{def:fourierpseudo-expectation}.

\begin{proof}[Proof of \cref{lem:pseudoreduction}]
    Let $\pE$ be a degree $d$ complex pseudo-expectation as in the hypothesis. As is standard, we define our degree $d$ pseudo-expectation $\pE'$ on the basis of degree $d$ monomials and extending through linearity. Moreover, since $\pE'$ enforces booleanity we can restrict attention to just the multilinear monomials, as any non-multilinear monomial must match the value of the multilinear monomial generated by making all non-zero exponents $1$. 
    
    To specify the values of $\pE'$, we use the following simple mapping $\varphi$ from Boolean polynomials in $\{x_{i,\alpha}\}_{i \in [n], \alpha \in \F}$ to complex polynomials in $\{y_{i,\beta}\}_{i \in [n], \beta \in \F^*}$. For any Boolean multilinear monomial and every variable $x_{i, \alpha}$, replace $x_{i,\alpha}$ with the complex polynomial $\frac{1}{\abs{\F}}+ \frac{1}{\abs{\F}}\sum_{\beta \in \F^*} \omega_p^{-\Tr(\beta \alpha)} y_{i, \beta}$ and simplify. We can extend $\varphi$ to polynomials by linearity and scalar multiplicativity. Note that it is standardly multiplicative by construction. The majority of constraints propagate nicely thanks to the following observation.

    \begin{observation}
        \label{obs:degreematch}
        Given a Boolean monomial $P$ of degree $d$, $\varphi(P)$ has variable degree at most $d$.
    \end{observation}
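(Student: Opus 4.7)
\textbf{Proof plan for \cref{obs:degreematch}.} The plan is to unpack the definition of $\varphi$ at the level of a single generator and then to propagate the bound through the multiplicative extension. First I would observe that for any fixed $i \in [n]$ and $\alpha \in \F$, the image
\[
\varphi(x_{i,\alpha}) \;=\; \frac{1}{\abs{\F}} \;+\; \frac{1}{\abs{\F}}\sum_{\beta \in \F^*} \omega_p^{-\Tr(\beta\alpha)}\, y_{i,\beta}
\]
is a polynomial in which every monomial of positive degree involves a single complex indeterminate of the form $y_{i,\beta}$, all sharing the same first index $i$. Consequently $\vardeg\bigl(\varphi(x_{i,\alpha})\bigr) \leq 1$, with equality unless the coefficient structure collapses (which it does not here).

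Next I would leverage that $\varphi$ is defined on monomials by substitution and extended to polynomials by linearity, and in particular is multiplicative on monomials. Writing a degree-$d$ Boolean monomial as $P = \prod_{j=1}^d x_{i_j,\alpha_j}$, multiplicativity gives
\[
\varphi(P) \;=\; \prod_{j=1}^{d} \varphi(x_{i_j,\alpha_j}).
\]
Expanding this product, every monomial appearing in $\varphi(P)$ is obtained by selecting, from each factor $\varphi(x_{i_j,\alpha_j})$, either the constant term $\tfrac{1}{\abs{\F}}$ or a single variable $y_{i_j,\beta_j}$ for some $\beta_j \in \F^*$. Hence each monomial of $\varphi(P)$ involves variables whose first index lies in the set $\{i_1,\dots,i_d\}$, and the number of distinct such first indices contributing a non-trivial variable is bounded by the number of factors chosen from the non-constant part, which is at most $d$.

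By the definition of $\vardeg$ as the maximum, over monomials of the polynomial, of the number of distinct first indices $i \in [n]$ appearing, this immediately yields $\vardeg(\varphi(P)) \leq d$, as claimed. The argument is routine — the only place where one must be slightly careful is that $\vardeg$ counts distinct first coordinates rather than total multiplicities, but this only makes the bound tighter, so no obstacle arises. I would therefore expect the main paragraph of the proof to be short, essentially consisting of the two displays above together with the bookkeeping observation that each factor contributes at most one new first index to any monomial of the expansion.
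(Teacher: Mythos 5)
Your proposal is correct and follows essentially the same approach as the paper: expand $\varphi(P)$ as a product of the single-variable images, and observe that each resulting monomial involves at most $d$ distinct first indices. The paper writes the expansion compactly as $\frac{1}{\abs{\F}^d}\sum_{w \in \F^S}\prod_{i \in S}\omega_p^{-\Tr(w_i\alpha_i)}\, y_{i,w_i}$ (implicitly taking $P$ with distinct first indices), while you handle the slightly more general case where the $i_j$ may repeat; this is a minor difference and both arguments are sound.
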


    \begin{proof}
        This can be seen by simply expanding $\varphi(P)$. Let $P = \prod_{i \in S} x_{i, \alpha_i}$ for some choice $\{\alpha_i\}_{i \in S}$. The degree of $P$ is $d$, the size of $S$. We compute
        \begin{equation*}
            \varphi(P) = \prod_{i \in S} \frac{1}{\abs{\F}} + \frac{1}{\abs{\F}}\sum_{\beta \in \F^*} \omega_p^{-\Tr(\beta \alpha_i)} y_{i, \beta} = \frac{1}{\abs{\F}^d}\sum_{w \in \F^S} \prod_{i \in S}\omega_p^{-\Tr(w_i\alpha_i)} \cdot y_{i, w_i} \mcom
        \end{equation*}
        where again $y_{i, 0}$ is $1$ by convention. Observe that every term above has variable degree at most $d$ as desired.
    \end{proof}

    Now we claim we can simply let $\pE'\sbra{P} = \pE\sbra{\varphi(P)}$ for any degree $d$ multilinear monomial $P$ and prove that $\pE'$ satisfies the constraints of \cref{def:indpseudo-expectation} and has objective value $1$ for the underlying $k$-$\LIN(\F)$ instance.

    \begin{lemma}[Booleanity]
        For any $x_{i, \alpha}$ and any polynomial $f$ of degree $\leq d-2$, $\pE'[f (x_{i, \alpha}^2 - x_{i,\alpha})] = 0$. 
    \end{lemma}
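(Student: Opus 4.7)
The plan is to leverage the multiplicative definition of $\varphi$ and reduce the desired booleanity identity to an algebraic identity in the $y$-variables that lies in the ideal generated by the validity and consistency relations on $\pE$. Concretely, since $\pE'[P]=\pE[\varphi(P)]$ for any polynomial $P$ (with $\varphi$ extended multiplicatively), I would write
\[
\pE'\bigl[f\cdot(x_{i,\alpha}^2-x_{i,\alpha})\bigr]=\pE\bigl[\varphi(f)\cdot\bigl(\varphi(x_{i,\alpha})^{2}-\varphi(x_{i,\alpha})\bigr)\bigr]\mcom
\]
so the entire task reduces to showing that $\varphi(x_{i,\alpha})^{2}-\varphi(x_{i,\alpha})$ vanishes under $\pE$ when multiplied by $\varphi(f)$, which has variable degree at most $d-2$ by \cref{obs:degreematch}.

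Next, I would expand $\varphi(x_{i,\alpha})^{2}$ using the Fourier definition to obtain
\[
\varphi(x_{i,\alpha})^{2}=\tfrac{1}{|\F|^{2}}\sum_{\beta,\gamma\in\F}\omega_{p}^{-\Tr((\beta+\gamma)\alpha)}\, y_{i,\beta}y_{i,\gamma}\mcom
\]
where $y_{i,0}:=1$. Pair each product $y_{i,\beta}y_{i,\gamma}$ with its would-be Fourier multiplication $y_{i,\beta+\gamma}$. For $\beta=0$ or $\gamma=0$ these agree trivially; for $\beta+\gamma=0$ the difference is $y_{i,\beta}y_{i,-\beta}-1$, which is a consistency relation; and for all other cases I would verify the identity
\[
y_{i,\beta}y_{i,\gamma}-y_{i,\beta+\gamma}=-y_{i,\gamma}\bigl(y_{i,\beta+\gamma}\overline{y_{i,\gamma}}-y_{i,\beta}\bigr)-y_{i,\beta+\gamma}\bigl(1-y_{i,\gamma}\overline{y_{i,\gamma}}\bigr)\mcom
\]
exhibiting the left-hand side as an explicit polynomial combination of consistency generators with coefficient polynomials of variable degree exactly one in index $i$. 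Substituting back, the collapsed sum after replacing each $y_{i,\beta}y_{i,\gamma}$ by $y_{i,\beta+\gamma}$ reorganizes as $\tfrac{1}{|\F|^{2}}\sum_{\delta\in\F}|\F|\,\omega_{p}^{-\Tr(\delta\alpha)}y_{i,\delta}=\varphi(x_{i,\alpha})$, so the difference $\varphi(x_{i,\alpha})^{2}-\varphi(x_{i,\alpha})$ lies in the ideal generated by the validity and consistency relations.

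Finally, I would apply $\pE$ term-by-term to the resulting expression $\varphi(f)\cdot(\varphi(x_{i,\alpha})^{2}-\varphi(x_{i,\alpha}))$. Since each generator gets multiplied by $\varphi(f)$ times one of $y_{i,\gamma}$ or $y_{i,\beta+\gamma}$, the test polynomial multiplying each consistency/validity relation has variable degree bounded appropriately (this is where \cref{obs:degreematch} is invoked, noting that the extra factor of $y_{i,\cdot}$ does not inflate the variable degree at index $i$ since $\varphi(f)$ already occupies that slot only if $f$ depends on $x_{i,\cdot}$, in which case both contributions share a single $i$-slot). Each such term then vanishes by the validity and consistency properties of $\pE$, completing the proof.

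The main obstacle I anticipate is the degree bookkeeping at the final step: one must verify that the variable degree of the ``test'' polynomial that multiplies each consistency relation does not exceed $d-2$, the threshold required by \cref{def:fourierpseudo-expectation}. The natural worry is that multiplying $\varphi(f)$ (of variable degree $\leq d-2$) by an additional $y_{i,\gamma}$ might push us past the allowed range; resolving this requires carefully observing that the variable degree counts distinct indices, so the $y_{i,\gamma}$ factor either merges with an existing occurrence of index $i$ in $\varphi(f)$, or arises only when $f$ has degree strictly less than $d-2$ at index $i$, keeping us within budget.
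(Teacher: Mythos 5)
Your proof takes a genuinely different route than the paper's, and it contains a real gap.

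The paper dodges this lemma almost entirely by a \emph{definitional} choice: $\pE'$ is defined only on multilinear degree-$\le d$ monomials (where $\pE'[P]=\pE[\varphi(P)]$), and a non-multilinear monomial is assigned the value of its multilinear reduction by fiat. With that definition, $P x_{i,\alpha}^2$ and $P x_{i,\alpha}$ have the same multilinear reduction, so booleanity is immediate and no interaction with $\varphi$, consistency, or validity needs to be analyzed. Your proof instead defines $\pE'$ by applying $\pE\circ\varphi$ to arbitrary polynomials and then tries to verify booleanity computationally, by showing $\varphi(x_{i,\alpha})^2-\varphi(x_{i,\alpha})$ lies in the ideal generated by the consistency and validity relations. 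That algebraic identity is correct — your displayed decomposition of $y_{i,\beta}y_{i,\gamma}-y_{i,\beta+\gamma}$ into two consistency generators is right, and the collapsed Fourier sum does recombine into $\varphi(x_{i,\alpha})$ — so the approach is sound in spirit.

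The gap is exactly where you flag the worry, but your proposed resolution does not hold. In \cref{def:fourierpseudo-expectation} the consistency axiom is available only for test polynomials of variable degree at most $d-2$. In your final step the test polynomial is $\varphi(f)\cdot y_{i,\gamma}$ (or $\varphi(f)\cdot y_{i,\beta+\gamma}$). If $f$ has degree exactly $d-2$ and does not involve $x_{i,\cdot}$ for any $\alpha$, then $\varphi(f)$ has variable degree $d-2$ and has \emph{no} occurrence of index $i$, so $\varphi(f)y_{i,\gamma}$ has variable degree $d-1 > d-2$, and the consistency axiom of \cref{def:fourierpseudo-expectation} cannot be invoked. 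There is nothing forcing $f$ to touch index $i$ or to have slack at that index, so the ``merging'' argument you sketch does not close the gap. (The paper's internal max-entropy construction actually satisfies a stronger consistency property — the proof of its consistency claim only requires $\vardeg\big(f\cdot(y_{i,\alpha}\overline{y_{i,\beta}}-y_{i,\alpha-\beta})\big)\le d$ — and your argument would go through against that strengthened statement. But it does not follow from the abstract axioms of \cref{def:fourierpseudo-expectation}, which is all \cref{lem:pseudoreduction} assumes.) The cleanest fix is the paper's: make booleanity hold by construction via multilinear reduction, rather than re-deriving it from the complex constraints.
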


    \begin{proof}
        By linearity it suffices to prove this for any degree at most $d-2$ monomial $P$. We now show $\pE'\sbra{Px_{i,\alpha}^2} = \pE'\sbra{Px_{i,\alpha}}$. Notice immediately that $Px_{i,\alpha}^2$ and $Px_{i,\alpha}$ give the same multilinear monomial after reducing exponents, thus they are set the same in $\pE'$.
    \end{proof}

    \begin{lemma}[$k$-$\LIN(\F)$ Sum]
        For any $i \in [n]$ and polynomial $f$ of degree at most $d-1$, $\pE'[f\cdot(\sum_{\alpha \in \F} x_{i,\alpha} -1)] = 0$.
    \end{lemma}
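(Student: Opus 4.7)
My plan is to show that $\varphi$ sends the polynomial $\sum_{\alpha \in \F} x_{i,\alpha} - 1$ to the zero polynomial in the $y$-variables; combined with multiplicativity of $\varphi$, this immediately yields the lemma via the definition $\pE'[P] = \pE[\varphi(P)]$.

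First I would compute $\varphi\left(\sum_{\alpha \in \F} x_{i,\alpha}\right)$ directly by extending $\varphi$ linearly over the sum and substituting the definition:
\begin{equation*}
\varphi\left(\sum_{\alpha \in \F} x_{i,\alpha}\right) = \sum_{\alpha \in \F}\left(\frac{1}{\abs{\F}} + \frac{1}{\abs{\F}} \sum_{\beta \in \F^*} \omega_p^{-\Tr(\beta\alpha)}\, y_{i,\beta}\right) = 1 + \frac{1}{\abs{\F}} \sum_{\beta \in \F^*} y_{i,\beta} \sum_{\alpha \in \F} \omega_p^{-\Tr(\beta\alpha)}.
\end{equation*}
For each $\beta \in \F^*$, the map $\alpha \mapsto \omega_p^{-\Tr(\beta\alpha)}$ is a non-trivial additive character of $\F$ and hence sums to zero, leaving exactly $1$. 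Thus $\varphi\left(\sum_{\alpha \in \F} x_{i,\alpha} - 1\right)$ is the identically zero polynomial in the $y$-variables.

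Second, I would invoke the multiplicativity of $\varphi$, which is free from the construction since $\varphi$ is a ring homomorphism from polynomials in the $x_{i,\alpha}$'s into polynomials in the $y_{i,\beta}$'s. This gives
\begin{equation*}
\varphi\!\left(f \cdot \Big(\sum_{\alpha \in \F} x_{i,\alpha} - 1\Big)\right) = \varphi(f) \cdot 0 = 0,
\end{equation*}
so that $\pE'\bigl[f \cdot (\sum_\alpha x_{i,\alpha} - 1)\bigr] = \pE[0] = 0$, provided $f$ has Boolean degree at most $d-1$ (so that the product has degree at most $d$ and is in the domain of $\pE'$, and $\varphi(f)$ has variable degree at most $d-1$ by \cref{obs:degreematch}).

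The main subtlety I expect is a bookkeeping one: $\pE'$ is originally defined on multilinear Boolean monomials, so $f \cdot (\sum_\alpha x_{i,\alpha} - 1)$ should be reduced modulo the booleanity relations $x_{i,\alpha}^2 = x_{i,\alpha}$ before $\varphi$ is applied, whereas the multiplicativity argument above works in the free polynomial ring. This is harmless because a direct expansion using the consistency relation $y_{i,\alpha}\overline{y_{i,\beta}} = y_{i,\alpha-\beta}$ and $y_{i,0} = 1$ shows $\varphi(x_{i,\alpha})^2 = \varphi(x_{i,\alpha})$ modulo the complex constraints enforced by $\pE$, so $\pE[\varphi(P)] = \pE[\varphi(Q)]$ whenever $P$ and $Q$ agree modulo booleanity. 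Hence the reduction to multilinear form commutes with the computation, and the identity $\varphi(\sum_\alpha x_{i,\alpha} - 1) = 0$ propagates cleanly through to $\pE'$.
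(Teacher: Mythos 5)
Your proof is correct and takes essentially the same route as the paper; the shared kernel of both arguments is the vanishing of the additive character sum $\sum_{\alpha \in \F} \omega_p^{-\Tr(\beta\alpha)} = 0$ for $\beta \in \F^*$. The paper verifies the lemma by directly computing $\pE'\bigl[P \cdot \sum_\alpha x_{i,\alpha}\bigr] = \pE'[P]$ for each monomial $P$, splitting off the $\beta = 0$ term from $\sum_\alpha \varphi(x_{i,\alpha}) = \frac{1}{\abs{\F}}\sum_{\alpha,\beta} \omega_p^{-\Tr(\beta\alpha)} y_{i,\beta}$ and killing the rest by the character sum. You instead observe that $\varphi$ sends $\sum_\alpha x_{i,\alpha} - 1$ to the zero polynomial and invoke multiplicativity; this is the same computation reorganized so the lemma falls out immediately. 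Your remark about the booleanity reduction is a reasonable precaution — the paper's own definition of $\pE'$ is stated on multilinear monomials — but since the paper's proof of the Sum lemma (like yours) treats $\pE'[Q] = \pE[\varphi(Q)]$ for arbitrary degree-$\leq d$ monomials $Q$, the issue is already implicitly absorbed once the separate Booleanity lemma is established, and your side argument that $\varphi(x_{i,\alpha})^2 \equiv \varphi(x_{i,\alpha})$ modulo the consistency constraints of $\pE$ closes the loop cleanly.
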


    \begin{proof}
        By linearity it suffices to prove for any degree at most $d-1$ monomial $P$ that $\pE'\sbra{P \cdot \sum_{\alpha \in \F} x_{i,\alpha}} = \pE'\sbra{P}$. So we compute
        \begin{align*}
            \pE'\sbra{P \cdot \sum_{\alpha \in \F} x_{i,\alpha}} &= \pE\sbra{\varphi\left(P \cdot \sum_{\alpha \in \F} x_{i,\alpha}\right)}\\
            &= \pE\sbra{\varphi(P) \cdot \sum_{\alpha \in \F} \varphi(x_{i,\alpha})}\\
            &= \pE\sbra{\varphi(P) \cdot \frac{1}{\abs{\F}} \sum_{\alpha \in \F} \sum_{\beta \in \F} \omega_p^{-\Tr(\beta \alpha)} y_{i, \beta}}\\
            &=  \frac{1}{\abs{\F}} \sum_{\alpha \in \F} \sum_{\beta \in \F} \omega_p^{-\Tr(\beta \alpha)} \cdot \pE\sbra{\varphi(P) \cdot y_{i, \beta}}\\
            &=  \pE\sbra{\varphi(P)} + \frac{1}{\abs{\F}} \sum_{\beta \in \F^*} \pE\sbra{\varphi(P) \cdot y_{i, \beta}} \sum_{\alpha \in \F} \omega_p^{-\Tr(\beta \alpha)}\mper
        \end{align*}
        We finish by observing that the sum $\sum_{\alpha \in \F} \omega_p^{-\Tr(\beta \alpha)}$ is symmetric so is $0$.
    \end{proof}

    \begin{lemma}[Positivity]
        \label{lem:positivity1}
        For any polynomial $f$ of degree at most $d/2$, $\pE'[f^2] \geq 0$.
    \end{lemma}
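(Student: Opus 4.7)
The plan is to reduce positivity for $\pE'$ to the positivity axiom of the complex pseudo-expectation $\pE$, using that $\varphi$ is a multiplicative substitution. Since $\varphi$ distributes over products, $\pE'[f^2] = \pE[\varphi(f^2)] = \pE[\varphi(f)^2]$, and by \cref{obs:degreematch}, $\varphi(f)$ has variable degree at most $d/2$, so the right-hand side is well-defined under $\pE$.

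The crux is to show that $\varphi(f)$ is \emph{effectively self-conjugate} as a polynomial under $\pE$. First I would directly compute
\begin{equation*}
\overline{\varphi(x_{i,\alpha})} = \frac{1}{\abs{\F}}\left(1 + \sum_{\beta \in \F^*} \omega_p^{\Tr(\beta \alpha)} \overline{y_{i,\beta}}\right),
\end{equation*}
and then invoke the consistency axiom of \cref{def:fourierpseudo-expectation} specialized to $\alpha' = 0$ (using the convention $y_{i,0} = 1$), which gives $\pE[g \cdot (\overline{y_{i,\beta}} - y_{i,-\beta})] = 0$ for every $g$ of admissible variable degree. Substituting $\overline{y_{i,\beta}} \mapsto y_{i,-\beta}$ inside $\pE$ and reindexing $\gamma = -\beta$ recovers exactly the expression $\varphi(x_{i,\alpha})$. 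Because $f$ has real coefficients, iterating this replacement one factor at a time across each monomial of $\varphi(f)$ yields the identity $\pE[\overline{\varphi(f)}\cdot h] = \pE[\varphi(f)\cdot h]$ for every polynomial $h$ of variable degree at most $d/2$.

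Applying this with $h = \varphi(f)$ then closes the argument:
\begin{equation*}
\pE'[f^2] = \pE[\varphi(f)^2] = \pE[\varphi(f)\cdot \overline{\varphi(f)}] = \pE[\abs{\varphi(f)}^2] \geq 0,
\end{equation*}
where the last inequality is the positivity axiom of $\pE$ applied to the variable-degree $d/2$ polynomial $\varphi(f)$. The main subtlety I expect to track carefully is variable-degree bookkeeping in the iterative consistency substitution: each replacement $\overline{y_{i,\beta}} \mapsto y_{i,-\beta}$ happens at a fixed coordinate $i$ and swaps one letter for another at that coordinate, so it preserves variable degree at every step. This ensures all intermediate polynomials remain within the variable-degree bounds where the axioms of $\pE$ apply, and no illegal use of $\pE$ ever occurs in the chain of equalities above.
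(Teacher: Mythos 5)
Your proposal is correct and uses essentially the same argument as the paper: both reduce $\pE'[f^2]=\pE[\varphi(f)^2]$ to $\pE[\abs{\varphi(f)}^2]\geq 0$ by invoking the consistency axiom with $\alpha=0$ (i.e.\ $\overline{y_{i,\beta}}\mapsto y_{i,-\beta}$) together with the realness of $f$'s coefficients and the symmetry of the Fourier expansion of $\varphi(x_{i,\alpha})$ under $\beta \mapsto -\beta$. The paper performs the substitution monomial-by-monomial after fully expanding $\varphi(P)$, whereas you do it one factor $\varphi(x_{i,\alpha})$ at a time, but this is a cosmetic reordering of the same steps.
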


    \begin{proof}
        Since $\pE'\sbra{f^2} = \pE\sbra{\varphi(f)^2}$ the claim reduces to showing we can replace $\varphi(f)$ with $\overline{\varphi(f)}$ and appeal to the positivity of $\pE$. By linearity and the fact that for any real coefficient $c$ we have $c = \overline{c}$ it suffices to show we can freely replace any degree at most $d/2$ monomial $\varphi(P)$ by $\overline{\varphi(P)}$. So we fix $P = \prod_{i \in S} x_{i, \alpha_i}$ for $S \subseteq [n]$ and $\abs{S} \leq d/2$ and compute
        \begin{align*}
            \pE\sbra{\varphi(f) \cdot \varphi(P)} &= \frac{1}{\abs{\F}^d}\sum_{w \in \F^S} \prod_{i \in S}\omega_p^{-\Tr(w_i\alpha_i)} \cdot \pE\sbra{\varphi(f)\cdot \prod_{i \in S} y_{i, w_i}} \mper
        \end{align*}
        Now if we replace $\varphi(P)$ with $\overline{\varphi(P)}$ above we get
        \begin{align*}
            \pE\sbra{\varphi(f) \cdot \overline{\varphi(P)}} &= \frac{1}{\abs{\F}^d}\sum_{w \in \F^S} \prod_{i \in S}\omega_p^{\Tr(w_i\alpha_i)} \cdot \pE\sbra{\varphi(f)\cdot \prod_{i \in S} \overline{y_{i, w_i}}} \mper
        \end{align*}
        Note by the consistency constraint of $\pE$ we can replace $\overline{y_{i, w_i}}$ with $y_{i, -w_i}$. The terms are then identical since we sum over all $w \in \F^S$ since we can just relabel $w_i$ as $-w_i$. 
    \end{proof}

    With all constraints met, we finish by confirming that the objective value of the instance polynomial is $1$ under $\pE'$ given it is $1$ under $\pE$. Recall the instance polynomial looks like $\frac{1}{\abs{\mcH}}\sum_{v \in \mcH} \sum_{{\substack{\alpha \in (\F^*)^{\supp(v)}\\ \sum_{i \in \supp(v)} \alpha_i = b_v}}} \prod_{i \in \supp(v)} x_{i, \alpha_i v_i^{-1}}$. By linearity we have
    \begin{align*}
        &\pE'\sbra{\frac{1}{\abs{\mcH}}\sum_{v \in \mcH} \sum_{{\substack{\alpha \in (\F^*)^{\supp(v)}\\ \sum_{i \in \supp(v)} \alpha_i = b_v}}} \prod_{i \in \supp(v)} x_{i, \alpha_i v_i^{-1}}}\\
        &=  \frac{1}{\abs{\mcH}}\sum_{v \in \mcH} \sum_{{\substack{\alpha \in (\F^*)^{\supp(v)}\\ \sum_{i \in \supp(v)} \alpha_i = b_v}}} \pE'\sbra{\prod_{i \in \supp(v)} x_{i, \alpha_i v_i^{-1}}}\\
        &=  \frac{1}{\abs{\mcH}}\sum_{v \in \mcH} \sum_{{\substack{\alpha \in (\F^*)^{\supp(v)}\\ \sum_{i \in \supp(v)} \alpha_i = b_v}}} \pE\sbra{\varphi\left(\prod_{i \in \supp(v)} x_{i, \alpha_i v_i^{-1}}\right)}\\
        &=  \frac{1}{\abs{\mcH}}\sum_{v \in \mcH} \frac{1}{\abs{\F}^k} \sum_{{\substack{\alpha \in (\F^*)^{\supp(v)}\\ \sum_{i \in \supp(v)} \alpha_i = b_v}}} \sum_{w \in \F^{\supp(v)}}  \prod_{i \in \supp(v)}\omega_p^{-\Tr(w_i\alpha_iv_i^{-1})} \pE\sbra{\prod_{i \in \supp(v)} y_{i, w_i}}\mper
    \end{align*}
    Suppose for any $v \in \mcH$ we take $w_i = \beta v_i$ for some constant $\beta \in \F$ in the sum above. The inner term then simplifies to $\prod_{i \in \supp(v)}\omega_p^{-\Tr(\beta\alpha_i)} \pE\sbra{\prod_{i \in \supp(v)} y_{i, \beta v_i}}$. By the assumption that $\pE$ satisfies the constraints, we get that $\pE\sbra{\prod_{i \in \supp(v)} y_{i, \beta v_i}} = \omega^{\Tr(\beta b_v)}$. Since $\sum_{i \in \supp(v)} \alpha_i = \beta_v$ we also have $\prod_{i \in \supp(v)}\omega_p^{-\Tr(\beta\alpha_i)} = \omega_p^{-\Tr(\beta b_v)}$ and get a contribution of $1$ to the sum. This works for any choice $v \in \mcH$, any choice of satisfying assignment (of which there are $\abs{\F}^{k-1}$ in $k$-$\LIN(\F)$, and any choice of $\beta$, so we get a total of $\abs{\F}^k \abs{\mcH}$ such contributions. In total, this gives a sum of $1$.

    It then suffices to show that all terms outside of these contribute (that is $w$ is not a scalar multiple of $v$) a total of $0$. The main idea is to view the sum over $\alpha$ as choosing a uniformly random satisfying assignment. While this is not completely random, it is uniform on every $k-1$ marginal thanks to the structure of $k$-$\LIN(\F)$. As such, if we are able to eliminate just one $\alpha_i$ term, then suddenly the rest are independent and the $\omega_p^{-\Tr(w_j\alpha_jv_j^{-1})}$ term becomes mean $0$.

    To get such a collapse, we case on what $w \in \F^{\supp(v)}$ looks like. First, assume $w_i = 0$ for any $i \in [n]$. In this case we are immediately done; $\omega_p^{-\Tr(w_i\alpha_iv_i^{-1})} = 1$ so the dependence on $\alpha_i$ is removed. Now we assume $w_i \neq 0$ for all $i \in [n]$. Necessarily $w_1 = \beta v_1$ (assuming without loss of generality $1 \in \supp(v)$) for some $\beta$ while there exists some $i \in \supp(v)$ such that $w_i \neq \beta v_i$. Now we simply rewrite the sum above as such
    \begin{equation*}
        \frac{1}{\abs{\mcH}}\sum_{v \in \mcH} \frac{1}{\abs{\F}^k} \sum_{{\substack{\alpha \in (\F^*)^{\supp(v)}\\ \sum_{i \in \supp(v)} \alpha_i = b_v}}} \sum_{w \in \F^{\supp(v)}} \omega_p^{-\Tr(\beta b_v)} \prod_{i \in \supp(v)}\omega_p^{-\Tr((w_i-\beta v_i)\alpha_iv_i^{-1})} \pE\sbra{\prod_{i \in \supp(v)} y_{i, w_i}}\mper
    \end{equation*}
    We are allowed to factor the root coefficients due to linearity of the trace map. Now we just note that $\prod_{i \in \supp(v)}\omega_p^{-\Tr((w_i-\beta v_i)\alpha_iv_i^{-1})}$ is independent of $\alpha_1$ since $w_1 - \beta v_1 = 0$, thus its means is $0$ under uniform satisfying $\alpha$.
\end{proof}

\subsection{Complex max-entropy pseudo-expectations}

With \cref{lem:pseudoreduction} in mind, the proof of \cref{thm:infsoslowerbound1} is finished by showing that a random $k$-$\LIN(\F)$ instance at the proper equation threshold has a complex pseudo-expectation $\pE$ as in \cref{def:fourierpseudo-expectation} that believes it is satisfiable. 

The key technical result turns out to be the below theorem guaranteeing an expansion property for sets of random vectors. We show that good expansion in the following sense implies the existence of our desired pseudo-expectation $\pE$.

\begin{theorem}[Expansion in random $k$-sparse vector sets]
    \label{thm:inversefeige}
     Fix $k/2 \leq \ell \leq n/(\max(\abs{\F^*}, k))$. Let $\mcH$ be a set of $\abs{\mcH} \leq \Omega(\delta n) \cdot  \left(\frac{n\abs{\F^*}}{\ell}\right)^{k/2-1-\beta}$ uniformly random $k$-sparse vectors in $\F^n$. Then with probability at least $1-\delta$ all sets $\mcV \subseteq \mathcal{H}$ with $\abs{\mcV} \leq \ell$ and coefficients $\cbra{\alpha_v}_{v \in \mcV}$ in $\F^*$ have that
     \begin{equation*}
         \left|\sum_{v \in \mcV} \alpha_v \cdot v\right| > \beta \abs{\mcV}\mper
     \end{equation*}
\end{theorem}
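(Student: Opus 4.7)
The plan is a first-moment argument. For each size $1 \le t \le \ell$, I would bound the expected number of ``bad witnesses'' $(\mcV, \alpha)$ with $\mcV \subseteq \mcH$, $\abs{\mcV} = t$, coefficients $\alpha \in (\F^*)^{\mcV}$, and $\wt(\sum_{v \in \mcV} \alpha_v v) \le \beta t$, then sum over $t$ and apply Markov. Union-bounding the choice of $\mcV$, the choice of $\alpha$, and a target support $S \subseteq [n]$ of size $\beta t$ containing $\supp(\sum \alpha_v v)$ gives
\[
\E[\#\text{bad}] \;\le\; \sum_{t=1}^{\ell} \binom{\abs{\mcH}}{t} \cdot \abs{\F^*}^{t} \cdot \binom{n}{\beta t} \cdot P\!\left[\,\textstyle\sum_i \alpha_i v_i \in \F^{S}\right] \mcom
\]
where the $v_1,\dots,v_t$ are i.i.d.\ uniformly random $k$-sparse vectors and $\alpha, S$ are fixed; it suffices to show this sum is at most $\delta$.

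The technical core is estimating the innermost probability. I would exploit that a random $k$-sparse vector decomposes into a uniformly random $k$-subset $T_i = \supp(v_i)$ together with independent uniform $\F^*$-values on $T_i$. Conditioning on the supports, two observations drive the estimate: (i) if some index $j \in \bigcup_i T_i \setminus S$ is \emph{unique} (lies in exactly one $T_i$), then $(\sum_i \alpha_i v_i)_j \ne 0$ deterministically, so every unique index of $\bigcup_i T_i$ must lie in $S$; and (ii) at each \emph{shared} index $j \in \bigcup_i T_i \setminus S$, by fixing all but one of the values $v_{i,j}$, the cancellation $\sum_{i : j \in T_i} \alpha_i v_{i,j} = 0$ holds with probability at most $1/\abs{\F^*}$, and these events are independent across $j$ by the conditional independence of the $\F^*$-values. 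Union-bounding over the ``excess'' set $A = \bigcup_i T_i \setminus S$ of size $a$, this yields
\[
P\!\left[\,\textstyle\sum_i \alpha_i v_i \in \F^S\right] \;\le\; \sum_{a \ge 0} \binom{n}{a} \left(\frac{\beta t + a}{n}\right)^{kt} \frac{1}{\abs{\F^*}^{a}} \mcom
\]
where the middle factor bounds $P[T_i \subseteq S \cup A \text{ for all } i]$ using the estimate $\binom{\abs{S}+a}{k}/\binom{n}{k} \le ((\beta t + a)/n)^{k}$ and the $(1/\abs{\F^*})^a$ factor absorbs the cancellation bonus at the $a$ shared non-$S$ indices.

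Plugging back in and applying $\binom{n}{s} \le (en/s)^s$ throughout, substituting $\abs{\mcH} \le \Omega(\delta n) \cdot (\abs{\F^*}n/\ell)^{k/2 - 1 - \beta}$ should make each summand collapse (after optimizing $a$) to a geometrically decaying factor of the form $(c/n^{c'})^t$. The exponent $k/2 - 1 - \beta$ in the bound on $\abs{\mcH}$ is chosen precisely so as to balance the exponents of $n$ coming from $\binom{n}{\beta t}$, from the $(\beta t/n)^{kt}$ piece of the support bound, and from the $\abs{\F^*}^{t}$ union-bound factor; meanwhile the condition $\ell \le n/\abs{\F^*}$ ensures that the $\abs{\F^*}$-powers coming from the $\abs{\F^*}^{-a}$ bonus do not swamp the combinatorial blow-up. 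The main obstacle I anticipate is this optimization over the inner parameter $a$: two regimes compete, namely small $a$ (union forced to sit mostly inside $S$, made small by the $(\beta t/n)^{kt}$ factor) and $a \approx (k/2 - \beta)t$ (maximum union size, exploiting the $\abs{\F^*}^{-a}$ cancellation bonus to its fullest). Verifying that both regimes give the same asymptotic bound, and that constants line up uniformly for every $t \in [k,\ell]$ so the sum over $t$ stays $O(\delta)$, is the delicate part of the argument.
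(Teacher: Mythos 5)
Your approach is a valid but genuinely different route from the paper's. The paper first proves a clean per-$(\mcV,\alpha)$ probability estimate (\cref{lem:largerefutations}) via a ``template'' decomposition: each of the $kd$ nonzero entries of the $[n]\times[d]$ grid of vectors is labelled $-1$, $0$, or $+1$ according to whether it cancels, alters, or newly introduces a nonzero partial-sum coordinate, and a union bound over the $3^{kd}$ templates yields the claim. You instead run a two-stage union: first over a target support $S$ of size $\beta t$ (cost ${n\choose\beta t}$), then over an excess set $A$ of size $a$, exploiting that unique indices outside $S$ force a nonzero coordinate and that each shared index outside $S$ cancels with conditional probability at most $1/\abs{\F^*}$, independently across indices. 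Carrying the arithmetic through for $a = ct$, after including ${\abs{\mcH}\choose t}\abs{\F^*}^t$ and substituting the size bound on $\abs{\mcH}$, the per-term bound collapses to $C_k^t\,\delta'^t(\ell/n)^{(k/2-c)t}\abs{\F^*}^{(k/2-\beta-c)t}$, where $\delta'$ is the constant in the hypothesis on $\abs{\mcH}$; the last two factors are each at most $1$ for every $c\in[0,k/2]$ using $\ell\leq n/\abs{\F^*}$ and $\ell \le n/k$, so the sum over $a$ and $t$ is $O(\delta')$ and recovers the theorem. Your per-combination bound is somewhat weaker than the paper's lemma for $c > k/2 - \beta/2$, but the surplus is absorbed by the ${\abs{\mcH}\choose t}$ factor, so the final result matches.

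One ingredient missing from the proposal is the constraint $a \leq kt/2$, which is essential for the inner sum to converge. It holds deterministically: every index of $A$ must be shared, since if hit exactly once it is unique and contributes a nonzero coordinate outside $S$, contradicting $\supp(\sum_{v\in\mcV}\alpha_v v)\subseteq S$, and the $kt$ support positions can cover at most $kt/2$ such doubly-hit indices. Without this cap, the displayed sum $\sum_{a\geq 0}{n\choose a}\left(\frac{\beta t+a}{n}\right)^{kt}\abs{\F^*}^{-a}$ is unbounded --- over $\F_2$ one has $\abs{\F^*}=1$ and ${n\choose a}$ eventually dominates --- and the exponent balance you flag as delicate fails for $c > k/2$. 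Once $a\leq kt/2$ is made explicit, both regimes you identify are controlled as sketched, and the approach goes through.
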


We prove \cref{thm:inversefeige} in \cref{append:expansion}. While seemingly a blanket combinatorial fact, \cref{thm:inversefeige}'s connection to refuting $k$-$\LIN$ instances can be seen through the following definition.

\begin{definition}[Refutation]
    A \textit{refutation} in a $k$-$\LIN$ instance $\mcI = (\mcH, \{b_v\}_{v \in \mcH})$ is a collection $\mcV \subseteq \mcH$ and a choice of coefficients $\alpha_v$ for all $v \in \mcV$ such that
    \begin{equation*}
        \sum_{v \in \mcV} \alpha_v \cdot v = 0 \text{ and } \sum_{v \in \mcV} \alpha_v \cdot b_v \neq 0\mper
    \end{equation*}
    That is, a refutation is a linear combination of equations such that the coefficient vectors cancel out (the left-hand side is 0) but the right-hand side is non-zero.
\end{definition}

Informally, one can think of refutations as the type of contradiction proof that Sum-of-Squares looks for when trying to show a $k$-$\LIN$ instance is not satisfiable. Expansion then guarantees that long refutations require large degree, and as such Sum-of-Squares cannot ``see'' these contradictions, instead believing the instance is satisfiable. This view allows us to prove \cref{thm:mainsoslowerbound} fairly straightforwardly.

We now construct our pseudo-expectation $\pE$ given \cref{thm:inversefeige}. Applying \cref{lem:pseudoreduction} to the result of the following theorem completes \cref{thm:infsoslowerbound1}.

\begin{theorem}[Degree-$\widetilde{\Omega}(\ell)$ Sum-of-Squares lower bounds against $k$-$\LIN(\F)$ refutation]
    \label{thm:mainsoslowerbound}
    Fix $k \geq 3$ and $\frac{n}{\max\left(\abs{\F^*}, k\right)} \geq \ell \geq \max(k,\log\left(\abs{\F^*}n\right))$. Then with large probability a random $k$-$\LIN(\F)$ instance $\mcI$ with $\abs{\mcH} = \Theta(n) \cdot \left(\frac{n\abs{\F^*}}{\ell}\right)^{k/2-1} \cdot \varepsilon^{-2}$ has that:
    \begin{enumerate}
        \item \label{item:sos1} $\val(\mcI) \leq \frac{1}{\abs{\F}} + \varepsilon$.
        \item \label{item:sos2} There is a degree-$\widetilde{\Omega}(\ell)$ pseudo-expectation $\pE$ satisfying:
        \begin{enumerate}
            \item The constraints of \cref{def:fourierpseudo-expectation}.
            \item For all $v \in \mcH, \beta \in \F^*$, $\pE\sbra{\prod_{i\in\supp(v)} y_{i, \beta v_i}} = \omega_p^{\Tr(\beta b_v)}$.
        \end{enumerate}
    \end{enumerate}
\end{theorem}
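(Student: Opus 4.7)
The plan is to establish \cref{item:sos1} by standard concentration and \cref{item:sos2} by a Schoenebeck--Grigoriev style pseudo-expectation built from the expansion property of \cref{thm:inversefeige}. For \cref{item:sos1}, I would fix an assignment $x \in \F^n$. Since each $b_v$ is drawn uniformly from $\F$ independently, the indicator $\1(\langle v, x \rangle = b_v)$ is Bernoulli with mean $1/\abs{\F}$, so by Chernoff, $\val(\mcI, x) \leq 1/\abs{\F} + \varepsilon$ holds with probability $1 - \exp(-\Omega(\varepsilon^{2}\abs{\mcH}))$. A union bound over the $\abs{\F}^n$ assignments needs $\abs{\mcH} \gtrsim n\log\abs{\F}/\varepsilon^2$, which is subsumed by the hypothesis for $k \geq 3$ and $\ell \leq n/\abs{\F^*}$, since the factor $(n\abs{\F^*}/\ell)^{k/2-1}$ is at least $\abs{\F^*}^{k-2}$, absorbing $\log \abs{\F}$ easily.

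For \cref{item:sos2}, I would invoke \cref{thm:inversefeige} with a constant $\beta \in (0,1)$ (say $\beta = 1/4$) and with the parameter there chosen as $\ell' = \Theta(\ell)$, so that the stated density threshold matches our $\abs{\mcH}$. Then with high probability, every $\mcV \subseteq \mcH$ with $\abs{\mcV} \leq \ell'$ and non-zero $\{\alpha_v\}_{v \in \mcV} \subseteq \F^*$ has $\wt(\sum_v \alpha_v v) > \beta \abs{\mcV}$. Call a vector $\gamma \in \F^n$ \emph{derivable} if $\gamma = \sum_{v \in \mcV} \alpha_v v$ for some $\mcV \subseteq \mcH$ with $\abs{\mcV} \leq \ell'/2$; the expansion guarantees such a representation is unique, since any two competing derivations would cancel to a non-trivial combination of at most $\ell'$ equations summing to $0$. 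Define $\phi(\gamma) = \prod_{v \in \mcV} \omega_p^{\Tr(\alpha_v b_v)}$ when $\gamma$ is derivable and $\phi(\gamma) = 0$ otherwise, and note that $\phi$ is a partial homomorphism: whenever $\gamma_1, \gamma_2, \gamma_1 + \gamma_2$ are all derivable with total budget $\leq \ell'/2$, we have $\phi(\gamma_1+\gamma_2) = \phi(\gamma_1)\phi(\gamma_2)$, by uniqueness of the derivation.

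I would then set $\pE\sbra{\prod_i y_{i,\gamma_i}} = \phi(\gamma)$ on canonical monomials $y_\gamma = \prod_{i \in \supp(\gamma)} y_{i,\gamma_i}$ with $\gamma_i \in \F^*$, and extend to arbitrary monomials by first applying consistency $y_{i,\alpha}\overline{y_{i,\beta}} = y_{i,\alpha-\beta}$ and validity $y_{i,\alpha}^p = 1$ to reduce to canonical form. Normalization $\pE[1] = \phi(0) = 1$ is immediate from the empty derivation; validity and consistency follow because $\phi$ lifts to an additive character in the $\gamma$-variable on the derivable cone; and the prescribed Fourier values on equations, $\pE\sbra{\prod_{i \in \supp(v)} y_{i,\beta v_i}} = \phi(\beta v) = \omega_p^{\Tr(\beta b_v)}$, hold since $\beta v$ is derivable as the single-term combination with $\alpha_v = \beta$.

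The main obstacle will be verifying positivity at variable degree $d = \widetilde{\Omega}(\ell)$. For $f = \sum_\gamma c_\gamma y_\gamma$ with $\wt(\gamma) \leq d/2$, consistency gives $\abs{f}^2 = \sum_{\gamma,\gamma'} c_\gamma \overline{c_{\gamma'}} y_{\gamma - \gamma'}$. Define an equivalence relation on the support of $f$ by $\gamma \sim \gamma'$ iff $\gamma - \gamma'$ is derivable; expansion implies that any such derivation uses at most $\wt(\gamma - \gamma')/\beta \leq d/\beta \leq \ell'/2$ equations provided $d \leq \beta\ell'/2$, which is where the $\widetilde{\Omega}(\ell)$ degree bound arises. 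On each equivalence class $C$, I fix a representative $\gamma_0$ and use the partial homomorphism property to write $\phi(\gamma - \gamma') = \phi(\gamma - \gamma_0)\,\overline{\phi(\gamma' - \gamma_0)}$. Summing over classes gives
\begin{equation*}
\pE\sbra{\abs{f}^2} = \sum_{C} \Bigl|\sum_{\gamma \in C} c_\gamma \, \phi(\gamma - \gamma_0)\Bigr|^2 \geq 0\mper
\end{equation*}
The delicate bookkeeping is ensuring that all derivation sizes arising when composing $(\gamma - \gamma_0)$ with $(\gamma' - \gamma_0)$ remain within the $\ell'/2$ regime where $\phi$ is multiplicative, which is exactly what the constant-$\beta$ expansion from \cref{thm:inversefeige} furnishes once the parameter $\ell'$ is tuned against $d$.
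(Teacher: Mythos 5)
Your overall architecture matches the paper's: item~1 by Chernoff plus a union bound, item~2 via a max-entropy ``derivation'' pseudo-expectation built from the expansion property of \cref{thm:inversefeige}, with uniqueness of derivations, a partial-homomorphism structure, and equivalence classes for positivity. However, there is a genuine gap in how you invoke \cref{thm:inversefeige}.

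You propose to apply \cref{thm:inversefeige} with a constant expansion parameter $\beta$, say $\beta = 1/4$, and some $\ell' = \Theta(\ell)$. But the equation-density ceiling that \cref{thm:inversefeige} permits is $\abs{\mcH} \leq \Omega(\delta n) \cdot \left(\frac{n\abs{\F^*}}{\ell'}\right)^{k/2-1-\beta}$, whereas the instance at hand has $\abs{\mcH} = \Theta(n)\cdot\left(\frac{n\abs{\F^*}}{\ell}\right)^{k/2-1}\cdot\varepsilon^{-2}$. The ratio of these two thresholds involves a factor of $\left(\frac{n\abs{\F^*}}{\ell'}\right)^{\beta}$, and for constant $\beta$ and $\ell' = \Theta(\ell)$ this factor is polynomially large in $n$ (note $\frac{n\abs{\F^*}}{\ell} \geq \max(\abs{\F^*}, k)$ by hypothesis, which can be $n^{\Omega(1)}$). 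There is no choice of $\ell' = \Theta(\ell)$ that closes this polynomial gap: tuning $\ell'$ only shifts the base, not the exponent deficiency $\beta$. Consequently, the expansion conclusion of \cref{thm:inversefeige} simply does not hold for $\mcH$ as dense as required, and the uniqueness of derivations, the partial homomorphism property of $\phi$, and the positivity argument all fail to get off the ground.

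The fix, which the paper uses, is to take $\beta$ vanishing rather than constant: $\beta = 1/\log(\abs{\F^*} n)$. Then $\left(\frac{n\abs{\F^*}}{\ell'}\right)^{\beta} \leq \left(n\abs{\F^*}\right)^{1/\log(n\abs{\F^*})} = e = O(1)$, so the exponent loss costs only a constant factor and \cref{thm:inversefeige} does apply at the required density. The price is that the expansion guarantee weakens to $\wt\left(\sum_{v \in \mcV}\alpha_v v\right) > \abs{\mcV}/\log(\abs{\F^*}n)$, and propagating this through your derivation-size argument (``uses at most $d/\beta$ equations'') shows that the achievable SoS degree is $d = \Theta\left(\ell\varepsilon^2/\log(\abs{\F^*} n)\right)$ rather than $\Theta(\ell)$ — hence $\widetilde{\Omega}(\ell)$ and not $\Omega(\ell)$. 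In other words, the polylog degree loss in the theorem statement is not incidental; it is forced by exactly the tradeoff you are attempting to sidestep by picking a constant $\beta$.

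One secondary note: you also need to restrict the derivation budget to $\ell\varepsilon^2$ rather than $\Theta(\ell)$ (as the paper does, applying \cref{thm:inversefeige} with parameter $\ell\varepsilon^2$) so that the density constraint in \cref{thm:inversefeige} correctly accounts for the extra $\varepsilon^{-2}$ factor in $\abs{\mcH}$; omitting the $\varepsilon^2$ scaling leaves the density mismatch unresolved even after fixing $\beta$.
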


\begin{proof}[Proof of \cref{thm:mainsoslowerbound} from \cref{thm:inversefeige}]

As a corollary of \cref{thm:inversefeige} with parameter $\beta = \frac{1}{\log(\abs{\F^*}n)}$ we get that a random $\mcH$ with $\abs{\mcH} \geq Cn \cdot \left(\frac{n \abs{\F^*}}{\ell}\right)^{k/2-1} \cdot \varepsilon^{-2}$ for $C > 0$ dependent only on $k$ has the property that all $\mcV \subseteq \mcH$ with $\abs{\mcV} \leq \ell \varepsilon^2$ and choice of coefficients $\{\alpha_v\}_{v \in \mcH}$ has $\left| \sum_{v \in \mcH} \alpha_v \cdot v \right| > \frac{\abs{\mcV}}{\log(\abs{\F^*}n)}$. This immediately implies there are no linearly dependent subsets of size $\ell\varepsilon^2$ or smaller. By choosing $\{b_v\}_{v \in \mcH}$ randomly and letting $\mcI = (\mcH, \{b_v\}_{v \in \mcH})$, \cref{item:sos1} follows from the simple probabilistic fact that semirandom $\mcH$ with $\abs{\mcH} \geq 2n \log \abs{\F} \cdot \varepsilon^{-2}$ equations are highly unsatisfiable (proved as a special case of \cref{lem:unsatisfiability}).

Now that we have an unsatisfiable instance $\mcI$, it suffices to prove \cref{item:sos2} by constructing a degree $d = \frac{\ell\varepsilon^2}{\log(\abs{\F^*}n)}$ pseudo-expectation satisfying all the specified constraints of $\mcI$. To aid in describing our construction, we introduce the following definitions simplifying redundancy in our polynomials.

\begin{definition}[Representative polynomials and vectors]
    Let $P$ be a complex monomial in indeterminates $y_{i, \alpha}$ for $i \in [n]$ and $\alpha \in \F^*$. For a fixed $i \in [n]$, we define the total coefficient $\beta_P(i) \in \F$ as follows. For each $\alpha \in \F^*$, let $a_\alpha \in \N$ be the exponent in which $y_{i, \alpha}$ appears in $P$. Then $\beta_P(i) = \sum_{\alpha \in \F^*} a_\alpha \cdot \alpha$.
    As before $a \cdot \alpha$ for $a \in \N$ denotes adding $a$ copies of $\alpha$. The representative monomial $\mathcal{E}(P)$ for $P$ is then given by the product
    \begin{equation*}
        \mathcal{E}(P) = \prod_{i =1}^n y_{i, \beta_P(i)} \mcom
    \end{equation*}
    where by convention $y_{i, 0} = 1$. The representative polynomial is defined analogously as the sum of corresponding representative monomials. Finally, there is a one-to-one correspondence between representative monomials and vectors in $\F^n$ given by taking the $i$th vector entry to be $\beta_P(i)$.
\end{definition}

    The motivation for defining representative polynomials is as follows. Suppose we have distinct degree $\leq d$ polynomials $P$ and $Q$ with equivalent representative polynomials. Notice that the consistency constraint requires that $P$ and $Q$ are equivalent under any pseudo-expectation. As such, any pseudo-expectation we define respecting the constraints should set $\pE[P] = \pE[Q]$. Accordingly, we define our $\pE$ on representative monomials and use this equivalence plus linearity to extend to all polynomials.

\begin{notation}
    Our pseudo-expectation is defined on monomials $y_u$ for $u \in \F^{\abs{\F^*}n}$ as $y_u = \prod_{(i, \alpha) \in [n] \times \F^*} y_{i, \alpha}^{u_{i, \alpha}}$. Yet, per above, any consistent pseudo-expectation has that all monomials with the same representative monomial have the same value. We abuse notation and write for representative vector $v \in \F^n$, $\pE[y_v] = c$ to mean set $\pE$ for all monomials with representative vector $v$ to $c$.
\end{notation}

We now state our pseudo-expectation $\pE$. Our construction is algorithmic and is based on the principle of max-entropy: given a set of hard constraints (e.g. $k$-$\LIN(\F)$ equations), we set only these constraints (and those derivable from them) and let all others be maximally undetermined.

\begin{tcolorbox}[
    width=\textwidth,   
    colframe=black,  
    colback=white,   
    title=$\widetilde{\Omega}(\ell)$-degree max-entropy pseudo-expectation for $\mcI$,
    colbacktitle=white, 
    coltitle=black,      
    fonttitle=\bfseries,
    center title,   
    enhanced,       
    frame hidden,           
    borderline={1pt}{0pt}{black},
    sharp corners,
    toptitle=2.5mm,
    label=disp:maxentropy
]
\textbf{Input:} A $k$-$\LIN(\F)$ instance $\mcI = (\mcH, \{b_v\}_{v \in \mcH})$.\\

\textbf{Output:} A valid degree $d$ pseudo-expectation $\pE$ respecting the constraints in \cref{item:sos2} where $d = \frac{\ell \varepsilon^2}{2\log(\abs{\F^*}n)}$.\\

\textbf{Algorithm:}
\begin{enumerate}
    \item Let $\pE[1] = \pE[y_0] = 1$.
    \item For every $v \in \mcH, \beta \in \F^*$ set $\pE[y_{\beta v}] = \omega_p^{\Tr(\beta b_v)}$.
    \item Repeat the following until no progress can be made:
    \begin{enumerate}
        \item Choose $U$ and $V \in \F^n$ with $\abs{U - V} \leq d$ and $\pE[y_U]$ and $\pE[y_V] \neq 0$.
        \item Set $\pE[y_{U-V}] = \pE[y_U] \overline{\pE[y_V]}$.
        \item If the value was previously something else, throw an error.
    \end{enumerate}
    \item Set any remaining variable degree $d$ monomials to 0 under the pseudo-expectation.
\end{enumerate}

\end{tcolorbox}

\begin{remark}
    Note that after enforcing linearity, $\pE$ is fully defined by its definition on degree $d$ monomials.
\end{remark}

A priori, it is unclear the max-entropy pseudo-expectation is indeed a pseudo-expectation or even that it is well-defined. Nonetheless, we show that under our expansion assumption both properties hold 

\begin{lemma}
    \label{lem:termination}
    If $\mcI = (\mcH, \{b_v\}_{v \in \mcH})$ where $\mcH$ has for all $\mcV \subseteq \mcH$ with $\abs{\mcV} \leq \ell \varepsilon^2$ and choice of coefficients $\{\alpha_v\}_{v \in \mcV}$ has $\left|\sum_{v \in \mcV} \alpha_v \cdot v\right| > \frac{\abs{\mcV}}{\log(\abs{\F^*}n)}$ then the max-entropy pseudo-expectation $\pE$ is well-defined.
\end{lemma}

\begin{lemma}
    \label{lem:constraints}
    Given the max-entropy pseudo-expectation $\pE$ is well-defined, $\pE$ satisfies the constraints of \cref{item:sos2}.
\end{lemma}

Putting \cref{lem:termination} and \cref{lem:constraints} together immediately yields that the max-entropy $\pE$ satisfies \cref{item:sos2} for random instances.
\end{proof}
    
To make sense of the algorithm above and the proceeding proof, we introduce the notion of derivation, which can be thought of as a transcript for a run of the algorithm.

\begin{definition}[Derivations]
    A degree $d$ derivation of $W \in \F^n$ in a set $\mcH$ of vectors in $\F^n$ is a sequence of vectors $w_1, ...., w_t$ with the properties:
    \begin{itemize}
        \item For all $k \in [t]$, $\abs{w_k} \leq d$.
        \item $w_t = W$.
        \item For all $k \in [t]$ either $\exists i, j < k$ s.t. $w_k = w_i - w_j$ OR $w_k \in \mcH$.
    \end{itemize}
\end{definition}

\begin{observation}
    Since the only derived elements at the start are vectors in $\mcH$, we inductively guarantee all derivable vectors are linear combinations of vectors in $\mcH$. That is, for a derivable vector $v$ we can write $\sum_{i=1}^t \beta_i \cdot v_i$ for some sequence of $\beta_i \in \F^*$ and $v_i \in \mcH$ for some $t \in \N$. We call $t$ the length of the derivation.
\end{observation}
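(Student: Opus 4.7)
The plan is to prove this observation by strong induction on the position $k$ of the derived vector $w_k$ within the derivation sequence $w_1, \ldots, w_t$. Throughout, I will track the invariant that each $w_k$ admits a representation $w_k = \sum_{j=1}^{s_k} \beta^{(k)}_j \cdot v^{(k)}_j$ with $\beta^{(k)}_j \in \F^*$ and $v^{(k)}_j \in \mcH$ (for some $s_k \in \N$). The final claim for $W = w_t$ then follows immediately.

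For the base case, there are two possibilities. Either $w_k \in \mcH$ directly, in which case $w_k = 1 \cdot w_k$ is the desired single-term combination with coefficient $1 \in \F^*$. Otherwise $w_k = w_i - w_j$ with $i, j < k$, but at the very start of the derivation no such $i, j$ exist, so the first element of the sequence must be in $\mcH$.

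For the inductive step, suppose $w_k = w_i - w_j$ for some $i, j < k$. By the inductive hypothesis, we have $w_i = \sum_{j'=1}^{s_i} \beta^{(i)}_{j'} \cdot v^{(i)}_{j'}$ and $w_j = \sum_{j'=1}^{s_j} \beta^{(j)}_{j'} \cdot v^{(j)}_{j'}$ with all coefficients in $\F^*$ and all vectors in $\mcH$. Concatenating these two sums (with coefficients negated for the $w_j$ contribution) gives an expression for $w_k$ as a linear combination of elements of $\mcH$, but some coefficients may coincide when the same vector in $\mcH$ appears multiple times. Collecting like terms and discarding any whose combined coefficient equals $0 \in \F$ produces the final representation with all coefficients in $\F^*$. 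Define $t = s_k$ to be the number of non-zero terms after this reduction.

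There is essentially no obstacle here: the result is a direct structural consequence of the definition of derivation. The one subtlety worth flagging is that the ``length'' $t$ is not an invariant of the derived vector but of the derivation together with the chosen reduction; different derivation sequences for the same vector $W$ may yield different lengths, and the observation only asserts existence of some such $t$. This is precisely the feature that makes the notion useful: the short derivations produced by the max-entropy algorithm in the box above translate into short linear dependencies in $\mcH$, which in turn is exactly what the expansion hypothesis of \cref{lem:termination} forbids, allowing us to show the algorithm never errors.
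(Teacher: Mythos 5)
Your proof is correct and takes essentially the same approach the paper implicitly relies on: the paper states the observation with only the one-line justification ``we inductively guarantee,'' and your strong induction on the position in the derivation sequence (with the collect-like-terms-and-discard-zeros step to keep coefficients in $\F^*$) is precisely the argument that sentence is gesturing at. Your remark that the ``length'' $t$ is a property of the chosen derivation-and-reduction rather than of the derived vector is also accurate and consistent with how \cref{lem:termination} uses it (via a pigeonhole bound ``at most the sum of lengths,'' which tolerates cancellation).
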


The construction can be thought as greedily $d$-deriving all possible vectors from the original vectors in $\mcH$. To prove \cref{item:sos2} we show that given $\mcI$ satisfies the expansion property from \cref{thm:inversefeige} the algorithm is guaranteed to terminate, and given termination we can guarantee $\pE$ satisfies the constraints.

\begin{proof}[Proof of \cref{lem:termination}]
Note so long as the algorithm never errors, the max-entropy pseudo-expectation $\pE$ is well-defined, so suppose the algorithm errors. Then we have some choice $W = U - V$ with weight $\leq d$ such that $\pE[y_{U -V}]$ is set to two different things. These correspond to equal derivations with unequal right-hand sides, that is
\begin{equation*}
    W = \sum_{u \in \mcU} \beta_u \cdot u = \sum_{v \in \mcV} \beta_v \cdot v \text{ while } \sum_{u \in \mcU} \beta_u \cdot b_u \neq \sum_{v \in \mcV} \beta_v \cdot b_v\mcom
\end{equation*}
for some $\mcU, \mcV \subseteq \mcH$ and coefficient sequences $\cbra{\beta_u}_{u \in \mcU}$ and $\cbra{\beta_v}_{v \in \mcV}$ on $\F^*$. Write $\mcU \cup \mcV$ to denote combining $\mcU$ and $\mcV$ with the sequences $\{\beta_u\}_{u \in \mcU}$ and $\{-\beta_v\}_{v \in \mcV}$ and merging like vectors. This yields a refutation, $\sum_{w \in \mcU \cup \mcV} \beta_w \cdot w = 0$ while $\sum_{w \in \mcU \cup \mcV} \beta_w \cdot b_w \neq 0$. Since there are no linearly dependent subsets of size $\leq \ell \varepsilon^2$, we further have that $\abs{\mcU \cup \mcV} > \ell \varepsilon^2$. We then observe the following.

\begin{observation}
    In the derivation of $\mcU$ there exists a vector $w$ with $\abs{w} > \frac{\ell \varepsilon^2}{2\log(\abs{\F^*}n)}$.
\end{observation}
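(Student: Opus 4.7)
The plan is to prove the observation by contradiction, using an inductive argument along the derivation that leverages the expansion hypothesis on $\mcH$ stated in \cref{lem:termination}. Suppose for contradiction that every vector $w$ in the derivation of $W = \sum_{u \in \mcU}\beta_u u$ satisfies $\abs{w} \leq d$, where $d = \frac{\ell \varepsilon^2}{2 \log(\abs{\F^*} n)}$. Since $\abs{\mcU \cup \mcV} > \ell \varepsilon^2$, pigeonhole gives that WLOG $\abs{\mcU} > \ell \varepsilon^2 / 2$ (the case $\abs{\mcV} > \ell \varepsilon^2/2$ is entirely symmetric and yields the observation for $\mcV$'s derivation instead).

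For each intermediate $w_i$ in the derivation, I write $w_i = \sum_{u \in \mcU_i'} \gamma_{u,i} \cdot u$ as a linear combination of elements of $\mcU$ with coefficients $\gamma_{u,i} \in \F^*$, and set $s_i = \abs{\mcU_i'}$. I will show by induction on $i$ that $s_i \leq \ell \varepsilon^2 / 2$. For the base case, if $w_i \in \mcU$ then $s_i = 1$. For the inductive step with $w_i = w_a - w_b$, after combining like terms we have $\mcU_i' \subseteq \mcU_a' \cup \mcU_b'$, so $s_i \leq s_a + s_b \leq \ell \varepsilon^2 \leq \ell$, using $\varepsilon \leq 1$. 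Since $\mcU_i' \subseteq \mcH$ has coefficients in $\F^*$ and $\abs{\mcU_i'} \leq \ell$, the expansion hypothesis from \cref{thm:inversefeige} gives $\abs{w_i} > s_i / \log(\abs{\F^*} n)$, which combined with the assumed bound $\abs{w_i} \leq d$ forces $s_i < d \log(\abs{\F^*} n) = \ell \varepsilon^2 / 2$, closing the induction.

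Applied at $i = t$ (where $w_t = W$), the inductive conclusion reads $\abs{\mcU} = s_t \leq \ell \varepsilon^2 / 2$, contradicting $\abs{\mcU} > \ell \varepsilon^2 / 2$. So some intermediate $w_i$ must satisfy $\abs{w_i} > d$, which is precisely the claim. The one place that requires care is the handling of cancellations in $w_a - w_b$: these can only shrink $\mcU_i'$ relative to $\mcU_a' \cup \mcU_b'$, which is harmless for the upper bound $s_i \leq s_a + s_b$ used in the induction. Beyond this, the argument is a direct doubling/expansion trade-off, and the numeric constants line up so that the derived threshold is exactly $d$; I do not anticipate any real obstacles.
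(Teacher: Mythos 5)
Your proof is correct and uses essentially the same approach as the paper: both bound the Hamming weight of an intermediate derivation vector from below via the expansion hypothesis applied to its derivation length, and exploit the fact that a single step $w_i = w_a - w_b$ at most doubles that length. The paper locates a first offending vertex (first length exceeding $\ell\varepsilon^2$) and then applies pigeonhole to one of its two parents, which is just the contrapositive of your forward induction; your version is also more explicit about the symmetry between $\mcU$ and $\mcV$ that the paper glosses over, and correctly notes the intermediate $\mcU_i'$ should be taken inside $\mcH$ (not $\mcU$) since cancelled-out vectors from $\mcH\setminus\mcU$ can appear mid-derivation.
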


To observe, let $w_k$ be the first vector in the derivation with length larger than $\ell \varepsilon^2$. Note $w^*$ is guaranteed to exist since at the very least $0$ does. By definition (and assuming $\ell \varepsilon^2 > k$), $w^* = w_i - w_j$ for some $i, j < k$. The length of the derivation for $w_k$ is at most the sums of the lengths for $w_i$ and $w_j$, so by pigeonhole principle one has length in the interval $[\ell \varepsilon^2/2, \ell \varepsilon^2]$, without loss of generality let it be $w_i$. By the lemma assumption the expansion property applies, meaning $\abs{w_i} > \frac{\ell \varepsilon^2}{2\log(\abs{\F^*}n)}$. We finish by noticing this contradicts the fact that the algorithm only considers $d$-derivations.
\end{proof}

\begin{proof}[Proof of \cref{lem:constraints}]

Note that satisfiability is immediately satisfied by construction, given the algorithm terminates. We break the remainder of the proof into showing the max-entropy pseudo-expectation $\pE$ satisfies each constraint of \cref{item:sos2}.

\end{proof}

\begin{lemma}[Validity]
    \label{clm:validity}
    For any $y_{i, \alpha}$ and polynomial $f$ of variable degree at most $d-1$, $\pE\sbra{f \cdot (y_{i,\alpha}^p -1)} = 0$.
\end{lemma}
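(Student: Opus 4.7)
The plan is to reduce the claim to a single observation about the representative monomials used to build $\pE$. By linearity of $\pE$, it suffices to show that $\pE\sbra{P \cdot y_{i,\alpha}^p} = \pE\sbra{P}$ for every monomial $P$ in the indeterminates $\cbra{y_{j,\gamma}}$ with $\vardeg(P) \leq d-1$. I would then argue that $P$ and $P \cdot y_{i,\alpha}^p$ have identical representative monomials, and since $\pE$ is constructed to assign a common value to every monomial sharing a representative (either via the algorithmic derivation or by the default value of $0$), the two values must agree.

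For the equality of representatives, I would unfold the definition of the total coefficient: $\beta_{P \cdot y_{i,\alpha}^p}(j) = \beta_P(j)$ for $j \neq i$, while $\beta_{P \cdot y_{i,\alpha}^p}(i) = \beta_P(i) + p \cdot \alpha$. Since $\chara(\F) = p$, the element $p \cdot \alpha$ vanishes in $\F$, and therefore $\beta_{P \cdot y_{i,\alpha}^p}(i) = \beta_P(i)$ as well. Hence $\mathcal{E}(P \cdot y_{i,\alpha}^p) = \mathcal{E}(P)$, so both monomials correspond to the same representative vector in $\F^n$ and are assigned the same value by $\pE$.

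The remaining check is that the variable degrees fit within the domain on which $\pE$ is defined. Since $\vardeg(P) \leq d-1$, the product $P \cdot y_{i,\alpha}^p$ has variable degree at most $d$, and its representative $\mathcal{E}(P \cdot y_{i,\alpha}^p) = \mathcal{E}(P)$ has variable degree at most $\vardeg(P) \leq d-1$, so $\pE$ is well-defined on it. Combining, we get $\pE\sbra{P \cdot y_{i,\alpha}^p} = \pE\sbra{\mathcal{E}(P)} = \pE\sbra{P}$, and summing over the monomials of $f$ with linearity yields $\pE\sbra{f \cdot (y_{i,\alpha}^p - 1)} = 0$. No nontrivial obstacle arises: the field characteristic forces $y_{i,\alpha}^p$ to act as the identity on representative monomials, so the validity constraint is effectively baked into the representative-based definition of $\pE$, and all that is required is a careful unpacking of the definitions.
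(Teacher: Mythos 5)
Your proof is correct and follows essentially the same approach as the paper: both reduce by linearity to showing $\pE[y_{i,\alpha}^p P] = \pE[P]$, then observe that multiplying by $y_{i,\alpha}^p$ adds $p\alpha \equiv 0$ to the $i$-th coordinate of the representative vector, so the two monomials share a representative and are therefore assigned the same value by the max-entropy construction, with both within variable degree $d$ so $\pE$ is defined on them.
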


\begin{proof}[Proof of \cref{clm:validity}]
    Let $f$ be a complex polynomial in indeterminates $\cbra{y_{i, \alpha}}_{i \in [n], \alpha \in \F^*}$ with $\vardeg(y_{i, \alpha}^p f) \leq d$. By linearity $\pE\sbra{(y_{i, \alpha}^p-1)f} = \pE\sbra{y_{i, \alpha}^pf} - \pE\sbra{f}$, so it suffices to argue $\pE\sbra{y_{i, \alpha}^pf} = \pE\sbra{f}$. This follows simply from the fact that (1) the representative polynomial of $y_{i, \alpha}^pf$ is the same as that of $f$ since $y_{i, \alpha}^p$ simply adds $p\alpha \equiv 0$ to the corresponding entry in the representative vector and (2) both $f$ and $y_{i, \alpha}^p f$ have variable degree $d$ or less by assumption and submultiplicativity of variable degree, which means the algorithm sets both the same.
\end{proof}

\begin{lemma}[Consistency]
    \label{clm:consistency}
    For any $y_{i, \alpha}$ and $y_{i,\beta}$ and polynomial $f$ of variable degree at most $d-2$, $\pE\sbra{f \cdot (y_{i,\alpha} \cdot \overline{y_{i,\beta}} - y_{i, \alpha - \beta})} = 0$.
\end{lemma}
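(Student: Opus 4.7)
By the linearity of $\pE$, it suffices to verify the identity $\pE[P \cdot y_{i,\alpha} \cdot \overline{y_{i,\beta}}] = \pE[P \cdot y_{i,\alpha-\beta}]$ for a single monomial $P$ of variable degree at most $d - 2$. This is structurally analogous to the proof of \cref{clm:validity}, where we verified that $y_{i,\alpha}^p f$ and $f$ have identical representative polynomials because exponentiating by $p$ contributes $p\alpha \equiv 0$ to the $i$-th entry of the representative vector.

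The plan is to extend the notion of representative monomial to accommodate conjugated indeterminates. For a mixed monomial $P = \prod_{i,\alpha} y_{i,\alpha}^{a_{i,\alpha}} \cdot \overline{y_{i,\alpha}}^{b_{i,\alpha}}$, define its total coefficient at index $i$ to be
\[
    \beta_P(i) = \sum_{\alpha \in \F^*} (a_{i,\alpha} - b_{i,\alpha}) \cdot \alpha \in \F \mcom
\]
so that a conjugate $\overline{y_{i,\alpha}}$ contributes $-\alpha$ in place of $\alpha$. The representative monomial $\mathcal{E}(P)$ and representative vector are defined from $\beta_P(i)$ exactly as before. With this convention, both $P \cdot y_{i,\alpha} \cdot \overline{y_{i,\beta}}$ and $P \cdot y_{i,\alpha-\beta}$ add precisely $\alpha - \beta$ to the $i$-th entry of $P$'s representative vector, and agree on every other entry, so they share the same representative vector.

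It remains to argue that $\pE$ depends only on the representative vector. This is built into the max-entropy algorithm: Step 2 and Step 3(b) set $\pE[y_W]$ at the level of $W \in \F^n$ (the representative vector), and Step 4 assigns the default value $0$ to all remaining vectors in $\F^n$ of weight at most $d$. Both $P \cdot y_{i,\alpha} \cdot \overline{y_{i,\beta}}$ and $P \cdot y_{i,\alpha-\beta}$ have variable degree at most $d$ (since multiplication by $y_{i,\alpha} \cdot \overline{y_{i,\beta}}$ or $y_{i,\alpha-\beta}$ touches only coordinate $i$, adding at most $1$ to the variable degree of $P$), so both lie in the domain where $\pE$ has been assigned a value, and that value is determined entirely by the common representative vector. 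Hence the two quantities are equal and the difference vanishes.

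The argument is routine given the framework established for \cref{clm:validity}; the only delicate point is confirming that conjugation can be incorporated into the representative-vector bookkeeping consistently. Once this extension is in place, the proof is mechanical and mirrors the validity proof almost verbatim.
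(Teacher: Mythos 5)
Your proof is correct and takes the same approach as the paper's: both observe that $y_{i,\alpha}\overline{y_{i,\beta}}\cdot f$ and $y_{i,\alpha-\beta}\cdot f$ share a representative vector (with conjugation contributing the negated coefficient), so the max-entropy construction assigns them the same value. Your write-up actually fills in the bookkeeping that the paper leaves implicit, namely extending the definition of $\beta_P(i)$ to handle $\overline{y_{i,\alpha}}$ via $-\alpha$, which is a sound and helpful clarification.
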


\begin{proof}[Proof of \cref{clm:consistency}]
    Let $f$ be a complex polynomial in indeterminates $\cbra{y_{i, \alpha}}_{i \in [n], \alpha \in \F^*}$ with $\vardeg((y_{i, \alpha} \cdot \overline{y_{i, \beta}} - y_{i, \alpha - \beta}) f) \leq d$. The argument follows the same as in \cref{clm:validity}, just noting the representative vectors of $y_{i,\alpha}\cdot \overline{y_{i,\beta}} \cdot f$ and $y_{i, \alpha -\beta} \cdot f$ must be the same so the algorithm sets them equivalent.
\end{proof}

\begin{lemma}[Positivity]
    \label{lem:positivity}
    For any polynomial $f$ of variable degree at most $d/2$, $\pE[\abs{f}^2] \geq 0$.
\end{lemma}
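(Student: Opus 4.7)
The plan is to show positivity by exhibiting the moment matrix of $\pE$ as a block-diagonal Hermitian matrix whose blocks are rank-one positive semidefinite. Since $\pE$ is linear and, by the consistency relations, depends on any monomial only through its representative vector, it suffices to show that the Hermitian matrix $M$ indexed by vectors $U \in \F^n$ with $\abs{U} \leq d/2$, with entries $M[U,V] = \pE[y_{V-U}]$, is PSD. Then for any $f = \sum_U c_U \, y_U$ of variable degree at most $d/2$, consistency gives $\pE[\abs{f}^2] = \sum_{U,V} \overline{c_U}\, c_V\, M[U,V] \geq 0$.

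First I would distill the output of the max-entropy algorithm into a pair $(\mcD,\gamma)$, where $\mcD$ is the set of weight-$\leq d$ vectors $W$ with $\pE[y_W] \ne 0$, and $\gamma : \mcD \to \F$ is the unique map such that $\pE[y_W] = \omega_p^{\Tr(\gamma(W))}$. The algorithmic rule $\pE[y_{U-V}] = \pE[y_U]\,\overline{\pE[y_V]}$ directly gives the property $\gamma(U) - \gamma(V) = \gamma(U-V)$ whenever $U,V,U-V \in \mcD$, i.e.\ $\gamma$ behaves additively on $\mcD$ as long as we stay within weight $d$. The non-failure of the algorithm, already established by \cref{lem:termination}, is exactly the statement that this additivity is well-defined.

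Next I would define $U \sim V$ on the index set $\{U : \abs{U} \le d/2\}$ by $V - U \in \mcD$. Reflexivity uses $0 \in \mcD$ with $\gamma(0) = 0$; symmetry uses closure of $\mcD$ under negation; transitivity uses additivity together with the triangle bound $\abs{W-U} \leq \abs{W-V}+\abs{V-U} \le d$. Off-diagonal blocks of $M$ across distinct classes vanish since $\pE[y_{V-U}] = 0$ in that case. Within a class $C$, I fix a base point $U_0 \in C$ and set $f(U) = \pE[y_{U - U_0}]$, a unit-modulus complex number for $U \in C$. Applying additivity to $V - U = (V - U_0) + (U_0 - U)$ (all three are in $\mcD$, and all have weight $\le d$), I get $M[U,V] = \pE[y_{V-U}] = \pE[y_{V-U_0}] \cdot \pE[y_{U_0 - U}] = f(V)\,\overline{f(U)}$. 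Hence the block on $C$ equals the rank-one Hermitian PSD matrix $\overline{f_C}\, f_C^{\top}$, and summing over classes yields $M \succeq 0$.

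The main delicate point is justifying the additive property $\gamma(A) + \gamma(B) = \gamma(A+B)$ on all of $\mcD$ within weight $d$, because the algorithm only applies its update to pairs already in $\mcD$ and does not explicitly take the group closure. I would show that the fixed point of the algorithm already realizes every such sum: given $A,B \in \mcD$ with $\abs{A+B} \leq d$, concatenating the derivations of $A$ and $B$ produces a length-$\leq d$ derivation of $A+B$ with total coefficient sum $\gamma(A) + \gamma(B)$, so the algorithm is forced to assign this value; any disagreement with a previously assigned value would yield a refutation whose derivation length is bounded by the expansion hypothesis of \cref{lem:termination}, a contradiction. Once additivity is secured, the block-diagonalization above completes the proof.
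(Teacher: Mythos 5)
Your proof is correct and takes essentially the same route as the paper's: you partition the index set into equivalence classes via $U\sim V \iff \pE[y_{V-U}]\ne 0$, show the moment matrix is block-diagonal across classes, and show each block is rank-one Hermitian PSD by factoring $\pE[y_{V-U}]=\pE[y_{V-U_0}]\pE[y_{U_0-U}]$ through a fixed base point $U_0$. The paper carries out the identical decomposition, writing $\pE[\abs{f}^2]=\sum_i\pE[\abs{f_i}^2]$ and then factoring each $\pE[\abs{f_i}^2]$ into a square using a fixed $w\in\mathcal F_i$ and the algorithm's termination; your framing via an explicit moment matrix $M$ and the discrete-log map $\gamma$ is just a reformulation. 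One small imprecision: $\gamma$ is not unique as a map into $\F$ when $\F\ne\F_p$, since $\omega_p^{\Tr(\cdot)}$ has kernel $\ker(\Tr)$; you should take $\gamma$ valued in $\F_p$ (or work directly with the unit-modulus value $\pE[y_W]$ as the paper does), but this does not affect the argument. Also, the additivity you flag as delicate is in fact directly handled by Step 3 of the algorithm applied to the pair $(V-U_0,\,U-U_0)$ — both in $\mcD$ with difference $V-U$ of weight $\le d$ — and the no-error guarantee of \cref{lem:termination}, exactly as the paper argues; the detour through concatenated derivations is not needed, though it is not wrong.
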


\begin{proof}[Proof of \cref{lem:positivity}]
    Fix $f$ a complex polynomial in indeterminates $\cbra{y_{i, \alpha}}_{i \in [n], \alpha \in \F^*}$ with $\vardeg(f) \leq d/2$. We write the standard Fourier decomposition $f = \sum_{u \in \F^{\abs{\F^*}n}} \hat{f}(u) \cdot y_u$ for Fourier coefficients $\cbra{\hat{f}(u)}_{u \in \F^{\abs{\F^*}n}}$ in $\C$. We relate certain monomials in this decomposition by the following equivalence class.

    \begin{definition}
        We say two vectors $u, v \in \F^{\abs{\F^*}n}$ with $\vardeg(y_u), \vardeg(y_v) \leq d/2$ are related if $\pE[y_{u - v}] \neq 0$ in the algorithm above.
    \end{definition}

    \begin{claim}
        The above relation is an equivalence relation inducing equivalence classes $\{\mathcal{F}_i\}_{i \in [r]}$ for some $r \in \N$.
    \end{claim}

    \begin{proof}
        Reflexivity follows since $\pE[y_{u -u}] = \pE[y_0] = \pE[1] = 1$ for any $u \in \F^n$. For symmetry, note that if $u \sim v$ then $\pE[y_{u-v}] \neq 0$, so it must have been set in the algorithm. If it is set non-zero in Step 2 for $\beta \in \F^*$, then it will also be set for $-\beta \in \F^*$. If it is set in Step 3 then $\pE[y_{u-v}] = \pE[y_U] \overline{\pE[y_V]}$ for $U, V \in \F^n$ and $\abs{U - V} \leq d$. Then $\pE[y_{v- u}] = \pE[y_V]\overline{\pE\sbra{y_U}}$ since $\abs{V-U} = \abs{U-V}$. This is non-zero assuming $\pE[y_{u-v}]$ is.

        For the hard step assume $u \sim v$ and $v \sim w$ and let $U$, $V$, and $W \in \F^n$ be the representative vectors for $u, v$, and $w$ respectively. By the variable degree assumption we have $\abs{U}, \abs{V}, \abs{W} \leq d/2$, and by triangle inequality $\abs{U-V} \leq d$, $\abs{V-W} \leq d$, and $\abs{U-W} \leq d$. By symmetry $\pE[y_{W-V}] \neq 0$. Then $U-W = (U-V) - (W - V)$ and $\pE[y_{U-V}] \neq 0$ and $\pE[y_{W-V}] \neq 0$ and thus $\pE[y_{u-w}] = \pE[y_{U-W}] = \pE[y_{U-V}] \overline{\pE[y_{W-V}]} \neq 0$ by Step 3 of the algorithm. So $u \sim w$ as desired.
    \end{proof}

    The main point of defining these equivalence classes is now we can write $f = \sum_{i \in [r]} \sum_{u \in \mathcal{F}_i} \hat{f}(u) \cdot y_u := \sum_{i \in [r]}f_i$ and note that
    \begin{equation*}
        \pE\sbra{\abs{f}^2} = \pE\left[\sum_{i \in [r]} f_i \overline{\sum_{j \in [r]} f_j}\right] = \pE\left[\sum_{i, j \in [r]} f_i\bar{f}_j\right] =  \sum_{i \in [r]} \pE\sbra{\abs{f_i}^2}\mper
    \end{equation*}
    This last step follows from the definition of the $\mathcal{F}_i$, that for any $v_i \in \mathcal{F}_i$ and $v_j \in \mathcal{F}_j$ separate we must have $\pE\sbra{y_{v_i-v_j}} = 0$. It suffices then to prove positivity for each equivalence class. Fix $i \in [r]$. Since $f_i = \sum_{u \in \mathcal{F}_i} \hat{f}(u) \cdot y_u$, we write:
    \begin{align*}
        \pE\sbra{\abs{f_i}^2} &= \pE\sbra{\sum_{u, v \in \mathcal{F}_i} \hat{f}(u) \cdot \overline{\hat{f}(v)} \cdot y_u \cdot \overline{y_v}}\\
        &= \sum_{u, v \in \mathcal{F}_i} \hat{f}(u) \cdot \overline{\hat{f}(v)} \cdot \pE\sbra{y_u \cdot \overline{y_v}}\\
        &= \sum_{u, v \in \mathcal{F}_i} \hat{f}(u) \cdot \overline{\hat{f}(v)} \cdot \pE\sbra{y_{u-v}}\mper
    \end{align*}

    By the assumption $\vardeg(f) \leq d/2$, we have that the representative vectors of $u$ and $v$ have weight $\leq d/2$, so the weight of the representative vector for $u-v$ is $\leq d$ by triangle inequality. Now we choose an arbitrary $w \in \mathcal{F}_i$. By the same logic, representatives for $u-w$ and $w-v$ have weight $\leq d$, thus the algorithm attempts to set $\pE\sbra{y_{u-v}} = \pE\sbra{y_{u-w}} \pE\sbra{y_{w-v}}$ and must succeed since we terminate without error. We may then write
    \begin{align*}
        \pE\sbra{\abs{f_i}^2} &= \sum_{u, v \in \mathcal{F}_i} \hat{f}(u) \cdot \overline{\hat{f}(v)} \cdot \pE\sbra{\chi_{u-v}}\\ 
        &= \sum_{u, v \in \mathcal{F}_i} \hat{f}(u) \cdot \overline{\hat{f}(v)} \cdot \pE\sbra{\chi_{u-w}} \pE\sbra{\chi_{w-v}}\\
        &= \left(\sum_{u \in \mathcal{F}_i} \hat{f}(u) \cdot \pE\sbra{\chi_{u-w}}\right) \overline{\left(\sum_{u \in \mathcal{F}_i} \hat{f}(u) \cdot \pE\sbra{\chi_{u-w}}\right)}\\
        &\geq 0\mper
    \end{align*}
\end{proof}
\section{Simple Refutation Algorithms for \texorpdfstring{$k$-$\LIN$}{k-LIN}: Proof of \cref{thm:trivialalgorithm}}
\label{sec:trivialalgorithm}

In this section, we show a simple refutation algorithm for $k$-$\LIN(\F)$ that outperforms \cref{thm:refutation} in the ``high-degree'' regime $\ell \geq n/\abs{\F^*}^{1 - 2/k}$ (although the algorithm works for any $\ell \geq k$). The idea is to consider, for each set $S$ of size $\ell$ the set of constraints whose supports are contained within $S$, and compute by brute force the optimal local assignment. The full algorithm is easy to state so we specify it below.
\begin{tcolorbox}[
    width=\textwidth,   
    colframe=black,  
    colback=white,   
    title=Simple $k$-$\LIN(\F)$ Refutation Algorithm,
    colbacktitle=white, 
    coltitle=black,      
    fonttitle=\bfseries,
    center title,   
    enhanced,       
    frame hidden,           
    borderline={1pt}{0pt}{black},
    sharp corners,
    toptitle=2.5mm
]
\textbf{Input:} A $k$-$\LIN(\F)$ instance $\mcI = (\mcH, \{b_v\}_{v \in \mcH})$.\\

\textbf{Output:} $\algval(\mcI) \in [0,1]$ with guarantee $\algval(\mcI) \geq \max_{x \in \F^n} \val(\mcI, x)$.\\

\textbf{Algorithm:}
\begin{enumerate}
    \item For all $S \subseteq [n]$ with $\abs{S} = \ell$, find all $v \in \mcH$ with $\supp(v) \subseteq S$, let the induced subinstance on these constraints and the variables in $S$ be $\mcI_S = (\mcH_S, \{b_v\}_{v \in \mcH_S})$.
    \item Let $\algval(\mcI_S) = \max_{x_S \in \F^{S}} \val(\mcI_S, x_S)$.
    \item Output $\algval(\mcI) = \frac{1}{\abs{\mcH}{n \choose \ell-k}}\sum_{\substack{S \subseteq [n]\\ \abs{S} = \ell}} \algval(\mcI_S) \cdot \abs{\mcH_S}$.
\end{enumerate}

\end{tcolorbox}

\begin{remark}
    Notice this algorithm trivially generalizes to finite Abelian groups by replacing considerations of $\F$ with $G$. For simplicity, we focus on $\F$ here.
\end{remark}

The rest of this section is devoted to analyzing the above algorithm, in both the random and semirandom settings.

\begin{proof}[Proof of \cref{thm:trivialalgorithm}]
    
Note first that the algorithm runs in time $(\abs{\F}n)^{O(\ell)}$ since we consider all subsets of size $\ell$ and all $\abs{\F}^\ell$ assignments for each.

\begin{proof}[Proof of \cref{item:simplerefutation1}]
Suppose $x \in \F^n$ is an optimal assignment such that $\val(\mcI, x) = \val(\mcI)$. Write $v(x)$ to denote whether $x$ satisfies $v$ or not. Observe that we may write
\begin{equation*}
    \val(\mcI, x) 
    = \frac{1}{\abs{\mcH}}\sum_{v \in \mcH} v(x) 
    = \frac{1}{\abs{\mcH}{n \choose \ell - k}}\sum_{\substack{S \subseteq [n]\\ \abs{S} = \ell}} \sum_{\substack{v \in \mcH\\\supp(v) \subseteq S}} v(x) 
    = \frac{1}{\abs{\mcH}{n \choose \ell -k}}\sum_{\substack{S \subseteq [n]\\ \abs{S} = \ell}} \val(\mcI_S, x_S) \cdot \abs{\mcH_S}\mper
\end{equation*}
The second equality follows since for each $v \in \mcH$, $v(x)$ appears exactly ${n \choose \ell-k}$ times for each $S \subseteq [n]$, $\abs{S} = \ell$ with $\supp(v) \subseteq S$. From here we use $\val(\mcI_S, x_S) \leq \val(\mcI_S)$ for all $S$ to conclude $\val(\mcI) \leq \algval(\mcI)$.
\end{proof}

\begin{proof}[Proof of \cref{item:simplerefutation2}]

We fix $\mcI$ to be fully random. To show $\val(\mcI) \leq \frac{1}{\abs{\F}} + \varepsilon$ it suffices to show $\val(\mcI_S) \leq \frac{1}{\abs{\F}} + \varepsilon$ for all possible $S$. We do this in two steps: (1) we show given random equations $v \in \mcH$ at our equation threshold, the number of equations in each $\mcH_S$ is large enough and (2) each $\mcI_S$, given it is large enough, has value at most $\frac{1}{\abs{\F}} + \varepsilon$.

\begin{lemma}
    \label{lem:equationuniform}
    Let $\mcI = (\mcH, \{b_v\}_{v \in \mcH})$ be a random $k$-$\LIN(\F)$ instance in $n$ variables. For any $S \subseteq [n]$, $\abs{\mcH_S} \geq \frac{C}{2} \ell \log n \cdot \varepsilon^{-2}$ when $\abs{\mcH} \geq c^{-k} C \cdot n \left(\frac{n}{\ell}\right)^{k-1} \log n \cdot \varepsilon^{-2}$ with probability at least $1 - \frac{1}{n^{2\ell}}$ for some constants $c,C > 0$.
\end{lemma}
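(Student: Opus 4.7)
The plan is to compute the expected value of $\abs{\mcH_S}$ for a fixed $S$ of size $\ell$, prove concentration via a multiplicative Chernoff bound, and then take a union bound over all $\binom{n}{\ell}$ such subsets.

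First I would fix $S \subseteq [n]$ with $\abs{S} = \ell$ and observe that since $\mcH$ consists of $\abs{\mcH}$ independently drawn random $k$-sparse vectors, the events $\{\supp(v) \subseteq S\}$ for distinct $v \in \mcH$ are independent. For a single random $k$-sparse vector $v$, the event only depends on $\supp(v)$ (not on the nonzero coefficients in $\F^*$ or $b_v$), so $\Pr[\supp(v) \subseteq S] = \binom{\ell}{k}/\binom{n}{k}$. Applying the elementary estimate $\binom{a}{b} \geq (a/b)^b$ and $\binom{a}{b} \leq (ea/b)^b$ (a special case of \cref{fact:binomest}) yields $\binom{\ell}{k}/\binom{n}{k} \geq c^k (\ell/n)^k$ for an absolute constant $c > 0$. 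Hence
\[
\E[\abs{\mcH_S}] \;\geq\; c^k (\ell/n)^k \abs{\mcH}.
\]
Plugging in the hypothesis $\abs{\mcH} \geq c^{-k} C n (n/\ell)^{k-1} \log n \cdot \varepsilon^{-2}$ gives $\E[\abs{\mcH_S}] \geq C \ell \log n \cdot \varepsilon^{-2}$.

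Next I would apply the standard multiplicative Chernoff bound: since $\abs{\mcH_S}$ is a sum of independent $\{0,1\}$-valued random variables with mean $\mu \geq C\ell \log n \cdot \varepsilon^{-2}$, we have $\Pr[\abs{\mcH_S} \leq \mu/2] \leq \exp(-\mu/8)$, so
\[
\Pr\left[\abs{\mcH_S} \leq \tfrac{C}{2}\ell \log n \cdot \varepsilon^{-2}\right] \;\leq\; \exp\!\left(-\tfrac{C}{8} \ell \log n \cdot \varepsilon^{-2}\right) \;\leq\; n^{-C\ell/8},
\]
using $\varepsilon \leq 1$. Finally, a union bound over all $\binom{n}{\ell} \leq n^{\ell}$ subsets of size $\ell$ gives a failure probability at most $n^{-C\ell/8 + \ell}$, and choosing $C$ to be a sufficiently large absolute constant (e.g.\ $C \geq 24$) makes this $\leq n^{-2\ell}$, completing the proof.

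There is no real obstacle here; this is a routine first-moment plus Chernoff plus union-bound argument. The only care needed is in tracking the $c^k$ factor from the binomial ratio so that the $c^{-k}$ appearing in the hypothesis on $\abs{\mcH}$ cancels it cleanly, and in choosing the Chernoff constant $C$ large enough to absorb the $\log \binom{n}{\ell}$ loss from the union bound.
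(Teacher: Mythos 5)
Your proof is correct and follows essentially the same first-moment-plus-Chernoff approach as the paper, which bounds $\Pr[\supp(v) \subseteq S]$ by a quantity equivalent to your $\binom{\ell}{k}/\binom{n}{k} \geq c^k(\ell/n)^k$ and then applies the same multiplicative Chernoff estimate. The only organizational difference is that you fold the union bound over $\binom{n}{\ell}$ subsets into the lemma proof itself (choosing $C$ a bit larger to absorb it), whereas the paper proves the bound for a single fixed $S$ and defers that union bound to the surrounding proof of \cref{item:simplerefutation2}; either placement is fine.
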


First, given the size lower bound of \cref{item:simplerefutation2}, we can apply \cref{lem:equationuniform} to every subset $S \subseteq [n]$ with $\abs{S} = \ell$ and union bound across ${[n] \choose \ell}$ so that all $\abs{\mcH_S} \geq \frac{C}{2} \ell \log n \cdot \varepsilon^{-2}$. This error probability is $n^{2\ell}$ per the lemma. Now for each we can bound the probability $\val(\mcI_S) > \frac{1}{\abs{\F}} + \varepsilon$.
    
\begin{lemma}
    \label{lem:unsatisfiability}
    Let $\mathcal{J} = (\mcH, \{b_v\}_{v \in \mcH})$ be a semirandom $k$-$\LIN(\F)$ instance in $\ell$ variables. Then $\val(\mathcal{J}) \leq \frac{1}{\abs{\F}} + \varepsilon$ when $2\ell\log (\abs{\F}n) \cdot \varepsilon^{-2}$ with probability at least $1 - \frac{1}{n^{2\ell}}$.
\end{lemma}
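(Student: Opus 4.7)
The plan is to use a standard Chernoff-plus-union-bound argument. For a fixed assignment $x \in \F^\ell$, since the right-hand sides $\{b_v\}_{v \in \mcH}$ are drawn independently and uniformly from $\F$, each indicator $\1(\ip{v,x} = b_v)$ is an independent Bernoulli random variable with mean exactly $\frac{1}{\abs{\F}}$ (the left-hand side $\ip{v,x}$ is a fixed element of $\F$, and $b_v$ lands on it with probability $\frac{1}{\abs{\F}}$). Thus $\val(\mathcal{J}, x) = \frac{1}{\abs{\mcH}}\sum_{v \in \mcH} \1(\ip{v,x} = b_v)$ is an average of $\abs{\mcH}$ i.i.d.\ Bernoulli$(1/\abs{\F})$ random variables.

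Applying Hoeffding's inequality, for any fixed $x \in \F^\ell$,
\begin{equation*}
    \Pr\left[\val(\mathcal{J}, x) \geq \frac{1}{\abs{\F}} + \varepsilon\right] \leq \exp\left(-2 \abs{\mcH} \varepsilon^2\right)\mper
\end{equation*}
Next, I would take a union bound over the $\abs{\F}^\ell$ possible assignments $x \in \F^\ell$, yielding
\begin{equation*}
    \Pr\left[\val(\mathcal{J}) \geq \frac{1}{\abs{\F}} + \varepsilon\right] \leq \abs{\F}^\ell \cdot \exp\left(-2 \abs{\mcH} \varepsilon^2\right)\mper
\end{equation*}
Plugging in the hypothesis $\abs{\mcH} \geq 2\ell \log(\abs{\F} n) \cdot \varepsilon^{-2}$ gives $\abs{\F}^\ell \cdot (\abs{\F}n)^{-4\ell} \leq n^{-2\ell}$ with room to spare, establishing the claim.

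The main obstacle is essentially nonexistent here; this is just a textbook concentration argument. The only mild subtlety worth flagging is ensuring the independence claim: it hinges on the semirandom model fixing the left-hand sides $\mcH$ arbitrarily while drawing each $b_v$ independently and uniformly, so the indicators across $v \in \mcH$ really are independent for any fixed $x$. No structural property of $\mcH$ (such as the regularity used elsewhere) is needed, since $\ell$ is small enough that the naive union bound is affordable.
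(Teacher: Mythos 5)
Your proof is correct and follows essentially the same argument as the paper: fix an assignment $x$, observe that $\1(\ip{v,x}=b_v)$ are i.i.d.\ Bernoulli$(1/\abs{\F})$ since only the $b_v$ are random, apply Hoeffding, and union bound over all $\abs{\F}^\ell$ assignments. The only cosmetic difference is that you use the standard Hoeffding exponent $\exp(-2\abs{\mcH}\eps^2)$ whereas the paper uses the slightly weaker $\exp(-\abs{\mcH}\eps^2)$; both yield the claimed $1 - n^{-2\ell}$ bound under the hypothesis $\abs{\mcH} \geq 2\ell\log(\abs{\F}n)\eps^{-2}$.
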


 By applying \cref{lem:unsatisfiability} to each $\mcI_S$, we again union bound across ${[n] \choose \ell}$ to get $\val(\mcI_S)\leq\frac{1}{\abs{\F}} + \varepsilon$ for all $S$, which concludes the proof.
\end{proof}

\begin{proof}[Proof of \cref{lem:equationuniform}]
    For random $v \in \mcH$, observe that $\supp(v) \subseteq S$ with probability $\frac{{n \choose \ell-k}}{{n \choose \ell}} \geq c^k\left(\frac{\ell}{n}\right)^k$ for some constant $c > 0$ by \cref{fact:binomest}. We can model the size of $\mcH_S$ as the sum of $\abs{\mcH}$ independent \textsf{Bernoulli}$\left(c^k\left(\frac{\ell}{n}\right)^k\right)$ random variables. The expected value is then at least $c^k\left(\frac{\ell}{n}\right)^k \abs{\mcH}$ and by standard Chernoff bounds
    \begin{equation*}
        \Pr\sbra{\abs{\mcH_S} \leq \frac{c^k}{2}\left(\frac{\ell}{n}\right)^k \abs{\mcH}} \leq e^{c^k\left(\frac{\ell}{n}\right)^k \abs{\mcH}/8} \mper
    \end{equation*}
    Taking $\abs{\mcH} \geq 16 c^{-k} \cdot n \left(\frac{n}{\ell}\right)^{k-1} \log n \cdot \varepsilon^{-2}$ finishes the proof.
\end{proof}

\begin{proof}[Proof of \cref{lem:unsatisfiability}]
Fix an assignment $x \in \F^\ell$ of which there are $\abs{\F}^\ell$. We show the probability of this assignment satisfying more than a $\frac{1}{\abs{\F}}+\varepsilon$ fraction of the constraints is small and union bound over all $\abs{\F}^\ell$ assignments. Note that each equation is satisfied with probability $\frac{1}{\abs{\F}}$. To see this, we can completely ignore the left-hand side of the equation and just know that it takes some value under the assignment, and that the right-hand sides match that value with probability $\frac{1}{\abs{\F}}$. It is easy to see this way whether a equation is satisfied is independent of other equations since the right-hand sides are drawn independently.

We model $\val(\mathcal{J},x) \cdot \abs{\mcH}$ as a sum of $\abs{\mcH}$ independent \textsf{Bernoulli}$\left(\frac{1}{\abs{\F}}\right)$ random variables. By Hoeffding's inequality
\begin{equation*}
    \Pr\left[\val(\mathcal{J},x) \cdot \abs{\mcH} \geq \frac{1}{\abs{\F}}\abs{\mcH} + \varepsilon \abs{\mcH}\right] \leq e^{-\varepsilon^2 \abs{\mcH}}\mper
\end{equation*}
Taking $\abs{\mcH} \geq 2\ell\log (\abs{\F}n) \cdot \varepsilon^{-2}$ implies the probability is at most $\frac{1}{(\abs{\F}n)^{2\ell}}$, and by union bound we conclude the $\val(\mathcal{J}) \leq \frac{1}{\abs{\F}} + \varepsilon$.
\end{proof}

We now explain how the same algorithm succeeds in the semirandom setting, albeit at a slightly higher equation density.

\begin{proof}[Proof of \cref{item:simplerefutation3}]
    Doing a post-mortem of the proof of \cref{item:simplerefutation2}, the apparent problem in the semirandom setting is that we do not have \cref{lem:equationuniform}, which relies on the randomness of the equation structure. We used this to claim each subset $S$ contains many equations, allowing us to apply \cref{lem:unsatisfiability}. Instead, we simply assume for any $S$ with $\abs{\mcH_S} < \frac{\varepsilon}{2C}\abs{\mcH}$ for some $C > 0$, we have $\algval(\mcI_S) = 1$. This is fine as we have the contribution from such a $\mcI_S$ is $\frac{1}{{n \choose \ell-k}} \frac{\varepsilon}{2C}$. Since there are at most ${n \choose \ell}$ choices for $S$, we would like the contribution to be $\frac{1}{{n \choose \ell}} \frac{\varepsilon}{2}$, which is true if $C = \frac{{n \choose \ell}}{{n \choose \ell-k}}$ which is $\Theta(1)$ by \cref{fact:binomest}.

    Now for any term with $\abs{\mcH_S} \geq \frac{\varepsilon}{2C}\abs{\mcH}$, we can apply \cref{lem:unsatisfiability} as before, which requires $\abs{\mcH}$ is a factor $\varepsilon^{-1}$ and constants larger.
\end{proof}
\end{proof}

\newpage

\bibliographystyle{alpha}
\bibliography{references.bib}

\newpage

\appendix

\section{Random Collections of $k$-Sparse Vectors Expand Well}
\label{append:expansion}

\begin{theorem}[Expansion in random $k$-sparse vector sets (\cref{thm:inversefeige} restated)]
    \label{thm:inversefeigerestated}
     Fix $k/2 \leq \ell \leq n/(\max(\abs{\F^*}, k))$. Let $\mcH$ be a set of $\abs{\mcH} \leq \Omega(\delta n) \cdot  \left(\frac{n\abs{\F^*}}{\ell}\right)^{k/2-1-\beta}$ uniformly random $k$-sparse vectors in $\F^n$. Then with probability $\geq 1-\delta$ there do not exist any $\mcV \subseteq \mathcal{H}$ with $\abs{\mcV} \leq \ell$ and coefficients $\cbra{\alpha_v}_{v \in \mcV}$ in $\F^*$ such that
     \begin{equation*}
         \left|\sum_{v \in \mcV} \alpha_v \cdot v\right| \leq \beta \abs{\mcV}\mper
     \end{equation*}
\end{theorem}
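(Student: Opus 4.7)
The plan is to union bound over $s := |\mcV| \in \{1, \dots, \ell\}$, over the $\binom{|\mcH|}{s}$ choices of $\mcV \subseteq \mcH$, and over the $|\F^*|^s$ choices of coefficients $\{\alpha_v\}_{v \in \mcV}$, and to show that for each configuration the probability (over the draw of the random $k$-sparse vectors in $\mcH$) that $\left|\sum_{v \in \mcV} \alpha_v v\right| \leq \beta s$ is sufficiently small that the total failure probability is at most $\delta$. The first key ingredient is a combinatorial observation: if this weighted sum has weight at most $\beta s$, then the union of supports $U := \bigcup_{v \in \mcV} \supp(v)$ has size at most $u_{\max} := s(k+\beta)/2$. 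Indeed, letting $m_i := |\{v \in \mcV : i \in \supp(v)\}|$, any coordinate with $m_i = 1$ forces $\alpha_v v_i \neq 0$ to appear in $\supp(\sum \alpha_v v)$, so $M_1 := |\{i : m_i = 1\}| \leq \beta s$; combined with $\sum_i m_i = sk$ (which gives $M_1 + 2M_2 \leq sk$ for $M_2 := |\{i : m_i \geq 2\}|$) this yields $|U| = M_1 + M_2 \leq (sk+M_1)/2 \leq s(k+\beta)/2$.

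The second step is a careful count conditioning on $|U| = u \in [\beta s, u_{\max}]$. There are at most $\binom{n}{u}$ choices for $U$, at most $\binom{u}{k}^s$ choices for the supports $\supp(v) \subseteq U$, and the values $v_i \in \F^*$ must satisfy the cancellation condition $|\sum \alpha_v v| \leq \beta s$. Since the $v_i$'s are independent and uniform over $\F^*$, and at each coordinate $i$ with $m_i \geq 2$ the event $\sum_{v : i \in \supp(v)} \alpha_v v_i = 0$ holds on at most a $1/|\F^*|$ fraction of value choices (fixing all but one variable at that coordinate determines the last uniquely in $\F$), requiring at least $u - \beta s$ such cancellations across $M_2 \leq u$ multiply-covered coordinates bounds the number of valid value assignments by $O(\binom{u}{\beta s}) \cdot |\F^*|^{sk - u + \beta s}$. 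Dividing by the total count $(\binom{n}{k}|\F^*|^k)^s$ of $k$-sparse tuples and using $\binom{u}{k}/\binom{n}{k} \leq (u/n)^k$ gives a bound on $\Pr[|\sum \alpha_v v| \leq \beta s]$ as a sum over $u$.

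Finally, I would optimize the resulting expression. The dominant term occurs at $u = u_{\max} = s(k+\beta)/2$; expanding $\binom{n}{u} \leq (en/u)^u$ and combining with the $|\F^*|^s$ factor from the coefficient sum and $\binom{|\mcH|}{s} \leq (e|\mcH|/s)^s$, after routine simplification the per-$s$ contribution becomes
\[
\binom{|\mcH|}{s} \cdot |\F^*|^s \cdot \Pr[\text{bad}] \leq \left( C_k \cdot \frac{|\mcH|}{n} \cdot \left(\frac{\ell}{n|\F^*|}\right)^{(k-\beta)/2 - 1}\right)^s
\]
for some $C_k = C(k,\beta)$, using $s \leq \ell$ to absorb the $s^{(k-\beta)/2 - 1}$ growth. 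When $|\mcH| \leq \Omega(\delta n) \cdot (n|\F^*|/\ell)^{(k-\beta)/2 - 1}$, the base is at most $\delta/2$ and summing the geometric series over $s$ gives total failure probability at most $\delta$. Since $(k-\beta)/2 - 1 = k/2 - 1 - \beta/2 \geq k/2 - 1 - \beta$ and $n|\F^*|/\ell \geq 1$ under the hypotheses $\ell \leq n/\max(|\F^*|,k)$, the theorem's stated condition $|\mcH| \leq \Omega(\delta n)(n|\F^*|/\ell)^{k/2-1-\beta}$ is at least as strong as the one I derive, so the conclusion follows. The main technical obstacle is precisely the cancellation analysis: relying only on the crude bound $|U| \leq s(k+\beta)/2$ without exploiting the $1/|\F^*|$ cancellation probability per multiply-covered coordinate loses a factor of roughly $|\F^*|^{\Theta(k)}$ and yields the wrong dependence on the field size.
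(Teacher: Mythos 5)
Your proof is correct and takes a genuinely different route from the paper's. The paper proves \cref{lem:largerefutations} by viewing the $d$ vectors of $\mcV$ as an $[n] \times [d]$ grid, processing the $kd$ non-zero coordinates one at a time in a fixed column-first order, and labeling each with $\{-1,0,+1\}$ according to whether it opens a new partial sum, cancels one, or neither. It then union bounds over all $3^{kd}$ ``templates'' of labels and bounds the probability of each template being satisfied, using the identity $\wt(\sum_j v_j) = kd - 2\abs{\mcV^{(-1)}} - \abs{\mcV^{(0)}}$. Your approach instead fixes the union of supports $U$ up front, uses the multiplicity argument $M_1 \leq \beta s$ to derive the static constraint $\abs{U} \leq s(k+\beta)/2$, and then directly counts value assignments that achieve $\geq u - \beta s$ cancellations among the multiply-covered coordinates. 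Both arguments exploit exactly the same algebraic insight (a cancellation at a multiply-covered coordinate costs a $1/\abs{\F^*}$ factor, since fixing all but one value there determines the last in $\F$), and both arrive at the same effective threshold with exponent $(k-\beta)/2 - 1$, which is stronger than the stated $k/2 - 1 - \beta$ --- you are right that the theorem's stated condition implies yours since $n\abs{\F^*}/\ell \geq 1$. Your combinatorial count (choose $U$, choose supports inside $U$, bound value assignments) is arguably more direct and avoids the template bookkeeping; the paper's sequential labeling is a more literal generalization of the walk-counting framework inherited from the $\F_2$ Kikuchi-matrix literature. One place where you are slightly informal is the claim that the $u = u_{\max}$ term dominates the sum over $u$: strictly you should either verify this or simply observe that there are at most $sk$ values of $u$, and this polynomial factor is absorbed into the $C_k^s$ base since $sk \leq 2^s$ for $s$ large (the small-$s$ cases being handled by adjusting the constant in the threshold on $\abs{\mcH}$).
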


\begin{proof}[Proof of \cref{thm:inversefeigerestated}]

For a collection of $m$ vectors $\mcH$ we can define $\mathcal{R}_d = \{(\mcV, \{\alpha_v\}_{v \in \mcV}) \mid \mcV \subseteq \mcH,\abs{\mcV} = d, \alpha \in (\F^*)^d\}$ to be the set of all length exactly $d$ linear combinations in $\mcH$. Note that we have $\abs{\mathcal{R}}_d = {\abs{\mcH} \choose d}\abs{\F^*}^d \leq \left(\abs{\mcH} \cdot \frac{e\abs{\F^*}}{d}\right)^d$. We show that none have low weight by the following lemma.

\begin{lemma}
    \label{lem:largerefutations}
    For any $(\mcV, \{\alpha_v\}_{v \in \mcV}) \in \mathcal{R}_d$ and $d \leq \ell$ we have
    \begin{equation*}
    \Pr\left[\left|\sum_{v \in \mcV} \alpha_v \cdot v\right| \leq \beta d \right] \leq \left((ek)^{2k} \cdot \left(\frac{d}{n\abs{\F^*}}\right)^{k/2-\beta/2}\right)^d\mper
 \end{equation*}
\end{lemma}

Using \cref{lem:largerefutations} we can simply union bound over all of $\mathcal{V}$ in $\mathcal{R}_d$.
\begin{align*}
    \Pr\left[\exists (\mcV, \{\alpha_v\}_{v \in \mcV}) \in \mathcal{R}_d \text{ such that } \left|\sum_{v \in \mcV} \alpha_v \cdot v\right| \leq \beta d \right]  
    &\leq \left(\abs{\mcH} \cdot \frac{e\abs{\F^*}}{d}\right)^d \cdot \left((ek)^{2k} \cdot \left(\frac{d}{n\abs{\F^*}}\right)^{k/2-\beta/2}\right)^d\\
    &\leq \left(\abs{\mcH} \cdot (ek)^{2k+1} \cdot n \cdot \left(\frac{d}{n\abs{\F^*}}\right)^{k/2-1-\beta}\right)^d\\
    &\leq \delta^d\mper
\end{align*}
For the last line we assume $\abs{\mcH} \leq \frac{\delta}{(ek)^{2k+1}} \cdot n \cdot \left(\frac{n \abs{\F^*}}{\ell}\right)^{k/2-1-\beta}$. We complete the proof of \cref{thm:inversefeigerestated} by union bounding for all $d \leq \ell$, which is $O(\delta)$ by convergence of geometric series.
\end{proof}

\begin{proof}[Proof of \cref{lem:largerefutations}]

Without loss of generality we can assume $\alpha = \onevec$ and $\mcV$ is a set of random $k$-sparse vectors $v_i \in \F^n$ with $\abs{\mcV} = d$. Let $v = \sum_{j=1}^d v_j$. It is useful to identify the set $\{v_j\}_{j \in [d]}$ by a $[n] \times [d]$ grid where coordinate $\mcV[i, j]$ corresponds to $(v_j)_i$. Note such a grid always has $kd$ non-zero coordinates and often we refer to the $i$th coordinate to simply mean the $i$th non-zero coordinate under a ``column-first'' ordering (i.e. the relation induced by $(i_1, j_1) < (i_2, j_2)$ if $j_1 < j_2$ first and $i_1 < i_2$ if $j_1 = j_2$). This induces a unique ``function representation'' of $\mcV$ as $\phi_{\mcV} : [kd] \to [n] \times \F^*$, mapping the $i$th non-zero coordinate to its row number and value. We can partition the $kd$ non-zero coordinates in this grid into the following groups:
\begin{align*}
    &\mcV^{(-1)} = \cbra{(i, j) \in [n] \times [d] \mid \mcV[i, j] \neq 0, \sum_{j' = 0}^{j-1} \mcV[i, j'] \neq 0 \text{ and } \mcV[i,j] = -\sum_{j' = 0}^{j-1} \mcV[i, j']}\\
    &\mcV^{(0)} = \cbra{(i, j) \in [n] \times [d] \mid \mcV[i, j] \neq 0, \sum_{j' = 0}^{j-1} \mcV[i, j'] \neq 0 \text{ and } \mcV[i,j] \neq -\sum_{j' = 0}^{j-1} \mcV[i, j']}\\
    &\mcV^{(+1)} = \cbra{(i, j) \in [n] \times [d] \mid \mcV[i, j] \neq 0, \sum_{j' = 0}^{j-1} \mcV[i, j'] = 0}\mper
\end{align*}

These groups have a simple semantic characterization when viewed as follows. Suppose we sampled each non-zero coordinate one at a time, specifying what index in $[n]$ it sits in and what value in $\F^*$ it takes. At the end, we add up all values from the same index $i$, and either get a non-zero value, which contributes $+1$ to the Hamming weight, or $0$, which contributes nothing. This partition can be seen as a way to assign ``blame'' for a specific coordinate contributing $+1$. The coordinates in $\mcV^{(+1)}$ cause the partial sum at the point it is sampled to be $+1$. The coordinates in $\mcV^{(-1)}$ can be seen as those that reduce the contribution from $+1$ back to $0$ by cancelling what comes before. $\mcV^{(0)}$ leaves the contribution $+1$.

The motivation for this characterization and partition is the following observation.

\begin{observation}
    Given a set of $k$-sparse vectors $\mcV$ with $\abs{\mcV}= d$ we have that $\wt(v) = kd - 2\abs{\mcV^{(-1)}} - \abs{\mcV^{(0)}}$ for $v = \sum_{v_j \in \mcV} v_j$.
\end{observation}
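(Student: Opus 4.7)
The plan is to reduce the claim to a two-state random-walk argument, analyzing the partial sums row-by-row in the grid $\mcV[i,j]$. First, I would observe that the total number of non-zero entries in the grid is $kd$, and the partition $\mcV^{(+1)} \sqcup \mcV^{(0)} \sqcup \mcV^{(-1)}$ exactly covers them, so the claim is equivalent to showing $\wt(v) = \abs{\mcV^{(+1)}} - \abs{\mcV^{(-1)}}$.

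Next, I would fix a row $i \in [n]$ and define the partial sums $s_{i,j} := \sum_{j' \leq j} \mcV[i,j']$ for $j = 0, \ldots, d$, with $s_{i,0} = 0$. For each $j$ with $\mcV[i,j] \neq 0$, by unpacking the definitions, membership in the three classes corresponds to the transition the entry induces on the pair $(s_{i,j-1}, s_{i,j})$: an entry in $\mcV^{(+1)}$ is a $0 \to \text{nonzero}$ transition, an entry in $\mcV^{(-1)}$ is a $\text{nonzero} \to 0$ transition, and an entry in $\mcV^{(0)}$ is a $\text{nonzero} \to \text{nonzero}$ transition. Letting $a_i, b_i, c_i$ count the row-$i$ entries in each class, the sequence of such transitions traces a walk on the two-state system $\{0, *\}$ starting at $0$.

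Then the main step is to note that the terminal state $s_{i,d}$ is non-zero if and only if the walk has one more "rising" transition than "falling" transition, i.e., $a_i - b_i = 1$; otherwise $a_i = b_i$ and $s_{i,d} = 0$. Since $s_{i,d} = v_i$, this gives $\1[i \in \supp(v)] = a_i - b_i$. Summing over $i$ yields $\wt(v) = \sum_i (a_i - b_i) = \abs{\mcV^{(+1)}} - \abs{\mcV^{(-1)}}$, and substituting $\abs{\mcV^{(+1)}} = kd - \abs{\mcV^{(-1)}} - \abs{\mcV^{(0)}}$ gives the desired identity. There is no real obstacle; the only delicacy is confirming that the three classes exhaust all non-zero grid entries (which is immediate since the cases ``$s_{i,j-1} = 0$'' and ``$s_{i,j-1} \neq 0$'' together cover everything, and in the first case $s_{i,j} \neq 0$ automatically since $\mcV[i,j] \neq 0$).
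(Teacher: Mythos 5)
Your proof is correct and follows essentially the same approach as the paper's: both reduce to showing $\wt(v) = \abs{\mcV^{(+1)}} - \abs{\mcV^{(-1)}}$ and then argue via the semantic meaning of the three classes as transitions of the running partial sums. Your write-up is a slightly more formal version (making the two-state walk explicit row-by-row) of what the paper states in prose.
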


To see this note that $\abs{\mcV^{(+1)}} = kd - \abs{\mcV^{(-1)}} - \abs{\mcV^{(0)}}$. As remarked above, these coordinates cause the corresponding contribution to the Hamming weight to be $+1$, and this is only changed if a later coordinate cancels. These coordinates then fall into $\mcV^{(-1)}$, meaning the total Hamming weight is $\abs{\mcV^{(+1)}}-\abs{\mcV^{(-1)}}$ or $kd - 2\abs{\mcV^{(-1)}} - \abs{\mcV^{(0)}}$.

From this we derive that $2\abs{\mcV^{(-1)}} + \abs{\mcV^{(0)}} < kd - \beta d$ is a sufficient condition for $\wt(v) > \beta d$, so we show the probability of this event is high.

\begin{definition}[Templates]
    For a set of coordinates $[kd]$ we define a \textit{template} to be a labelling of one of $\{-1, 0, +1\}$ to each coordinate. We parameterize a $(t_1, t_2)$-template to be a template which assigns the label ``$-1$'' $t_1$ times and ``$0$'' $t_2$ times. We further call a template a $t$-template if $2t_1 + t_2 = t$.
\end{definition}

\begin{definition}[Template-satisfying assignments]
    We say a template is satisfied by an assignment $\phi : [kd] \to [n] \times \F^*$ if all the labels are consistent with the above definitions under the assignment. Formally, this means that if labelled ``$b$'', then the coordinate falls in $\mcV^{(b)}$. Recall the representative assignment $\phi_{\mcV}$ for a set of $k$-weight vectors $\mcV$ given by taking the indices and values in all non-zero coordinates.
\end{definition}

These definitions are then useful in light of the following fact.

\begin{observation}
    \label{obs:templates}
    For $\mcV = \{v_j\}_{j \in [d]}$, there is a $(\abs{\mcV^{(-1)}}, \abs{\mcV^{(0)}})$-template that is satisfied by the representative assignment $\phi_\mcV$.
\end{observation}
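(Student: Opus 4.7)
The observation is essentially a definitional unpacking of the partition $\mcV = \mcV^{(-1)} \sqcup \mcV^{(0)} \sqcup \mcV^{(+1)}$ introduced just above, so the plan is to construct the required template directly from the partition and then verify the two things demanded by the definition of a satisfying template (namely, the correct counts of labels and consistency with $\phi_\mcV$).

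First, I would define the template $T : [kd] \to \{-1, 0, +1\}$ by setting $T(c) = b$ if and only if the $c$th non-zero coordinate of $\mcV$, ordered column-first, lies in $\mcV^{(b)}$. The initial step is to check that this is well-defined: every non-zero coordinate $(i, j)$ of $\mcV$ lands in exactly one of $\mcV^{(-1)}, \mcV^{(0)}, \mcV^{(+1)}$, which follows by the trichotomy on the partial column sum $s_{i,j} := \sum_{j' < j} \mcV[i, j']$ — either $s_{i,j} = 0$ (yielding membership in $\mcV^{(+1)}$), or $s_{i,j} \neq 0$ and $\mcV[i,j] = -s_{i,j}$ (yielding $\mcV^{(-1)}$), or $s_{i,j} \neq 0$ and $\mcV[i,j] \neq -s_{i,j}$ (yielding $\mcV^{(0)}$). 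Since these three cases are mutually exclusive and exhaustive for non-zero coordinates, $T$ is well-defined on all $kd$ coordinates.

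Next, I would read off the cardinalities: by construction, exactly $\abs{\mcV^{(-1)}}$ coordinates receive label $-1$ and exactly $\abs{\mcV^{(0)}}$ receive label $0$, so $T$ is indeed a $(\abs{\mcV^{(-1)}}, \abs{\mcV^{(0)}})$-template. Satisfaction by $\phi_\mcV$ is then immediate from the definition of satisfaction, since the partition membership used to define $T$ is exactly the partition membership induced by $\phi_\mcV$. There is no substantive obstacle to this observation — the only care needed is that the column-first ordering used throughout is fixed, so the partition sets (and hence the template) are unambiguously associated to $\mcV$.
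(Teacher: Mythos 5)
Your proposal is correct and matches the paper's proof in approach: both simply take the template to be the labelling induced by the partition $\mcV^{(-1)} \sqcup \mcV^{(0)} \sqcup \mcV^{(+1)}$, and note that satisfaction by $\phi_\mcV$ is tautological. Your write-up is more explicit about well-definedness (trichotomy on the partial column sum) than the paper's one-sentence verification, but the content is the same.
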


To see this, just let the template be the actual behavior of each coordinate. The labelling corresponds to its partition $\mcV^{(b)}$ and its associated coordinate indeed corresponds to the most recent coordinate that makes the Hamming weight contribution $+1$. 

\begin{remark}
    The above definitions may seem a bit redundant, but the correspondence given in \cref{obs:templates} can concretely be seen as a generalization of the patterns that can appear when looking at a low weight sum of vectors. This will allow us to easily count ways this can happen and show that the probability any occurs is very low.
\end{remark}

Now we are set to prove $2\abs{\mcV^{(-1)}} + \abs{\mcV^{(0)}} < kd - \beta d$ by showing that $\phi_\mcV$ does not satisfy any $\geq kd-\beta d$-template with high enough probability, which can be accomplished through the following lemma.
\begin{lemma}
    \label{lem:templatebound}
    Let $\tau$ be a $t$-template on $[kd]$ coordinates assigned $[n] \times \F^*$, then
    \begin{equation*}
        \Pr\left[\tau \text{ satisfied  by } \phi_\mcV\right] \leq k^{2kd} \cdot \left(\frac{d}{n\abs{\F^*}}\right)^{t/2}\mper
    \end{equation*}
\end{lemma}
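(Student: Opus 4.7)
The plan is to bound the probability by processing the $d$ columns of $\mcV$ one at a time, since the label that $\tau$ assigns to any coordinate in column $j$ depends only on the partial sums $\sum_{j' < j} \mcV[i, j']$ from strictly earlier columns. For each $j \in [d]$, define $B_j = \{i \in [n] : \sum_{j' < j} \mcV[i, j'] \neq 0\}$ with $G_j = [n] \setminus B_j$, and note $\abs{B_j} \leq (j-1)k \leq dk$. Denote by $t_1^{(j)}$ and $t_2^{(j)}$ the number of $-1$ and $0$ labels of $\tau$ falling in the positions $(j-1)k+1, \dots, jk$, set $m_j = t_1^{(j)} + t_2^{(j)}$, and observe that $\sum_j t_1^{(j)} = t_1$, $\sum_j t_2^{(j)} = t_2$, and $t = 2t_1 + t_2$.

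Now fix columns $1, \dots, j-1$ and sample the next column by drawing $S_j$ uniformly from the $k$-subsets of $[n]$ and its $k$ coefficients independently from $\F^*$. The labels in column $j$ force three conditions: (a) exactly $m_j$ of the sorted elements of $S_j$ must lie in $B_j$ and at the prescribed ranks; (b) each $-1$-labeled position must carry the unique element of $\F^*$ cancelling the prior sum at its index; (c) each $0$-labeled position must avoid one specified element of $\F^*$. I bound the probability of (a) by $\binom{\abs{B_j}}{m_j}\binom{n}{k-m_j}/\binom{n}{k}$ (ignoring the rank constraint, which only makes the count smaller); using $\binom{n}{k-m_j}/\binom{n}{k} \leq (2k/n)^{m_j}$, which is valid since $k \leq n/2$, together with $\abs{B_j} \leq dk$, this is at most $(2dk^2/n)^{m_j}$. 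Condition (b) contributes a factor of $\abs{\F^*}^{-t_1^{(j)}}$, and (c) contributes at most $1$.

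Multiplying the conditional bounds over $j = 1, \dots, d$ and summing exponents yields
\begin{equation*}
    \Pr[\tau \text{ satisfied by } \phi_\mcV] \leq (2dk^2/n)^{t_1 + t_2} \abs{\F^*}^{-t_1} = (2k^2)^{t_1 + t_2} (d/n)^{t_1 + t_2} \abs{\F^*}^{-t_1}.
\end{equation*}
Writing the target as $k^{2kd} (d/n)^{t_1 + t_2/2} \abs{\F^*}^{-t_1 - t_2/2}$, the ratio of my bound to the target equals $(2k^2)^{t_1 + t_2} (d\abs{\F^*}/n)^{t_2/2} / k^{2kd}$. The hypothesis $d \leq \ell \leq n/\abs{\F^*}$ gives $(d\abs{\F^*}/n)^{t_2/2} \leq 1$, and the bound $t_1 + t_2 \leq kd$ lets me absorb $(2k^2)^{t_1 + t_2}$ into $k^{2kd}$ up to a benign constant base.

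I expect the main technical subtlety to be the handling of the $0$-labels. Per label, such a coordinate contributes only $d/n$ in my bound (from the index-hitting condition alone, with no savings from the value condition), whereas the target ascribes it only $\sqrt{d/(n\abs{\F^*})}$. The missing factor of $\sqrt{d\abs{\F^*}/n}$ per $0$-label is exactly what the hypothesis $\ell \leq n/\abs{\F^*}$ pays for, and this coupling of the $0$-labels with the field-size constraint on $\ell$ is the reason the expansion statement (and in turn the Sum-of-Squares lower bound) restricts $\ell$ via the $\max(\abs{\F^*}, k)$ factor.
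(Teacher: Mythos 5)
Your proof is correct and takes a genuinely different route from the paper's. The paper first pretends the $kd$ non-zero coordinates are sampled as a fully independent function $\phi:[kd]\to[n]\times\F^*$, bounds each coordinate's conditional probability as $\tfrac{kd}{n\abs{\F^*}}$ (for a ``$-1$'') or $\tfrac{kd}{n}$ (for a ``$0$''), and then pays a single model-conversion factor $(n\abs{\F^*})^{kd}/\bigl(\binom{n}{k}^d\abs{\F^*}^{kd}\bigr)\le k^{kd}$ to pass from $\phi$ to $\phi_\mcV$. You instead condition column-by-column inside the true distribution, using that $B_j$ is determined by earlier columns and bounding the hypergeometric tail $\Pr[\abs{S_j\cap B_j}\ge m_j]$ directly. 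Your approach is arguably more transparent (no density-ratio trick), while the paper's decoupling-then-convert argument localizes all the slack into one clean factor of $k^{kd}$. Both use the hypothesis $d\abs{\F^*}\le n$ at the same place, namely to upgrade the $(d/n)^{t_2}$ contribution of the ``$0$''-labels to $(d/(n\abs{\F^*}))^{t_2/2}$.

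Two minor remarks. First, the bound $\binom{n}{k-m}/\binom{n}{k}\le(2k/n)^m$ implicitly assumes $k\le n/2$; this is automatic for $n$ large since $k^2\le 2n$ follows from $\ell\ge k/2$ and $k\ell\le n$, but it should be flagged. Second, your final bound carries an extra $2^{t_1+t_2}\le 2^{kd}$ compared with the stated $k^{2kd}$, i.e.\ you prove the lemma with $k^{2kd}$ replaced by $(\sqrt{2}\,k)^{2kd}$. You call this ``a benign constant base,'' and indeed it is: in the downstream union bound the template count $3^{kd}$ is multiplied by your bound, giving $(3\cdot 2\cdot k^2)^{kd}\le (e^2k^2)^{kd}=(ek)^{2kd}$, which is exactly what the application needs. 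Still, strictly as a proof of the lemma as stated, this factor of $2^{kd}$ should either be tracked explicitly or eliminated by a sharper hypergeometric bound.
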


We also have the following bound on the total number of templates on $[kd]$ coordinates, which we use as a crude approximation to the number of $t$-templates for $t > kd-\beta d$.
\begin{observation}
    \label{fact:templatenum}
    There are less than $3^{kd}$ \textit{total} templates on $[kd]$ coordinates.
\end{observation}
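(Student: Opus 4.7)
The plan is a direct counting argument. By definition a template is a map from $[kd]$ to the three-element label set $\{-1, 0, +1\}$, with no further constraints imposed on which labelings are allowed. Since the choice of label is independent at each of the $kd$ coordinates, the product rule gives exactly $3^{kd}$ such labelings, so in particular at most $3^{kd}$ templates.

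The only subtlety is the strict inequality ``less than'' rather than ``at most.'' This can be recovered by observing that a positive (indeed quite large) fraction of labelings cannot be realized as $(\mcV^{(-1)}, \mcV^{(0)}, \mcV^{(+1)})$ for \emph{any} assignment $\phi_{\mcV}$ — for instance, the lex-first non-zero coordinate of the grid (the $(1,1)$ entry under the column-first ordering) must lie in $\mcV^{(+1)}$, because both $\mcV^{(-1)}$ and $\mcV^{(0)}$ require a prior non-trivial partial column sum, which does not exist before the very first coordinate. Excluding even this single class of unrealizable labelings already yields the strict bound $3^{kd-1} \cdot 2 < 3^{kd}$. In any event, the slack between $3^{kd}$ and the true count is immaterial: the bound is only invoked inside a union over templates in the proof of \cref{lem:largerefutations}, where the constant in the base of the exponent is absorbed into the $(ek)^{2k}$ factor. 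No nontrivial step is required, and there is no real obstacle.
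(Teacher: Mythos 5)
Your proposal is correct and matches the intended (unstated) argument: a template is by definition an arbitrary labelling of $[kd]$ coordinates by $\{-1,0,+1\}$, so the product rule gives at most $3^{kd}$ of them. Your pedantic observation about the strict inequality is a fair catch — taken literally, the paper's definition places no realizability constraint on templates, so the true count is exactly $3^{kd}$, and "less than" is a minor imprecision; your fix (the first non-zero coordinate in column-first order has empty prior partial sum, hence must be labelled $+1$ in any template satisfied by some $\phi_{\mcV}$) correctly shows the \emph{realizable} templates number strictly fewer. As you note, the slack is irrelevant downstream: in the union bound of \cref{lem:largerefutations} the factor $3^{kd}$ is absorbed into $(ek)^{2kd}$ alongside the $k^{2kd}$ from \cref{lem:templatebound}.
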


Putting the two together, we can simply union bound across large templates and get
\begin{equation*}
    \Pr\left[\text{any $\geq (kd-\beta d)$-template satisfied}\right] \leq \left((ek)^{2k} \cdot \left(\frac{d}{n\abs{\F^*}}\right)^{k/2-\beta/2}\right)^d\mper
\end{equation*}
We finish by the proof of \cref{lem:templatebound}.
 \end{proof}
 
\begin{proof}[Proof of \cref{lem:templatebound}]
    Fix a $t$-template $\tau$ on $[kd]$ coordinates. First we consider the probability $\tau$ is satisfied by a fully random assignment $\phi : [kd] \to [n] \times \F^*$.
    \begin{equation*}
        \Pr[\tau \text{ satisfied by } \phi] = \prod_{i=1}^{kd} \Pr[\text{coord. } i \text{ consistent with } \phi(i) \mid \text{coords. } <i \text{ consistent}]\mper
    \end{equation*}
    We can then case on $\phi(i)$. For a coordinate to be consistent with the ``$-1$'' label, it must necessarily (1) match its associated element's index $i \in [n]$ and (2) negate the partial sum of all previous elements in some index (assuming the partial sum is non-zero). The probability of the first is $\frac{1}{n}$ by uniformity. Given this succeeds, the probability of the second is $\frac{kd}{\abs{\F^*}}$, again by uniformity and the fact that there are at most $kd$ indices with non-zero partial sums. Similarly for those labelled $0$, we need condition (1) and a modification of (2) that we do \textit{not} negate the proposed sum. These probabilities are $\frac{kd}{n}$ and $\frac{\abs{\F^*}-1}{\abs{\F^*}}$ respectively, the latter of which we just bound by $\leq 1$. Given the number of $-1$-labels is $t_1$ and the number of $0$-labels is $t_2$, we can bound the total probability as
    \begin{equation*}
        \Pr[\tau \text{ satisfied by } \phi] 
        \leq \left(\frac{kd}{n\abs{\F^*}}\right)^{t_1} \cdot \left(\frac{kd}{n}\right)^{t_2} 
        \leq \left(\frac{kd}{n\abs{\F^*}}\right)^{t_1} \cdot \left(\frac{k^2d^2}{n^2}\right)^{t_2/2} \leq \left(\frac{k^2 d}{n\abs{\F^*}}\right)^{t/2}\mper
    \end{equation*}

    In this last step, we are crucially using two facts. The first is that $\frac{d}{n^2} \leq \frac{d^2}{n\abs{\F^*}}$ which is implied by our assumption $\ell \leq \frac{n}{\abs{\F^*}}$. We then use that $t_1 + t_2/2 = t/2$ by our definition of a $t$-template.
    
    While a random set $\mcV = \{v_j\}_{j \in [d]}$ does not correspond to a random assignment $\phi$, we can interpret the above estimate as the fraction of all assignments $\phi$ satisfying $\tau$, and crudely restrict the domain to those coming from a random vector set.
    \begin{equation*}
        \Pr\left[\tau \text{ satisfied by } \{v_j\}_{j \in [d]}\right] \leq \Pr\left[\tau \text{ satisfied by } \phi\right] \cdot \frac{\abs{\{\phi \mid \phi : [kd] \to [n] \times \F^*\}}}{\abs{\{\{v_j\}_{j \in [d]} \mid v_j \in \F^n, \abs{v_j} = k\}}}\mper
    \end{equation*}
    The proof is concluded by noting
    \begin{equation*}
        \frac{\abs{\{\phi \mid \phi : [kd] \to [n] \times \F^*\}}}{\abs{\{\{v_j\}_{j \in [d]} \mid v_j \in \F^n, \abs{v_j} = k\}}} = \frac{(n\abs{\F^*})^{kd}}{{n \choose k}^d \abs{\F^*}^{kd}} \leq k^{kd}\mcom
    \end{equation*}
    using standard binomial estimates.
\end{proof}

\end{document}